\def\versionno{ CS dissertation      --   version  4.6    --  05.10.12 }
                   \newcommand\void[1] {}
\newif\if@fewtab\@fewtabtrue
\xdef\hourmin{\number\count255}
\xdef\hourmin{\hourmin:\ifnum\count255<10 0\fi\the\count255}}
\def\ps@draft{\let\@mkboth\@gobbletwo
    \def\@oddfoot{\hbox to 7 cm{\tiny \versionno
       \hfil}\hskip -7cm\hfil\rm\thepage \hfil {\tiny\draftdate}}
    \def\@oddhead{}
    \def\@evenhead{}\let\@evenfoot\@oddfoot}
\def\draftdate{\number\month/\number\day/\number\year\ \ \ \hourmin }
\def\citen#1{\if@filesw \immediate\write \@auxout {\string\citation{#1}}\fi%
\@tempcntb\m@ne \let\@h@ld\relax \def\@citea{}%
\@for \@citeb:=#1\do {\@ifundefined {b@\@citeb}%
    {\@h@ld\@citea\@tempcntb\m@ne{\bf ?}%
    \@warning {Citation `\@citeb ' on page \thepage \space undefined}}%
    {\@tempcnta\@tempcntb \advance\@tempcnta\@ne
    \setbox\z@\hbox\bgroup\ifcat0\csname b@\@citeb \endcsname \relax
    \egroup \@tempcntb\number\csname b@\@citeb \endcsname \relax
    \else \egroup \@tempcntb\m@ne \fi \ifnum\@tempcnta=\@tempcntb
    \ifx\@h@ld\relax \edef \@h@ld{\@citea\csname b@\@citeb\endcsname}%
    \else \edef\@h@ld{\hbox{--}\penalty\@highpenalty
    \csname b@\@citeb\endcsname}\fi
    \else \@h@ld\@citea\csname b@\@citeb \endcsname \let\@h@ld\relax \fi}%
\def\@citea{,\penalty\@highpenalty\hskip.13em plus.13em minus.13em}}\@h@ld}
\def\@citex[#1]#2{\@cite{\citen{#2}}{#1}}%
\def\@cite#1#2{\leavevmode\unskip\ifnum\lastpenalty=\z@\penalty\@highpenalty\fi%
  \ [{\multiply\@highpenalty 3 #1%
  \if@tempswa,\penalty\@highpenalty\ #2\fi}]}   %
\newtheorem{thm}{Theorem}
\newtheorem{lemma}[thm]{Lemma}
\newtheorem{cor}[thm]{Corollary}
\newtheorem{prop}[thm]{Proposition}
\newtheorem{proposition}[thm]{Proposition}
\newtheorem{theorem}[thm]{Theorem}
\theoremstyle{definition}
\newtheorem{definition}[thm]{Definition}
\newtheorem{remark}[thm]{Remark}
\newtheorem{conv}[thm]{Convention}
\def\proof {\noindent{{\it Proof.}}\hspace{7pt}}
\def\endofproof {\hfill{$\Box$}\\}
\def\be{\begin{equation}}
\newcommand\bee[5] {\begin{eqnarray} #5 \nonumber\\[-#1.#2em]~\\[#3.#4em]~\nonumber\end{eqnarray}}
\def\bearl         {\begin{array}{l}}
\def\bearll        {\begin{array}{ll}}
\def\eear          {\end{array}}
\def\ee{\end{equation}}
\newcommand\eqpic[4]{\begin{eqnarray}
                   \begin{picture}(#2,#3){}\end{picture}\nonumber\\
                   \raisebox{-#3pt}{ \begin{picture}(#2,#3) #4 \end{picture} }
                   \label{#1} \\~\nonumber \end{eqnarray} }
\newcommand\Eqpic[4]{\begin{eqnarray}
                   \begin{picture}(#2,#3){}\end{picture}\nonumber\\
                   \raisebox{-#3pt}{ \begin{picture}(#2,#3) #4 \end{picture} }
                   \nonumber \\[3pt]~\label{#1} \end{eqnarray} }
\newcommand\includepic[2]   {{\begin{picture}(0,0)(0,0)
                            \scalebox{.#1}{\includegraphics{#2.eps}}\end{picture}}}
\newcommand\Includepic[1]   {{\begin{picture}(0,0)(0,0)
                            \scalebox{.38}{\includegraphics{#1.eps}}\end{picture}}}
\newcommand\INcludepic[1] {{\begin{picture}(0,0)(0,0)
                   \scalebox{.456}{\includegraphics{def_fact_#1.eps}}\end{picture}}}
\newcommand\includepicclax[3] {{\begin{picture}(0,0)(0,0) \scalebox{.#1#2}
                   {\includegraphics{pic_clal_#3.eps}}\end{picture}}}
\newcommand\INcludepicclal[2] {{\begin{picture}(0,0)(0,0) \scalebox{.#2}
                   {\includegraphics{pic_clal_#1.eps}}\end{picture}}}
\newcommand\Includepicfj[2]   {{\begin{picture}(0,0)(0,0)
                            \scalebox{.#1}{\includegraphics{def_fact_#2.eps}}\end{picture}}}
\newcommand\includepichtft[1] {{\begin{picture}(0,0)(0,0)
                   \scalebox{.304}{\includegraphics{pic_htft_#1.eps}}\end{picture}}}
\newcommand\Includepichtft[1] {{\begin{picture}(0,0)(0,0)
                   \scalebox{.38}{\includegraphics{pic_htft_#1.eps}}\end{picture}}}
\newcommand\INcludepichtft[2] {{\begin{picture}(0,0)(0,0)
                   \scalebox{.#2}{\includegraphics{pic_htft_#1.eps}}\end{picture}}}
\newcommand\includepichopf[1] {{\begin{picture}(0,0)(0,0)
                   \scalebox{.304}{\includegraphics{pic_hopf_#1.eps}}\end{picture}}}
\newcommand\Includepichopf[1] {{\begin{picture}(0,0)(0,0)
                   \scalebox{.38}{\includegraphics{pic_hopf_#1.eps}}\end{picture}}}
\def\blue          {\color{mydarkblue}}
\def\boxA          {{\color{Acol}\framebox(8,8){$\sss \!A$}}}
\def\boxB          {{\color{Bcol}\framebox(8,8){$\sss \!B$}}}
\def\btcs          {\small \begin{tabular}c}
\newcommand\Fbox[1]{\Ovalbox{#1}}
\newcommand\fbY[3] {{\color{Brcl}\fbox{$\scriptstyle\mathrm Y_{\!#1;#2}^{#3}$}}}
\newcommand\hsp[1] {\mbox{\hspace{#1 em}}}
\newcommand\nxl[1] {\\[#1mm]}
\newcommand\Nxl[1] {\\[-1.3em]\\[#1mm]}
\def\nxx           {\nxl{-2.3}}
\def\nxy           {\nxl{-1.6}}
\def\papI   {I}
\def\papII   {II}
\def\papIII   {III}
\def\papIV   {IV}
\def\papV   {V}
\def\papVI   {VI}
\def\papVII   {VII}
\def\papVIII   {VIII}
\def\quand  {\quad\text{and}\quad}
\def\qquand  {\quad\text{and}\quad\quad}
\newcommand\setulen[2]{\setlength\unitlength{.#1#2pt}}
\newcommand\subject[1]  {\subsubsection{#1}}
\newcommand\ssubject[1]    {\paragraph{#1}}
\newcommand\vleq[2]{\ensuremath{\stackrel{\mbox{\rule{#1.#2em}{.03em}}}
                                         {\mbox{\rule{#1.#2em}{.03em}}}}}
\newcommand\vlleftarrow[2]{\ensuremath{\longleftarrow\!\!\!\raisebox{.28em}
                   {\rule{#1.#2em}{.03em}}}}
\newcommand\vlrightarrow[2]{\ensuremath{\raisebox{.28em}
                   {\rule{#1.#2em}{.03em}}\!\!\!\longrightarrow}}
\newcommand\Corr[2]     {\ensuremath{\mathfrak C(\Surf {#1}{#2})}}
\newcommand\Corrw[2]     {\ensuremath{\mathfrak C^{\ra}(\Surf {#1}{#2})}}
\def\Corrgn     {\ensuremath{\Corr gn}}
\def\Corrgnw     {\ensuremath{\Corrw gn}}
\newcommand\HCorr[2]     {\ensuremath{\mathfrak C_H(\Surf {#1}{#2}{})}}
\def\HCorrgn     {\ensuremath{\mathfrak C_H(\Surf gn)}}
\def\HK            {{{\mathcal K}}}
\def\HKH            {{{\ensuremath K}}}
\def\LAak   {\ensuremath{Z_{a_k}}}
\def\LAbk   {\ensuremath{Z_{b_k}}}
\def\LAdk   {\ensuremath{Z_{d_k}}}
\def\LAek   {\ensuremath{Z_{e_k}}}
\def\LAg   {\ensuremath{Z_{\gamma}}}
\def\LASk   {\ensuremath{Z_{S_k}}}
\def\LAtjk  {\ensuremath{Z_{t_{j,k}}}}
\def\LAwi   {\ensuremath{Z_{\omega_i}}}
\def\LARi   {\ensuremath{Z_{R_i}}}
\def\ak   {\ensuremath{a_k}}
\def\bk   {\ensuremath{b_k}}
\def\dk   {\ensuremath{d_k}}
\def\ek   {\ensuremath{e_k}}
\def\sk   {\ensuremath{S_k}}
\def\tjk  {\ensuremath{t_{j,k}}}
\def\wi   {\ensuremath{\omega_i}}
\def\Ri   {\ensuremath{R_i}}
\def\Qq            {\mathcal Q^{\rm l}}
\def\QQ            {\mathcal Q}
\def\pb            {{\bar p}}
\def\qb{{\bar q}}
\def\rad  {\ohr_\diamond}
\newcommand\Sk[3] {\ensuremath{\mathfrak {F}^{\,#1}_{#2,#3}}}
\newcommand\SkH[3] {\ensuremath{\mathfrak F^{H,#1}_{#2,#3}}}
\newcommand\Skw[3] {\ensuremath{\mathfrak F^{H,\ra;#1}_{#2,#3}}}
\def\SK {S_{\HK}}
\def\SKH {S_{\HKH}}
\def\SL {S_{\L}}
\newcommand\Surf[2] {\ensuremath{\Sigma_{#1,#2}}}
\def\T             {{\mathscr T}}
\def\TL {T_{\L}}
\def\TK {T_{\HK}}
\def\TKH {T_{\HKH}}
\newcommand\Aampk[3]{\mathrm A_{#3#1}^{\;\;\;\;#2}}
\def\AA {\ensuremath{\mathcal{A}}}
\def\ad {\text{ad}}
\def\alghf  {\ensuremath{\mathcal F}}
\def\alphd         {{\alpha_{\!2}^{}}}
\def\alphe         {{\alpha_{\!1}^{}}}
\def\alphv         {{\alpha_{\!4}^{}}}
\def\alphz         {{\alpha_{\!3}^{}}}
\def\apo           {\mbox{\sc s}}
\def\apoi          {\mbox{\sc s}^{-1}}
\def\ASF        {_{_{\BFH}}}
\def\ASK        {_{_{\HKH}}}
\def\Bimod  {\text{-Bimod}}
\def\BA  {\ensuremath{B}}
\def\BF  {\ensuremath{\mathrm F}}
\def\BFw  {\ensuremath{F_{\ra}}}
\def\BFww  {\ensuremath{F_{\ra'^{-1}\cir\ra}}}
\def\BFss  {\ensuremath{\mathrm F_{\!\circ}}}
\def\BFH  {\ensuremath{F}}
\def\bhHopf {bulk handle Hopf algebra}
\def\blS{\mathcal{B}}
\def\blV{{B}}
\newcommand\BOF[6]{(#1,#2,#3,#4,#5,#6)}
\def\boundA {boundary Frobenius algebra}
\def\boxti  {\,{\boxtimes}\,}
\def\btimes {\boxti}
\def\budef         {defect-crossing}
\newcommand\BUF[6]{(#1,#2,#3,#4,#5,#6)}
\def\bulkA {bulk Frobenius algebra}
\def\BulkA {Bulk Frobenius algebra}
\def\bulkAC {Cardy bulk Frobenius algebra}
\def\CA {\ensuremath{\mathcal A}}
\def\catpic        {{\scriptsize \shadowbox{\C}}}
\def\cbl    {\mathcal Bl}
\def\cbulk         {c^{\text{bulk}}}
\def\cbulki        {c^{\text{bulk}^{\,\scriptstyle-1}}}
\def\CC {\ensuremath{E^c}}
\def\ccA           {{\sss (A)}}
\def\ccB           {{\sss (B)}}
\newcommand\cdef[5]{(c^{\mathrm{def}}_{#1,#2#3})_{#4#5}^{\phantom|}}
\newcommand\cdefinv[5]{{(c^{\mathrm{def}\;-1}_{#1,#2#3})}_{\!#4#5}^{}}
\def\CD            {\ensuremath{\mathcal D_{\!A|B}}}
\def\CDone            {\ensuremath{\mathcal D_{\!A|\one}}}
\def\CF    {(FIN)}
\newcommand\Cfact[5]    {C^{\rm{fact;#1}}_{#2#3,#4#5}}
\newcommand\Cfactnorm[5]    {C^{\rm{FFRS;#1}}_{#2#3,#4#5}}
\def\CFN    {(MOD)}
\def\chii          {\raisebox{.15em}{$\chi$}}
\def\cir    {\,{\circ}\,}
\newcommand\clc[9] {C^{#1,#2#3;#4,#5#6}_{#7,#8#9}}
\def\coa           {_\triangleright}
\def\coar          {_\triangleleft}
\def\cob    {\ensuremath{\mathcal M}}
\newcommand\coen[1]{\int^{#1}\hspace*{-.23em}}
\newcommand\Coend[2]{\ensuremath{\int^X\!#1(#2,#2)}}
\def\coop          {^{\mathrm{coop}}}
\def\Cor    {\text{Cor}}
\def\cXo           {\ensuremath{c_{\!X;0}^{}}}
\newcommand\dcoef[5]{d^{#1#2,#3#4}_{#5}}
\def\defspace  {space of defect fields}
\def\Delun  {G_{\square}}
\newcommand\DF[8]{(#1,#2,#3,#4,#5,#6,#7,#8)}
\def\isod {^{\surd}}
\def\deff   {\,{:=}\,}
\def\dsty          {\displaystyle }
\def\e{\text{e}}
\newcommand\Eev[1] {{{}^{\vee\!}}\!{#1}}
\newcommand\emb[3] {e^{#1}_{#2,#3}}
\newcommand\Emb[2] {e^{#1}_{#2}}
\def\End{\text{End}}
\def\EndH{\text{End}_H}
\newcommand\eota[2]{e^{}_{#1\otimes_{\!A}^{}#2\prec}}
\def\eps{\varepsilon}
\def\eq            {\,{=}\,}
\newcommand\erf[1] {(\ref{#1})}
\def\ES {\ensuremath{E}}
\newcommand\ExtTdef[1]    {\ensuremath{T_{#1,#1}}}
\def\Fbb           {{\ensuremath{G_{\!\Box}^H}}}
\def\Fbx           {{\ensuremath{G_{\!\otimes_\k}^H}}}
\def\Fbxo          {{\ensuremath{G_{\!\otimes_{\k}}^{H;\ra}}}}
\newcommand\flip[2]   {\ensuremath{\tau_{#1,#2}}}
\def\fmap          {{{\ensuremath{\Psi}}}}
\def\FP {\star}
\def\fundws     {\ensuremath{\mathcal S}}
\def\fundwsred     {\ensuremath{\tilde{\mathcal S}}}
\def\g  {\ensuremath{\mathfrak g}}
\newcommand\gcoef[5]{\Psi^{#1#2,#3#4}_{#5}}
\newcommand\GLL[2] {G\!\ell_{#1#2}}
\def\H{\ensuremath{\mathcal H}}
\def\HA {\mathrm H_A}
\newcommand\Hatwsf[2]{\widehat{\ws~}_{\!\!\!#1#2}}
\def\HBimod        {{\ensuremath{H\mathrm{-Bimod}}}}
\def\HBImod        {{\ensuremath{H\text{-}\mathrm{Bimod}}}}
\def\Hbimodpic        {{\scriptsize \shadowbox{\HBimod}}}
\def\HK            {\ensuremath {\mathrm K}}
\def\HKH            {{{\ensuremath K}}}
\def\HMod          {{\ensuremath{H}\text{-Mod}}}
\def\Hom{\ensuremath{\mathrm{Hom}}}
\newcommand\Homa[2]{\ensuremath{\text{Hom}_{A}(#1,#2)}}
\newcommand\Homaa[2]{\ensuremath{\text{Hom}_{A|A}(#1,#2)}}
\newcommand\Homab[2]{\ensuremath{\text{Hom}_{A|B}(#1,#2)}}
\newcommand\Homac[2]{\ensuremath{\text{Hom}_{A|C}(#1,#2)}}
\newcommand\Hombb[2]{\ensuremath{\text{Hom}_{B|B}(#1,#2)}}
\newcommand\Hombc[2]{\ensuremath{\text{Hom}_{B|C}(#1,#2)}}
\newcommand\Homcc[2]{\ensuremath{\text{Hom}_{C|C}(#1,#2)}}
\def\HomH          {{\ensuremath{\mathrm{Hom}_H}}}
\def\Homk          {{\ensuremath{\mathrm{Hom}_\k}}}
\def\HomP{\ensuremath{\mathrm{Hom}^P}}
\def\Hs {\ensuremath{H^*}}
\def\Hss           {{H^*_{}}}
\def\i{\text{i}}
\def\I{\ensuremath{\mathcal I}}
\def\IJ{\ensuremath{\mathcal J}}
\def\ia                     {{\ensuremath{\imath}}}
\def\ib                     {\ensuremath{{\bar\imath}}}
\def\id		                {\mathrm{id}}
\def\Id		                {\mathrm{Id}}
\def\idHs          {\ensuremath{\id_{{H^{\phantom:}}^{\!\!*}}}}
\def\iHb           {\imath^{\BFH}}
\def\iHbo          {\imath^{\BFw}}
\def\idsm          {\mbox{\footnotesize\sl id}}
\def\iHKH{\iota^{\HKH}}
\def\im {\mathrm{Im}}
\def\image  {\text{Im}}
\def\In     {\,{\in}\,}
\newcommand\instord[5]  {\ensuremath{\mathcal{\overline L}^{#1}_{#2,#3,#4,#5}}}
\def\inv                    {^{-1}}
\def\iso    {\,{\cong}\,}
\def\ja                     {{\ensuremath{\jmath}}}
\def\jb{\ensuremath         {{\bar\jmath}}}
\newcommand\KC[8]   {\mathcal K_{#1#2#3;#4#5#6}^{#7;#8}}
\def\kb                     {\ensuremath {{\bar k}}}
\def\L  {\ensuremath{\mathrm L}}
\def\LH  {\ensuremath{L}}
\newcommand\Lact[1] {\rho^\L_{#1}}
\def\lad {\rho_\diamond}
\def\lb            {\ensuremath{{\bar l}}}
\def\ld {\,{^\vee}\!}
\def\lsqarrow      {\scalebox{.38}{\includegraphics{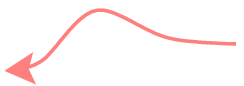}}}
\def\Lfun   {\ensuremath{F_0}}
\newcommand\lyubspace[2]  {\ensuremath{#1^{#2}}}
\def\Lyubspace  {\ensuremath{U^{n}}}
\def\ra {\ensuremath{\omega}}
\newcommand\refc[3]{b^{#1,#2}_{#3}}
\def\rev    {^{\text{rev}}}
\def\rhov          {\rho_{\scriptscriptstyle\!\vee}^{}}
\def\rhoV          {{}_{\scriptscriptstyle\vee\!}^{}\rho}
\def\Lss  {\ensuremath{{\mathcal L_{\text{ssi}}}}}
\def\mapdef {\,{:}\,}
\def\M             {{\mathcal M}}
\def\Mf             {\check{\mathcal M}}
\def\Map    {\text{Map}}
\def\Mapgn    {\ensuremath{\text{Map}_{g,n}}}
\newcommand\Mapio[2]    {\ensuremath{\text{Map}_{#1,#2}}}
\def\MGL           {\M_{G\!\ell}}
\def\MN            {{\ensuremath{\mathscr N}}}
\def\Mod    {\text{-mod}}
\def\rMod    {mod-}
\def\MST           {\M_{\varSigma,\torus}}
\newcommand\Mt[1] {\M_{T;#1}}
\def\Mws    {\ensuremath{M_{\ws}}}
\def\Mwscut {\Mws^\text{cut}}
\newcommand\MwsSD[5]{\M^{S^{2};#1}_{#2#3,#4#5}}
\newcommand\N[3]            {\mathcal{N}_{#1#2}^{\;\;#3}}
\newcommand\nbfinv[4]{\ensuremath{(\nbfm\inv)_{#1#2,#3#4}}}
\def\nbfm{\omega}
\def\nl{\cdot}
\def\ohrv          {\ohr_{\scriptscriptstyle\!\vee}^{}}
\def\ohrV          {{}_{\scriptscriptstyle\vee\!}^{}\ohr}
\def\ohr           {\reflectbox{$\rho$}}
\def\roh           {\raisebox{.4em}{\rotatebox{180}{$\rho$}}}
\def\op {^{\mathrm{op}}}
\def\Or{\text{or}}
\def\ort{\text{or}_2}
\def\oti    {\,{\otimes}\,}
\def\otiA    {\oti_{\!\!A}\,}
\def\OtiA    {{\otimes_{\!A}}}
\def\otiB    {\oti_{\!B}\,}
\def\otiC    {\oti_{\!\CN}\,}
\def\otipm  {\,{\otimes^\pm}}
\def\otip  {\,{\otimes^+}}
\def\Otip  {\,{\otimes^+}\,}
\def\otim  {\,{\otimes^-}\,}
\def\Oti           {{\otimes}}
\def\otiBA{\,{\otimes_\Delta}\,}
\def\otii{{\otimes}}
\def\otik{\,{\otimes_\k}\,}
\def\Otik          {{\otimes_\k}}
\newcommand\pA[1]  {\sse$\color{Acol} #1$}
\def\pb{{\bar p}}
\newcommand\pB[1]  {\sse$\color{Bcol} #1$}
\newcommand\PHi[1] {\Phi_{\!#1}}
\def\Phisod   {\Phi_\surd}
\def\piL    {\pi_\L^{g,n}}
\newcommand\piLio[2]  {\pi_\L^{g,#1+#2}}
\newcommand\PiLio[2]  {\pi_\L^{g,#1,#2}}
\newcommand\PIKio[2]  {\tilde\pi_{\HKH}^{g,#1,#2}}
\newcommand\PIKione[2]  {\tilde\pi_{\HKH}^{1,#1,#2}}
\def\piK    {\pi_\HKH}
\def\piKC    {\pi_\HK}
\def\piv           {\pi}
\newcommand\pg[1]{\scriptsize$#1$}
\newcommand\pl[1]{\scriptsize$#1$}
\newcommand\pota[2]{p^{}_{#1\otimes_{\!A}^{}#2}}
\def\proj{\mathrm{P}}
\def\projMF{\mathcal{P}}
\newcommand\pX[1]  {\sse$\color{defcol} #1$}
\def\qb{{\bar q}}
\def\rb     {\bar r}
\def\rd {{^\vee}\!}
\newcommand\rota[2]{r^{}_{\!\succ #1\otimes_{\!A}^{}#2}}
\newcommand\res[3] {r^{#1}_{#2,#3}}
\newcommand\Res[2] {r^{#1}_{#2}}
\newcommand\rhoHKH[1]   {\rho^{\HKH}_{#1}}
\newcommand\RR[5]   {{\sf R}^{(#1\,#2)#3}_{#4\,#5}}
\newcommand\Rm[5]  {{\sf R}^{-\,(#1\,#2)#3}_{\;#4\,#5}}
\def\one           {{\bf1}}
\def\sb {\bar s}
\def\Scut          {S_{\text{cut}}}
\newcommand\sob[4]{\lambda_{(#1,#2),#3}^{#4}}
\newcommand\sobd[4]{\Lambda^{(#1,#2),#3}_{#4}}
\def\sse           {\scriptsize }
\def\sss           {\scriptscriptstyle}
\def\tft    {\texttt{tft}}
\def\tftA   {\ensuremath{A}}
\def\tftfun {$\tft$-functor}
\def\tftm   {Z}
\def\tfts   {\H}
\newcommand\Tfact[6] {\M^{\torus,#1#2}_{#3#5,#4#6}}
\newcommand\Thet[4]{\Theta^{#1}_{#2#3,#4}}
\newcommand\Thets[4]{\Theta^{#1}_{#2#3,#4}}
\newcommand\Thete[4]{\tilde\Theta^{#1}_{#2#3,#4}}
\def\To            {\,{\to}\,}
\def\tovacuum      {\btcs projection\nxx \vlrightarrow46 \nxx to vacuum
                    \nxy channel \end{tabular}}
\def\tovacuuw      {\btcs projection\nxx \vlleftarrow46  \nxx to vacuum
                    \nxy channel \end{tabular}}
\def\TOR           {\TT_{\!X}}
\def\TORS          {\mathcal T_{\!X}}
\def\TOT           {\TT_{X;pq,\alpha\beta}}
\def\torus         {{\ensuremath{{\mathrm T}^2}}}
\def\tr     {\text{trace}}
\def\Tr     {\text{tr}}
\def\Times         {\,{\times}\,}
\def\TT            {{\mathcal {\tilde T}}}
\newcommand\twistmod[3]   {^{#2}\!\!#1^{#3}}
\def\twoact {_\rho\mathfrak{F}_{\scriptsize\ohr}}
\def\tws           {{\mathring{\ws}}}
\def\TXY           {_{\torus}^{X|Y}}
\def\TXYij         {_{\torus,\,ij}^{X|Y}}
\def\V  {\ensuremath{\mathcal V}}
\def\VA    {\ensuremath{\mathcal V}}
\def\Vectpic        {{\scriptsize \shadowbox{\Vectk}}}
\def\Vee           {^{\vee}}
\def\Vir    {\ensuremath{\mathcal Vir}}
\def\ws    {\ensuremath{\Sigma}}
\def\wsD    {\ensuremath{\widehat{\Sigma}}}
\def\wsC    {\ensuremath{\Sigma^C}}
\def\wsCD    {\ensuremath{\widehat{\Sigma^C}}}
\def\wsemb{\iota}
\newcommand\wsf[4]{\ws_{#1#2,#3#4}}
\newcommand\wsSD[5]{S^{2}_{#1;#2#3,#4#5}}
\def\Xs            {X^{*_{}}_{\phantom:}}
\def\zb {\bar z}
\def\k {\ensuremath{\Bbbk}}
\def\Z {\ensuremath{\mathbb{Z}}}
\def\R {\ensuremath{\mathbb{R}}}
\def\CN {\ensuremath{\mathbb{C}}}
\def\Vect{\ensuremath{\mathcal Vect_{\CN}}}
\def\Vectk{\ensuremath{\mathcal Vect_{\k}}}
\def\C{\ensuremath{\mathcal C}}
\def\Ce {\ensuremath{\C \,{\boxtimes}\,\C\rev}}
\def\cD{\ensuremath{\mathcal D}}
\def\cE{\ensuremath{\mathcal E}}
\def\CobC{\ensuremath{3\mbox{-}\mathcal Cob(\C)}}
\def\eC {\ensuremath{\C\rev {\boxtimes}\,\C}}
\def\Obj {\ensuremath{\mathrm{Obj}}}
\def\Rep    {\mathcal Rep}
\definecolor{Acol}       {rgb}  {0.494118,0.015686,0.015686}
\definecolor{Bcol}       {rgb}  {0.098039,0.070588,0.552941}
\definecolor{Brcl}       {rgb}  {0.600000,0.200000,0.200000}
\definecolor{DarkGreen}  {rgb}  {0.000000,0.392156,0.000000}
\definecolor{defcol}     {rgb}  {0.623529,0.062745,1.000000}
\definecolor{DarkViolet} {rgb}  {0.580392,0.000000,0.827450}
\definecolor{ForestGreen}{rgb}  {0.100000,0.408823,0.100000}
\definecolor{green3}     {rgb}  {0.000000,0.803921,0.000000}
\definecolor{mydarkblue} {rgb}  {0.282352,0.239215,0.803921}
\definecolor{OrangeRed}  {rgb}  {1.000000,0.270588,0.000000}
\definecolor{red3}       {rgb}  {0.803921,0.000000,0.000000}
\definecolor{SP1}        {rgb}  {0.4,0.16,0.4}
\definecolor{SP2}        {rgb}  {0.39,0.23,0.47}
\begin{document}
\setcounter{tocdepth}{1}%excludes subsections from table of contents
%                                        \version\versionno  \draft
\numberwithin{equation}{chapter}
\numberwithin{thm}{chapter}

\pagenumbering{roman}
\selectlanguage{english}
\begin{center}
    \Huge
    Hopf and Frobenius algebras in conformal field theory\\
    \vspace{40pt}
    \Large
    PhD thesis\\
    Carl Stigner\\
    Karlstad University 2012
\end{center}
\begin{picture}(240,250)
    \put(0,0){\setulen80
  \put(104,-5)  {\includepicclax3{04}{85}}
  \put(104,-16) {
  \put(186,43)  {\pl {s}}
  \put(194,194) {\pl {\bar s}}
  \put(203,106) {\pl {r}}
  \put(213,140) {\pl {\bar r}}
  \put(165.8,117.5) {\pl {\phi_{\!\beta_3}^{}}}
  \put(179,156) {\pl {\phi_{\!\beta_4}^{}}}
  \put(95,191)  {\pA A}
  \put(148,104) {\pB B}
  \put(171,167) {\pB B}
  \put(157,191) {\pB B}
  \put(119,115) {\pA A}
  \put(207,88)  {\pX Y}
  \put(60.2,43) {\pl {q}}
  \put(50,171)  {\pl {\pb}}
  \put(42,155)  {\pg {\pb}}
  \put(62.8,115.5){\pl {\phi_{\!\beta_2}^{}}}
  \put(74,203.2){\pl {\phi_{\!\beta_1}^{}}}
  \put(89,175)  {\pX X}
  } }
\end{picture}
\vspace{70pt}

\noindent\scshape{Author's note: This is an updated version, with some minor adjustments and typos corrected, of my thesis published at:}\\
\href{http://urn.kb.se/resolve?urn=urn:nbn:se:kau:diva-14456}{
\tt\small http://urn.kb.se/resolve?urn=urn:nbn:se:kau:diva-14456}.
\normalfont
\pagestyle{empty}
\pagebreak
\begin{abstract}
There are several reasons to be interested in conformal field theories in two dimensions. Apart from arising in various physical applications, ranging from statistical mechanics to string theory, conformal field theory is a class of quantum field theories that is interesting on its own. First of all there is a large amount of symmetries. In addition, many of the interesting theories satisfy a finiteness condition, that together with the symmetries
allows for a fully non-perturbative treatment, and even for a complete solution in a mathematically rigorous manner. One of the crucial tools which make such a treatment possible is provided by category theory.

This thesis contains results relevant for two different classes of conformal field theory. We partly treat rational conformal field theory, but also derive results that aim at a better understanding of logarithmic conformal field theory. For rational conformal field theory, we generalize the proof that the construction of correlators, via three-dimensional topological field theory, satisfies the consistency conditions to oriented world sheets with defect lines. We also derive a classifying algebra for defects. This is a semisimple commutative associative algebra over the complex numbers whose one-dimensional representations are in bijection with the topological defect lines of the theory.

Then we relax the semisimplicity condition of rational conformal field theory and consider a larger class of categories, containing non-semisimple ones, that is relevant for logarithmic conformal field theory. We obtain, for any finite-dimensional factorizable ribbon Hopf algebra $H$, a family of symmetric commutative Frobenius algebras in the category of bimodules over $H$. For any such Frobenius algebra, which can be constructed as a coend, we associate to any Riemann surface a morphism in the bimodule category. We prove that this morphism is invariant under a projective action of the mapping class group of the Riemann surface. This suggests to regard these morphisms as candidates for correlators of bulk fields of a full conformal field theory whose chiral data are described by the category of left-modules over $H$.
\end{abstract}
\pagestyle{empty}
\begin{titlepage}
\fancyhf{}
%\newpage
%\mbox{}
\newpage
\end{titlepage}
\pagenumbering{roman}
\section*{Basis and outline of the thesis}
This thesis is based primarily on the following papers:
%\vspace{-10pt}
\ssubject{Paper \papI.} J. Fjelstad, J. Fuchs, C. Stigner,\\
           {\em {RCFT with defects: Factorization and fundamental world sheets}},\\
           {Nucl. Phys. B} {863} (2012) {213-259},\\
           \href{http://dx.doi.org/10.1016/j.nuclphysb.2012.05.011}
           {\small{\tt{doi: 10.1016/j.nuclphysb.2012.05.011}}},
           {arxiv: \small{\tt1202.3929 [hep-th]}}
\vspace{-6pt}
\ssubject{Paper \papII.} J. Fuchs, C. Schweigert C. Stigner,\\
           {\em {The classifying algebra for defects}},
           {Nucl. Phys. B} {843} {(2011)} {673-723},\\
            \href{http://dx.doi.org/10.1016/j.nuclphysb.2010.10.008}
           {\small\tt{doi: 10.1016/j.nuclphysb.2010.10.008}},
           {arxiv: \small\tt{1007.0401   [hep-th]}}
\vspace{-6pt}
\ssubject{Paper \papIII.} J. Fuchs, C. Schweigert C. Stigner,\\
           {\em {Modular invariant Frobenius algebras from ribbon Hopf algebra automorphisms }},\\
           J. Algebra {363} {(2012)} {29-72},\\
           \href{http://dx.doi.org/10.1016/j.jalgebra.2012.04.008}
           {\small\tt{doi: 10.1016/j.jalgebra.2012.04.008}},
           {arxiv: \small\tt{1106.0210  [math.QA]}}
\vspace{-6pt}
\ssubject{Paper \papIV.} J. Fuchs, C. Schweigert C. Stigner,\\
           {\em {Higher genus mapping class group invariants
            from factorizable Hopf algebras}}, \\
           Preprint:
           {arxiv: \small\tt{1207.6863 [math.QA]}}

\bigskip\noindent
In addition, results from the following papers will appear:
%\vspace{-10pt}
\ssubject{Paper \papV.} J. Fuchs, C. Stigner,
           {\em {On Frobenius algebras in rigid monoidal categories}}, \\
           {Arabian Journal for Science and Engineering} {63} (2008) {175-191}\\
           \href{http://ajse.kfupm.edu.sa/title_new.asp?issue=200812&display}{\small\tt{ajse.kfupm.edu.sa: 33-2C}},
           {arxiv: \small\tt{0901.4886[math.CT]}}
\vspace{-6pt}
\ssubject{Paper \papVI.} J. Fuchs, ,C. Schweigert, C. Stigner,\\
           {\em {The three-dimensional origin of the classifying algebra}},\\ {Nucl. Phys. B} {824} (2010) {333-364} \\
           \href{http://dx.doi.org/10.1016/j.nuclphysb.2009.07.017}{\small\tt{doi: 10.1016/j.nuclphysb.2009.07.017}},
           {arxiv: \small\tt{0907.0685[hep-th]}}
\vspace{-6pt}
\ssubject{Paper \papVII.} J. Fuchs ,C. Schweigert, C. Stigner,\\
           {\em {The Cardy-Cartan modular invariant}},\\ in: {\em{Strings, Gauge Fields, and the Geometry Behind.\ The
            legacy of Maximilian Kreuzer},}
            {A.\ Rebhan, L.\ Katzarkov, J.\ Knapp, R.\ Rashkov, and E.\
            Scheidegger, eds.}
            {(World Scientific, Singapore, 2012)} {p. 289-304},
           {arxiv: \small\tt[1201.4267[hep-th]]}
\vspace{-6pt}
\ssubject{Paper \papVIII.} C. Stigner,\\
           {\em {Factorization constraints and boundary conditions in rational CFT}}, \\
           {Banach Center Publ.} {93} (2011) {211-223},\\
           \href{http://dx.doi.org/10.4064/bc93-0-17}{\small\tt{doi: 10.4064/bc93-0-17}},
           {arxiv: \small\tt{1006.5923 [hep-th]}}
\subsection*{My contribution to the papers}
My contribution to the results of the four papers that form the basis of this thesis is as follows:

\ssubject{Paper \papI} Large parts of the paper were developed in collaboration with the coauthors.
The generalization of the proof of bulk factorization was to a large extent elaborated by me.
\ssubject{Paper \papII} Large parts were developed in discussion with all coauthors. I contributed essential parts to the strategy and most of the results.
\ssubject{Paper \papIII} I was involved in the proof that the bulk Frobenius algebra is symmetric and commutative.  The proof of modular invariance and the generalization to algebras associated to ribbon Hopf algebra automorphisms, as well as the analysis of the ribbon structure of $H$\Bimod, was developed in collaboration with all coauthors.
\ssubject{Paper \papIV} Large parts were developed in discussion with the coauthors. I contributed essential parts to most of the proofs.
\pagebreak
\subsection*{Outline of the thesis}
The thesis is organized as follows: In chapter \ref{chap:intro} we give an introduction to conformal field theory. Many of the notions that will be relevant later are introduced. The discussion in this chapter is partly heuristic.

Chapter \ref{chap:cat} is a summary of the notions from category theory that we need in the rest of the text. Here we also describe relevant aspects of finite-dimensional factorizable ribbon Hopf algebras.

In chapter \ref{chap:alg} we review several classes of algebras that appear in connection with conformal field theory. In particular we introduce the algebras that are the center of attention in this thesis.

Chapter \ref{chap:RCFT} first gives on overview over the construction of correlators of rational conformal field theory with the help of 3d TFT. The rest of chapter \ref{chap:RCFT} summarizes the results of paper \papI, which extends the proof of this construction to world sheets with defect lines.

Chapter \ref{sec:class_alg} describes the classifying algebra for defects, obtained in paper \papII. The classifying algebra is a semisimple associative algebra over the complex numbers that classifies the defect lines of a rational conformal field theory.

Chapter \ref{sec:beyond} is devoted to the results of paper \papIII\ and \papIV\ related to non-semi\-simple conformal field theories. For a so-called factorizable finite ribbon category \C, we associate to each Riemann surface $\Surf gn$ a morphism $\Corrgn$. This morphism takes as an input a symmetric commutative Frobenius algebra in \C. We restrict to a subclass of factorizable finite ribbon categories and consider the category $H$\Mod\ of representations of a finite-dimensional factorizable ribbon Hopf algebra. We obtain a family of symmetric commutative Frobenius algebras in $H$\Mod, and for each such algebra, we prove that the morphism $\Corrgn$ is invariant under an action of the mapping class group of $\Surf gn$. We interpret these morphisms as candidates for correlators of oriented closed world sheets.

\begin{titlepage}
\fancyhf{}
%\newpage
%\mbox{}
\newpage
\end{titlepage}
\mbox{}
\vspace{30pt}
\section*{Acknowledgements}
\thispagestyle{empty}
First of all I would like to thank my supervisor Jürgen Fuchs for many interesting, helpful and inspiring discussions, for pleasant collaboration, and for generally making my time as a PhD student good in every respect.
\nxl1
Further, many thanks to Christoph Schweigert for enjoyable collaboration, inspiration, the hospitality during my stay in Hamburg, and comments on parts of early versions of this text.
\nxl1
I also want to thank Jens Fjelstad for many helpful discussions, the pleasant collaboration, comments on parts of this text, and the good time in Nanjing.
\nxl1
In addition I want to thank present and former staff of the department of physics and electrical engineering for making the everyday life at the university enjoyable.
\nxl1
Finally, thanks to friends and family for making my life great. In particular I am deeply grateful to Maria and Lova for love and support and for putting up with me during the time I finished this thesis.
\nxl3 Thanks also to Ingo Runkel for comments implemented in this updated version.
\begin{titlepage}
\fancyhf{}
%\newpage
%\mbox{}
\newpage
\end{titlepage}

\chapter*{Introduction}%  Introduction to CFT, physical background
\thispagestyle{empty}
A crucial property of physical systems are their symmetries.
By a symmetry, or a symmetry transformation, of a physical system we mean a non-trivial transformation of the system such that some interesting quantity is left unaffected. By making use of the symmetries one can often simplify the analysis of a problem in physics considerably. In fact, it may even be necessary to explore the symmetries in order to be able to solve the problem at all.
Symmetries are encoded mathematically in an algebraic structure, whose precise form depends on the situation one wants to describe.

Quantum field theories defined on two-dimensional surfaces naturally appear in connection with various physical models.
Among these two-dimension\-al theories are the \emph{conformal field theories}, which have conformal transformations as part of their symmetries. One characterization of a conformal transformation is that it leaves the angle between any two vectors invariant.
Two-dimensional conformal field theory appears in rather different areas in physics, ranging from various applications in statistical mechanics and condensed matters systems to string theory.

Conformal field theories form a very special class of quantum field theories in several respects. They are considered on arbitrary \emph{world sheets}. A world sheet is a two-dimensional surfaces which may have a non-empty boundary.
Thus, as opposed to quantum field theory as used in e.g. particle physics, there is no single preferred background space-time, and in particular the classification of boundary conditions is a non-trivial issue.
Second, in the situations we consider in this thesis, the theory satisfies a finiteness condition which, combined with the huge amount of symmetries, allows one to solve the theory completely. This is in sharp contrast with e.g.\ the standard model of particle physics which can largely only be treated in a perturbative manner. In addition, the solution of a (local) conformal field theory on a world sheet can be obtained from a non-local theory on a related surface. The latter type of theory is called a \emph{chiral} conformal field theory. This relationship, known as \emph{holomorphic factorization}, naturally splits the process of solving the physical theory into two parts, the first one dealing only with chiral conformal field theory.

This thesis is concerned with issues related to the second part of the solution of conformal field theory, i.e.\ constructing the full (local) theory from the underlying chiral one. This is a purely algebraic and combinatorial problem that can be
addressed by rigorous mathematical reasoning and by making use of the properties of the \emph{representation category} for the symmetries. One great advantage of approaching the problem in terms of category theory is that we can treat an entire class of conformal field theories simultaneously. It is also worth pointing out that already when considering the algebraic part of the problem alone, one obtains certain physical quantities even without using an explicit solution of the first part of the problem, i.e.\ the underlying chiral theory.

The categories of interest have properties that allow us to perform calculations rigorously in terms of \emph{graphical calculus}. This means that instead of writing mathematical expressions
with the help of symbols, using letters from some alphabet, we use instead elements of a graphical code.
This is not only a convenient, but even a crucial tool for many of the calculations in this thesis.

In addition to being part of solutions to various physical problems, conformal field theory is interesting on its own. It may serve as a "theoretical laboratory" for quantum field theory, which could help us to get a deeper understanding of more general theories.
To justify this idea it is crucial that conformal field theory is not only a mathematically beautiful theory, but that it also appears in physical applications.
Even though the methods and concepts used to obtain a complete solution are related to the very special nature of the conformal field theories, these methods and concepts are potentially of interest in more general situations as well.
Conformal field theory has also given rise to many new developments in mathematics, such as vertex algebras and braided tensor categories.

\pagebreak

%\tableofcontents\pagebreak
%%%%%%%%%%%%%%%%%%%%%%%%%%%%%%
%\pagenumbering{arabic}
%\void{

%\pagebreak
\fancypagestyle{plain}{%takes away page number on tableofcontents
\fancyhf{} % clear all header and footer fields
\fancyfoot[C]{} % except the center
\renewcommand{\headrulewidth}{0pt}
\renewcommand{\footrulewidth}{0pt}}
\tableofcontents
\pagebreak

\pagenumbering{arabic}
\pagestyle{myheadings}
\fancypagestyle{plain}{%redefine plain for first page of chapter
\fancyhf{} % clear all header and footer fields
\fancyfoot[C]{\thepage} % except the center
\renewcommand{\headrulewidth}{0pt}
\renewcommand{\footrulewidth}{0pt}}
%}
%%%%%%%%%%%%%%%%%%%%%%%%%%%%%%%%
\pagenumbering{arabic}
\chapter{A flavor of conformal field theory}\label{chap:intro}
We start with a brief introduction to conformal field theory. The purpose of this chapter is to introduce, in a heuristic manner, a number of concepts which are relevant in conformal field theory, in order to put the questions we address in this thesis into a context. Since the focus in this chapter is on concepts the discussion will from time to time be somewhat over-simplified. Those concept that are central for the results presented in this thesis will be described in more detail in later chapters.
\section{Conformal invariance in $d$ and $2$ dimensions}

A conformal transformation of $\R^d$ is characterized by the property that it preserves the metric up to a scale-factor that may depend on position. Equivalently we can say that a conformal transformation preserves angles between vectors.  For general dimension $d$, any conformal transformation can be written as a combination of translations, scalings, rotations and special conformal transformations. The latter is a translation preceded and followed by an inversion. In $d$ dimensions, these transformations are well defined everywhere, and are accordingly referred to as \emph{global} conformal transformations.
The global conformal transformations form a Lie group with underlying Lie algebra $\mathfrak{so}(d+1,1)$.

In two dimensions the situation is rather special.
Considering the Riemann sphere $\CN\cup\{\infty\}$, the transformation $z\mapsto1/z$ is well defined everywhere. In particular, the group of global conformal transformations is the Möbius group -- the group of automorphisms of the Riemann sphere.
However, the conformal transformations are not exhausted by Möbius group. This is most easily realized by thinking of the Riemann sphere as endowed with a global complex coordinate and recalling that any analytic map constitutes a conformal transformation: The condition for a transformation to be conformal is equivalent to the Cauchy-Riemann equations (see e.g. \cite[Chapter 5.1]{DIms}), and consequently a conformal transformation in two dimensions is equivalent to a complex analytic map.  The global conformal transformations described above are certainly analytic. However there are, in addition to the global transformations, analytic maps with poles of higher order. It follows that the group of conformal transformations that may not we well defined on the entire Riemann sphere is larger than the Möbius group.

Consider now a function $\phi(z)$ of a complex variable. Under an infinitesimal conformal transformation $z\mapsto z'\eq z+\eps(z)$, the function $\phi$ transforms as $\delta\phi=\eps(z)\partial\phi(z)=\sum_n c_n\ell_n\phi(z)$, where the last equality comes from a Laurent expansion of $\eps(z)$ and we have defined $\ell_n:=-z^{n+1}\partial_z$ for $n\In\Z$. Thus the infinitesimal conformal transformations are generated by  $\ell_n$ with $n\in\Z$. These differential operators satisfy the commutations relations
\be\label{Witt_br}
    [\ell_m,\ell_n]=(m-n)\ell_{m+n}\,.
\ee
The complex Lie algebra spanned by elements $\ell_n$, with Lie bracket \eqref{Witt_br}, is called the Witt algebra.

When considering quantum theories, we want
the basis of the Witt algebra to become modes of a current that generates conformal transformations.
The relevant algebra is then a central extension of the Witt algebra -- the \emph{Virasoro} algebra \Vir, see e.g.\ \cite[Section 3]{Sche2}. A basis of \Vir\ is given by $\{L_n|n\In\Z\}\cup C$, where $C$ is a central element, and the non-vanishing Lie brackets are
\be\label{Vir_br}
    [L_m,L_n]=(m-n)L_{m+n}+\frac C{12}m(m^2-1)\delta_{m+n,0}\,.
\ee
Note that from \eqref{Vir_br} it follows directly that the Virasoro algebra has a subalgebra with basis $\{L_{-1},L_{0},L_{1}\}$ and this subalgebra is $\mathfrak{sl}(2)$.
Upon a detailed analysis of the action of this subalgebra in conformal field theory, one recovers \cite[Section 2.8]{Sche2} the real form $\mathfrak{so}(3,1)$, i.e.\ the Lie algebra of the global conformal transformations.
Since we will exclusively work in two dimensions we will drop the epithet "two-dimensional" and refer to a \emph{two}-dimensional quantum field theory whose symmetry algebra contains the Virasoro algebra simply as conformal field theory, abbreviated as CFT\index{CFT}.

\section{Chiral and full CFT}\label{sec:chi_full}
Let us now set the stage for the considerations in this thesis. We will be interested in conformal field theories defined on world sheets\index{world sheet}. A \emph{world sheet} is a two-dimensional compact smooth manifold, possibly with non-empty boundary, equipped with a conformal structure with Euclidean signature, i.e. an equivalence class of metrics with respect to local rescalings. Note that we allow the world sheet to be non-orientable. In addition we require that world sheets can be equipped with  the following structures:
\pagebreak
\begin{itemize}\addtolength{\itemsep}{-6pt}
    \item Boundary conditions
    \item Defect lines
    \item Field insertions
\end{itemize}
In order to give a somewhat more detailed description of these structures it is instructive to describe a couple of applications of conformal field theory.
\subject{Conformal field theory in physics}
An important setting for CFT is the description of critical phenomena in statistical mechanics. A standard example is the two-dimensional Ising model. This is a model that describes the interaction  of spin variables positioned on the sites of a square lattice in some domain. The system is characterized by two length scales, the lattice spacing $a$ and the correlation length $\xi$, which describes the distance at which the interaction is effectively non-zero. Typical observables in the theory are local magnetization (the value $s(r)=\pm1$ of the spin variable at site $r$) and the energy density. The \emph{scaling limit} is obtained by taking $a\rightarrow0$ while keeping the domain and $\xi$ fixed. In the scaling limit one obtains a continuous two -dimensional CFT that describes the long-range behavior of the system, see e.g.\ \cite{card33} for more details.

A similar phenomenon, for which CFT plays a role is \emph{critical percolation}. Percolation describes the statistical process of a liquid that seeps through a porous medium. One of the simplest mathematical models for percolation is obtained by taking a square lattice, with lattice spacing $a$, such that each site can be either open, with some probability $p$, or closed, with probability $1{-}p$.
The \emph{crossing probability} of such a lattice of size $n\times n$ in some domain $\mathcal D$ is the probability $\pi_n(p)$ that there is at least one path, say from left to right, such that all sites along this path are open.

Just like for the Ising model, CFT is relevant for the  understanding the critical behavior of such systems in the scaling limit, i.e.\ when we take the limit $a\rightarrow0$ while keeping $\mathcal D$ fixed. For percolation, the critical point is a distinguished value of $p$ denoted by $p_c$. In the scaling limit, one finds $0<\pi_n(p_c)<1$ at $p\eq p_c$ while $\pi_n(p)$ approaches $0$ for $p<p_c$ and $1$ for $p>p_c$. The critical behavior can be described geometrically as follows: For small $p$ the average size of clusters of open sites is small. However, at $p=p_c$ the average size of the clusters diverges and for $p>p_c$ there is a finite probability that a given site belongs to an infinitely large cluster. See e.g.\ \cite{card19,HEnk,laps} for more details on percolation and CFT.

CFT is also an important part of string theory, which is a candidate for a quantum theory unifying the four fundamental forces. The fundamental objects in string theory are one-dimensional strings which move in space-time. While moving, the string sweeps out a two-dimensional surface -- the world sheet. One important ingredient is a CFT defined on this world sheet, see e.g. \cite{POlc}.

In addition there are condensed matter systems that are effectively $2$- or $1\,{+}\,1$-dimensional, giving rise to theories with approximate conformal symmetry. This happens e.g.\ when studying the Kondo effect \cite{affl7} or quantum Hall fluids.

It is also worth pointing out that a generic CFT does not come with a Lagrangian formulation. However, there is an important class of models, the \emph{WZW-models}, that come from the quantization of a Lagrangian. In that case, the fields take values in a group manifold and symmetries of this theory are encoded in what is known as a current algebra, which is based on an affine Lie algebra.
\subject{The world sheet}
We are now in position to describe the structures of the world sheet in some more detail
\ssubject{Boundary conditions:} In the case of non-empty boundary, we need to assign a boundary condition to each boundary component, describing the behavior of bulk fields close to the boundary. The classification of consistent boundary conditions is far from trivial. A consistent boundary condition always respects the conformal symmetry. However, we will usually require that the boundary conditions respect all the symmetries of the theory. There are examples of situation where the boundary conditions have a simple formulation:
\begin{itemize}
\item In lattice spin models, such as the Ising model, some boundary conditions
    can be obtained by imposing specific values for the spin variables on the sites at the boundary.
\item In theories with a Lagrangian formulation, such as WZW-models, some boundary conditions may be described in terms of Dirichlet or Neumann conditions, i.e.\ by prescribing the value of the fields or their derivatives that appear in the Lagrangian on the boundary.
\end{itemize}
However, in general boundary conditions cannot be described this way. Even if some of them can, they typically do not exhaust all allowed boundary conditions. Consider e.g.\ the 3-states Potts model, which is a lattice model with three allowed values for each spin variable. In this model there are totally 8 conformal boundary conditions \cite{afos,fuSc9}. One them has an interpretation of a free boundary condition, three of them as fixed directions and three of them as mixed spin directions which means that one direction is forbidden. The eighth one, constructed in \cite{afos} does not have such a nice interpretation.
\pagebreak
\ssubject{Defect lines:}
The world sheet may be divided into several regions such that the CFT on the two sides are related in a sense that will be described later, but which may differ e.g. with respect to the field content and allowed boundary conditions. We say that the CFTs on the two sides of the line may be in different \emph{phases}.
The two regions are separated by a \emph{defect line}\index{defect!line}, which may be thought of as a two-sided boundary. The behavior of fields in the vicinity of the defect line is described by assigning a defect condition to each defect line.

We will be interested in \emph{topological\index{defect!line!topological}} defect lines, which in particular means that the defect lines can be deformed without affecting the physics of the theory.
Whenever two regions are in the same phase there is a particular defect condition -- the trivial defect condition. This means that the defect line can be omitted and the two regions can be considered as one. Accordingly, we refer to such defect lines as \emph{trivial} or \emph{invisible}. Trivial defects can be added or removed freely from the world sheet. It is convenient to  always require a minimal number of defects, which however may be trivial.
Topological defects arise naturally e.g.\ in statistical systems
modeling condensed matter systems of physical interest, see e.g.\ \cite{savi,osaf2}. The general description of defect lines will be continued in section \ref{top_def}.

\ssubject{Field insertions:} There are field insertions in the bulk as well as on the boundary of the world sheet. We will refer to fields inserted in the bulk, on the boundary or on a defect line as bulk, boundary and defect fields, respectively.
Boundary components and defect lines may be partitioned into several segments. Boundary and defect fields are inserted on the junction between to such segments and may change the boundary or defect condition.
We will sometimes consider a bulk field as a defect field inserted on an invisible defect. A defect field that changes the defect condition to or from the trivial one is referred to as a \emph{disorder field}. In other words, by disorder field we mean a field at which a non-trivial defect starts or ends.

The interpretation of the field insertions depend on the application.
In statistical models the field insertions may e.g.\ be spin variables or disorder fields at which frustration lines may end, and in string theory the field insertions are interpreted as asymptotic string states.

\subject{Full CFT}
We regard a conformal field theory as solved if we are able to obtain all correlation functions. That is, for any world sheet, give a linear map from the space of fields to the complex numbers that satisfies all consistency conditions of the theory. Of course, explicit calculation of all correlation functions would require an infinite amount of calculations. A more reasonable goal is to establish the existence of all correlation functions and give algorithms from which they can be computed.
At least for an important class of theories, known as \emph{rational\index{CFT!rational}} CFT, this has been achieved. In particular this approach turns part of the problem into a finite one. There is an algorithm describing how any correlation function can be calculated from a finite number of quantities. The purpose of the rest of this section is to explain this approach in some more detail.

We will be interested in \emph{full\index{CFT!full}, local} conformal field theory. By this we mean a conformal field theory defined on world sheets, as defined above, with single-valued correlation functions. It is crucial to distinguish full CFT from \emph{chiral\index{CFT!chiral}} CFT. The latter, which will be discussed in somewhat more detail below, is instead defined on complex curves and the correlation "functions" are not necessarily single-valued. We will soon return to the discussion of chiral and full CFT, but first we introduce some concepts that will be useful in the discussion.

Consider two fields close to each other. Intuitively, one may argue that from a point very far away it is not possible to distinguish the two fields from a single one.
However, for the manifolds of our interest, the notions of "close to each other" and "far away" are not well defined. A conformal transformation may separate points to an arbitrary distance and move a point far from away close to one of the insertion points.
Nevertheless, this suggests the existence of a product of fields giving rise to a superposition of fields inserted at a single point.
This is the \emph{operator product expansion} (OPE for short), controlling the short distance behavior of fields. The OPE relates $n\,{+}\,1$-point functions to $n$-point functions. That is, for two field insertions $\phi_{a}(z)$ and $\phi_{b}(w)$, such that $|z-w|$ is much smaller than the distance between any other pair of fields in a correlator, we have
\be\label{OPE_corr}
    \langle\phi_{a}(z)\,\phi_{b}(w)\cdots\rangle=\sum_{c}C_{abc}(z,w)\,\langle\phi_c(w)\cdots\rangle\,.
\ee
This is the concrete meaning of the OPE. It is customary to write the OPE \eqref{OPE_corr} in terms if the following shorthand notation:
\be\label{OPE}
    \phi_{a}(z)\,\phi_{b}(w)\sim\sum_{c}C_{abc}(z,w)\,\phi_c(w)\,.
\ee

There is a separation (which will be discussed in more detail below) of the fields into \emph{primary} fields and their \emph{descendants}.
Conformal symmetry strongly restricts the dependance of $C_{abc}$ on $z$ and $w$. E.g.\ translation invariance implies that $C_{abc}$ only depends on $z\,{-}\,w$. A more detailed analysis shows that, for rational CFTs, the OPE of two primary fields has the form
\be\label{OPE_prim}
\phi_{i}(z)\phi_{j}(w)\sim\sum_{k}c_{ijk}(z-w)^{(\Delta_k-\Delta_i-\Delta_j)}(\zb-\bar w)^{(\bar\Delta_k-\bar\Delta_i-\bar\Delta_j)}\phi_k(w)+\text{desc},
\ee
where the summation over $k$ is over the primary fields, the coefficients $c_{ijk}$ are constants, and "desc" stands for the terms involving the descendants. The numbers $\Delta_p$ and $\bar\Delta_p$ are the \emph{conformal weights} of the fields and describe how the fields behave under scaling transformations.

The fields with conformal weight $\bar\Delta_p\eq0$ constitute the holomorphic fields, and the fields for which $\Delta_p\eq0$ constitute the anti-holomorphic fields. The holomorphic fields close under the OPE and thus furnish a closed "algebra" of fields -- the \emph{chiral algebra\index{chiral algebra}}. Analogously the anti-holomorphic fields generate the anti-chiral algebra. In non-rational CFTs, it is no longer guaranteed by the chiral symmetries that all coefficients  in the expansion \eqref{OPE_prim} are powers of $z-w$. However, the holomorphic fields still give rise to a chiral algebra.

Note that the expansion coefficients in the chiral algebra are not numbers but rather functions of the insertion points. Thus, the chiral algebra is not an algebra in the conventional sense. However, this property (and a lot more) can be taken care of in the structure of a conformal  \emph{vertex algebra}\index{vertex algebra}, also called \emph{chiral algebra}, which will be discussed in more detail in section \ref{subsec_VOA}. One may consider theories in which the anti-chiral algebra is different from the chiral algebra. However, we will always assume that the chiral algebra is the same as the anti-chiral.

A conformal vertex algebra furnishes in particular a representation of the Virasoro Lie algebra. As  a consequence, there will be two copies of \Vir\ acting on the fields, one from the chiral and one from the anti-chiral algebra. The two conformal weights $\Delta_p$ and $\bar\Delta_p$ mentioned above describe the transformation with respect to the actions of these two copies.
The central element $C$  of \Vir\ acts as $c\,\id$ for some $c\In\CN$ in all  representations of the chiral algebra. The number $c$ is called the \emph{central charge} of the theory.

The OPE coefficients and correlation functions involving descendants can be obtained from the ones involving only primary fields via the chiral symmetry.
In the case that the chiral algebra is built from the Virasoro algebra alone, the primary fields can be labeled by highest weight representations of \Vir.
However, in many situations we are interested in theories for which the chiral algebra is larger than the one associated with \Vir.
In general, there need no longer be a concept of highest weight state for a representation of the chiral algebra. However, there is still a subset of the fields with the property that all OPE coefficients and correlation functions can be obtained from the ones involving only this subset via the chiral symmetry. In the rational case, these fields, to which we below refer to as \emph{primary fields}, are in bijection with the irreducible representations (counting multiplicities) that appear in the state space.

\pagebreak
\subsection{Chiral conformal field theory}\label{sec:chiral}
Studying chiral CFT\index{CFT!chiral|textbf} involves in particular studying the representation theory of conformal vertex algebras.
Here we mention a few of the important properties of conformal vertex algebras. A full definition will be given in section \ref{subsec_VOA}. The data of a vertex algebra contain a graded vector space $\VA$ and a \emph{vertex operation}, $Y(\cdot,z):\VA\rightarrow\End(\VA)\llbracket z^{\pm1}\rrbracket$, that assigns to each vector $a\In\VA$ a formal power series $Y(a,z)$ with coefficients in $\End(\VA)$. These power series are used to describe the fields in the chiral theory mathematically. The Laurent coefficients in the expansions constitute the \emph{modes} of the fields. In a conformal vertex algebra there is a distinguished vector $T\In\VA$, the \emph{conformal vector}, whose modes furnish a representation of the Virasoro algebra. The minimal chiral algebra is generated by $Y(T,z)$ alone. The full CFTs with chiral algebras of this form are the \emph{minimal models}. However, typically we deal with \emph{extensions}, in which case the chiral algebra is generated by additional fields.

Chiral CFT is defined on a complex curve \CC, i.e.\
a closed surface with a complex structure. A complex structure is equivalent to having an orientation together with a conformal structure, i.e.\ an equivalence class of metrics with respect to local rescalings.
The correlation "functions" of a chiral CFT are called \emph{conformal blocks}. We will see below why here the term "function" is put in quotation marks. For the results in this thesis the existence and some properties of the conformal blocks are crucial, but we will not need a general definition of the conformal blocks.

\subject{Conformal blocks}
\index{conformal blocks|textbf}Consider a complex curve \CC\ of genus $g$ and with $n$  marked points $p_1,...,p_n$ such that $p_i$ is labeled by the representation $R_i$ of \VA. For now, it is enough to think of a marked point as a distinguished point on \CC. It is common to think of the marked points as small open discs cut out from the world sheet. For this reason we sometimes refer to marked points as \emph{holes}.
The space of conformal blocks on \CC\ is a subspace
\be\label{CBL_subspace}
    \cbl(\CC)\subset (R_1\otiC \cdots\otiC R_n)^*
\ee
of the space dual to $R_1\otiC\cdots\otiC R_n$. The dependence of the conformal blocks on the positions of the marked points enters in a coinvariance condition that determines the subspace $\cbl(\CC)$ in \eqref{CBL_subspace}, see e.g.\ \cite[Chapter 9]{BF}.

Instead of giving the general definition of the space of conformal blocks we illustrate how the construction, via the coinvariance condition mentioned above, works in a WZW-model. A WZW-model
is built from an affine Lie algebra $\hat\g$, at a fixed integer value of the level $k$.
Let \CC\ be a complex curve of genus $g$ and with $n$ marked points $(p_1,...,p_n)\equiv\vec p$, labeled by (irreducible) representations $V_1,...,V_n$ of $\hat\g$. Consider the algebra $\alghf\oti\g$ of $\g$-valued functions that are holomorphic on $\CC\setminus\{p_1,...,p_n\}$ and meromorphic on \CC.
There is a Lie algebra homomorphism \cite{Bea,BAki}
\be
    \alghf\oti\g\rightarrow U(\hat g)_k\oti\cdots\oti U(\hat g)_k\,,
\ee
where $U(\hat g)_k$ is the universal enveloping algebra of $\hat g$ at level $k$. This defines an action of $\alghf\oti\g$ on  the tensor product $V_1\oti...\oti V_n$ of $\hat\g$-modules. Next, define $V_{\vec{n}}(\CC,\vec p)$ to be the quotient of $V_1\oti...\oti V_n$ that is coinvariant under the action of $\alghf\oti\g$:
\be
    V_{\vec{n}}(\CC,\vec p):=V_1\oti...\oti V_n\Big/\alghf\oti\g\,(V_1\oti...\oti V_n)\,.
\ee
The space of conformal blocks, $\cbl(\CC)$, is defined to be the dual space
\be
    \cbl(\CC):=V_{\vec{n}}(\CC,\vec p)^*\cong\Hom_{\alghf\oti\g}(V_1\oti...\oti V_n,\one)\,,
\ee
where $\one$ is the trivial $\alghf\oti\g$-module.

So far we restricted our attention to a fixed conformal structure and fixed locations of insertion points. When varying these moduli, the conformal blocks fit into a vector bundle, the bundle of conformal blocks\index{conformal blocks!bundle of}, over the moduli space $\hat{\mathcal M}_{g,n}$ of curves \CC. The fiber over each point in $\hat{\mathcal M}_{g,n}$ is the corresponding \emph{space} of conformal blocks. The term conformal blocks is not only used for the vector bundle, but also for locally defined sections in the vector bundle. The bundle of conformal blocks can be equipped with a projectively flat connection, the Knizhnik-Zamolodchikov connection \cite{frsh2}. By studying the holonomies of this connection the space of conformal blocks can be  equipped with a projective action of the fundamental group of $\hat{\mathcal M}_{g,n}$, i.e. of the mapping class group\index{mapping class group} of $\CC$. This fact will be crucial when solving a full CFT, and we will return to it later.

A bundle of conformal blocks is generically not trivial, i.e.\ it is not globally a product of the base and the fiber. In particular, the monodromies are generically non-trivial, and as a consequence there are no global sections, i.e.\ the conformal blocks do not give single-valued functions of the moduli of \CC.

\subsection{Rational conformal field theory}
\index{CFT!rational}The class of full CFTs that is best understood mathematically is the important (and particularly well-behaved) class  known as \emph{rational conformal field theories}.
For a \emph{rational} CFT,  the chiral algebra, \VA, is a \emph{rational} vertex algebra, see e.g.\ \cite[definition 1.2.4]{zhu3} for a definition of this notion. The key feature is that a rational vertex algebra has a \emph{finite} number of (isomorphism classes of) simple representations and any representation is a finite direct sum of the simple ones. The primary bulk fields are in the rational case labeled by pairs $(i,j)$ of representatives of isomorphism classes of simple representations of $\VA$. A consequence of the nice structure of the representation theory of \VA\ is that, for each pair $(i,j)$, the space of primary bulk fields is \emph{finite}. In addition, any boundary or defect condition can be written as a superposition of a finite set of irreducible boundary or defect conditions. Thus there is a finite number of OPE's and correlation functions \footnote{Even after restricting to primary fields there is in principle still an infinite number of correlation functions to calculate, considering e.g. different topologies and an arbitrary number of field insertions. However, due to the factorization constraints, which will be described in section \ref{sec:Cons_cond}, all of them can be expressed through a finite number of fundamental correlation functions.} to calculate.

\subsection{Full conformal field theory and holomorphic factorization}\label{fullCFT}
\index{CFT!full|textbf}\index{holomorphic factorization|textbf}Despite their differences chiral and full CFT are not unrelated: Full CFT can be obtained from an underlying chiral CFT, via the principle of \emph{holomorphic factorization}. This is sometimes phrased by saying that the correlation functions of the full CFT are obtained by combining two "chiral halves". This is done in such way that the correlation functions are single-valued and satisfy all consistency conditions. Holomorphic factorization has only been worked out and proven in full generality for rational conformal field theory. However, it is generally believed that holomorphic factorization works also beyond the rational case.

In the rational case, the notion of holomorphic factorization can be made precise: Given a world sheet \wsC, the \emph{complex double}\index{double of a world sheet|textbf}, \wsCD, is obtained from \wsC\ as the orientation bundle over \wsC, modulo an identification of points over the boundary of \wsC:
\be\label{double}
		\wsCD:=\text{Or}(\wsC)\;/\,\sim\,,\quad(x,\Or)\sim(x,-\Or)\quad\forall x\In\partial \wsC\,.
\ee
Thus, e.g.\ the double of a disc is a sphere and the double of a torus consists of two disjoint tori with opposite orientation.
The double is a complex curve and consequently one can define a chiral CFT on \wsCD. In this setting, holomorphic factorization amounts to the statement that:
\be\nonumber
    \text{\emph{The correlation function of \wsC\ is an element in the space of conformal blocks on \wsCD.}}
\ee

When constructing the double, a point $p$ in the interior of $\wsC$ gets mapped to two points $p'$ and $p''$ on the double. By construction, the double admits an orientation reversing involution $\sigma$, that interchanges the two points on $\wsCD$ over each point on $\wsC$. Note that, in the chiral CFT on the double, the points $p'$ and $p''$ over $p$ can be varied independently. However, when constructing the full CFT from the chiral one, we require that the element in the space of conformal blocks that appears in the correlation is such that
\be\label{Insertion_double}
    p''=\sigma(p')\,.
\ee
This way local coordinates around $p'$ and $p''$ are related.

Holomorphic factorization splits the solution of full CFT into two parts:
\begin{enumerate}
    \item Study the chiral conformal field theory on the double \wsCD\ to obtain the space $\cbl(\wsCD)$ of conformal blocks.
        That involves in particular solving "Ward identities". These implement the local chiral symmetries globally on the double and thus also on the world sheet.
    \item For a given world sheet \wsC, determine the actual correlation function of the full CFT on \wsC\ as an element of $\cbl(\wsCD)$ .
\end{enumerate}
The strategy to solve the second part of the problem is to find a vector in the relevant space of conformal blocks that satisfies all consistency conditions of the theory. In the case of rational conformal field theory this problem has been worked out completely \cite{fuRs4,fuRs8,fuRs10,fjfrs,fjfrs2,fjfs} in the \emph{TFT-construction of rational CFT}. The TFT-constructions is used in this thesis to study rational CFT; it is described in section \ref{sec:TFT}.

The second part of the solution is a purely algebraic consideration. For this reason it is convenient to work not directly with the vertex algebra \VA\ itself, but rather with the \emph{category of representations}, $\Rep(\VA)$, of \VA. In fact we will not work directly in $\Rep(\VA)$ but rather in an abstract category that shares the properties of $\Rep(\VA)$. Relevant notions from category theory are described in detail in the next chapter. Working in an abstract category instead of with the vertex algebra itself has two major advantages. First of all it allows us to forget about a lot of the intricate structure of the vertex algebra. The representation category remembers exactly the information relevant to solve the problem. Second, and not less important, we are able to treat an entire class of CFTs simultaneously.

Below we will only be interested in questions related to the second part of the solution as described above, i.e.\ the problem to select the correlator out of a space of conformal blocks, in rational as well as non-rational theories. This means that we are able to suppress the conformal structure and consider \emph{topological world sheets\index{world sheet!topological}}, i.e.\ we think of the world sheet as a topological manifold.
As a consequence we consider also the double as a topological manifold and suppress the conformal structure.

It is worth pointing out that even when considering the second problem alone, without having solved the chiral theory, one still obtains non-trivial relevant physical information, such as OPE coefficients.

\subsection{Consistency conditions}\label{sec:Cons_cond}
In this subsection we describe the two types of non-chiral consistency conditions, \emph{mapping class group invariance} and \emph{factorization constraints}, that we require the correlators to satisfy.
\subject{Mapping class groups and modular transformations}
The mapping class group\index{mapping class group|textbf} $\Map(\ws)$ of a world sheet $\ws$ consists of the homotopy classes $[f]$ of homeomorphisms $f:\ws\rightarrow\ws$ that preserves the decorations (i.e. labels of field insertions and defect and boundary conditions), and for oriented world sheets, the orientation.
The correlators are required to be invariant under an action of $\Map(\ws)$. This guarantees that the correlators are indeed single-valued functions. Let us explain how this works in a bit more detail.

Recall from section \ref{sec:chiral} that conformal blocks fit into a (generically non-trivial) vector bundle over the moduli space, and as a consequence a conformal block is in general multi-valued with respect to the moduli of $\wsD$\index{double of a world sheet}. This is reflected in a non-trivial action of the fundamental group $\pi_0(\hat{\mathcal M}_{g,n})$ of the moduli space of $\wsD$ on the space $\cbl(\wsCD)$. The fundamental group $\pi_0(\hat{\mathcal M}_{g,n})$ is the same as the mapping class group $\Map(\wsD)$ of the double, see e.g.\ \cite[Theorem 6.1.13]{BAki}.

For correlators, the relevant group is the subgroup of $\Map(\wsD)$ that commutes with the orientation reversing involution $\sigma$, which relates points pairwise on the double, c.f \eqref{Insertion_double}. This subgroup, sometimes called the relative modular group, is isomorphic to the mapping class group $\Map(\ws)$ of the world sheet. Invariance under the action of $\Map(\ws)$ therefore assures that the correlators are single-valued.

Let us describe a concrete example. Consider the torus without any field insertions. The mapping class group of this world sheet is the modular group $PSL(2,\Z)$.
Any torus can be described as a quotient of the upper half plane by identifying points that differ by integer combinations of two vectors. Due to conformal invariance the CFT is invariant under scalings and rotations of this torus.
Thus we can take the vectors to be $\tau$ with $\text{Im}(\tau)>0$ and $1$.
The group $PSL(2,\Z)$ has a presentation  by 2 generators $S$ and $T$ modulo the relations $S^2=1$ and $(ST)^3=1$. In terms of the modular parameter $\tau$, $S$ and $T$ acts as $T:\tau\mapsto\tau+1$ and $S:\tau\mapsto-\frac1\tau$, see e.g.\ \cite{DIms} or \cite{Sche2} for more details. One can show that these transformations leave the conformal structure invariant.

\subject{Factorizations constraints}
\index{factorization!constraints|textbf}
The sewing constraints, formulated in \cite{sono2,sono3,lewe3}, relate world sheets of different topology. They require that when sewing world sheets together, the correlator of the so obtained world sheet can be expressed as a sum over the correlators of the various types of disjoint world sheets that are sewn together. In particular, one can obtain one and the same world sheet by sewing together world sheets in several distinct ways. One can, e.g.\ obtain a four-point function on the sphere by sewing together two three-point functions in three distinct ways, see \cite[Fig. 9(a)]{lewe3}.
The sewing constraints give relations between these different ways of sewing.

Equivalently to sewing we can consider the notion of \emph{factorization}, which means that instead of sewing we give a prescription on how to cut a world sheet up into smaller pieces, such that the correlator of the original world sheet can be written as a sum over the correlators of the so obtained world sheets. Factorization is described in detail in \cite{fjfrs}.

There are two types of factorization: \emph{bulk} and \emph{boundary} factorization.
Bulk factorization takes place in a cylindrical region of the world sheet and in short it works as follows in a rational CFT: A world sheet is cut along a circle embedded in the cylindrical region. This results in two holes  in the world sheet, which are then closed by gluing a disc with one bulk field to each hole, see picture \eqref{dis_sphere} below. The bulk fields that we place on the two discs are related by a non-degenerate pairing on the space of bulk fields. This pairing comes from the two-point function on the sphere.
Thus, factorization gives rise to a collection of world sheets that are labeled by bulk fields.
We will refer to world sheets obtained this way as \emph{factorized world sheets}.
It is sometimes useful to have the following picture of factorization in mind: Picture the cylindrical region, in which we cut, as a \emph{tube} (as in picture \eqref{dis_sphere}). Factorization can then be thought of as "squeezing" the tube to smaller and smaller diameter until it breaks.
Boundary factorization works similarly, but the factorization is performed along a line segment embedded between two boundary components.

The factorization constraints should be thought of as a concrete realization of the notion of inserting a complete set of states. The factorization identity involves a sum over all primary bulk fields. This may be thought of as summing over the entire state space.

In \cite{fjfrs} it is proven that
the correlators of world sheets without defects, obtained via the TFT-construction, satisfy the factorization constraints. In \cite{fjfs} the proof is extended to orientable world sheets with an arbitrary network of defects. The factorization identity, Theorem \ref{thm:bulkfac}, states that summing over the correlators of the collection of factorized world sheets, with suitable coefficients, we obtain the correlator of the world sheet we started with. Bulk factorization is described in more detail in section \ref{sec:bulkfact}

The factorization identities drastically reduce the number of correlators that have to be calculated in order to know any correlator. In  a rational CFT, the correlator of any oriented world sheet can be written as a sum involving correlators of world sheets of only three types \cite{fjfs}:
\begin{itemize}\addtolength{\itemsep}{-6pt}
    \item Three defect fields on the sphere
    \item Three boundary fields on the disc
    \item One boundary and one disorder field on the disc
\end{itemize}
Any correlator is determined, via the factorization identity, by calculating all possible correlators of these types involving only primary fields. In addition, due to semisimplicity it is enough to consider the irreducible boundary and defect conditions. Thus, as explained in \cite{fjfs}, there is a \emph{finite} number of fundamental world sheets\index{world sheet!fundamental} from which any correlator can be obtained algorithmically.

Historically, there have been attempts to solve full CFT by picking one of the constraints and search for all solutions that satisfy that constraint.
For example searching for modular invariant torus partition functions, in order to find bulk state spaces of consistent conformal field theories, has led to what is known as the ADE-classification of torus partition functions for the $\mathfrak{su}(2)$-WZW-models \cite{caiz2}.
However, considering modular invariance alone is far from enough. One can write down modular invariant partition functions which are not part of any consistent conformal field theory. The appearance of such spurious solutions should not come as a surprise. There could very well be modular invariant partitions functions that satisfy one consistency condition but not all. The set of constraints is highly over-determined and it is a priori not clear that a solution exists at all. One advantage of the TFT-construction is that it allows one to treat the whole set of constraints simultaneously.

\section{Topological defect lines}\label{top_def}
\index{defect!line!topological|textbf}Recall from section \ref{sec:chi_full} that topological defect lines are characterized by the fact that they can be deformed without affecting the correlator, as long as the defect is not taken across a field insertion or another defect line. Technically that means that the defects we are interested in are \emph{tensionless} in the sense that the holomorphic and the anti-holomorphic  components, $T$ and $\overline T$, of the energy-momentum tensor are continuous across the defect line. In fact we  require more.
We require that both the holomorphic  and the antiholomorphic components of all generating currents of the chiral algebra are continuous across the defect line. Below "defect line" or simply "defect" will always refer to a defect line that is topological in this strong sense.

Even though some aspects of defect lines mentioned below are valid in a more general setting we will describe defects in rational CFT.
In particular we will use that in a rational CFT there is a finite set of simple defect conditions and any defect condition can be written as a superposition of the simple ones.
An $A$-$B$-defect separates two regions such that the region to the left (right) of the defect is in phase $A$ ($B$):
 \eqpic{def_line}{118}{40}{\setulen 75
  \put(0,7) { \includepic{30}{def_line}
  \put(115,80) {\pX X }
  \put(72,90)  {\pA {\fbox{$ A $}}}
  \put(124,60) {\pB {\fbox{$ B $}}}
  } }
We refer to such a defect line, with defect condition $X$, simply as the \emph{defect} $X$.
There is a notion of the dual $X^\vee$ of an $A$-$B$-defect  $X$. $X^\vee$ is a $B$-$A$-defect obtained from $X$ by reversing its orientation.

We will be interested in world sheets with arbitrary networks of defects. Thus, several defects may be joined at a single point or one or more defects may end on a single point on the boundary. We refer to such points as a \emph{network vertices\index{network vertex}}\label{def_net-vert}.
\subject{Defect operators}
\index{defect!operator}
A defect line gives rise an operator between spaces of bulk fields.
Consider a bulk field $\phi$ in phase $A$ encircled by an $A$-$B$-defect $X$.
Since the defect line is topological it can be shrunk to an arbitrarily small radius $\eps$. In the limit $\eps\rightarrow 0$, we obtain a bulk field $D_X(\phi)$ in phase $B$. Thus, the defect $X$ gives rise to a linear map  $D_X$ between the spaces of bulk fields in phase $A$ and $B$ separated $X$. This is illustrated in the following picture:
 \Eqpic{defop}{320}{30}{\setulen 90
  \put(0,7) { \includepic{24}{defop_1}
  \put(83,17) {\pX X }
  \put(55,22)  {\pA {\fbox{$ A $}}}
  \put(61,39) {\pl { \phi}}
  \put(85,62) {\pB {\fbox{$ B $}}}
  }
  \put(115,40)  {$\mapsto$}
  \put(132,7) { \includepic{24}{defop_2}
  \put(60,39) {\pl { \phi}}
  \put(64,27) {\pX X }
  \put(85,62) {\pB {\fbox{$ B $}}}
  }
  \put(253,40)  {$\mapsto$}
  \put(270,7) { \includepic{24}{defop_3}
  \put(60,39) {\pl {D_X(\phi)}}
  \put(85,62) {\pB {\fbox{$ B $}}}
  }
  }

\subject{Fusion of defects}
Consider two parallel defects $X$ and $Y$ in a region where there are no field insertions or other defect lines between them. Since we can move the two defects arbitrarily close to each other  they can be considered as a single defect: their \emph{fusion product} $X\FP Y$. In rational CFT, where there is a complete characterization of defect lines, there is a precise representation theoretic description of the defect condition that is assigned to $X\FP Y$. This description is given in section \ref{Dec_ws}.

Fusion of defects allow us to take a defect across a bulk field at the cost of transforming the bulk field into a disorder field.
In rational CFT we can, due to semisimplicity, map a bulk field in phase $A$ to a sum over defect conditions and disorder fields in phase $B$ as illustrated in the following picture:
  \Eqpic{def_trans}{321}{71}{ \setulen 67
  \put(0,-4){
  \put(0,125) { \includepicclax2{55}{99a}
  \put(56,44)  {$\color{DarkGreen} \PHi\alpha $}
  \put(109,80) {\pX X }
  \put(72,90)  {\pA {\fbox{$ A $}}}
  \put(124,60) {\pB {\fbox{$ B $}}}
  }
  \put(189,177) {$ = $}
  \put(230,125) { \includepicclax2{55}{99b}
  \put(109,80) {\pX X }
  \put(72,90)  {\pA {\fbox{$ A $}}}
  \put(124,60) {\pB {\fbox{$ B $}}}
  }
  \put(-10,52) {$=~ \dsty \sum_Y\sum_\tau$}
  \put(58,0) { \includepicclax2{55}{99c}
  \put(109,80) {\pX X }
  \put(76,39)  {\pX X }
  \put(56,53) {\begin{rotate}{51}{\pX Y}\end{rotate}}
  \put(72,90)  {\pA {\fbox{$ A $}}}
  \put(124,60) {\pB {\fbox{$ B $}}}
  }
  \put(218,52) {$=~ \dsty \sum_{Y,\tau}\sum_{\gamma}~
                  d^{\alpha\gamma}_{A,X,B;X_{\mu},\tau} $}
  \put(358,0) { \includepicclax2{55}{99d}
  \put(109,80) {\pX X }
  \put(56,53){\begin{rotate}{51}{\pX Y}\end{rotate}}
  \put(75,43)  {$\blue \varTheta_{\gamma} $}
  \put(72,90)  {\pA {\fbox{$ A $}}}
  \put(124,60) {\pB {\fbox{$ B $}}}
  } } }
Here, the summation over $Y$ is a sum over the simple defects, the summation over $\tau$ is over the multiplicity of $Y$ in the fusion of $X$ with $X^\vee$ and $\varTheta_{\gamma}$ label the elementary disorder fields at which $Y$ can end.

Considering the coefficients $d^{\alpha\gamma}_{A,X,B;B,\tau}$ on the right hand side of \eqref{def_trans}, i.e.\ the contribution  of the trivial defect condition $Y\eq B$, we obtain the \emph{defect transmission coefficients}\index{defect!transmission coefficient}, which contain a lot of the physical information about defects.
They describe the bulk field contribution in the sum over disorder fields in \eqref{def_trans} and thus to what extent the defect line is transmissive with respect to bulk fields.
As a consequence, the defect operator $D_X$, c.f. \eqref{defop}, is completely characterized by the defect transmission coefficients. In addition, as will be described in section \ref{sec:def_PF}, the \emph{defect partition function} has an expansion in terms of the defect transmission coefficients.
The defect transmission coefficients are in one-to-one correspondence with the irreducible representation of a finite-dimensional commutative unital associative algebra over \CN\ \cite{fuSs2}, the \emph{classifying algebra for defects}. Thus, the simple defects are classified by this algebra. This is described in chapter \ref{sec:class_alg}.

A defect can also be fused to a boundary. When an $A$-$B$-defect $X$ is running parallel to a boundary condition $M$ in phase $B$ it can be moved arbitrarily close to the boundary. This gives rise to a fusion between the defect line and the boundary. The mathematical description of this fusion is analogous to the fusion of two defects and is described in section \ref{Dec_ws}.

\subject{Defects in statistical models}
Defect lines appear naturally in lattice models. Consider e.g.\ the Ising model on a square lattice with ferromagnetic coupling at low temperature. In that situation most bonds are locked into place by the ferromagnetic coupling. However, it may happen that there are plaquettes, i.e.\ squares in the lattice bounded by four bonds (edges in the lattice), such that the bonds around this plaquette are not locked. This phenomenon, known as frustration, may occur if an odd number of bonds that form the square are anti-ferromagnetic. Frustration is encoded in a disorder parameter placed in the middle of the plaquette (i.e.\ on a site of the dual lattice). The calculation of correlators of such disorder parameter involves a line, along some path in the lattice, connecting the disorder parameters. The choice of such path is in fact a gauge choice, i.e.\ the correlator is independent of the particular choice of path. In the scaling limit we obtain a CFT-correlator that involves disorder fields connected by defect lines. See e.g.\ \cite{savi} for more details.

\subject{Group-like and duality defects}
Since defect lines connect different phases of a CFT, it should not come as a surprise that they can be used to derive non-chiral symmetries and dualities between CFTs. A \emph{group-like} defect is an $A$-$A$-defect with the property that its fusion product with the dual $X^\vee$ is the invisible defect.
Group-like defects describe non-chiral symmetries of CFTs. Since non-isomorphic defect conditions describe distinct defect operators \cite{ffrs5} non-isomorphic defect conditions give rise to distinct symmetries.

A generalization of group-like defects are \emph{duality} defects. A duality defect $X$ has the property that there is another defect $X'$ such that first taking $X$ across a bulk field as in \eqref{def_trans} and then taking $X'$ across the resulting disorder field results in a sum over bulk fields rather than genuine disorder fields. Duality defects can be used to derive duality relations between CFTs \cite{ffrs5}. It has e.g.\ been shown in \cite{ffrs3} that the order disorder duality of the critical Ising model can be implemented by a duality defect.

\section{Logarithmic CFT}
Above we have  largely restricted our attention to rational CFT, whose mathematical formulation by now is fairly well understood.
In recent years there has been a growing interest in a class of non-rational CFTs, called \emph{logarithmic} CFT, first discussed in \cite{gura}.

For logarithmic CFT we keep the requirement that the chiral algebra has a finite number of simple modules.
The novelty  as compared to rational CFT is that not every module is fully reducible, i.e.\ the relevant representation category is no longer semisimple.
A consequence of the non-semisimplicity is that there are conformal blocks with logarithmic branch cuts. This is explained in e.g. section 5.2 of \cite{gabe8}. One might fear that chiral theories of this form cannot be used to construct a full (local) CFT. However, this has been achieved for e.g.\ the triplet models, the best understood type of logarithmic models. In \cite{gaKa3} a modular invariant partition function is obtained, and a corresponding boundary theory has been studied in \cite{garu}.

It is worth noting that semisimplicity arises in quantum physics as a consequence of unitarity. Nevertheless the logarithmic theories appear in various physical applications, despite the fact that they are not unitary.
Logarithmic CFT can be used to understand e.g.\ the critical behavior of percolation, which was described in section \ref{sec:chi_full}, and critical polymers in two dimensions, see e.g.\ \cite{maRi3,PeRa}.

So far it is not fully known how to formalize logarithmic CFT mathematically.
In this context it is worth mentioning that for rational CFT it took two decades to obtain a model-independent mathematical description.
The work in \cite{fuSs3,fuSs4,fuSs5} presented in chapter \ref{sec:beyond} of this thesis aims at a better understanding of logarithmic CFT from a categorical perspective.

Holomorphic factorization\index{holomorphic factorization} is not expected to be valid only for rational CFT. Thus, one could explore whether the separation of the solution of full CFT, that was described in section \ref{fullCFT}, still works in the non-rational case.
In this thesis we give a prescription of a morphism $\Corrgn$ associated to a closed oriented surface $\Surf gn$ of genus $g$ with $n$ marked points. We show \cite{fuSs3,fuSs5} that for a particular class of non-semisimple categories, the morphism $\Corrgn$ is invariant under a natural  projective action \cite{lyub6} of the mapping class group of $\Surf gn$. This motivates us to think of the morphisms $\Corrgn$ as candidates for correlators in non-rational theories.
Considering $\Corr 10$ we also show that the partition function provided by this construction can be "chirally decomposed" \cite{fuSs4}. In particular, in the semisimple case, the well known charge conjugation partition function, which is present in any rational CFT (see section \ref{sec_bulkA}), is reproduced. All of this is described in chapter \ref{sec:beyond}.
%%%%%%%%%%%%%%%%%%%%%%%%%%%%%%%%%%%%%%%%%%%%%%%%%%%%%%%%%%%%%%%%%%%%%%%%%
\chapter{Category theory}\label{chap:cat}
In this chapter we introduce various notions from category theory that we need in the rest of the thesis.
A category\index{category} $\C$ consists of two types of data: A collection of \emph{objects} $\Obj(\C)$ and for any ordered pair $U,V\In\Obj(\C)$ a set of morphisms from $U$ to $V$ denoted by $\Hom(U,V)$. The morphisms are subject to the following conditions:
\begin{itemize}\addtolength{\itemsep}{-6pt}
\item
Any two morphisms $f\In\Hom(U,V)$ and $g\In\Hom(V,W)$ can be composed to a morphism $g\cir f\In\Hom(U,W)$ and the composition is associative
\item
For any object $V\In\Obj(\C)$ there is an identity morphism, $\id_V$, such that
\be
    \id_V\cir f=f\quand g\cir\id_V=g\,,
\ee
for any $f\In\Hom(U,V)$ and $g\In\Hom(V,W)$
\end{itemize}

As a first elementary example we mention the category \Vect\ of finite dimensional complex vector space. The objects of the category $\Vect$ are the finite dimensional complex vectors spaces and the morphisms are the linear maps between such vector spaces. The composition of morphisms is the ordinary composition of linear maps. It is easily checked  that the conditions above are indeed satisfied.
So far this statement is just a formalization of a well known fact. However, we will see that the category \Vect\ fits into a class of categories, relevant for CFT,  with a lot more interesting structure. In particular we will see that there are natural generalizations of structures such as associative algebras and their representation theory.

In order to describe aspects of CFT we will be interested in categories with quite a lot additional structure.
It is convenient to describe some of these structures in terms of functors. A \emph{functor} $F:\C\rightarrow\cD$ from a category $\C$ to a category $\cD$ associates to each object $U\In\Obj(\C)$ an object $F(U)\In\Obj(\cD)$ and to each morphism $f\In\Hom_{\C}(U,V)$ a morphism $F(f)\In\Hom_{\cD}(F(U),F(V))$ in such a way that the morphisms in $\cD$ satisfy
\be\label{functor_def}
    F(\id_U)=\id_{F(U)}\quand F(g)\cir F(f)=F(g\cir f)\,,
\ee
for all $f\In\Hom(X,Y)$ and $g\In\Hom(Y,Z)$. Analogously a \emph{bifunctor} is a functor $F:\C\times\cD\rightarrow\cE$ (the category $\C\times\cD$ is the category whose objects and morphisms are just pairs of objects and morphisms from $\C$ and $\cD$) associating to a pair of objects (morphisms) in $\C\times\cD$ an object (morphism) in $\cE$, such that $F$ is functorial (i.e. \eqref{functor_def} is satisfied) in each argument.

Below we introduce step by step all additional structure that impose on our categories. We will first introduce the special class of (semisimple) \emph{modular tensor categories}. After that, we describe the category $H$\Mod\ of representations of a finite-dimensional factorizable ribbon Hopf algebras, which may or may not be modular. In the end of this chapter we define a larger class of categories, in which we relax the semisimplicity condition, containing the semisimple modular ones, and in particular $H$\Mod.

\section{Monoidal categories}
A \emph{monoidal category} is a category $\C$ equipped with a bifunctor $\oti:\C\times\C\rightarrow\C$ associating to any pair of objects $U,V\In\Obj(\C)$ an object $U\oti V$ and to any pair of morphisms $f\In\Hom(U,V)$ and $g\In\Hom(U',V')$ a morphism $f\oti g\In\Hom(U{\oti}\linebreak U',V\oti V')$. In addition there is a \emph{tensor unit} $\one\In\Obj(\C)$ and families of isomorphisms such that
\be\label{iso_tensor}
    U\oti(V\oti W)\cong (U\oti V)\oti W\quand 1\oti U\cong U\cong U\oti\one\,,
\ee
for all $U,V,W\In\Obj(\C)$. The existence of such isomorphisms is not enough. They also have to satisfy the pentagon and triangle identities. The pentagon identity assures that the families of isomorphisms in \eqref{iso_tensor} provide a unique isomorphism between any two bracketings of multiple tensor products and the triangle identity is a compatibility condition between the associativity and unit constraint. See e.g. \cite[Chapter XI.2]{KAss}
for the explicit form of these identities.
\begin{remark}
Note that sometimes the word \emph{tensor category}\index{category!tensor} is used for a monoidal category. In mathematics literature, the notion of a tensor category typically involves additional structure. In particular a tensor category is required to be \emph{rigid}. This means that left and right duals, that are introduced in the next subsection, exist for each object. We will sometimes use the word tensor category, but only in situations when the category is a tensor category in this stronger sense.
\end{remark}
A monoidal category is called \emph{strict} if all the isomorphisms in \eqref{iso_tensor} are equalities.
In many interesting situations this is not the case. However, according to the coherence theorems \cite[Chapter VII.2]{MacL}, any monoidal category is equivalent to a strict one, see also \cite[Chapter XI.5]{KAss}. (Two categories are equivalent\index{equivalence! of categories} if there exists a functor between them that is invertible up to isomorphism, see eq. \cite[Chapter IV.4]{MacL}.) Thus we can (and will) always replace any non-strict category by an equivalent strict one and accordingly take the isomorphisms in \eqref{iso_tensor} to be equalities. Typically, this equivalence preserves also additional structures on the categories, see e.g. \cite[Theorem 2.2]{ngsc}.

In strict monoidal categories we will make extensive use of \emph{graphical calculus}. In graphical calculus we write morphisms as pictures in the following manner: A morphism is drawn as a box (or some other shape) connecting lines labeled by the source and target object, and the identity morphism is drawn as a single line labeled by the corresponding object. The convention in this text is to read pictures from bottom to top. Composition of morphisms is depicted as concatenation of morphisms and the tensor product of morphisms as juxtaposition. This is illustrated in the following picture:
 \eqpic{graph_calc} {300} {28} {
 \scalebox{0.95}{
  \put(-10,0){
  \put(40,0)     {\Includepic{id_morph}}
  \put(0,30)   {$ \id_U^{}\,= $}
  \put(38.0,-8.8){\pl{ U }}
  \put(38.5,65.5){\pl{ U }}
  }
\put(65,0){
  \put(20,0)     {\Includepic{morph}}
  \put(-10,30)   {$ f~= $}
  \put(25.3,-8.8){\pl{ U }}
  \put(26.2,65.5){\pl V }
  \put(26.6,30.2){\pl f }
}
\put(154,0){
  \put(20,0)     {\Includepic{comp_morph}}
  \put(-22,30)   {$ g\cir f~= $}
  \put(25.1,69.5){\pl W }
  \put(25.3,-8.8){\pl U }
  \put(26.6,17.6){\pl f }
  \put(26.6,49.5){\pl g }
  \put(30.5,32.9){\pl V }
}
\put(246,0){
  \put(20,0)     {\Includepic{prod_morph}}
  \put(-25,30)   {$ f\oti f'~= $}
  \put(25.3,-8.8){\pl U }
  \put(26.2,65.5){\pl V }
  \put(26.6,29.4){\pl f }
  \put(49.1,-8.8){\pl {U'} }
  \put(50.0,65.5){\pl{ V' }}
  \put(49.6,29.4){\pl{ f'}}
} } }

\section{Modular tensor categories}\label{sec_modC}
In this section we introduce (semisimple) \emph{modular} tensor categories.

\subsection{Ribbon categories} A \emph{ribbon category\index{category!ribbon|textbf}} is a monoidal category with three additional compatible structures: \emph{Duality, braiding} and \emph{twist}:
\begin{itemize}
\item A (right) \emph{duality} assigns to every object $U$ a dual object $U^\vee$ and a pair of morphisms, the evaluation and coevaluation morphisms:
\be
    d_U\In\Hom(U^\vee\oti U,\one)\quand b_U\In\Hom(\one,U\oti U^\vee)\,.
\ee
Via the evaluation and coevaluation morphisms the duality also acts on morphism according to
\be
    f\mapsto f^\vee:=(d_V\oti\id_{U^\vee})\cir(\id_{V^\vee}\oti f\oti\id_{U^\vee})\cir(\id_{V^\vee}\oti b_U)\,.
\ee
In fact, as is easily checked, this way the duality is a functor $^\vee\mapdef\C\rightarrow\C\op$, where $\C\op$ is the opposite category defined below in section \ref{sec:add_cat}.
The morphisms $b_U$, $d_U$ and $f^\vee$ are depicted graphically as:
\eqpic{right_dual} {280} {25} {
\put(0,0){
  \put(40,0)     {\Includepic{br}}
  \put(0,22)   {$ b_U~= $}
  \put(37.3,47.3) {\pl U }
  \put(63.9,47.3){\pl{ U^\vee }}
}
\put(120,6){
  \put(10,0)     {\Includepic{dr}}
  \put(-32,16)   {$ d_U~= $}
  \put(6.2,-8.2) {\pl {U^\vee }}
  \put(34.8,-8.2){\pl U }
}
\put(215,-5){
  \put(10,0)     {\Includepic{rdmorph}}
  \put(-22,27)   {$ f^\vee~= $}
  \put(6.2,-8.2) {\pl {V^\vee }}
  \put(62,74){\pl {U^\vee} }
  \put(34,32.5){\pl f}
}
}
The duality is unique up to unique isomorphism, see appendix \ref{app:dual_un}.
\item A \emph{braiding} is a family of natural isomorphisms $c_{U,V}\In\Hom(U\oti V,V\oti U)$, one for each pair of objects $U.V\In\Obj(\C)$. We depict the braiding and its inverse as
\eqpic{braiding} {280} {25} {
\put(80,5){
  \put(10,0)     {\Includepic{braiding}}
  \put(-32,22)   {$ c_{U,V}~= $}
  \put(8.3,-8) {\pl U }
  \put(31,-8){\pl{ V}}
  \put(8.3,55) {\pl V }
  \put(31,55){\pl{ U}}
}
\put(190,5){
  \put(10,0)     {\Includepic{inv_braid}}
  \put(-32,16)   {$ c_{U,V}^{-1}~= $}
  \put(8.3,-8) {\pl V }
  \put(31,-8){\pl{ U}}
  \put(8.3,55) {\pl U }
  \put(31,55){\pl{ V}}
}
}
\item A \emph{twist} is a family of automorphisms $\theta_U\In\End(U)$, one for each object $U\In\Obj(\C)$. Graphically the twist and its inverse are denoted by
    \eqpic{twist} {280} {23} {
\put(80,0){
  \put(10,0)     {\Includepic{twist_C}}
  \put(-32,22)   {$ \theta_U~= $}
  \put(9,-8) {\pl U }
  \put(9,55) {\pl U }
}
\put(190,0){
  \put(10,0)     {\Includepic{twist_inv}}
  \put(-32,22)   {$ \theta_U^{-1}~= $}
  \put(9,-8) {\pl U}
  \put(9,55) {\pl U }
}
}
\end{itemize}
The duality, braiding and twist satisfy a number of consistency conditions.
Among the them are the \emph{braid relations} which state that the isomorphisms $c_{U,V}$ (and analogously their inverses) have to satisfy
\be\label{braid_rel}
    \bearl
    c_{U,V\otimes W}=(\id_V\oti c_{U,W})\circ(c_{U,V}\oti\id_W)\quand
    \nxl{2}\dsty \hsp{0}
    c_{U\otimes V,W}=(c_{U,W}\oti\id_V)\circ(\id_U\oti c_{V,W})\,.
    \eear
\ee
The second of these equality is displayed in graphically in \eqref{tens_braid}.
The rest of the compatibility conditions is given in appendix \ref{app:comp_ribbon}, see also e.g.\ \cite[Chapter XIV.3]{KAss}.

\subject{Sovereignty}
There is also a notion of \emph{left} duality, i.e.\ a functor assigning to each object $U\In\Obj(\C)$ a left dual object $\rd U$ and to each morphism $f$ the left dual morphism $\rd f$. The left evaluation and coevaluation morphisms will be denoted by $\tilde d_U\In\Hom(U\oti \rd U,\one)$ and $\tilde b_U\In\Hom(\one,\rd U\oti U)$ respectively. The pictures for the left duality are the mirrored versions of \eqref{right_dual}, see \eqref{left_dual}.

In a ribbon category we can easily define a left duality by defining $\rd U:=U^\vee$ and taking the left (co)evaluation morphisms to be given by combinations of braidings, twists and right (co)evaluation morphisms, see \eqref{left_dual}. This defines a \emph{strictly sovereign structure}.

For some purposes strict sovereignty is too strong. We wish to relax the equality of left and right duality to $\rd U \cong U^\vee$. Explicitly: If it exists (in ribbon categories it does), a \emph{sovereign structure} is a choice of a monoidal natural isomorphism $\piv$ between the right and left duality functors. Naturality means that the family $\{\pi_U\}$ of isomorphisms is a natural transformation. Explicitly:
\be
    \rd f\circ\piv_V=\piv_U\circ f^\vee
\ee
for any $f\In\Hom(U,V)$. That $\{\pi_U\}$ is monoidal means that $\piv_{U\oti V}=\piv_U\oti\piv_V$.
The equivalent notion of \emph{balanced}, or sometimes also called \emph{pivotal}, structure requires instead the existence of an isomorphism between the (left or right) double dual functors and the identity functor.

In a sovereign category \C\ we can define right and left traces of an endomorphism $f\in\End(X)$ according to
\eqpic{trace_C}{225}{25}{
  \put(0,28.5){\pl {\tr_{\text r}(f):=}}
  \put(125,28.5){\pl {\tr_{\text l}(f):=}}
  \put(50,0){
  \put(0,0){\Includepic{trace}}
  \put(5,28.5){\pl f}
  \put(32,28.5){\pl {\piv_X}}
  \put(156,28.5){\pl f}
  \put(125,28.5){\pl {\piv_X^{-1}}}
  }
  }
If \C\ is in addition ribbon, it is \emph{spherical}, i.e.\ the left and right traces coincide
\be
    \tr_{\text l}(f)=\tr_{\text r}(f):=\tr(f)\,.
\ee
Using the properties of the duality functors it is also easily checked that the trace is cyclic.
The trace of the identity morphism is the \emph{quantum dimension} of the object
\be
    \dim(U):=\tr(\id_U)\,.
\ee
\subject{Graphical calculus for sovereign ribbon categories}
A nice way of thinking of the compatibility conditions for a strictly sovereign ribbon category is to think of the lines in graphical notation as two-dimensional ribbons. We then picture the twist endomorphism as a $2\pi$ rotation of a ribbon around its core:
\eqpic{twist_ribbon}{280}{22}{
  \put(190,0){\setulen80
  \put(0,-5){\includepic{304}{twist_ribbon}}
  \put(3,5){\pl U}
  \put(3,71){\pl U}
  }
  \put(145,25) {$=$}
  \put(90,0){
  \put(10,0){\Includepic{twist_C}}
  \put(9,-8) {\pl U }
  \put(9,55) {\pl U }
  }
  }
and dualities and braidings as ribbons with the same shapes. The compatibility conditions of the duality, braiding and twist are then exactly such that the allowed deformations of a morphism are the ones that are possible to perform with the corresponding two-dimensional ribbons.
\subsection{Modular tensor categories}
A modular tensor category is  a finitely semisimple, abelian, \k-linear ribbon category \C,
where $\k$ is an algebraically closed field of characteristic $0$, with simple tensor unit and a non-degenerate $s$-matrix. Let us describe what this means.

That \C\ is \emph{abelian} \k-linear  means in particular that every morphism set is a vector space over \k. What is relevant for CFT is the case $\k=\CN$. In an abelian category there is also a notion of direct sum, see e.g. \cite[Chapter VIII.3]{MacL} for more details. When discussing modular categories we will assume that $\C$ is strictly sovereign.

A \emph{simple object} is characterized by the property that it does not have a proper subobject (the notion of subobject is defined in section \ref{sec:add_cat}). In the present situation this implies that the space of endomorphisms of a simple object is the ground field. Such an object is called absolutely simple.

That \C\ is \emph{finitely semisimple} means that there is a finite number of simple object and every object is a finite direct sum of simple objects. We will denote a set of representatives of the simple objects by $\{U_i|i\In\I\}$, where $i=0,...,|\I|-1$ and we choose $U_0=\one$.

The \emph{$s$-matrix} is the $|\I|\Times|\I|$ matrix with entries
\be\label{s-matrix}
    s_{i,j}:=\tr(c_{j,i}\circ c_{i,j})\in\k\,.
\ee
Thus, a non-degenerate $s$-matrix constitutes  a non-degeneracy condition on the braiding.
Using that left and right traces coincide, the $s$-matrix can be written as
\eqpic{Sij}{103}{20}{
  \put(0,0){
  \put(42,0){\Includepic{Sij}}
  \put(0,25.3)  {$s_{i,j}\;=$}
  \put(59.4,22.8){\scriptsize$i$}
  \put(80.3,22.8){\scriptsize$j$}
  }
  }
Note that
the first row (and column since the trace is symmetric) of the $s$-matrix consists of the quantum dimensions of the simple objects
\be
    \dim(U_i)=s_{i,0}\,.
\ee
The $s$-matrix together with the matrix $T$, with entries $T_{ij}:=\delta_{ij}\theta_i$, furnish a projective representation of the modular group. There is also a genuine representation of the modular group in which the $S$-transformation is generated by a matrix whose elements are related to the elements of the matrix  \eqref{s-matrix} by
\be
    S_{i,j}:=S_{0,0}\,s_{i,j}\,.
\ee
\ssubject{The category $\boldsymbol{\Vectk}$} The category \Vectk, of finite-dimensional vector spaces over an algebraically closed field \k, is an elementary example of a ribbon category: The duality assigns to any vector space $V$ the dual vector space $V^*\,{=}\linebreak\Hom_\k(V,\k)$. The braiding is the flip map $\flip VW$ which just interchange the two factors $V$ and $W$ in a tensor product; in particular over- and under-braiding coincide. When the braiding and inverse braiding coincide the category is referred to as \emph{symmetric}. Finally the twist is trivial: For any $V\In\Obj(\Vectk)$, $\theta_V\eq\id_V$. In fact $\Vectk$ is even modular. There is a single simple object, the ground field \k, and since the braiding is symmetric, the single $s$-matrix element equals $1$.

On the other hand, the category of super vector spaces is not modular. The category of super vector spaces has two simple objects (both of them 1-dimensional), $(\k,0)$ and $(0,\k)$, and again due to the symmetric braiding all four $s$-matrix elements are $1$. Thus the $s$-matrix is degenerate.
\ssubject{Basis choices}
It is convenient to describe some aspects of modular tensor categories in terms of bases of morphisms spaces. For every object $X$ and every simple object $U_i$ we fix embedding and restriction morphisms $\emb Xi\alpha\In\Hom(U_i,X)$ and $\res Xi\alpha\in\Hom(X,U_i)$. Here $i\In\I$ and $\alpha$ labels the multiplicity of $U_i$ in $X$. The embedding and restriction morphisms have the properties
\be\label{emb_res_prop}
    \res Xj\alpha\circ\emb Xi\beta=\delta_{ij}\,\delta_{\alpha\beta}\,\id_{U_i}\quand\sum_{i\In\I}\sum_{\alpha}\emb Xi\alpha\circ\res Xi\alpha=\id_X\,.
\ee

Denote by
\be
    \N ijk=\dim(\Hom(U_i\oti U_j,U_k))\,,
\ee
i.e.\ the dimension of the space $\Hom(U_i\oti U_j,U_k)$ for all $i,j,k$ in $\I$ and choose a basis $\{\sob ijk\alpha|\alpha=i,2,...,\N ijk\}$ for this space. The dual basis of \mbox{$\Hom(U_k,U_i\oti U_j)$} is denoted by $\{\sobd ijk\alpha|\alpha=1,2,...,\N ijk\}$. We depict the basis morphisms as
\eqpic{basis_morphspace}{300}{25}{
  \put(20,25)   {$\sob ijk\alpha~=$}
  \put(80,0){
  \put(0,0)		{\Includepic{Homb}}
  \put(-1,-7)	{\pl i}
  \put(20,-7)	{\pl j}
  \put(10,57)	{\pl k}
  \put(15,31)	{\pl\alpha}
  }
  \put(170,25)   {$\sobd ijk\alpha~=$}
  \put(230,0){
  \put(0,0){\Includepic{Homdb}}
  \put(10,-7)	{\pl k}
  \put(-1,57)	{\pl i}
  \put(20,57)	{\pl j}
  \put(15,22)	{\pl{\overline\alpha}}
  }
  }
That the two bases are dual to each other means that they satisfy
\be\label{hombasdual}
    \sob ijk\alpha\circ\sobd ijl\beta~=~\delta_{kl}\,\delta_{\alpha\beta}\,\id_k\,.
\ee
In addition they satisfy
\be
		\sum_{k\In\I}\,\sum_{\alpha=1}^{\N ijk}\sobd ijk\alpha\circ\sob ijk\alpha~=~\id_{U_i}\oti\id_{U_j}\,.
\ee
This is due to semisimplicity of $\C$.

\section{Algebra objects in monoidal categories}
Recall the concept of an associative unital algebra $\AA$ over a field $\k$: \AA\ is a vector space over \k\ endowed with an associative \emph{product} $m\mapdef\AA\times\AA\rightarrow\AA$ and a unit, e.g.\ an element $e\In\AA$ such that $m(e,a)=m(a,e)=a$ for all $a\In\AA$. We will make extensive use of a categorical version of this concept.
\subsection{Algebras and coalgebras}
\begin{definition}\label{def_algebra}
   An \emph{algebra} $(A,m,\eta)$ in a monoidal category \C\ is an object $A$ in $\C$, together with a \emph{product} $m\In\Hom(A\oti A,A)$, and a \emph{unit} $\eta\In\Hom(\one,A)$, such that $m$ is associative and $\eta$ satisfies the unit constraint. Explicitly,
   \be
   \bearl
        m\circ(m\oti\id_A)=m\circ(\id_A\oti m)\quad\quand
   \nxl{0.5}
   m\circ(\id_A\oti\eta)= \id_A=m\circ(\eta\oti\id_A)\,.
   \eear
   \ee
\end{definition}
Indeed, an associative algebra \AA\ in the classical sense described above is nothing but an algebra in the category \Vectk.
We will also need the dual notion of a coalgebra:
\begin{definition}\label{def_coalgebra}
   A \emph{coalgebra} $(C,\Delta,\eps)$ in a monoidal category \C\ is an object $C$ in \C, together with a \emph{coproduct} $\Delta\In\Hom(C,C\oti C)$, and a \emph{counit} $\eps\In\Hom(C,\one)$, such that $\Delta$ is associative and $\eps$ satisfies the unit constraint. Explicitly,
   \be
   \bearl
        (\Delta\oti\id_C)\circ\Delta=(\id_C\oti \Delta)\circ\Delta\quad\quand
        \nxl{0.5}
        (\id_C\oti\eps)\circ\Delta= \id_C=(\eps\oti\id_C)\circ\Delta\,.
   \eear
   \ee
\end{definition}
In pictures, depicting the product and the unit as
  \eqpic{alg_def} {150} {15} {
\put(10,-2){
  \put(10,0)     {\Includepic{product}}
  \put(-29,20)   {$ m~= $}
  \put(6.0,-8.8) {\pl A }
  \put(22.5,46.5){\pl A }
  \put(36.0,-8.8){\pl A }
}
\put(140,4){
  \put(10,0)     {\Includepic{unit}}
  \put(-25,14)   {$ \eta~= $}
  \put(9.5,32.2) {\pl A }
} }
the associativity and unit properties are drawn as:
\eqpic{jf29-07} {300} {23} {
\put(-25,3){
  \put(25,0)     {\Includepic{ass_unit}}
  \put(20.8,-9.7){\pl A }
  \put(43.3,-9.7){\pl A }
  \put(45.0,53.8){\pl A }
  \put(66.2,-9.7){\pl A }
  \put(80.2,23)  {$=$}
  \put(95.3,-9.7){\pl A }
  \put(118,-9.7){\pl A }
  \put(119.5,53.8){\pl A }
  \put(140.7,-9.7){\pl A }
  \put(207.2,54.8){\pl A }
  \put(221.2,-8.7){\pl A }
  \put(233.2,23)  {$=$}
  \put(249.6,-8.7){\pl A }
  \put(250.8,54.8){\pl A }
  \put(265.4,23)  {$=$}
  \put(278.9,-8.7){\pl A }
  \put(295.1,54.8){\pl A }
} }
Analogously, depicting the coproduct and counit as
\eqpic{coalgebra_def} {140} {15} {
\put(20,-3){
  \put(10,0)     {\Includepic{coprod}}
  \put(-29,20)   {$ \Delta~= $}
  \put(7.5,46.5) {\pl C }
  \put(21.3,-8.8){\pl C }
  \put(37.6,46.5){\pl C}
}
\put(134,4){
  \put(10,5)     {\Includepic{counit}}
  \put(-25,13.2)   {$ \eps~= $}
  \put(8.1,-3.6) {\pl C}
} }
the coassociativity and counit constraints are drawn as:
  \eqpic{app63,app64} {300} {26} {
\put(-10,2){
  \put(10,0)     {\Includepic{coass}}
\put(7.8,60.4)   {\pl C}
\put(28.7,-9.2)  {\pl C}
\put(30.3,60.4)  {\pl C}
\put(53.0,60.4)  {\pl C}
\put(64.5,28)    {\small$=$}
\put(82.0,60.4)  {\pl C}
\put(103.1,-9.2) {\pl C}
\put(104.6,60.4) {\pl C}
\put(127.3,60.4) {\pl C}
}
\put(164,0){
  \put(10,0)     {\Includepic{counit_con}}
\put(23.2,-9.2)  {\pl C}
\put(39.8,65.4)  {\pl C}
\put(50.5,28)    {\small$=$}
\put(66.5,-9.2)  {\pl C}
\put(68.2,65.4)  {\pl C}
\put(81.5,28)    {\small$=$}
\put(97.5,65.4)  {\pl C}
\put(110.9,-9.2) {\pl C}
} }

A crucial ingredient in our description of conformal field theory will be objects which are both algebras and coalgebras and  in addition posses some compatibility condition between these two structures. We will consider two different compatibility conditions. The first one is the notion of a Frobenius algebra:
\begin{definition}\label{def_Frobalgebra}
   A \emph{Frobenius algebra}\index{Frobenius!algebra} $(A,m,\eta,\Delta,\eps)$ in a monoidal category \C\ is an object $A$ such that $(A,m,\eta)$ is an algebra, $(A,\Delta,\eps)$ is a coalgebra and the algebra and coalgebra structures are compatible in the sense that
     \eqpic{frob} {280} {31} {
\put(20,-1){
  \put(10,0)     {\Includepic{frob}}
\put( 7.3,76.9)  {\pl A}
\put(20.5,-9.2)  {\pl A}
\put(48.5,76.9)  {\pl A}
\put(61.6,-9.2)  {\pl A}
\put(83.5,32)    {\small$=$}
\put(104.6,-9.2) {\pl A}
\put(105.4,76.9) {\pl A}
\put(132.0,-9.2) {\pl A}
\put(132.8,76.9) {\pl A}
\put(152.5,32)   {\small$=$}
\put(174.5,-9.2) {\pl A}
\put(189.0,76.9) {\pl A}
\put(215.8,-9.2) {\pl A}
\put(230.4,76.9) {\pl A}
} }
\end{definition}
The Frobenius algebras of our interest typically have additional properties:
\begin{definition}\label{def_sym}
~\nxl1
(i) An algebra in a $\k$-linear monoidal category \C\ which is also a coalgebra is called \emph{special} iff
\be
    m\circ\Delta=\beta_A\,\id_A\quand\eps\circ\eta=\beta_\one\,\id_\one
\ee
for non-zero numbers $\beta_\one$ and $\beta_A$.
~\nxl1
(ii) An \emph{invariant pairing} on an algebra $A \eq (A,m,\eta)$ in a monoidal
category $\C$ is a morphism $\kappa\In\Hom_\C(A\oti A,\one)$ satisfying
$\kappa \cir (m \oti \id_A) \eq \kappa \cir (\id_A \oti m)$.
\nxl1
(iii) A \emph{symmetric} algebra $(A,\kappa)$ in a sovereign category \C\ is an algebra
$A$ in \C\ together with an invariant pairing $\kappa$ that is symmetric, i.e.\ satisfies
   \eqpic{pic_csp_14} {230} {22} {
   \put(0,0)   {\Includepichtft{99a}}
   \put(74,23)   {$ = $}
   \put(25.4,-8.5) {\sse$A$}
   \put(30.7,38.5) {\sse$\kappa$}
   \put(43.8,-8.5) {\sse$A$}
   \put(-13.5,40.9){\sse$\piv_{\!A}^{-1}$}
   \put(104,10) {\Includepichtft{99b}
   \put(2.4,-8.5)  {\sse$A$}
   \put(12.7,-8.5) {\sse$A$}
   \put(24.2,31.3) {\sse$\kappa$}
   }
   \put(148,23)  {$ = $}
   \put(179,0) {\Includepichtft{99c}
   \put(-3.3,-8.5) {\sse$A$}
   \put(12.7,38.5) {\sse$\kappa$}
   \put(16,-8.5)   {\sse$A$}
   \put(50,39)     {\sse$\piv_{\!A}^{}$}
   }
   %\put(260,44)    {\catpic}
   }
\end{definition}
\begin{remark}\label{rem_symm}
(i) In a sovereign category \C, the Frobenius property \eqref{frob} is equivalent to the statement that the morphisms $\Phi_{\text r}$ and $\Phi_{\text l}$ defined as
\be\label{def_Phi}
    \bearl
    \Phi_{\text r}:=((\eps\oti m)\otimes\id_{A^\vee})\circ(\id_A\oti b_A)\in\Hom(A,A^\vee)\quand
    \nxl{0.5}
    \Phi_{\text l}:=(\id_{\rd A}\oti (m\oti\eps))\circ(\tilde b_A\oti \id_A)\in\Hom(A,\rd\! A)
    \eear
\ee
are isomorphisms of right- and left-modules, respectively \cite{fuSt}. In addition, if \C\ is strictly sovereign, $A$ is symmetric iff $\Phi_{\text r}=\Phi_{\text l}$.
\nxl1
(ii) The two equalities in \eqref{pic_csp_14} imply each other.
\nxl1
(iii) An algebra with an invariant pairing $\kappa$, i.e.\ that satisfies $\kappa\cir(m\oti\id_A)=\kappa\cir(\id_A\oti m)$, is Frobenius iff $\kappa$ is non-degenerate \cite{fuSt}. In case $A$ is Frobenius as in Definition \ref{def_Frobalgebra}, $\kappa_\eps:=\eps\cir m$ is a non-degenerate invariant pairing.
\nxl1
(iv) Note that not only algebras and coalgebras, but even Frobenius algebras can be defined in any monoidal category. We do not even need the category to be abelian. A natural setting for symmetric algebras is a sovereign monoidal category.
\nxl1
(v) For a special Frobenius algebra $A$, we will choose the normalization such that $\beta_A=1$ and $\beta_\one=\dim(A)$.
\end{remark}

In a braided monoidal category we can define:
\begin{definition}\label{def_bialgebra}
   A \emph{bialgebra} $(A,m,\eta,\Delta,\eps)$ in a braided monoidal category \C\ is an object $A$ such that $(A,m,\eta)$ is an algebra, $(A,\Delta,\eps)$ is a coalgebra and the algebra and coalgebra structures are compatible in the sense
\eqpic{con_axiom} {130} {31} {
\put(0,0){
  \put(0,0)     {\Includepichopf{B9a}}
  \put( 4.5,76.9)  {\pl A}
  \put(2.5,-9.2)  {\pl A}
  \put(34,76.9)  {\pl A}
  \put(32,-9.2)  {\pl A}
}
\put(70,31) {$=$}
\put(110,0){
  \put(0,0)     {\Includepichopf{B9b}}
  \put( -3,76.9)  {\pl A}
  \put(-3,-9.2)  {\pl A}
  \put(12.5,76.9)  {\pl A}
  \put(12.5,-9.2)  {\pl A}
}
}
and in addition
\be\label{eta_m}
    \eps\circ m=\eps\oti\eps\quand \Delta\circ\eta=\eta\oti\eta\,.
\ee
\end{definition}

We will in particular be interested in bialgebras with an additional property:
\begin{definition}\label{def_Hopfalgebra}
   A \emph{Hopf algebra}\index{Hopf algebra|textbf} $(H,m,\eta,\Delta,\eps,\apo)$ in a monoidal category \C\ is an object $H$ such that $(H,m,\eta,\Delta,\eps)$ is a bialgebra, and the \emph{antipode} $\apo\In\End(H)$ satisfies
\eqpic{apo_prop} {180} {21} {
\put(0,0){
  \put(0,0)     {\Includepichopf{B10a}}
  \put( 7.5,58.9)  {\pl H}
  \put(6.5,-9.2)  {\pl H}
  \put(21,25)   {\pl\apo}
}
\put(40,21) {$=$}
\put(80,0){
  \put(0,0)     {\Includepichopf{B10c}}
  \put( -3,58.9)  {\pl H}
  \put(-3,-9.2)  {\pl H}
}
\put(120,21) {$=$}
\put(160,0){
  \put(0,0)     {\Includepichopf{B10b}}
  \put( 7.5,58.9)  {\pl H}
  \put(6.5,-9.2)  {\pl H}
  \put(-5,25)   {\pl\apo}
}
}
\end{definition}
Graphically, we write the antipode as a circle (like in \eqref{apo_prop}), and if the antipode is invertible we write it as a filled circle:
\eqpic{graphics} {120} {9} { \put(-350,-9){
   \put(349,18)     {$ \apo~= $}
   \put(382,9) {\Includepichtft{95a}
   \put(-1.9,-8.8)  {\sse$ H $}
   \put(-1.7,27.2)  {\sse$ H $}
   }
   \put(420,18)     {$ \apoi~= $}
   \put(463,9) {\Includepichtft{95b}
   \put(-1.9,-8.8)  {\sse$ H $}
   \put(-1.7,27.2)  {\sse$ H $}
   } } }
It follows (see e.g. \cite[Section 3.1]{lyub3}) from \eqref{con_axiom} and \eqref{apo_prop} that $\apo$ is an anti-endomorphism of $H$ as a (co)-algebra:
\eqpic{anti_alg_apo} {300} {21} {
\put(0,0){
  \put(0,0)     {\Includepichopf{B10d}}
  \put( 7.5,58.9)  {\pl H}
  \put(-3,-9.2)  {\pl H}
  \put(15,-9.2)  {\pl H}
}
\put(40,21) {$=$}
\put(80,0){
  \put(0,0)     {\Includepichopf{B10e}}
  \put( 5.5,58.9)  {\pl H}
  \put(-3,-9.2)  {\pl H}
  \put(15,-9.2)  {\pl H}
}
\put(138,21)   {and}
\put(200,0){
  \put(0,0)     {\Includepichopf{B10f}}
  \put( 5.5,-9.2)  {\pl H}
  \put(-3,58.9)  {\pl H}
  \put(15,58.9)  {\pl H}
}
\put(240,21) {$=$}
\put(280,0){
  \put(0,0)     {\Includepichopf{B10g}}
  \put( 5.5,-9.2)  {\pl H}
  \put(-3,58.9)  {\pl H}
  \put(15,58.9)  {\pl H}
}
}
If in addition the antipode is invertible, it follows that the inverse antipode satisfies analogous relations with braiding replaced by inverse braiding.
In other words, antipodes can be pushed through products and coproducts at the cost of introducing extra braidings.

From \eqref{eta_m} and the unit property it follows that
\be
    \eps\circ\eta=1\,.
\ee
Furthermore, combining $\eps\cir\eta\eq 1$ with the defining property \eqref{apo_prop} of $\apo$ and \eqref{eta_m} it follows that
\be\label{eta_inv_s}
    \eps\circ\apo=\eps \quand\apo\circ\eta=\eta\,.
\ee

\subject{Tensor products of (Frobenius) algebras} Consider two objects $A$ and $B$ in a braided category, such that $A$ and $B$ are algebras as well as coalgebras. We can define tensor products of algebras as well as coalgebras by using the braiding. Consider the following morphisms:
\be
\begin{split}\label{frob_tens}
    &m^\pm:=(m_A\oti m_B)\circ(\id_A\oti c^{\mp}_{A,B}\oti\id_B),\quad \eta^\pm:=\eta_A\oti\eta_B,\\
    &\Delta^\pm:=(\id_A\oti c_{A,B}^\pm\oti\id_B)\circ(\Delta_A\oti\Delta_B),\quad\;\eps^\pm:=\eps_A\oti\eps_B\,.
    \end{split}
\ee
If we don't require any compatibility condition between product and coproduct on $A\oti B$ it is easily checked that any choice of $m^\pm$ and $\Delta^\pm$ as defined in \eqref{frob_tens} gives an algebra and a coalgebra structure on $A\oti B$.
If $A$ and $B$ are Frobenius algebras in a braided category, then
\be\label{ABpm}
    A\otipm B\equiv(A\otipm B,m^\pm,\eta^\pm,\Delta^\pm,\eps^\pm)
\ee
is a Frobenius algebra for either choice of $\pm$.
Note that, if $A$ and $B$ are bialgebras in a symmetric braided category, the structure morphisms in \eqref{frob_tens} give a bialgebra structure on $A\otip B=A\otim B\equiv A\oti B$.

\subsection{Representations of algebras}\label{reps_alg}
The notion of representation of associative algebras have a straightforward categorical generalization.
\begin{definition}
A (left) \emph{module} $M(\rho)$ over an algebra $A$ in a monoidal category \C\ is an object $M\In\Obj(\C)$ together with a morphism $\rho\In\Hom(A\oti M,M)$ that satisfies
    \eqpic{pic_csp_13} {300} {35} {
\put(0,0){
  \put(0,0)     {\Includepic{rep_prop}}
  \put(-4,-9){\pl A}
  \put(14,-9){\pl A}
  \put(34,-9){\pl { M}}
  \put(69,-9){\pl A}
  \put(91,-9){\pl A}
  \put(112,-9){\pl { M}}
  \put(40,40){\pl \rho}
  \put(40,60){\pl \rho}
  \put(121,60){\pl \rho}
  \put(58,35)   {$ = $}
  \put(34,85){\pl { M}}
  \put(112,85){\pl { M}}

  }
  \put(165,35)    {and}
  \put(220,0){
  \put(0,0)     {\Includepic{rep_unit}}
  \put(19,-9){\pl M}
  \put(28,46){\pl \rho}
  \put(50,35)   {$ = \id_M$ .}
  \put(19,85){\pl M}
  }
  }
\end{definition}
Analogously a \emph{right module} over an algebra $A$ in a category \C\ is an object $M\In\Obj(\C)$ together with a morphism $\ohr\In\Hom( M\oti A,M)$ satisfying the analogous relations. Finally  a \emph{bimodule} $(X;\rho,\ohr)$ over $A$ is an object $X$ such that $(X,\rho)$ is a left module, $(X,\ohr)$ is a right module and $\rho$ and $\ohr$ commute.
%NEW
In order to lighten notation we will omit labels of representation morphisms when the morphism is clear from the context.

For any two objects $U$ and $V$ and any $A$-$B$-bimodule $(X,\rho,\ohr)$ of algebras $A$ and $B$ in a braided category we can define an induced bimodule $U\oti^+X\oti^-V$. As an object $U\oti^+X\oti^-V$ is just $U\oti X\oti V$, and the action is obtained by combining the inverse braiding with the representation morphisms of $X$ as follows:
\eqpic{ind_mod} {235} {44} {
\put(100,4){
  \put(0,0)     {\Includepic{ind_mod}}
  \put(-4,-9){\pl A}
  \put(14,-9){\pl U}
  \put(30,-9){\pl {X}}
  \put(47,-9){\pl V}
  \put(63,-9){\pl B}
  \put(38,44){\pl {\ohr}}
  \put(24,70){\pl {\rho}}
  \put(14,92){\pl U}
  \put(30,92){\pl {X}}
  \put(47,92){\pl V}
  }}

Note that modules can be defined in any monoidal category \C. If \C\ is in addition sovereign we can define:
\begin{definition}\label{def:char}
The \emph{character} $\chi_M^A$ of a module $M$ over an algebra $A$ in a sovereign monoidal category is the morphism
\eqpic{char_def} {185} {26} {
    \put(0,26)  {$\chi_M^A:=$}
    \put(50,-2) {\Includepic{char}
    \put(-4,-6)   {\sse$A$}
    \put(4,56)   {\sse$M$}
    \put(-4,42)   {\sse$\rho_M^A$}
    \put(44,42)   {\sse$\piv_M$}
    }
    \put(120,27)  {$\in\Hom(A,1)$.}
}
\end{definition}
For any algebra $A$ in $\C$, the category $\C_A$ of left $A$-modules has as objects all $A$-modules in $\C$, and as morphisms the morphisms in $\C$ that intertwine the $A$-action. For any two $A$-modules $X$ and $Y$, the set of such morphisms is denoted by $\Homa XY$.
Analogously, there is a category $\C_{A|B}$ of $A$-$B$-bimodules. The objects of the category $\C_{A|B}$ consist of all $A$-$B$-bimodules in $\C$ and the morphisms in $\C_{A|B}$ consist of all morphisms in $\C$ that are $A$-$B$-bimodule morphisms.
We denote the set of morphisms from $X$ to $Y$ in $\C_{A|B}$ by $\Homab XY$. Note that if $\C$ is a modular category the categories $\C_A$ and $\C_{A|B}$ do not inherit this structure. In fact, they are in general not even monoidal categories. However, $\C_{A|A}$ may be naturally equipped with a monoidal structure. To see how this works, let us introduce tensor products of bimodules.

\subject{Tensor products of bimodules} If $X$ is an $A$-$B$-bimodule and $Y$ is a $B$-$C$ bimodule for any algebras $A$, $B$ and $C$ in some monoidal category $\C$, we can define the \emph{tensor product over $B$}, $X\otiB Y$, of $X$ and $Y$ as the coequalizer in $\C_{A|C}$ of the left and right actions of $B$ on $X$ and $Y$. That means first of all that $X\otiB Y$ is an object in $\C_{A|C}$ and there is a morphism $r$ in $\Hom_{A|C}(X\oti Y,X\otiB Y)$ such that $r\cir(\ohr^B_X\oti\id_Y)\eq r\cir(\id_X\oti\rho^B_Y)$. In addition $r$ satisfies a universal property, see e.g. \cite[Section III.3]{MacL}.

In the case we will be interested in, in which $\C$ is modular and $B$ is a special symmetric Frobenius algebra, $X\otiB Y$ can be written  as the image (see \cite[Lemma A.3]{fjfs})
\be\label{fus_def}
    X\otiB Y=\image(P_{X,Y})
\ee
of the projector
\eqpic{TP_proj} {100} {25} {
  \put(0,25)    {$P_{X,Y}:~=$}
  \put(70,0){
  \put(0,0)     {\Includepic{avh_1a}}
  \put( -4.5,-9.2)  {\pl X}
  \put( 19.5,-9.2)  {\pl Y}
  \put(-12,44.9)  {\pl{\ohr^{B}_X}}
  \put(25,44.9)  {\pl{\rho^{B}_Y}}
}
}
In addition, a modular tensor category $\C$ is idempotent complete which implies that there are embedding and restriction morphisms $r\In\Hom_{A|C}(X\oti Y,\linebreak X\otiB Y)$ and $e\In\Hom_{A|C}(X\otiB Y, X\oti Y)$ such that \be\label{TP_dec}
P_{X,Y}=e\circ r \quand r\circ e=\id_{X\otiB Y}\,.
\ee
This tensor product of bimodules turns the category $\C_{A|A}$ into a monoidal category. Note that by taking $Y$ to be a $B$-$\one$-bimodule $M$, this prescription defines $X\otiB M\In\C_A$ for any $A$-$B$-bimodule $X$ and any left $B$-module $M$. Analogously, one can define the right $C$-module $N\otiB Y$, for any $B$-$C$-bimodule $Y$ and any right $B$-module $N$

\subject{Tensor products of bialgebra (bi-)modules}
For bialgebras we have the notion of tensor product of bimodules described above. However, the connecting axiom \eqref{con_axiom} allows also a different notion of tensor product, which can be defined not only for bimodules, but already for left or right modules. For two $H$-$H'$-bimodules $X$ and $Y$, over two bialgebras $H$ and $H'$, there is a tensor product $X\otiBA Y\equiv(X\oti Y,\rho_{X\otiBA Y},\ohr_{X\otiBA Y})$. As an object, $X\otiBA Y$ is just the underlying object $X\oti Y$ in $\C$, and the actions are constructed from the coproduct and the braiding as in the following picture:
\eqpic{TP_Hopf} {170} {30} {
\put(0,0){
  \put(0,0)     {\Includepichtft{136b}}
  \put( -4.5,-9.2)  {\pl H}
  \put( 11.5,-9.2)  {\pl X}
  \put( 27.5,-9.2)  {\pl Y}
  \put(43,-9.2)  {\pl H}
  \put(15,58.9)  {\pl{\rho_{\otiBA }}}
  \put(18,38.9)  {\pl{\ohr_{\otiBA }}}
}
\put(70,30) {$:=$}
\put(100,0){
  \put(0,0)     {\Includepichtft{136a}}
  \put( 1.5,-9.2)  {\pl H}
  \put( 17.5,-9.2)  {\pl X}
  \put( 33.5,-9.2)  {\pl Y}
  \put(49,-9.2)  {\pl H}
}
}
It is straightforward to check that these actions indeed satisfy \eqref{pic_csp_13} and the corresponding condition for right actions.
The tensor products of left- or right-modules are obtained from this description by simply forgetting about the right- or left-action.
\begin{remark}
    (i) Note that the connecting axiom \eqref{con_axiom} for a bialgebra means that the coproduct of a bialgebra is an algebra morphism from $A$ to $A\otim A$ with $A\otim A$ as defined in \eqref{ABpm}. \nxl1
    (ii)
    Endowing the object $A\oti A$ with a bimodule structure with left and right actions $m\oti\id_A$ and $\id_A\oti m$, the Frobenius property \eqref{frob} means that the coproduct of a Frobenius algebra is a bimodule morphism from $A$ to $A\oti A$.
\end{remark}
\section{Representation categories of factorizable ribbon Hopf algebras}

In this section we consider a finite-dimensional Hopf algebra $H$, over an algebraically closed field $\k$ of characteristic zero, or in other words, a Hopf algebra object in the category \Vectk, c.f Definition \ref{def_Hopfalgebra}. We describe the relevant structures and properties of such Hopf algebras below. For more information see e.g. \cite{KAss}.
We denote by $H$\Mod\ the category of left-modules over $H$. The objects of $H$\Mod\ are the left-modules, i.e. vector spaces over $\k$ endowed with a left $H$-action, and the morphisms of $H$\Mod\ are all left-module morphisms, i.e.\ linear maps that intertwine the action of $H$.

In our analysis we often work directly in the category \Vectk, i.e.\ morphisms are linear maps.
In particular we will draw pictures in \Vectk\ rather than in $H$\Mod. This has the advantage that various relations can be expressed in a more explicit manner. However, we have to be careful and make sure that the linear maps we deal with indeed are morphisms in $H$\Mod.
Recall that the braiding of \Vectk\ is symmetric, so we do not need to distinguish between over- and under-braiding.

Below we will identify $H$ with $\Homk(\k,H)$ and accordingly think of elements of $H$ as linear maps in $\Homk(\k,H)$. Analogously we will think of elements of $H^*$ as linear maps in $\Homk(H,\k)$. Thus in pictures, elements $h\In H$ and $f\In H^*$ will be drawn as
\eqpic{Hopf_el} {100} {10} {
  \put(0,0){
  \put(0,0)     {\Includepic{avh_2a}}
  \put( 5,0)  {\pl h}
}
  \put(40,10)   {and}
  \put(90,0){
  \put(0,0)     {\Includepic{avh_2b}}
  \put( 5,20)  {\pl f}
}
}
respectively.

A \emph{left integral} of a Hopf algebra H is an element $\Lambda\In H$ satisfying
\be\label{int_prop}
    m\circ(\id_H\oti \Lambda)=\Lambda\circ\eps\,,
\ee
and a right integral is an element $\tilde\Lambda\In H$ satisfying $m\circ(\tilde\Lambda\oti\id_H)=\tilde\Lambda\circ\eps$.
A \emph{right cointegral} is an element $\lambda\In\H^*$ satisfying
\be\label{prop_lambda}
    (\lambda\oti\id_H)\circ\Delta=\eta\circ\lambda\,,
\ee
and a \emph{left cointegral} is defined analogously.

For a finite dimensional Hopf algebra $H$ over \k, the antipode is invertible and $H$ has a (up to normalization) unique non-zero left integral $\Lambda$ and a non-zero right cointegral $\lambda$, see \cite{laSw}. In addition $\lambda\circ\Lambda\In\k$ is invertible.
Recall from \eqref{anti_alg_apo} that the antipode $\apo$, and consequently also $\apoi$, is an anti-algebra as well as an anti-coalgebra morphism.

\subsection{Some Hopf algebra (bi-)modules}
In the present setting, $H$ can act on $H$ as well as $\Hss$ in more than one way.
Let us list the left and right actions that will be important to us
\ssubject{The regular bimodule}
Any algebra is a bimodule over itself with the actions given by the multiplication. We will refer to $H$ endowed with this action as the regular bimodule.
\ssubject{The coregular bimodule}
Combining the antipode and dualities we obtain the coregular actions of $H$ on \Hs.
The \emph{coregular bimodule}, $\BFH := (\Hs,\rho_{\BFH},\ohr_{\BFH} )$, is the vector space \Hs\ endowed
with the following actions of $H$:
\eqpic{rhoHb,ohrHb} {130} {41} {
   \put(0,0)   {\Includepichtft{67b}
   \put(-4,-8.8)    {\sse$ H $}
   \put(2,56)       {\sse$ \rho_{\BFH}$}
   \put(13,-8.8)    {\sse$ \Hss $}
   \put(14.3,84.9)  {\sse$ \Hss $}
   \put(23,43)      {\sse$ \ohr_{\BFH}$}
   \put(32,-8.8)    {\sse$ H $}
   }
   \put(58,38)      {$ :=$}
   \put(88,0) { \Includepichtft{67a}
   \put(-5,9)       {\sse$ \apo $}
   \put(-3,-8.8)    {\sse$ H $}
   \put(10,-8.8)    {\sse$ \Hss $}
   \put(42.3,93.3)  {\sse$ \Hss $}
   \put(54,-8.8)    {\sse$ H $}
   \put(61.6,16.2)  {\sse$ \apoi $}
   } }

\ssubject{The adjoint actions}
$H$ can also act on itself via the \emph{adjoint} left and right actions, defined by
\eqpic{adj_act} {250} {38} {
  \put(0,-5){
  \put(0,40)    {$\lad:=$}
  \put(43,0) { {\includepichopf{42}}}
  }
  \put(150,-5){
  \put(0,40)    {$\rad:=$}
  \put(43,0) { {\includepic{304}{rad}}}
  }
  }
\ssubject{The coadjoint left and right actions}
Analogously we can define the left and right coadjoint actions $\rho\coa$ and $\ohr\coar$ of $H$ on the vector space $\Hs$
\eqpic{def_lads} {285} {44} {\setulen85
   \put(0,44)      {$ \rho\coa ~:= $}
   \put(42,0) { \includepic{2584}{pic_hopf_48}
   \put(6.5,-8.5){\sse$ H $}
   \put(23.4,-8.5) {\sse$ \Hss $}
   \put(32.3,65.6) {\sse$ \apo $}
   \put(64.5,114)  {\sse$ \Hss $}
   }
   \put(157,44)  {and}
   \put(208,44)    {$ \ohr\coar ~:= $}
   \put(258,0) { \includepic{2584}{pic_htft_48c}
   \put(-4,-8.5)   {\sse$ \Hss $}
   \put(8.8,60.5)  {\sse$ \apo $}
   \put(37,114)    {\sse$ \Hss $}
   \put(53.4,-8.5) {\sse$ H $}
   \put(61.4,10.5) {\sse$ \apoi $}
   } }
We will denote by $H_\diamond\eq(H,\lad)$ and $\Hs_{\triangleright}\eq(\Hs,\rho\coa)$ the adjoint and coadjoint left modules
\begin{remark}
Note that a left (right) integral is a left (right) module morphism from the trivial $H$-module $(\k,\eps)$ to the regular left (right) $H$-module. Analogously a left (right) cointegral is left (right) comodule morphism from H with the regular left (right) coaction to the trivial comodule $(\k,\eta)$.
\end{remark}

\subject{The Frobenius map}

The \emph{Frobenius map}\index{Frobenius!map} $\fmap:H\rightarrow \Hss$ and its inverse $\fmap^{-1}:\Hss\rightarrow H$ defined by
\eqpic{def_fmap} {290} {37} {\setulen80
   \put(1,37)     {$ \fmap ~= $}
   \put(39,0) {  \includepichopf{26}
   \put(-1.8,-8.5){\sse$ H $}
   \put(5.9,32.4) {\sse$ \apo $}
   \put(23.6,77)  {\sse$ \lambda $}
   \put(48.7,109) {\sse$ \Hss $}
   \put(97,37)   {and}
   }
  \put(196,0){
   \put(0,37)     {$ \fmap^{-1} ~= (\lambda\cir\Lambda)^{-1}$}
   \put(109,0) {  \includepichopf{51}
   \put(-4.1,-8.5){\sse$ \Hss $}
   \put(41.5,10.2){\sse$ \Lambda $}
   \put(47.5,109) {\sse$ H $}
   } } }
That \fmap\ and $\fmap^{-1}$ are inverse to each other means that\footnote{In all considerations below we will have $\lambda\cir\Lambda=1$, c.f.\ \eqref{norm_int_1}.}
\eqpic{fmap_inv} {210} {40} { \setlength\unitlength{.8pt}
   \put(0,0)   {\includepichopf{39a}
   \put(-5.2,39) {\sse$\apo$}
   \put(-3,-10.5) {\sse$H$}
   \put(23.1,84) {\sse$\lambda$}
   \put(47.2,17) {\sse$\Lambda$}
   \put(49,116)  {\sse$H$}
   }
   \put(83,50)   {$=\quad\lambda\cir\Lambda$}
   \put(145,00) {\includepichopf{39b}
   \put(-4.5,-10.5){\sse$H$}
   \put(-3.6,116){\sse$H$}
   }
   \put(170,50)  {$=$}
   \put(210,0) {\includepichopf{39c}
   \put(-3.8,116){\sse$H$}
   \put(46,-10.5) {\sse$H$}
   } }
A graphical proof of this equality can be found in \cite[Appendix A.2]{fuSc17}
\begin{remark}
    It is easily checked, using that $\apo$ is an anti-algebra morphism, that the Frobenius map $\Psi$ intertwines the regular and the coregular actions.
\end{remark}
\subsection{Monoidal structure on $H$\Mod}\label{sec:Hmod_mon}
Considering any left-module $M$ as an $H$-bimodule, with right action given by $\id_M\oti\eps$, the tensor product defined in \eqref{TP_Hopf} turns $H$\Mod\ into a monoidal category.
Thus, the tensor product $M\oti N$ of two $H$-modules is the vector space $M\otik N$, equipped with the left action obtained from \eqref{TP_Hopf} by forgetting the right action. The monoidal unit is given by the \emph{trivial left-module} ($\k,\eps$), i.e. the vector space $\k$ with the $H$-action given by the counit of $H$.
From now on we will always think of $H$\Mod\ as monoidal equipped with this tensor functor.

Analogously there is a monoidal category \rMod$H$ of right representations of $H$. The category $H$\Bimod\ is the corresponding monoidal category of bimodules, i.e. the category whose objects are simultaneously left and right modules, with commuting left and right actions, whose morphisms are bimodule morphisms and the tensor product given by \eqref{TP_Hopf}. The monoidal unit is given by the trivial bimodule $(\k,\eps,\eps)$.
\subsection{Factorizable ribbon Hopf algebras}\label{fact_ribbon_hopf}
We now introduce, step by step, the additional properties of $H$, that we will require in chapter \ref{sec:beyond}.
\subject{Quasitriangular Hopf algebras}
A Hopf algebra $H$ is called \emph{quasitriangular}\index{Hopf algebra!quasitriangular} if there is an invertible element $R$ of $H\oti H$, called the \emph{$R$-matrix}, that intertwines the coproduct and the opposite coproduct, i.e
\be\label{Rmat1}
    \ad_R(\Delta):=R\cdot\Delta\cdot R^{-1}=\Delta\op\equiv\flip HH\circ\Delta\,,
\ee
where $\flip HH$ is the flip map, i.e.\ the braiding of $\Vectk$, and
\be\label{R_YB}
    (\id_H\oti \Delta)\circ R=R_{13}\cdot R_{12},\quad(\Delta\oti\id_H)\circ R=R_{13}\cdot R_{23}\,.
\ee
The $R$-matrix satisfies in addition (recall that $H$ is finite-dimensional and accordingly the antipode $\apo$ is invertible)
\be\label{Rmat2}
    (\apo\oti\id_H)\circ R=R^{-1}=(\id_H\oti\apoi)\circ R=R\quand(\apo\oti\apo)\circ R=R.
\ee
We also define the monodromy matrix $Q$ as
\be\label{Qmat}
    Q:=R_{21}\cdot R\equiv(m\oti m)\circ(\flip H{H\oti H}\oti\id_H)\circ(R\oti R)\,.
\ee
The monodromy matrix is invertible with inverse given by
\be
    Q^{-1}=R^{-1}\cdot R^{-1}_{21}\,.
\ee
From the definition of the $Q$-matrix and \eqref{Rmat2} it follows that
\be
    (\apo\oti\apo)\circ Q=\flip HH \circ Q\,,
\ee
and analogous relations with $Q$ replaced by $Q^{-1}$ or $\apo$ replaced by $\apoi$.
\subject{Ribbon Hopf algebras}
A ribbon Hopf algebra is a quasitriangular Hopf algebra $(H,R)$ endowed with a central invertible element $v$, called the \emph{ribbon element}, that satisfies
\be\label{prop_v}
    \eps\circ v=1,\quad\apo\circ v=v\quand\Delta\circ v=(v\otimes v)\cdot Q^{-1}\,.
\ee

When $H$ is a ribbon Hopf algebra, the category $H$\Mod\ is a ribbon category. In particular, the $R$-matrix endows $H$\Mod\ with a braiding as follows:
\eqpic{braid_Hmod} {120} {40} {
  \put(0,40)    {$ c_{X,Y}^{H\text{\Mod}} ~= $}
  \put(60,0)  {\Includepichtft{90a}
  \put(24,-8.5) {\sse$ X $}
  \put(24.8,95){\sse$ X $}
  \put(42.3,-8.5) {\sse$ Y $}
  \put(43.5,95){\sse$ Y $}
  \put(-6,3)   {\sse$ R $}
  \put(42,56)   {\sse$ \tau_{X,Y}^{} $}
  } }
The braid relations, \eqref{braid_rel}, follows directly from \eqref{braid_Hmod} and \eqref{R_YB}.
The inverse braiding is obtained by replacing $R$ in \eqref{braid_Hmod} by $R^{-1}_{21}=\flip HH\circ R^{-1}$.

\subject{Factorizable ribbon Hopf algebras} A quasitriangular Hopf algebra $(H,R)$ is called \emph{factorizable}\index{Hopf algebra!factorizable} if the monodromy matrix $Q$ can be written as $Q=\sum_\ell h_\ell\oti k_\ell$ where $\{h_\ell\}$ and $\{k_\ell\}$ are two vector space bases of H. We will in particular be interested in the situation that $H$ is a factorizable ribbon Hopf algebra. In this case $H$ is unimodular \cite[Proposition 3(c)]{radf13}, which means that the left integral $\Lambda$ is also a right integral. This, in turn, implies $\apo\circ\Lambda=\Lambda$. Recall that the integral and cointegral of a finite dimensional Hopf algebra are unique up to normalization. We choose the normalization such that
\be\label{norm_int_1}
    \lambda\circ\Lambda=1\,.
\ee
In addition, when $H$ is unimodular the right cointegral $\lambda$ satisfies
\be\label{OS2}
    \lambda\cir m \eq \lambda\cir\flip HH\cir(\id_H\oti\apo^2)\,,
\ee
which implies \cite{coWe6}
\be\label{OS2_mod}
   \lambda\circ m\circ ((\apo\cir m)\oti\id_H)
   = \lambda\circ m\circ [\apo\oti(m\cir\flip HH\cir(\apoi\oti\id_H))] \,.
\ee
\begin{conv}
    Below we will, unless explicitly stated otherwise, always take $H$ to be a finite-dimensional factorizable ribbon Hopf algebra over an algebraically closed field $\k$ of characteristic zero.
\end{conv}
\ssubject{The Drinfeld map} The \emph{Drinfeld}\index{Drinfeld map} map is the linear map
\eqpic{Drin_def} {120} {28} {
  \put(0,28)    {$ f_Q ~:= $}
  \put(50,0){\Includepic{drinfeld}
  \put(-4,-8.5) {\sse$\Hs$}
  \put(27,69) {\sse$H$}
  \put(23,23) {\sse$Q$}
  }
  \put(105,28)    {$ \in\Hom_\k(\Hs,H) $\,.}}
For a quasitriangular Hopf algebra $H$, the Drinfeld map intertwines the coadjoint and adjoint left actions of $H$ \cite[Proposition 2.5 (5)]{coWe4}
\be\label{drin_int}
    f_Q\In\Hom(\Hs_{\triangleright},H_\diamond)\,.
\ee
In addition, for a finite-dimensional quasitriangular Hopf algebra, $f_Q$ is invertible iff $H$ is factorizable. If $H$ is factorizable $f_Q$ maps any non-zero cointegral $\lambda$ to a non-zero integral $\Lambda$, see \cite[Remark 2.4]{coWe5} and \cite[Theorem 2.3.2]{geWe3}. We choose the normalization of $\lambda$ and $\Lambda$ such that in addition to \eqref{norm_int_1} we have
\be\label{norm_int_2}
    f_Q(\lambda)=\Lambda\,.
\ee
Together with \eqref{norm_int_1} this fixes the normalization of $\Lambda$ and $\lambda$ up to a common factor $\pm1$.

By \cite[Lemma 2.5]{coWe5} the inverse of the Drinfeld map is obtained from
\be\label{Drin_inv}
    f_Q\circ\Psi\circ f_{Q^{-1}}\circ\Psi=1\,,
\ee
where $f_{Q^{-1}}$ is given by the expression for the Drinfeld map with $Q$ replaced by $Q^{-1}$
Composing \eqref{Drin_inv} with $\Psi^{-1}$ from the right we obtain the first of the following two equalities:
\Eqpic{fQS_Psi} {320} {35} { \put(0,-2){\setulen80
   \put(0,0) { \includepichtft{97e}
   \put(-2.5,100){\sse$ H $}
   \put(14.5,3.4){\sse$ Q $}
   \put(43.6,35) {\sse$ \apo $}
   \put(73.1,75.5) {\sse$ \lambda $}
   \put(128.3,3.2) {\sse$ Q^{-1} $}
   \put(130,100) {\sse$ H $}
   }
   \put(162,46)  {$ = $}
   \put(183,13) { \includepichtft{97g}
   \put(-2.2,79) {\sse$ H $}
   \put(7.7,-1)  {\sse$ \Lambda $}
   \put(30.8,79) {\sse$ H $}
   }
   \put(241,46)  {$ = $}
   \put(274,0) { \includepichtft{97e_1}
   \put(-2.5,100){\sse$ H $}
   \put(31.1,3.2){\sse$ Q^{-1} $}
   \put(39.5,32.6) {\sse$ \apo $}
   \put(73.7,85.4) {\sse$ \lambda $}
   \put(112.5,3.4) {\sse$ Q $}
   \put(130,100) {\sse$ H $}
   } } }
The second equality is obtained the same way as the first one by first by rewriting \eqref{Drin_inv} as $f_{Q^{-1}}\circ\Psi\circ f_Q\circ\Psi=1$.
By using that $\apo$ is an anti-algebra morphism and \eqref{OS2} it is straightforward to check that
\be
    \Psi\In\Hom(H_\diamond,\Hs_{\triangleright})\,.
\ee
Combining this result with \eqref{Drin_inv} and the intertwining property \eqref{drin_int} of $f_Q$, it follows that also $f_{Q-1}$ intertwines the adjoint and coadjoint left-modules:
\be\label{norm_int_inv}
    f_{Q^{-1}}\In\Hom(\Hs_{\triangleright},H_\diamond)\,.
\ee
In addition, composing the first equality in \eqref{fQS_Psi} with a counit on the right, it follows that $f_{Q^{-1}}$ maps the cointegral to the integral
\be
    f_{Q^{-1}}(\lambda)=\Lambda\,.
\ee
\begin{remark}
    Here we have discussed Hopf algebras in \Vectk. However many of the notions described above have generalizations. For example, if \H\ is a Hopf algebra in any symmetric category, such that the category of \H-modules is braided, the braiding of the latter can be written in terms of an $R$-matrix. The notion of integral can be generalized to any rigid braided category \cite{lyub8}. In the more general setting one starts from the collection of morphisms in $\Hom(X,H)$ for any object $X$ satisfying relations analogous to the ones for integrals. One then defines the \emph{object of integrals}, $\text{Int}\H$, in terms of a universal property with respect to these morphisms, see \cite[definition 3.1 \& Proposition 3.1]{bklt}. In the case of Hopf algebras in \Vectk\ the object of integrals is simply the trivial $H$-module.
\end{remark}
%\pagebreak
\subsection{The ribbon category $H$\Bimod}
For any  finite-dimensional ribbon Hopf algebra $H$, we can equip $H$\Mod\ as well as $H$\Bimod\ with the structure of a ribbon category.
We will describe the ribbon structure of $H$\Bimod\ in detail. The ribbon structure of $H$\Mod\ can then be obtained by forgetting about the right action of $H$ in the relevant formulas.

Recall from section \ref{sec:Hmod_mon} that the tensor product of $H$-bimodules, defined in \eqref{TP_Hopf}, with the monoidal unit given by the trivial bimodule $(\k,\eps,\eps)$, equips $H$\Bimod\ with the structure of a monoidal category.
The ribbon structure of $H$\Bimod\ is the following:
\begin{itemize}
    \item \textbf{Duality:} The right (left) dual object of the $H$-bimodule $X$ is the dual vector space $X^*$ endowed with the left and right actions $\rhov$ and $\ohrv$ ($\rhoV$ and $\ohrV$) given by:
\eqpic{Hbim_rightdualactions} {180} {30} {\setulen80
  \put(0,0){
  \put(0,35)    {$\rhov ~:= $}
  \put(34,0)  {\includepichtft{96c}
  \put(-3,-8.5) {\sse$ H $}
  \put(12,-8.5) {\sse$ X^*_{} $}
  \put(25.3,49.8) {\sse$ \rho $}
  \put(45.2,94) {\sse$ X^*_{\phantom|} $}
  }
  \put(126,35)  {$\ohrv ~:= $}
  \put(175,0) {\includepichtft{96d}
  \put(-5,-8.5) {\sse$ X^*_{} $}
  \put(11,60.7) {\sse$ \ohr $}
  \put(29.8,94) {\sse$ X^*_{\phantom|} $}
  \put(41,-8.5) {\sse$ H $}
  }} }
  and
\eqpic{Hbim_leftdualactions} {180} {35} {\setulen80
  \put(0,10){
  \put(0,35)  {$\rhoV ~:= $}
  \put(34,0) {\includepichtft{96e}
  \put(-4,-8.5) {\sse$ H $}
  \put(13.6,94) {\sse$ X^*_{\phantom|} $}
  \put(35,54.7) {\sse$ \rho $}
  \put(44,-8.5) {\sse$ X^*_{} $}
  }
  \put(126,35)  {$\ohrV ~:= $}
  \put(175,0) {\includepichtft{96f}
  \put(11,54.1) {\sse$ \ohr $}
  \put(27,-8.5) {\sse$ X^*_{\phantom|} $}
  \put(-2.9,94) {\sse$ X^*_{\phantom|} $}
  \put(44,-8.5) {\sse$ H $}
  } } }
    The (co)evaluation morphisms appearing here are the ones of $\Vectk$. It is straightforward to check that they are indeed bimodule morphisms. The left and right duality functors map a morphism $f$ to the dual linear map $f^*$.\\
    Note that the coregular bimodule is the right dual of the regular bimodule, c.f. \eqref{rhoHb,ohrHb}
  %\pagebreak
    \item \textbf{Braiding:} We have already seen that the $R$-matrix endows $H$\Mod\ with a braiding. Analogously the $R$-matrix endows \rMod$H$ with a braiding. Combining them we can have the following braiding in $H$\Bimod:
    \eqpic{bibraid} {120} {47} {
  \put(0,47)    {$ c_{X,Y}^{} ~= $}
  \put(60,0)  {\Includepichtft{90}
  \put(24,-8.5) {\sse$ X $}
  \put(24.8,107){\sse$ Y $}
  \put(42.3,-8.5) {\sse$ Y $}
  \put(43.5,107){\sse$ X $}
  \put(-17,9)   {\sse$ R^{-1} $}
  \put(75,68)   {\sse$ R $}
  \put(42,56)   {\sse$ \tau_{X,Y}^{} $}
  } }
  Just like in the $H$\Mod\ case it follows from \eqref{R_YB} (and the analogous relations for $R^{-1}$) that \eqref{bibraid} indeed satisfies the braid relations \eqref{braid_rel}.
    \item\textbf{Twist:} The twist of $H$\Bimod\ is obtained by acting with the ribbon element $v$ from the left and $v^{-1}$ from the right \cite[Lemma 4.8]{fuSs3}
   \eqpic{bimod-twist} {90} {31} {
    \put(0,35)    {$ \theta_X ~= $}
  \put(50,0) {\Includepichtft{119b}
  \put(6.1,-8.5) {\sse$X$}
  \put(6.9,76)   {\sse$X$}
  \put(-5.4,25.7){\sse$v$}
  \put(24.3,17)  {\sse$v^{-1}$}
  } }
The equalities in \eqref{prop_v} assure the compatibility with the duality and the braiding and that the trivial module has trivial twist.
\end{itemize}
\ssubject{Sovereign structure} A quasitriangular Hopf algebra with invertible antipode possess a distinguished invertible element $u$, known as the \emph{canonical element} or  the  \emph{Drinfeld element}, defined by
\be
    u:=m\circ(\apo\oti\id_H)\circ\flip HH\circ R\,.
\ee
The element $u$ satisfies $\ad_u=\apo^2$
\cite[prop VIII.4.1]{KAss} and consequently $(\apoi)^2=\ad_{u^{-1}}$. The product of $u$ and the inverse of the ribbon element
\be\label{t_def}
    t:=uv^{-1}\equiv m\circ(u\oti v^{-1})
\ee
is the \emph{special group-like element}. A group-like element $g$ is an element that satisfies $\Delta\circ g=g\oti g$. Since $v^{-1}$ is central, the relation $\ad_u=\apo^2$ with the antipode is inherited  by $t$:
\be\label{apo_t}
    \apo^2=\ad_t\,.
\ee
$H$\Bimod\ is sovereign \cite[Lemma 4.1]{fuSs3} with the isomorphism $\piv_X$ of the sovereign structure and its inverse for the bimodule $X$ given by the linear maps
\eqpic{pivX} {265} {37} {
   \put(0,39)    {$ \piv_X ~:= $}
   \put(50,0)  {\Includepichtft{97a}
   \put(-4.5,-8) {\sse$ \Xs $}
   \put(-3,88)   {\sse$ \Xs $}
   \put(28,28)   {\sse$ t $}
   \put(58,20)   {\sse$ t $}
   }
      \put(150,39)    {$ \piv_X^{-1} ~:= $}
   \put(200,0)  {\Includepichtft{97b}
   \put(53,-8.5) {\sse$ X^*_{\phantom|} $}
   \put(54,88)   {\sse$ X^*_{\phantom|} $}
   \put(24.9,30.7) {\sse$t^{-1}$}
   \put(-14,18.5){\sse$t^{-1}$}
   }
   }
It follows from \eqref{apo_t}, and the fact that $t$ is group-like, that these linear maps are indeed monoidal bimodule morphisms. Naturality then follows from properties of the duality, see \cite[Lemma 4.1]{fuSs3}.
\begin{remark}
(i)
The formula \eqref{bimod-twist} for the twist is in fact obtained from the more general formula
\eqpic{twist_NS} {120} {37} {
   \put(0,40)      {$ \theta_X ~= $}
   \put(52,0)   {\Includepichtft{119a}
   \put(-4.1,-8.5) {\sse$X$}
   \put(-2.5,88)   {\sse$X$}
   \put(15.6,34)   {\begin{turn}{90}\sse$c_{X,X}^{}$\end{turn}}
   \put(47.9,44.1) {\sse$\piv_{\!X}$}
   }
   %\put(126,69)    {\catpic}
   }
for the twist of a sovereign braided monoidal category, see \cite[Lemma 4.8]{fuSs3}.
\nxl1
(ii)
Note that upon forgetting the right action on bimodules we obtain $H$\Mod\ with the inverse braiding and twist, c.f. \eqref{braid_Hmod} and comment afterwards.
This is a deliberate choice we make for $H$\Bimod\ that will be motivated in section \ref{sec:equiv_cat}, see Remark \ref{rem:bibraid_choice}. We denote the so-obtained ribbon category by $(\HMod)\rev$, c.f. \eqref{braid_twist_rev}.
\nxl1
(iii) To obtain the ribbon category $H$\Mod\ from the ribbon category $H$\Bimod, we use instead the braiding defined in \eqref{braid_Hmod} and the define the twist as in \eqref{bimod-twist}, but with the inverse ribbon element $v^{-1}$ acting from the left.
\nxl1
(iv)
We have not assumed that $H$ is semisimple. $H$\Mod\, is semisimple iff $H$ is semisimple. If $H$ is semisimple
$H$\Mod\ is in fact also modular.
\end{remark}
\begin{conv}
    Below $H$\Bimod\ denotes the category of $H$-bimodules with the ribbon structure just introduced. Analogously, $H$\Mod\ denotes the category of left $H$-modules with the analogous ribbon structure.
\end{conv}

\section{Topological field theory}\label{TFT}
In this section we describe the notion of \emph{3-dimensional topological field theory} (3d TFT for short) in the sense developed in \cite{retu,retu2,TUra,tuVi}. See also \cite{BAki,walk}.
\subsection{The cobordism category \CobC}\label{sec_Cob}
Given a strictly sovereign modular tensor category \C, there is another \emph{geometric} category \CobC\ associated to \C. By geometric we mean that objects as well as morphisms are manifolds. The underlying modular category \C\ enters in terms of decoration of the objects and morphisms. Explicitly:
\begin{itemize}
    \item The \textbf{objects} of \CobC\ are \emph{extended surfaces}\index{extended surface}. An extended surface $\ES$ is a compact oriented closed two-dimensional surface with a finite number of \emph{marked points} and a choice of \emph{Lagrangian subspace} $\lambda\In H_1(\ES,\R)$.
    \item The \textbf{morphisms} of \CobC\ are \emph{cobordisms}\index{cobordism}. A cobordism $\cob(\ES,\ES')$ is a compact oriented three-manifold $\cob$, bounded by $-\ES\sqcup\ES'$, with an embedded \emph{ribbon graph} such that there is a ribbon ending at each marked point.
\end{itemize}
Let us describe the notions appearing here. A \emph{marked point} $(p_i[\gamma_i],V_i,\eps_i)$ is defined by a quadruple of data: $p_i$ is a point on $E$, $[\gamma_i]$ is a germ of arcs such that $\gamma_i(0)=p_i$, $V_i$ is an object in $\C$ and $\eps_i\In\{\pm1\}$. A germ of arcs is an equivalence class of embeddings $\gamma$ of an interval $[-\delta,\delta]$ into $E$. Two embeddings $\gamma\mapdef[-\delta,\delta]\,{\rightarrow}\, E$ and $\gamma'\mapdef[-\delta',\delta']\,{\rightarrow}\, E$ are equivalent iff there is an $\eps<\delta,\delta'$ such that the two embeddings coincide when restricted to $[-\eps,\eps]$. We will refer to a germ of arcs such that $\gamma(0)=p$ as an arc germ at $p$. $\eps_i$ is related to orientation; we will soon describe its precise meaning.

For a symplectic vector space $H$, with a symplectic form $\omega$, a \emph{Lagrangian subspace} $\lambda\subset H$ is defined as a maximal subspace $\lambda$ such that $\omega(\lambda,\lambda)=0$.

A \emph{ribbon}\index{ribbon} is an embedding of a rectangle in a three-manifold together with a choice of 2-orientation and orientation of its core.
By this we mean that we choose a natural parametrization $\phi:[0,1]\Times[-\frac1{10},\frac1{10}]\rightarrow \cob$ of the ribbon. This defines an orientation of the ribbon. The interval $[0,1]\times\{0\}$, two which we refer to as the core of the ribbon, has a natural 1-orientation as a subset of the $X$-axis.
Each ribbon is labeled by an object of \C. A ribbon ending at a marked point is labeled by the object labeling that point. The label $\eps_i$ at this marked point is $+1$ if the core of this ribbon is pointing away from the surface and $-1$ otherwise. Ribbons can be joined giving rise to a \emph{ribbon graph}\index{ribbon!graph}. Ribbons are joined at a \emph{coupon}\index{coupon}, that is an embedding of an oriented rectangle with two preferred sides at which ribbons are attached. The coupon is labeled by a morphism in $\Hom(W_{\text{in}},W_{\text{out}})$, where $W_{\text{in}}\In\Obj(\C)$ is a tensor product of the objects decorating the ribbons on the bottom side of the coupon and $W_{\text{out}}\In\Obj(\C)$ is a tensor product of the objects coloring the ribbons on the top side. Here, a ribbon labeled by $U$ is interpreted as $U$ if the core is pointing towards the bottom side or away from the top side of the coupon and as $U^\vee$ otherwise,
As an illustration, the coupon in
\eqpic{coupon_ex}{50}{25}{
\put(0,0)   {\Includepic{coupon}
\put(-3.2,60) {\scriptsize$X$}
  \put(14,60)   {\scriptsize$Y$}
  \put(42,60)   {\scriptsize$Z$}
  \put(20,33.3) {\scriptsize$\varphi$}
  \put(4.3,3)   {\scriptsize$U$}
  \put(34,3)    {\scriptsize$V$}
}
}
is labeled by a morphism in $\Hom(U\oti V^\vee,X\oti Y^\vee\oti Z)$.

We display an example of an extended surface $E$ and  a cobordism \linebreak$(M,E,E')\colon E\rightarrow E'$ in the following picture:
\Eqpic{cob}{300}{70}{
  \put(0,20){ \put(0,0){\Includepic{ext_surf}}
  \put(8,18)     {\begin{turn}{-32}\pl {(U,+)}\end{turn}}
  \put(85,16)     {\begin{turn}{-17}\pl {(V,-)}\end{turn}}
  \put(-5,5)   {\pl E}
  }
  \put(170,-5){ \put(0,0){\Includepic{cob}}
  \put(8,18)     {\begin{turn}{-32}\pl {(U,-)}\end{turn}}
  \put(85,16)     {\begin{turn}{-17}\pl {(V,+)}\end{turn}}
  \put(-8,2)   {\pl {-E}}
  \put(42,35) {\pl{U}}
    \put(121,90){\pl{ V}}
    \put(54,114){\pl{ W}}
    \put(31,73) {\pl{ V}}
    \put(63,79.5) {\pl f}
    \put(58,140){\pl{(W,-)}}
    \put(98,160){\pl{E'}}
  }
  }
\begin{remark}\label{Rem_cob}
    (i) We will think of the two possible orientations of the ribbons as if the ribbon is showing one of its two sides. Thus we will say that a ribbon with its preferred 2-orientation is showing its white side, whereas a ribbon with the opposite 2-orientation is showing its black side.\nxl1
    (ii) In case an extended surface $E=\partial M$ appears as the boundary of some oriented three-manifold $M$, a natural choice of Lagrangian subspace is the kernel of the inclusion map $H_1(E,\mathbb R)\rightarrow H_1(M,\R)$, see \cite[Section IV:4.1]{TUra}. This is indeed the situation that will be most relevant for us. For instance if $E$ is a torus, $H_1(E,\R)$ is spanned by the two standard non-contractible cycles. With this convention we choose, as Lagrangian subspace, the span of the cycle that becomes contractible when we "fill out" the torus to a solid torus. \nxl1
    (iii) We will display ribbon graphs in \emph{blackboard framing}. This means that a ribbon that shows its white side is drawn as a solid line and a ribbon that shows its black side is drawn as a dashed line. A ribbon rotated by an angle $\pi$, is depicted as in the following picture
    \eqpic{BBF}{300}{30}{
  \put(120,-10){ \put(0,0){\Includepic{BBF_1}}
  }
  \put(170,30) {$=$}
  \put(220,-10){ \put(0,0){\Includepic{BBF_2}}
  }
  }
\end{remark}

Composition of two cobordisms $(\cob_1,\ES_1,\ES'_1)$ and $(\cob_2,\ES_2,\ES'_2)$ is achieved via a homeomorphism $f:\ES'_1\rightarrow\ES_2$. It is clear that this composition is associative.
The category \CobC\ has a natural structure of a monoidal category with the tensor product being the disjoint union of extended surfaces and cobordisms. The monoidal unit is given by the empty set.

\subsection{3d TFT and the \tftfun}\label{sec:tftfun}
A 3d TFT is a symmetric monoidal functor
\be
    \tft:\CobC\rightarrow\Vect\,.
\ee
Thus, to any extended surface $\ES$, the \tftfun\ assigns a vector space
\be
    \tft(\ES)=:\tfts(\ES)\in\Vect,
\ee
and to any cobordism $(\cob,\ES,\ES')$ it assigns a linear map
\be
    \tft(\cob,\ES,\ES')=:\tftm(\cob,\ES,\ES')\equiv\tftm(\cob)\in\Hom_{\Vect}(\tfts(\ES),\tfts(\ES'))\,.
\ee
To describe a \tft-functor, these mappings must satisfy a number of axioms; these are listed in e.g.\ \cite[Definition 5.1.13]{BAki}. In particular the \tft-functor assigns to any homeomorphism $f:\ES\rightarrow\ES'$ of extended surfaces an isomorphism
\be\label{homeomap}
    f_\sharp:\tfts(\ES)\rightarrow\tfts(\ES')
\ee
of vector spaces which depends only on the isotopy class of $f$. As a consequence $\tfts(\ES)$ carries a (projective) action of the mapping class group\index{mapping class group!representation of} of $\ES$, such that any mapping class $[f]$ is represented as $f_\sharp$.

By projecting a piece of ribbon graph locally to $\R^2$ in a non-singular manner\footnote{This means that we first
project ribbons that show their white sides to $\R^2$, and then shrink them to lines.
We do that in such a way that not more than two points on  the lines are mapped to the same point in the plane. Two points are allowed to be mapped to the same point only at isolated points, in which case we interpret the resulting morphism as a braiding. In addition, whenever a ribbon is twisted around its core by an angle $2\pi$ we interpret this as a twist morphism.} we can interpret the result as a morphism in \C\ and perform manipulations of this morphism following the rules of graphical calculus. Manipulations of this kind leaves $\tftm(\cob)$ invariant. In addition, $\tftm(\cob)$ is invariant under any homeomorphism that restricts to the identity on $\partial\cob$. We will refer to $\tftm(\cob)$ as the \emph{invariant} of $\cob$.

In fact, the \tft-functor is in general only a projective functor. That means that if two cobordisms $(\cob_1,\ES_1,\ES'_1)$ and $(\cob_2,\ES_2,\ES'_2)$ are composed via a homeomorphism $f:\ES'_1\rightarrow\ES_2$, the \tft-functor
    satisfies
\be
    \tftm(M,E_1,E_2')=\kappa^m\tftm(M_2,E_2,E_2')\circ f_\sharp\circ\tftm(M_1,E_1,E_1')\,,
\ee
where $f_\sharp:\tfts(E_1')\rightarrow\tfts(E_2)$ is the linear map assigned to $f$ by the \tft-functor, $m$ is an integer and $\kappa$ is the charge of the theory.
However, when using the \tft-functor to construct CFT, the extended surfaces of interest will often be doubles of world sheets. Remember that the double comes with an orientation reversing involution, which implies \cite[Lemma 2.2]{fffs3} that $m=0$. Thus whenever we glue at such a surface with an orientation reversing involution the \emph{gluing anomaly $\kappa^m$} vanishes. In addition, we will always glue by the identity. Thus in the applications we will consider we indeed have
\be
		\tftm(M,E_1,E_2')=\tftm(M_2,E_2,E_2')\circ\tftm(M_1,E_1,E_1').
\ee
As a consequence, in the construction of CFT via the \tftfun, mapping class groups act genuinely.
\section{Further categorical notions}\label{sec:add_cat}
In this section we collect additional notions from category theory that we need in the rest of the thesis. We will introduce the categories $\C\op$ and $\C\rev$, describe a finiteness condition, which we will assume later in the text, and finally introduce the notions of Deligne product and of coend.

\ssubject{The opposite category} For any category \C\ the  \emph{opposite category} $\C\op$ is the category whose objects are the objects of \C\ and for any morphism $f\in\Hom(U,V)$ in \C\ there is a morphism $f\op\In\Hom(V,U)$ in $\C\op$. Note that the duality furnishes a functor from $\C$ to $\C\op$.

\ssubject{The category $\C\rev$} Given a braided category \C\ the category $\C\rev$ is obtained from \C\ by exchanging the braiding by its inverse. As a consequence, for a ribbon category, also the twist is replaced by the inverse twist:
\be\label{braid_twist_rev}
    c^{\,\C\rev}_{U,V}= (c^\C_{V,U})^{-1}\quand\theta^{\,\C\rev}_U=\theta_U^{-1}\,,
\ee
for all $U,V\in\Obj(\C)$.
\ssubject{Subobjects and quotient objects} A morphism $e\In \Hom(U,V)$ is an \emph{epimorphism} iff for any two morphisms $f_1$ and $f_2$ in $\Hom(V,W)$, $f_1\cir e=f_2\cir e$ implies $f_1\eq f_2$. The dual notion is the one of a monomorphism: A morphism $m\In\Hom(V,W)$ is a \emph{monomorphism} iff for any two $g_1$ and $g_2$ in $\Hom(U,V)$, $m\cir g_1\linebreak\eq m\cir g_2$ implies $g_1\eq g_2$.

We will need the categorical version of a submodules and quotient modules. We say that $f\in\Hom(U,V)$ factors through $g\in\Hom(W,V)$ iff there exists a morphism $f'\In\Hom(U,W)$ such that $f\eq g\circ f'$. There is an equivalence relation among the monomorphisms with common codomain $V$: We say that $f\,{\equiv}\ g$ if $f$ factors through $g$ and $g$ factors through $f$. The equivalence classes with respect to this equivalence relation are the \emph{subobjects}\index{subobject} of $V$. We will refer to any representative $f\In\Hom(U,V)$, and also to the domain $U$, as a subobject of $V$.

There is an obvious equivalence relation among epimorphisms: Two epimorphisms $r\In\Hom(U,V)$ and $s\In\Hom(U,W)$ are equivalent if $r\eq\theta\,s$ for some invertible $\theta\In\Hom(W,V)$. The equivalence classes of such epimorphisms are the \emph{quotient objects} of $U$. See e.g.\ \cite[Chapter V. 7]{MacL} for more details on subobjects and quotient objects.

\ssubject{Projective objects}

We are now in a position to define the notion of a projective object:
\begin{definition}
(i) An object $P$ is \emph{projective} iff every morphism $f\in\Hom( P,U)$ factors through every epimorphism $g\in\Hom(V,U)$. That is, there exists a morphism $f'\In\Hom(P,V)$ such that $f\eq g\cir f'$.
\nxl1
(ii) The \emph{projective cover} of an object $U$ is a pair $(P,\pi)$ where $P$ is a projective object and $\pi$ is an epimorphism in $\Hom(P,U)$ such that for any subobject $(N,m)$ of $P$ with $N\neq P$, $\im(\pi\circ m)\neq U$.
\end{definition}
\ssubject{Finiteness} For an object $U\in\C$ the Jordan-Hölder series is a sequence
\be
    U\equiv U_n\supset U_{n-1}\supset U_{n-2}\supset\cdots\supset U_2\supset U_1\supset U_0\equiv0
\ee
such that each $U_i/U_{i-1}$ is non-zero and simple. If there is such a finite series, the integer $n$ only depends on the object $U$, and we say that the object is of finite \emph{length} $n$.
We will use the following notion of local finiteness (see e.g.\ \cite[(2.12.1)]{deli}).
\begin{definition}\label{def_loc_fin}
    A  \emph{locally finite} category is a \k-linear abelian category such that all morphism spaces are finite-dimensional and every object has finite length.
\end{definition}
We will apply the stronger condition, c.f.\ \cite{etos}:
\begin{definition}\label{def_finite}
(i) A \emph{finite}\index{category!finite} category is a locally finite category such that there is a finite number of isomorphism classes of simple objects, each of them having a projective cover.\nxl1
(ii) A finite tensor category is a rigid monoidal category that is finite in the sense (i) and has simple tensor unit.
\end{definition}

In a finite category we can define the \emph{Cartan matrix} with elements given by
\be\label{Cartan_def}
    c_{i,j}:=[P_i\,{:}\, S_j]\,,
\ee
where $[P_i\,{:}\, S_j]$ is the multiplicity of the simple objects $S_j$ in the Jordan-Hölder series of the projective cover $P_i$ of the simple object $S_i$, see e.g. \cite{loren} for more details on Cartan matrices.

\subsection{Deligne product}\label{sec:Deligne}

We define the Deligne product of two abelian \k-linear categories \C\ and \cD\ in the sense of \cite[Definition 1.46.1]{egno}.
\begin{definition}
    Let \C\ and \cD\ be two \k-linear abelian categories. The \emph{Deligne product} of \C\ and \cD\ is a \k-linear abelian category $\C\boxti\cD$ together with a bifunctor $\boxti:\C\times\cD\rightarrow\C\boxti\cD$ that is right exact and \k-linear in both variables and satisfies the following universal property: for any bifunctor $G$ from $\C\times\cD$ to a \k-linear abelian category $\cE$ that is right exact and \k-linear in both variables there exists a unique right exact \k-linear functor $\Delun:\C\boxtimes\cD\rightarrow\cE$ such that $G\eq\Delun\circ\boxti$, i.e.\ such that the following diagram commutes:
\be\label{uni_deligne}
    \xymatrixcolsep{3pc}\xymatrix@R+8pt{
    \C\times\cD \ar[r]^{\boxtimes}\ar[dr]_{G}
    &\C\boxti\cD \ar@{.>}[d]^{\exists \,!\,\Delun}\\
    &\cE}
\ee
\end{definition}
Given two abelian \k-linear locally finite braided categories \C\ and \cD, their Deligne product is again a braided category, see e.g. \cite[Section 3.3]{rgaW}.

Among the objects of $\C\boxti\cD$ there are in particular the $\boxtimes$-\emph{factorizable} objects, which are of the form $U\btimes V\In\C\boxti\cD$ with $U\In\C$ and $V\In\cD$. Not any object in $\C\boxti\cD$ is of this form. However, any object in $\C\boxti\cD$ admits a resolution by $\boxtimes$-factorizable injective objects \cite[Section 1.46]{egno}. This allows one to make statements about all objects of $\C\boxti\cD$ by considering only the $\boxtimes$-factorizable ones.

\ssubject{Monoidal structure on $\C\boxti\cD$} If \C\ and \cD\ are in addition monoidal and locally finite, then $\C\boxti\cD$ is a monoidal category \cite[Proposition 5.17]{deli}. The tensor product of two $\boxti$-factorizable objects  is
\be\label{oti_Del}
    (U\boxti V)\oti_{\!\C\boxti\cD}\,(U'\boxti V')=(U\oti_{\!\C}\, U')\boxti(V\oti_{\!\cD}\, V')\,,
\ee
and analogously for morphisms.

\ssubject{$\C\boxti\cD$ as a braided category} If \C\ and \cD\ are locally finite abelian \k-linear \linebreak braided categories, then $\C\boxti\cD$  can be equipped with a braiding, see e.g.\ \cite[Section 3.3]{rgaW}. Denote by $c^\C$, the braiding of \C, and by $c^\cD$ the braiding of \cD. The braiding of  $\C\boxti\cD$ satisfies
\be\label{c_Deli}
    c_{U\btimes V,U'\btimes V'}=c_{U,V}\boxti c_{U',V'}\,.
\ee

\ssubject{The category $\C\boxti\C\rev$} Note that the defining property of a locally finite category (see Definition \ref{def_loc_fin}) is not only satisfied for any modular category, but the condition is far more general. E.g.\ the category $H$\Mod\ described above is in this class even when $H$ is not semisimple, and accordingly $H$\Mod\ is not semisimple.

We will later consider an abstract category \C\ that is in particular an abelian \k-linear locally finite braided category. In this case the category $\C\boxti\C\rev$ exists and is an abelian \k-linear locally finite braided category as well. This follows from \cite[Proposition 5.13]{deli} and the braiding described above.
For a more detailed discussion concerning Deligne products see e.g.\ \cite[Section 3.2-3.3]{rgaW}.
\subsection{Coends}\label{coends}
Consider a functor $F:\C\op\times\C\rightarrow \cD$. A \emph{dinatural transformation} from $F$ to an object $B\In\Obj(\cD)$ is a family of morphisms $\varphi=\{\varphi_X\mapdef F(X,X)\rightarrow B\,|\,X\In\C\}$ such that the diagram
\bee4010{\label{def_dina}
   \xymatrix @R+8pt{
   F(Y,X)\ \ar^{F(\idsm_Y,f)}[rr]\ar_{F(f,\idsm_X)}[d]&&\ F(Y,Y)\ar^{\varphi_Y^{}}[d] \\
   F(X,X)\ \ar^{\varphi_X^{}}[rr] && \, B
   } }
commutes for all morphisms $f\In\Hom(X,Y)$.

A \emph{coend}\index{coend} $(A,\iota)$ of a functor $F\mapdef C\op\Times\C\rightarrow \cD$ is an object $A$ together with a dinatural transformation $\iota:F\Rightarrow A$ satisfying a universal property: For any dinatural transformation $\varphi: F\Rightarrow B$ there is a unique morphism $h\mapdef A\rightarrow B$ that makes the  diagram
\bee6232{\label{coend_uni}
   \xymatrix @R+6pt{
   F(Y,X)\ \ar^{F(\idsm_Y,f)}[rr]\ar_{F(f,\idsm_X)}[d]
   && F(Y,Y)\ar^{\iota_Y^{}}[d]  \ar @/^2pc/^{\varphi_Y^{}}[ddr]& \\
   F(X,X)\ar @/_2pc/_{\varphi_X^{}}[drrr] \ar^{\iota_X^{}}[rr] && \, A \ar^{h}@{-->}[dr] & \\
   &&& \, B
   } }
commute for all $f\In\Hom(X,Y)$. If the coend exists it is unique up to unique isomorphism. The notation for the coend is $\Coend FX$. Even though the a coend is a pair $(A,\iota)$, we will, in what follows, for brevity refer to the object $A$ as the coend.
\begin{remark}\label{rem:morph_coend}
An important property when working with coends is that a morphism $f:\Coend FX\rightarrow Y$ is equivalent to a dinatural family $\{f_X|X\In\Obj(\C)\}$ from $F(X,X)$ to $Y$. To see that this holds, assume that $f\In\Hom(\Coend FX,Y)$; then it is clear that $\{f\circ\iota_X|X\In\Obj(\C)\}$ defines a dinatural family. Conversely, assume that we are given a dinatural family $\{f_X|X\In\Obj(\C)\}$ from $F(X,X)$ to $Y$. Then, due to the universal property of the coend (c.f. \eqref{coend_uni}), there exists a unique $f\In\Hom(\Coend FX,Y)$ such that $f_X=f\circ\iota_X$. In what follows we often define morphisms $f:\Coend FX\rightarrow Y$ by giving explicitly the composition $f\circ\iota_X$.
\end{remark}
\subject{The Lyubashenko coend}
Let now $\C$ be a $\k$-linear finite braided category with a duality. It is worth pointing out that the modular tensor categories are among these categories but the class is much larger.
Consider the functor
\be\label{Lfundef}
    \Lfun:=\oti\circ(?^\vee\oti\Id):\C\op\times\C\rightarrow \C
\ee
acting as
\be\label{Lfunact}
    X\times Y\mapsto X^\vee\oti Y\quand f\times g\mapsto f^\vee\oti g\,.
\ee
As shown in \cite{lyub8,maji25} (see also \cite{vire4}) the coend
\be\label{Lyub_coend}
    \L:=\int^X X^\vee\oti X
\ee
of $\Lfun$ exists and carries the structure of a Hopf algebra in $\C$.
In this case the defining property \eqref{def_dina} of the dinatural family is written graphically as
\eqpic{dinat_trafo} {115} {29} {
  \put(0,-3)   {\Includepichtft{10a}}
  \put(-5,-8)   {\sse$ Y^{\!\vee} $}
  \put(6.5,74)  {\sse$ B $}
  \put(15,-8)   {\sse$ X $}
  \put(16,38)   {\sse$ Y $}
  \put(23,16.8) {\sse$ f $}
  \put(52,31)   {$ = $}
  \put(84,0) {
  \put(0,-2){\Includepichtft{10b}}
  \put(-2,-8)   {\sse$ Y^{\!\vee} $}
  \put(9.5,74)  {\sse$ B $}
  \put(18,-8)   {\sse$ X $}
  \put(-6,35)   {\sse$ X^{\!\vee} $}
  \put(-9,16.5){\sse$ f^{\!\vee} $}
  } }
The Hopf algebra structure is the following, c.f.\ Remark \ref{rem:morph_coend}:
  \Eqpic{Lyub_Hopf} {320} {90} {
  \put(10,10){
  \setulen80
   \put(0,113) {
  \put(0,0)   {\includepichtft{01a}}
  \put(-4,-8)   {\scriptsize$ U^{\!\vee} $}
  \put(7,-8)    {\scriptsize$ U $}
  \put(-6.8,25) {\scriptsize$ \iota_{\!U}^{} $}
  \put(40.8,25) {\scriptsize$ \iota_{\!V}^{} $}
  \put(17.5,115){\scriptsize$ \L $}
  \put(22.6,68.3) {\scriptsize$ m_\L^{} $}
  \put(25,-8)   {\scriptsize$ V^{\!\vee} $}
  \put(36,-8)   {\scriptsize$ V $}
  \put(60,50)   {$ := $}
  \put(95,0) { {\includepichtft{01b}}
  \put(-6,-8)   {\scriptsize$ U^{\!\vee} $}
  \put(6,-8)    {\scriptsize$ U $}
  \put(-11,68)  {\scriptsize$ \gamma_{U,V}^{} $}
  \put(27.8,89) {\scriptsize$ \iota_{\!V\otimes U}^{} $}
  \put(33.8,66) {\scriptsize$ \id_{\!V\otimes U} $}
  \put(17.5,115){\scriptsize$ \L $}
  \put(27,-8)   {\scriptsize$ V^{\!\vee} $}
  \put(39,-8)   {\scriptsize$ V $}
  %\put(19.5,24.3) {\scriptsize$ c $}
  %\put(30,31) {\scriptsize$ c $}
   }
    \put(235,0) {
  \put(0,4) { {\includepichtft{02a}}
  \put(-2,106)  {\scriptsize$ \L $}
  \put(25.3,106){\scriptsize$ \L $}
  \put(6,-8)    {\scriptsize$ U^{\!\vee} $}
  \put(18,-8)   {\scriptsize$ U $}
  \put(15.4,54.2) {\scriptsize$ \Delta_\L $}
  \put(21,32) {\scriptsize$ \iota_{\!U}^{} $}
   }
  \put(55,50)   {$ := $}
  \put(85,0) { {\includepichtft{02b}
  \put(3,106)  {\scriptsize$ \L $}
  \put(30,106)  {\scriptsize$ \L $}
  \put(0,-8)    {\scriptsize$ U^{\!\vee} $}
  \put(32,-8)   {\scriptsize$ U $}
  \put(-7.8,67) {\scriptsize$ \iota_{\!U}^{} $}
  \put(40.3,67) {\scriptsize$ \iota_{\!U}^{} $}}
   } }
   }
    \put(0,0) {
  \put(0,17) { {\includepichtft{05a}}
  \put(-.7,42) {\scriptsize$ \L $}
  \put(6,1)   {\scriptsize$ \eta_\L^{} $}
   }
  \put(25,30) {$ := $}
  \put(55,0) {\includepichtft{05b}
  \put(3.3,59){\scriptsize$ \L $}
  \put(13.3,27) {\scriptsize$ \iota_{\!\one}^{} $}}
  \put(135,10) { {\includepichtft{06a}}
  \put(-3,-8) {\scriptsize$ U^{\!\vee} $}
  \put(16,-8) {\scriptsize$ U $}
  \put(13.4,54.5) {\scriptsize$ \eps_\L^{} $}
  \put(16.9,27.5) {\scriptsize$ \iota_{\!U}^{} $}
   }
  \put(174,30){$ := $}
  \put(203,0) {\includepichtft{06b}
  \put(-3,-8) {\scriptsize$ U^{\!\vee} $}
  \put(16,-8) {\scriptsize$ U $}}
   }
  \put(290,0) { {\includepichtft{07a}}
  \put(-3,-8) {\scriptsize$ U^{\!\vee} $}
  \put(11,-8) {\scriptsize$ U $}
  \put(11.4,65) {\scriptsize$ \apo_\L $}
  \put(4.6,99)  {\scriptsize$ \L $}
  \put(14.4,37) {\scriptsize$ \iota_{\!U}^{} $}
   }
  \put(327,44){$ := $}
  \put(358,0) {\includepichtft{07b}
  \put(21,-8) {\scriptsize$ U^{\!\vee} $}
  \put(35,-8) {\scriptsize$ U $}
  \put(28.4,99) {\scriptsize$ \L $}
  \put(38.4,73) {\scriptsize$ \iota_{\!U^{\!\vee}}^{} $}
  %\put(22,38.1) {\scriptsize$ c $}
  %\put(28.5,17) {\scriptsize$ c $}
  }
  } }
Here, $\gamma_{U,V}$ is the identification of $U^\vee\oti V^\vee$ with $(V\oti U)^\vee$.
The proof that \eqref{Lyub_Hopf} equips $\L$ with the structure of a bialgebra essentially amounts to some simple braid moves. The antipode property requires a little bit more work and involves in particular the dinaturality property of $\iota$. The calculation is given in appendix \ref{app:ribbon}, see eq. \eqref{coend_apopr}.

The Hopf algebra $\L$ can be equipped with a \emph{Hopf pairing} $\omega_\L$, defined via the dinatural family by \cite{lyub6}
\eqpic{coend_pair} {120} {31} {
  \put(0,0)   {\Includepichtft{03a}}
  \put(-4,-8)   {\sse$ U^{\!\vee} $}
  \put(7,-8)    {\sse$ U $}
  \put(26,-8)   {\sse$ V^{\!\vee} $}
  \put(37,-8)   {\sse$ V $}
  \put(16,74)   {\sse{\boldmath$ \omega_\L $}}
  \put(61,38)   {$ := $}
  \put(95,0) { {\Includepichtft{03b}}
   } }

Below we will be interested in categories satisfying the following conditions:
\begin{enumerate}
\item[]  \label{cond:CF} \textbf{\CF:}\ \C\ is a an abelian $\k$-linear finite ribbon category, with $\k$ an algebraically closed field of characteristic zero.
\item[] \label{cond:CFN} \textbf{\CFN:}\ \C\ satisfies \CF\ and in addition the Hopf pairing \eqref{coend_pair} is non-degenerate. We will refer to a category satisfying this condition as a \emph{factorizable finite ribbon category\index{category!ribbon!factorizable finite}}.
\end{enumerate}
Note that \cite[Corollary 5.1.8]{KEly} if \C\ satisfies \CF, then the Hopf algebra $\L$ indeed exists as an object of \C\ and consequently condition \CFN\ makes sense.

For a factorizable finite ribbon category \C, the coend \L\ provides actions of mapping class groups. Denote by $\Surf gn$ a closed oriented surface of genus $g$ with $n$ marked points marked by objects $U_1,...U_n$ in \C\ and denote by \Mapgn\ the mapping class group of $\Surf gn$.
If \C\ satisfies condition \CFN\ there is \cite{lyub6,lyub11} a projective action $\piL$ of \Mapgn\ on the space $\Hom(\L^{\oti g},\Lyubspace)$, where the object $\Lyubspace$ is defined below in \eqref{Lyubspacedef}.
This action is described in detail in section \ref{sec:map_act}, see Proposition \ref{Lyubact_prop}.
\ssubject{Modules over $\L$}
Any object in \C\ can be endowed with the structure of an \L-module. The action of $\L$ on any object $V$ in \C\ is given by
\be\label{Lact}
    \Lact V\circ\iota_U:=(d_U\oti\id_V)\circ[\id_{U^\vee}\oti(c_{V,U}\circ c_{U,V})]\,.
\ee
That this defines an action follows directly from \eqref{Lyub_Hopf} and a sequence of braid moves:
\Eqpic{Qrep} {320} {50} {
  \put(10,10){\setulen80
  \put(0,0)  {\includepichtft{129a}
  \put(-5,-8.5) {\scriptsize$X^{\!\vee}$}
  \put(7,-8.5)  {\scriptsize$X$}
  \put(24,-8.5) {\scriptsize$Y^{\!\vee}$}
  \put(35,-8.5) {\scriptsize$Y$}
  \put(46,-8.5) {\scriptsize$V$}
  \put(46.4,123)  {\scriptsize$V$}
  \put(22,64.5) {\scriptsize$m_\L$}
  \put(34,97)   {\scriptsize$\Lact V$}
  \put(-7.5,28.6) {\scriptsize$\iota_{\!X}^{}$}
  \put(21.5,28.6) {\scriptsize$\iota_{\!Y}^{}$}
  }
  \put(80,55)   {$ = $}
  \put(110,0)  {\includepichtft{129b}
  \put(-6,-8.5) {\scriptsize$X^{\!\vee}$}
  \put(6,-8.5)  {\scriptsize$X$}
  \put(33,-8.5) {\scriptsize$Y^{\!\vee}$}
  \put(44,-8.5) {\scriptsize$Y$}
  \put(53,-8.5) {\scriptsize$V$}
  \put(53,128)  {\scriptsize$V$}
  %\put(21.5,13.5) {\scriptsize$c$}
  %\put(37.5,33.5) {\scriptsize$c$}
  %\put(57.5,55.5) {\scriptsize$c$}
  %\put(49.5,71.5) {\scriptsize$c$}
  %\put(49.5,79.5) {\scriptsize$c$}
  %\put(58.5,91.5) {\scriptsize$c$}
  }
  \put(195,55)   {$ = $}
  \put(230,0)  {\includepichtft{129c}
  \put(-6,-8.5) {\scriptsize$X^{\!\vee}$}
  \put(10,-8.5) {\scriptsize$X$}
  \put(26,-8.5) {\scriptsize$Y^{\!\vee}$}
  \put(39,-8.5) {\scriptsize$Y$}
  \put(54.9,-8.5) {\scriptsize$V$}
  \put(55.5,128){\scriptsize$V$}
  %\put(60,17.5) {\scriptsize$c$}
  %\put(60,43.5) {\scriptsize$c$}
  %\put(60,78.5) {\scriptsize$c$}
  %\put(60,104.5){\scriptsize$c$}
  }
  \put(315,55)   {$ = $}
  \put(350,0)  {\includepichtft{129d}
  \put(-5,-8.5) {\scriptsize$X^{\!\vee}$}
  \put(7,-8.5)  {\scriptsize$X$}
  \put(24,-8.5) {\scriptsize$Y^{\!\vee}$}
  \put(35,-8.5) {\scriptsize$Y$}
  \put(46,-8.5) {\scriptsize$V$}
  \put(46.3,123){\scriptsize$V$}
  \put(34,68)   {\scriptsize$\Lact V$}
  \put(34,97)   {\scriptsize$\Lact V$}
  \put(12,30)   {\scriptsize$\iota_{\!X}^{}$}
  \put(40,30)   {\scriptsize$\iota_{\!Y}^{}$}
  } } }
\begin{remark}
    Any object $V$ in \C\ is also equipped with a right coaction $\roh^\L_V$ of \L\ defined by
    \be
        \roh=(\id_{V}\oti\iota_V)\circ (b_V\oti\id_V)\in\Hom(V,V\oti\L)\,.
    \ee
That this defines a coaction follows directly from the definition \eqref{Lyub_Hopf} of the coproduct of \L. Furthermore, it follows from the dinatural property of $\iota_V$ that the left module structure and right comodule structure on $V$ fit together to a Yetter-Drinfeld module over \L.
\end{remark}
\ssubject{$\L$ for modular \C} When \C\ is modular, then $\L$ is as an object
\be\label{L_ssi}
    \Lss=\bigoplus_{i\In\I}U_i^\vee\oti U_i\,,
\ee
see \cite[Section 3.2]{vire4} or \cite[Lemma 2]{kerl5}.
Denote by $\Emb \Lss i$ and $\Res\Lss i$ the embedding and restriction morphism of $U_i^\vee\oti U_i$ into \Lss.
The dinatural family, corresponding to \eqref{L_ssi}, is given by
\be
    \iota_U=\sum_{i,\alpha}\Emb\Lss i\circ\big[(\emb Ui\alpha)^\vee\oti\res Ui\alpha\big]\,,
\ee
where $\emb Ui\alpha$ and $\res Ui\alpha$ are the embedding and restriction morphisms in \eqref{emb_res_prop}.
That this defines  a dinatural family can be seen by decomposing any morphism into simple summands on both sides of \eqref{dinat_trafo}. In addition, given any dinatural transformation $(\psi_U,B)$ the unique morphism in $\Hom(\Lss,B)$ in \eqref{coend_uni}, required to turn \Lss\ into a coend is given by $\sum_{i\In\I}\psi_{U_i}\circ\res\L i\nl$.
\begin{remark}
As any algebra in a modular tensor category, the algebra $\L$ turns the space $\Hom(\one,\Lss)$ into an associative algebra over $\k$, see appendix \ref{app:alg_Hom}. Taking a basis $\{\tilde b_{U_i}|i\in\I\}$ for $\Hom(\one,\Lss)$, consisting of the left coevaluation morphisms of the simple objects, the structure constants of the algebra $\Hom(\one,\Lss)$ are nothing but the fusion rules, i.e.\ $\Hom(\one,\Lss)$ is the Verlinde algebra.
\end{remark}

%%%%%%%%%%%%%%%%%%%%%%%%%%%%%%%%%%%%%%%%%%%%%%%%%%%%%%%%%%%%%%%%%%%%%%%%%
\chapter{Algebras in conformal field theory}\label{chap:alg}
The purpose of this chapter is to give an overview over several classes of algebras that arise naturally in the study of CFT. For details we will refer to the literature or to later chapters in this text. Recall Definition \ref{def_algebra} of an algebra in a monoidal category. Considering e.g.\ an algebra in \Vect, we obtain an ordinary algebra over $\CN$, i.e.\ a complex vector space equipped with an associative product. The first "algebra" we consider is not an algebra in this ordinary sense.
\section{Vertex operator algebras}\label{subsec_VOA}
In this section we give a short description of vertex operators algebras, often referred to simply as vertex algebras. Since the rest of this thesis will not involve vertex algebras explicitly the discussion in this section will be very brief, for more details see e.g. \cite{BF,KAc4,LEli}.
Vertex algebras are formulated in the language of \emph{formal distributions}. A formal distribution is an expression of the form $\sum_{n\In\Z} a_n\,z^n$, where the sum over $n$ extends to infinitely large positive and negative powers and $z$ is a formal variable, not to be confused with a complex coordinate. We will be interested in the case that the coefficients $a_n\In U$ for some algebra $U$ over $\CN$.
The space of such formal distributions is denoted by $U\llbracket z^\pm\rrbracket$.
Calculus of formal distributions requires some caution, e.g.\ when multiplying formal distributions.
Two formal distributions in some formal variable cannot in general be multiplied to form a single formal distribution in the same formal variable.
However, two formal distributions in different formal variables, say $z$ and $w$ can be multiplied to form a formal distribution in $U\llbracket z^\pm,w^\pm\rrbracket$, see e.g.\ \cite[Chapter 1.1]{BF} for details.

\begin{definition}\label{def_VA}
A vertex algebra\index{vertex algebra|textbf} consists of the following data:
\begin{itemize}\addtolength{\itemsep}{-6pt}
    \item The \emph{space of states}: a $\Z_{\geq0}$-graded vector space $\VA=\bigoplus_{n\in\Z_{\geq0}}\VA_n$ with \\$\dim(\VA_n<\infty)$
    \item A \emph{vacuum vector} $|0\rangle\In\VA_0$
    \item A \emph{translation operator}: $T:\VA\rightarrow\VA$
    \item A \emph{vertex operation}: a linear map
    \be
        Y(\cdot,z):\VA\rightarrow \End(\VA)\llbracket z^\pm\rrbracket\,,
    \ee
    taking each $a\In \VA$ to a field $Y(a,z)=\sum_{n\In\Z}a_{(n)}z^{-n-1}$. Here $Y(a,z)$ is a formal distribution with coefficients $a_{(n)}\In\End(\VA)$.
\end{itemize}
The data satisfy the following axioms:
\begin{itemize}\addtolength{\itemsep}{-6pt}
    \item \emph{Translation axiom:} $[T,Y(a,z)]=\partial Y(a,z)$ for any $a\In\V$ and $T|0\rangle=|0\rangle$.
    \item \emph{Vacuum axiom:} $Y(|0\rangle,z)=\id_{\VA}$ and, for any $a\In\VA$, $Y(a,z)|0\rangle|_{z=0}=a$.
    \item \emph{Locality:} $(z-w)^N\,Y(a,z)\,Y(b,w)\eq (z-w)^N\,Y(b,w)\,Y(a,z)$ for some \linebreak$N\gg0$.
\end{itemize}
\end{definition}
A field $Y(\nu,z)$ is called a \emph{Virasoro field} if the modes of $Y(\nu,z)$ furnish a representation of the Virasoro algebra. In CFT we are typically interested in conformal vertex algebras defined as follows:
\begin{definition}
A \emph{conformal} vertex algebra is a vertex algebra with a distinguished vector $\nu\In\VA_2$ such that $Y(\nu,z)=\sum_{n\In\Z}L_n\,z^{-n-2}$ is a Virasoro field, $L_{-1}=T$ equals the infinitesimal translation operator and $L_0$ is diagonalizable on $\VA$.
\end{definition}
The notion of a vertex algebra can be regarded as a formalization of the Wightman axioms and the chiral operator product expansion. There is e.g.\ a unique vacuum, invariant under an action of the symmetry group of space-time (in the CFT-case this involves the conformal transformations)
%The locality condition implies that fields with causally disconnected support commute. There
and
a notion of completeness in the sense that acting with all fields on the vacuum generates the entire state space. In addition, using multiplication of formal distributions we can write down the OPE of two chiral fields $Y(a,z)$ and $Y(b,w)$.
Note that the variable $z$ above is a formal variable and should not be thought of as a complex number. However, in CFT, the formal variable $z$ is identified with the complex local coordinate of the world sheet in the end of the analysis.

We will not at all go into representation theory of vertex algebras, see e.g.\ \cite[Definition 5.1.1.]{BF} for the definition of a module over a vertex algebra. The reason is that we will not need any detailed information about the representation theory, only the properties of the representation category, for the considerations in this thesis. Let us just mention that, as many algebraic structures, $\VA$ is a module over itself. In the representation category, $\Rep(\VA)$, \VA\ is in fact the tensor unit. This nicely illustrates the power of working in categories. The involved structure of \VA\ is something very simple in the category.

In addition we mention that there is a notion of a character $\chii_M$ of a module $M$.
%In addition we mention that there is a notion of a character $\chii_M$ of a module $M$, defined as $\Tr_M q^{L_0^M-c/24}$ where $q$ is a formal parameter.
What is relevant for us is that for rational CFT the characters of the simple modules provide a basis of the space of conformal blocks on a torus with modular parameter $\tau$ and without holes. In simple situations, the character of a simple module $M$ can be written as $\Tr_M q^{L_0^M-c/24}$ where $q$ is a formal parameter.
A basis for the space of conformal blocks on a torus is in such situations obtained by taking the characters of the simple modules and identifying $q\,{\equiv}\,\e^{2\pi\i\tau}$.

Remember that when studying full CFT, our starting point will be an abstract category with the properties of the representation category of the vertex algebra of the corresponding class of CFTs. In particular, working in an abstract category allow us to treat an entire class of categories simultaneously.

Sufficient conditions for \C\ to be modular have been given in \cite[Theorem 3.1]{huan21}. Since $\Rep(\VA)$ is in general not strict, our starting point will be the equivalent strict category $\C$ -- the strictification of $\Rep(\VA)$.
In fact we define the notion of rational CFT by saying that a rational CFT is a CFT such that the strictification of $\Rep(\V)$ is modular together with an additional requirement concerning the relation to 3-d TFT, see section \ref{Cbl_TFTspace}.

\section{Frobenius algebras}
Recall from Remark \ref{rem_symm} (iii) that an algebra $A$ is a Frobenius algebra iff it can be equipped with a non-degenerate invariant pairing, or equivalently, iff $A$ is also a coalgebra that satisfies the compatibility condition \eqref{frob}.
There are several types of Frobenius algebras which are relevant when studying full CFT.
We will discuss Frobenius algebras in representation categories of vertex algebras. These algebras arise as state spaces of full CFT. We will also discuss interesting examples of Frobenius algebras in \Vect.
\subsection{The \boundA}\label{boundFA}
Starting from the pair $(\C,\tftA)$, where $\C$ is a modular tensor category and $\tftA$ is a symmetric special Frobenius algebra in \C, one can construct a full (rational) CFT with the help of the TFT-construction, which is described in section \ref{sec:TFT}.
In fact, a full rational CFT is uniquely determined by a Morita class of Frobenius algebras in a modular tensor category $\C$ \cite{fjfrs2}. As a consequence, the label of a phase $A$ of a rational CFT can be taken to be a special symmetric Frobenius algebra in \C. With the pair $(\C,\tftA)$ as input, the TFT-construction gives a prescription for how to calculate any correlator of a rational CFT. More generally, if there are several different phases on a world sheet we need a tuple $(\C,A_1,...,A_n)$ where $A_i$, $i=1,...,n$, label the phases in various regions of the world sheet. Here we will consider a single phase and describe how the algebra $\tftA$, which we call the \emph{\boundA\index{\boundA}}, can be constructed from one of the boundary conditions. The description summarizes the one given in \cite[Section 3.2]{fuSs4}.

Consider a rational CFT with some maximally symmetric boundary condition\index{boundary condition} $M_\circ$. As an object in \C, the \boundA\ $\tftA$ can be decomposed as
\be\label{boundA_obj}
    \tftA\cong\bigoplus_{a\In\I}U_a^{\oplus{n_a}}\,,
\ee
with $n_a\In\Z_{\geq0}$ and $n_0\eq 1$.
In field theoretic terms, i.e.\ when considering $\C$ concretely as (the strictification of) a representation category of a vertex algebra, the \V-module $\tftA$ is the space of states living on the boundary segment $M_\circ$, see \cite[eqs.\ (3.9) \& (3.13)]{fuRs4}. In fact, $M_\circ$ is isomorphic to $A$ as an $A$-module. Let us also mention the  boundary partition function, i.e.\ the correlator $ A_M^{\;\;N}$ of the annulus with no field insertions.
The double of the annulus is a torus and accordingly the boundary partition function, $\mathrm A_M^{\;\;N}$, is an element in the space $\cbl(T)$ of conformal blocks on the torus. As mentioned in the previous section a basis for the space $\cbl(T)$ is given by the characters of the irreducible \V-modules.
Thus, the boundary partition function $ \mathrm A_M^{\;\;N}$ can be expanded in terms of the characters:
\be\label{Ann_exp}
    \mathrm A_M^{\;\;N}=\sum_{a\in\I}\, \mathrm A_{kM}^{\;\;\;\;N}\,\chii_k\,.
\ee
In particular, the multiplicities in \eqref{boundA_obj} are given by $n^a\eq \mathrm A_{aA}^{\;\;\;\;A}$. A basis of the space of conformal blocks in the TFT-framework is given in \eqref{char_inv}.

As mentioned in the introduction, there is a bijection between the simple summands in \eqref{boundA_obj} and the primary boundary fields. Now take, for each space $\Hom(U_a,\tftA)$, a basis of the type in \eqref{emb_res_prop}.
In terms of this basis, the OPE coefficients involving only primary boundary fields can be used to equip the object $\tftA$ in \eqref{boundA_obj} with the structure of an algebra in $\C$, see \cite[eq. 3.14]{fuRs4}. It is a direct consequence of the sewing constraint \cite{lewe3} for four boundary fields on a disc that the product is associative, see \cite[Section 3.2]{fuRs4}. The unit of $A$ is the embedding of the tensor unit $U_0$ into $A$.

The correlator of two boundary fields on a disc is required to be non-\linebreak degenerate in the sense that for every boundary field $\Psi$ there is at least one field $\Psi'$ such that the correlator of these two fields is non-zero. This is so because if a field would have zero correlator with any other field it would decouple completely from the theory: \emph{every} correlator involving that field would vanish. Non-degeneracy of the boundary two-point functions gives rise to a non-degenerate invariant bilinear form on $A$ \cite[eq. 3.16]{fuRs4}. This is equivalent to $A$ being Frobenius in the sense of Definition \ref{def_Frobalgebra} \cite[Proposition 3.1]{fuSt}, see also Remark \ref{rem_symm} (iii).
Thus, the Frobenius property is a consequence of the non-degeneracy of two-point boundary correlators.

That $\tftA$ is also special and symmetric can be derived from the sewing constraint for one boundary field on an annulus. This is described in section 3.2 and Theorem 3.6 of \cite{fuRs4}.
In addition, it is shown in \cite[Section 4.4]{fuRs4} that the maximally symmetric boundary conditions can be labeled by isomorphism classes of left-modules over $A$. In particular, the elementary boundary conditions\index{boundary condition} are in bijection with simple $A$-modules. In addition elementary topological defect lines, separating two regions labeled by $A$ and $B$, are labeled by isomorphism classes of simple $A$-$B$-bimodules

The derivation of \tftA\ took as an input just one of the maximally symmetric boundary conditions. It is natural to ask what would have happened if we would have started from another boundary condition. Doing so gives rise to a symmetric special Frobenius algebra $A'$ that is \emph{Morita equivalent} to $A$, see \cite[Section 4.1]{fuRs4}.
$A$ and $A'$ are Morita equivalent\index{Morita equivalence} iff the categories $\C_A$ and $\C_{A'}$ are equivalent as abelian categories. For a precise description of the equivalence one uses a pair of $A$-$A'$- and $A'$-$A$-bimodules, see e.g.\ \cite[Theorem 5.1]{pare13}.

It follows that if $A$ and $A'$ are Morita equivalent, then there is a bijection between isomorphism classes of left-modules over $A$ and $A'$.
As a consequence, Morita equivalent \boundA s give rise to physically equivalent rational CFTs. Conversely, it has been shown \cite{fjfrs2} that any solution to the sewing constraints arises as a Morita class of symmetric special Frobenius algebras.
We will also explain below that the set of maximally symmetric boundary conditions for a given bulk theory is exhausted by those associated with a single Morita class of special symmetric Frobenius algebras in \C.
Thus, indeed equivalence classes of full rational CFTs with given chiral algebra \VA\ are classified by Morita classes of symmetric special Frobenius algebras in the strictification \C\ of $Rep(\VA)$.

\subject{Some classes of \boundA s}
Finally, let us list a couple of examples of \boundA s:
\begin{itemize}\addtolength{\itemsep}{-6pt}
\item The \emph{tensor unit} in any rigid monoidal category. This is clearly a symmetric special Frobenius algebra. The full CFT built from this \boundA\ is known as the "Cardy case".
\item A \emph{Schellekens algebra} is an algebra that is a direct sum of invertible objects. An invertible object $J$ has an inverse $J'$ such that $J\oti J'\cong\one$. The isomorphism classes of invertible objects form a group, known as the Picard group of \C. There is a Schellekens algebra, with the properties of a \boundA, for every subgroup of the Picard group satisfying a certain simple cohomological property, see e.g.\  \cite[Section 5]{ffrs4}.
\item Considering modular tensor categories based on $\mathfrak{sl}(2)$ at integral level, the
\emph{ADE-classification} of modular invariant partition functions can be formulated in terms of \boundA s in \C, see \cite[table 1]{kios}.
\end{itemize}
\subsection{The \bulkA}\label{sec_bulkA}
\index{\bulkA|textbf}
Just like the space of states on the boundary in a rational CFT carries the structure of an algebra, the space of bulk states in a rational CFT can be equipped with the structure of a symmetric special commutative Frobenius algebra. Bulk states are elements in some $\VA\otiC\VA$-module. Thus, in categorical terms, the bulk Frobenius algebra \BA\ is an element in $\Ce$:
\be\label{BA_ssi}
    \BA=\bigoplus_{p,q\In\I}(U_p\btimes U_q)^{\oplus Z_{p,q}}\,.
\ee
The bulk partition function is expanded in terms of the $\V$-characters according to
\be\label{PF_mat}
    Z=\sum_{p,q\In\I}\,Z_{p,q}\,\chii_p\oti\chii_q^*\,.
\ee

We will first consider the case $A\eq\one$, i.e.\ the Cardy case. This result may serve as a building block for the \bulkA s of a general rational CFT. In the Cardy case the coefficients of the partition function are
\be\label{PF_Cardy}
    Z_{p,q}=\delta_{p,\qb}\,,
\ee
i.e.\ the two representations in each summand of \eqref{BA_ssi} are related via charge conjugation and there is exactly one term for each simple object of \C. Accordingly, the partition function \eqref{PF_mat}, with coefficients given by \eqref{PF_Cardy}, is known as the \emph{charge conjugation} partition function. Thus, the \bulkA, which we in the Cardy case denote by $\BFss$, is
\be\label{CC_BA}
    \BFss~=~\bigoplus_{p\In\I} U_p\btimes U_\pb\quad\In\Ce\,.
\ee
It can be shown \cite[Section 6.3]{ffrs} that the object $\BFss$ can be endowed with a natural structure of a special symmetric commutative Frobenius algebra. In \cite{ffrs}, \BFss\ is called the trivializing algebra. Here we will refer to \BFss\ as the \emph{Cardy case bulk algebra}.

\subject{The algebra structure on $\BA$}
Just like in the case of the \boundA, the fact that the bulk state space can be equipped with the structure of a Frobenius algebra can be deduced from physical arguments:
Using the OPE-coefficients one defines an algebra structure on the space of bulk states, with a unit obtained from the identity field. Requiring that the two-point function on the sphere is non-degenerate then implies that this algebra is Frobenius.

The statement that the \bulkA\ $\BA$, defined in \eqref{BA_ssi}, is a symmetric commutative Frobenius algebra has also been proved rigorously in the TFT-framework:
The \bulkA\ is isomorphic to the \emph{full center} $Z(A)$ of the \boundA\ $A$. We refer the reader to e.g. \cite[Definition 3.14]{rgaW} for a definition of the full center. For a modular category, the full center $Z(A)$ can be written in terms of the \boundA\ and \BFss\ (see e.g. \cite[Remark 8.2]{davy20}):
\be
     Z(A)~=~C_l((A\btimes\one)\Otip\BFss)\quad\In\,\Ce\,.
\ee
Let us explain this equation:
First, $A$ is the \boundA\ and
$(A\btimes\one)\Otip\BFss$ is the tensor product of Frobenius algebras as defined in \eqref{frob_tens}. For any algebra $B$, $C_l(B)$ denotes its \emph{left center}. The left center  $C_l(B)$ of an algebra $B$ in a sovereign category can be defined as the kernel
\be
    C_l(B):=\ker\big[(((m\circ{c_{B,B}})\oti \id_{B^\vee})\circ(\id_B\oti b_B))-(m\oti \id_{B^\vee})\circ(\id_B\oti b_B)\big]\,.
\ee
Thus, in a  modular category, the left center is the maximal subobject of $B$ such that
\eqpic{Cl_ssi}{105}{34}{
\put(0,0)   {\setulen80
\put(-8,0) {\includepic{302}{Cl-def}}
\put(-9.5,-8.7)  {\sse$C_l(B)$}
  \put(15.5,105.5) {\sse$B$}
  \put(33.2,-8.7)  {\sse$B$}
  \put(53,37.7)    {$=$}
  \put(69.5,-8.7)  {\sse$C_l(B)$}
  \put(95.5,105.5) {\sse$B$}
  \put(114.8,-8.7) {\sse$B$}
}
}
Here, the half-discs represent the embedding morphism of $C_l(B)$ into $B$. Note that indeed $Z(\one)=\BFss$ since $\BFss$ is commutative.
The left center of an algebra $B$ can be described as the image
\be
C^l(B)=\im(P^l_{B})
\ee
of the  projector
\eqpic{Cl_idem}{106}{42}{
\put(30,2)   {\setulen80
\includepic{302}{PlE-def}
\put(-48.6,51) {$P^l_{B}\;:=$}
  \put(8.2,30)     {\sse$B$}
  \put(25.5,-8.8)  {\sse$B$}
  \put(26.1,125)   {\sse$B$}
  \put(73.2,67)    {\sse$B$}
}
}

It can be shown \cite[Appendix A]{rffs} that the \emph{\bulkA} \eqref{BA_ssi} is isomorphic as an object to the \emph{full center}:
\be\label{BS_FC}
    \BA\cong Z(A)\,.
\ee

That the \boundA\ $A$ is a symmetric special Frobenius algebra in \C\ implies that $A\btimes\one$ is a symmetric special Frobenius algebra in \Ce, just like the Cardy case bulk algebra $\BFss$. Thus, the object $(A\btimes\one)\oti\BFss\linebreak \In\Ce$ can be equipped with the structure of a symmetric special Frobenius algebra in \Ce, via the structure morphisms $m^+,\eta^+,\delta^+$ and $\eps^+$ in \eqref{frob_tens}.
By \cite[Proposition 2.37]{ffrs}, taking the left center of such an algebra we obtain a commutative symmetric Frobenius algebra. Thus, $Z(A)$, and accordingly also \BA, is indeed a commutative symmetric Frobenius algebra.

Given a modular tensor category \C, a compatible set of boundary conditions, which we describe as a Morita class of \boundA s, determines a consistent bulk state space by the TFT-construction.
A natural question to ask is whether there could be several different sets of boundary conditions consistent with one and the same bulk state space. The answer is no. For two \boundA s $A$ and $B$, the \bulkA s $Z(A)$ and $Z(B)$ are isomorphic iff $A$ and $B$ are Morita equivalent \cite{koRu}. Thus, in particular a single Morita class of symmetric special Frobenius algebras exhausts all possible boundary conditions for a given bulk theory.
\subject{The Cardy case bulk algebra as a coend}
The description above of the Cardy case bulk algebra \BFss\ is tailored for a semisimple category. In order to be able to describe more general situations we describe \BFss\ in terms of a coend. To this end, consider the functor\footnote{Note that here we have switched from $\Ce$ to $\eC$. This is in order to follow some standard conventions for coends, c.f.\ section \ref{coends}. In order to use \Ce\ we would have to replace $\C\op\times\C$ by $\C\times\C\op$ when discussing coends.}
\be\label{BA_fun}
    ?^\vee\boxtimes \Id:\,\C\op\times\C\rightarrow\eC\,,
\ee
acting as $X\times Y\mapsto X^\vee\boxti Y$ on objects and as $f\times g\mapsto f^\vee\boxti g$ on morphisms. The coend
\be\label{BulkA_coend}
    \BF:=\int^X X^\vee\boxtimes X\quad\In\,\eC
\ee
of the functor \eqref{BA_fun} has been considered in \cite{KEly}. It follows from \cite[Corollary 5.1.8]{KEly} that if \C\ is finite $\k$-linear, the coend \BF\ exists as an object in \eC.

If $\C$ is in addition braided, then $\BF$ can be equipped with the structure of a unital algebra, in a way similar to the way \L\ is equipped with an algebra structure in \eqref{Lyub_Hopf}.  Concretely, the product of \BF\ is defined by
\be\label{prod_BF}
    m_\BF\circ(\iota_U\oti\iota_V) ~:=~ \iota_{V\!\oti\! U}\circ(\id_{U^\vee\!\oti\! V^\vee}\boxti c_{U,V})\,,
\ee
and the unit is given by $\eta_\BF=\iota_\one$. Thus, in particular for a factorizable finite ribbon category $\C$, i.e.\ a category that satisfies condition \CFN\ described on page \pageref{cond:CFN}, the coend \BF\ can be equipped with the structure of an algebra in \eC.
A pictorial description for the product of \BF\ looks the same as the picture for the product of \L\ in \eqref{Lyub_Hopf}. However one has to be careful to take into account that in such a picture there is now a "two-layered" structure corresponding to the two factors in $\eC$. This means that even though the pictorial description of the product looks exactly like the picture for the product in \eqref{Lyub_Hopf}, the crossing between the $U$- and $V^\vee$-lines in the following picture is not a braiding:
 \eqpic{m_F_pic} {130} {40} {
  \setulen80
   \put(0,0) {
  \put(0,0)   {\includepichtft{01a}}
  \put(-4,-8)   {\scriptsize$ U^{\!\vee} $}
  \put(7,-8)    {\scriptsize$ U $}
  \put(-6.8,25) {\scriptsize$ \iota_{\!U}^{} $}
  \put(40.8,25) {\scriptsize$ \iota_{\!V}^{} $}
  \put(17.5,115){\scriptsize$ \BF $}
  \put(22.6,68.3) {\scriptsize$ m_\BF^{} $}
  \put(25,-8)   {\scriptsize$ V^{\!\vee} $}
  \put(36,-8)   {\scriptsize$ V $}
  \put(60,50)   {$ := $}
  \put(95,0) { {\includepichtft{01b}}
  \put(-6,-8)   {\scriptsize$ U^{\!\vee} $}
  \put(6,-8)    {\scriptsize$ U $}
  \put(-11,68)  {\scriptsize$ \gamma_{U,V}^{} $}
  \put(27.8,89) {\scriptsize$ \iota_{\!V\otimes U}^{} $}
  \put(33.8,66) {\scriptsize$ \id_{\!V\otimes U} $}
  \put(17.5,115){\scriptsize$ \BF $}
  \put(27,-8)   {\scriptsize$ V^{\!\vee} $}
  \put(39,-8)   {\scriptsize$ V $}
  %\put(19.5,24.3) {\scriptsize$ c $}
  \put(30,31) {\scriptsize$ c $}
   }
  } }

The coend \BF, defined in, \eqref{BulkA_coend} indeed reproduces the algebra $\BFss$ in \eqref{CC_BA} in the case that \C\ is modular. In particular, the definition \eqref{prod_BF} of the product reproduces the one in \cite[eq. (6.48)]{ffrs}.
%NEW
Whenever \C\ is such that
\BF\ can be equipped with the structure of a symmetric commutative Frobenius algebra we will refer to \BF\ as the \emph{\bulkAC}\index{\bulkA! Cardy}.
%NEW: removed
%More generally we will refer to a symmetric commutative Frobenius algebra in \eC\ as a \bulkA\index{\bulkA}.
%

So far it is not known how to obtain a Frobenius coproduct for \BF\ in the case of a general finite $\k$-linear ribbon category. However, we will see in section \ref{Frob_Hmod} that taking \C\ to be the category $H$\Mod\ of representations of a finite-dimensional factorizable ribbon Hopf algebra over an algebraically closed field $\k$ of characteristic zero, we can use the integral and cointegral of $H$ to construct a coalgebra structure on \BF\ that turns it into a Frobenius algebra.
%NEW
To this end we work in the category $H$\Bimod, which is braided equivalent to $(\HMod)\rev\btimes\HMod$.
In fact, the so-obtained Frobenius algebra $\BFH\,{\equiv}\,\BF$ is in addition symmetric and commutative.
Note that we do not assume $H$ to be semisimple, and accordingly $H$\Mod\ is not necessarily semisimple. We will show that $\BFH$ is special iff $H$ is semisimple.
%NEW
More generally, we obtain a family $\{\BFw\}$ of symmetric commutative Frobenius algebras in $H$\Bimod. They provide candidates for \bulkA s\index{\bulkA} of full conformal field theories whose chiral data are described by $H$\Mod.
%More generally, we obtain a family $\{\BFw\}$ of \bulkA s.
We show in section \ref{sec:twist_bulk} that we can associate such a \bulkA\ \BFw\ to any ribbon Hopf algebra automorphism \ra\ of $H$.

\subsection{Classifying algebras}\label{CD_intro}
We have argued that in rational CFT the boundary conditions and defect lines have a simple description in terms of the representation theory of \boundA s. Boundary conditions of a CFT in a phase labeled by the \boundA\ $A$ are characterized by left $A$-modules, and defect lines separating two phases $A$ and $B$ are characterized by $A$-$B$-bimodules.
However, there also exists a different characterization of the irreducible boundary conditions\index{boundary condition} and defect lines in terms of representation theory of an associative algebra over \CN.

Consider two phases, labeled by \emph{simple} \boundA s $A$ and $B$, of a CFT. There is a \emph{classifying algebra for defects\index{classifying algebra!for defects}}, \CD, which classifies the defect lines separating the two phases \cite{fuSs2}. As a vector space, \CD\ is given by
\be\label{CD_space}
    \CD=\bigoplus_{p,q\In\I} \Homaa{U_p\otip A\otim U_q} A\otimes_\CN \Hombb{U_\pb\otip A\otim U_\qb} B\,.
\ee
Using the factorization constraints\index{factorization!constraints} we will derive a product on this space that turns it into a semisimple, commutative, associative unital algebra over \CN \cite{fuSs2}.  In fact, \CD\ comes with a non-degenerate invariant bilinear form, i.e.\ \CD\ is a Frobenius algebra. The bilinear form is (up to normalization) given by a product of two two-point functions on the sphere, see \eqref{bil_form_CD}. Thus, just like in the case of boundary and bulk Frobenius algebras, the Frobenius property follows from the non-degeneracy of a two-point function. The important property of \CD\ is that the irreducible representations of  \CD\ are in bijection with the elementary defect conditions, and the representation matrices (which are all one-dimensional) are exactly the \emph{defect transmission coefficients} discussed in section \ref{top_def}, see \eqref{def_trans}.
In this sense, \CD\ classifies the $A$-$B$-defect lines of the theory. We will return to the algebra $\CD$ in chapter \ref{sec:class_alg}.

There is also a classifying algebra for boundary conditions\index{classifying algebra!for boundary conditions} \cite{fuSs}, denoted by \CA. As a vector space, the classifying algebra for boundary conditions of a CFT in phase $A$ is
\be\label{CA_space}
    \CA=\bigoplus_{p\In\I}\Homaa{U_p\otimes^+ A\otimes^- U_\pb}A\,.
\ee
Since any left $A$-module is trivially also a right $\one$-module, the categories of $A$-modules and $A$-$\one$-bimodules are equivalent. As a consequence, \CA\ is isomorphic to the classifying algebra \CDone\ of $A$-\one-defects. Indeed, after an appropriate choice of basis, the structure constants of \CDone\ coincide with the ones of \CA, see \cite[Section 2]{fuSs2}. In particular, all the properties of \CD\ are inherited by \CA. Thus, \CA\ is a semisimple commutative Frobenius algebra over \CN.

The irreducible representations of \CA\ are in bijection with the elementary boundary conditions. The representation matrices are the \emph{reflection coefficients}, which can be collected in what one calls boundary states.
A reflection coefficient $\refc q\gamma N$ is (up to a factor $\dim(N)$) the single structure constant of the expansion, in the one-dimensional space of conformal blocks, of the correlator a disc with boundary condition $N$ and a single bulk field $\phi^\gamma_{q\qb}$ inserted.
The boundary states contain essential physical information regarding boundary conditions such as ground state degeneracies \cite{aflu} and Ramond-Ramond charges of string compactifications \cite{bDlr}.
In addition, the annulus coefficients, i.e.\ the coefficients $ \mathrm A_{kM}^{\;\;\;\;N}$ appearing in the decomposition \eqref{Ann_exp} of the annulus partition function, can be expanded in terms of the invariant bilinear form of $\CA$ and the reflection coefficients \cite{stig5}:
\be\label{ann_amp_BF}
    \Aampk MNk ~=~\frac{\dim(M)\dim(N)S_{0,0}^2}{\dim(A)}\sum_{q\in\I}\frac{S_{k,q}}{S_{0,q}}\sum_{\gamma,\delta=1}^{Z_{q\qb}}
\nbfinv {\qb}\delta q\gamma\,\refc q\gamma N\,\refc {\qb}\delta M\,,
\ee
where the numbers $\nbfinv {\qb}\delta q\gamma$ are the elements of the inverse of the $\dim(\CA)\linebreak\Times\dim(\CA)$-matrix $\omega$ that implements the non-degenerate bilinear form on $\CA$.
\begin{remark}
    We conclude the discussion of classifying algebras  by mentioning that the classifying algebra for boundary conditions can be obtained from a generalization of the coend \L.
    Consider the functor
    \be
        F_A=\,\,?^\vee\otip A\otim?:\,\,\C\op\times\C\rightarrow\C_{A|A}\,,
    \ee
    acting on objects as
    \be
  X\Times Y \,\mapsto\, X^\vee\otip A\otim Y\,,
  \ee
and on morphisms as
  \be
   f\times g \,\mapsto\, f^\vee\oti \id_A\oti g \,.
  \ee
Denote by $\HA$, the coend of $F_A$:
  \be
    \HA := \Coend {F_A} X  = \coen X X^\vee\otip A\otim X \,.
  \ee
Assume that the coequalizer of the right and left actions on $X$ and $Y$ exist for all bimodules $X$ and $Y$ in $\C_{A|A}$. In this case $\C_{A|A}$ is monoidal. Assume further that  $X\otiA Y$ is a retract of $(X\oti Y,\rho_X\oti\id_Y,\id_X\oti\ohr_Y)$, with embedding and restriction morphisms $\eota XY$ and $\rota XY$.
Finally, we assume that the projectors $\pota XY \,{:=}\, \eota XY\cir\rota XY$ commute with bimodule morphisms and that the left and right unit constraints of the monoidal category $\C_{A|A}$ are given by the representation morphisms. In fact we  need the last two assumptions only for $X$ and $Y$ that are tensor powers of $\HA$.

Under these assumptions, the object $\HA$ can be equipped with the structure of an algebra in $\C_{A|A}$. The algebra structure on $\HA$ is generalization\footnote{To be precise, we generalize a version of $\L$, in which we have interchanged over- and under-braiding in the definition of the product.} of the one on $\L$:
  \eqpic{def_m_HA} {285} {45}
  {\put(7,0){\setulen90
  \put(0,5)   {\INcludepichtft{43a_1}{342}
  \put(-6,-9)  {\sse$ X^{\!\vee} $}
  \put(5,-9)   {\sse$ A $}
  \put(13,-9)  {\sse$ X $}
  \put(32,-9)  {\sse$ Y^{\!\vee} $}
  \put(43,-9)   {\sse$ A $}
  \put(51,-9)  {\sse$ Y $}
  \put(19,29)  {\sse$ \iota_X$}
  \put(57,29)  {\sse$ \iota_Y$}
  \put(-3,44)  {\sse$ \HA$}
  \put(47,44)  {\sse$ \HA$}
  \put(29,67)  {\sse$ \HA\OtiA\HA$}
  \put(33,80)  {\sse$m_{\HA}$}
  \put(23,109)  {\sse$ \HA$}
  \put(-22,60)     {\sse$\rota \HA\HA$}
  }
  \put(77,45) {$ := $}
  \put(117,5) {\INcludepichtft{43b_1}{342}
  \put(-11,-9)  {\sse$ X^{\!\vee} $}
  \put(0,-9)   {\sse$ A $}
  \put(8,-9)  {\sse$ X $}
  \put(28,-9)  {\sse$ Y^{\!\vee} $}
  \put(38,-9)   {\sse$ A $}
  \put(46,-9)  {\sse$ Y $}
  \put(-18,48)  {\sse$\gamma_{X,Y}$}
  \put(43,52)  {\sse$\id_{Y\otimes X}$}
  \put(17,109)  {\sse$ \HA$}
  \put(33,79)  {\sse$ \iota_{Y\otimes X}$}
  }
  \put(220,45)  {and$\quad\quad\eta_{\HA}=\iota_{\one}$.}
} }

Taking \C\ to be a modular tensor category, $\HA$ is as an object, given by
\be
    \HA \cong\, \bigoplus_{i\in\I} U_i^\vee \otip A\otim U_i^{}\,.
\ee
All the properties of the category that we assumed above are indeed satisfied for a modular tensor category. Accordingly, for a modular category, $\HA$ is an algebra in $\C_{A|A}$.
As described in appendix \ref{app:alg_Hom}, the product on $\HA$ induces a product on the vector space
\be
    \Homaa{A}{\HA}\cong\sum_{i\in\I}\Homaa A{U_i\otip A\otim U_{\ib}}\,.
\ee
Note that $\Homaa{A}{\HA}$ is isomorphic, as a vector space, to the classifying algebra for boundary condition $\CA$, given in \eqref{CA_space}. In fact, this isomorphism extends to an isomorphism of algebras: One can check that with a suitable choice of basis for $\Homaa{A}{\HA}$, the structure constants that endow $\Homaa{A}{\HA}$ with an algebra structure coincide with the ones of the classifying algebra.
\end{remark}

\section{Hopf algebras}
Hopf algebras play a fundamental role in physics. To mention a couple of standard examples, Hopf algebras appear in connection with representations of finite groups and of Lie algebras. When dealing with representations of a finite group $G$ in physics, one is typically dealing with representations of the group algebra $A[G]$. Similarly, the representations of a Lie algebra $\g$ are really to be considered as representations of the enveloping algebra $U(\g)$. The group algebra $A[G]$, as well as the enveloping algebra $U(\g)$, is a Hopf algebras.

In conformal field theory, Hopf algebras have been associated, via what has been termed Kazh\-dan-Lusztig correspondence, to the logarithmic $(1,p)$-models \cite{fgst2,naTs2}. The category of representations of the vertex algebra of a $(1,p)$-model is equivalent, as an abelian category, to the category of finite dimensional representations of the restricted quantum group $\bar U_q(\mathfrak{sl}_2)$ at $q=\e^{\pi\i/p}$.
However, the equivalence does not extend to an equivalence of braided categories \cite{koSai}, but there has been attempts to improve the situation, see e.g.\ \cite{SeT}.

\subsection{The \bhHopf}\label{BhHopf}
In section \ref{TFT} we introduced the notion of a topological field theory in terms of a modular functor, the \tftfun. This is a central ingredient in the description of full rational CFT via the TFT-construction. In particular it provides representations of mapping class groups.

Many aspects of modular functors have been generalized by Lyubashenko and others to a much larger class of braided categories, which are not required to be semisimple (see \cite{KEly} and references therein).
In particular one can construct an action of the mapping class group \Mapgn\ of any closed oriented surface $\Surf gn$ of genus $g$ with $n$ holes. For any factorizable finite ribbon category $\C$, i.e.\ a category that satisfies condition \CFN, described in on page \pageref{cond:CFN}, there is an action $\piL$ of \Mapgn\ on a suitable morphism space.
This construction uses as a central ingredient the coend \L, defined in \eqref{Lyub_coend}.

Let \C\ be a factorizable finite ribbon category.
Since \C\ is a $\k$-linear finite (in the sense of Definition \ref{def_finite}) category, the category \eC\ is $\k$-linear finite as well \cite[Corally 3.7]{rgaW}. In addition, the tensor product, the braiding and the sovereign structure on \C\ can be used to equip \eC\ with the structure of a ribbon category.
Thus the coend $\HK\deff\L_{\eC}$ indeed exists in \eC\ and carries the structure of a Hopf algebra in \eC.
This is the \emph{\bhHopf}:
\be\label{def_bhHopf}
    \HK:=\L_{\eC}=\int^X X^\vee\oti X\quad\in\eC\,.
\ee
If in addition, the Hopf pairing of \HK\ is non-degenerate, \eC\
is factorizable ribbon as well.
In particular, as described in section \ref{coends}, we obtain an action of the mapping class group $\Mapgn$ on the space  $\Hom(\HK^{\otii g},\BF^{\otii n})$, where \BF\ is the \bulkAC, for any integers $g,n\geq0$. In chapter \ref{sec:beyond} we give a prescription of how to associate a morphism \Corrgn\ in $\Hom(\HK^{\otii g},\BF^{\otii n})$ to any surface \Surf gn, see \eqref{CorrgnC}. This prescription uses in addition to the structure morphisms of \BF\ the action \eqref{Lact} of \HK\ on \BF.

In the case $\C\eq H$\Mod\ we show that the morphism \Corrgn\ is invariant under the projective action $\piK\equiv\piL$ of \Mapgn on $\Hom(\HKH^{\otii g},\BFH^{\otii n})$. In particular, considering the case $g\eq1$, $n\eq0$, we obtain a candidate for the torus partition.
In the case that \C\ is modular, the partition function reduces to the charge conjugation modular invariant partition function.
Due to the properties just mentioned the morphisms \Corrgn\ qualify as candidates for correlators of bulk fields also in non-semisimple CFT.

In section \ref{sec:twist_bulk} we generalize the considerations by replacing $\BFH$ by the symmetric commutative Frobenius algebra $\BFw$ associated to a ribbon Hopf algebra automorphism $\ra$. We construct a morphism $\Corrgnw\In\Hom(\HKH^{\otii g},\BFw^{\otii n})$ from each symmetric commutative Frobenius algebra $\BFw$.
We will show that the morphism $\Corrgnw\In\Hom(\HKH^{\otii g},\BFw^{\otii n})$ is also invariant under the action of the  mapping class group of $\Surf gn$ for any $g,n\geq0$.
%%%%%%%%%%%%%%%%%%%%%%%%%%%%%%%%%%%%%%%%%%%%%%%%%%%%%%%%%%%%%%%%%%%%%%%%%
\chapter{Rational CFT}\label{chap:RCFT}
In this chapter we describe the relevant details of the \emph{TFT-construction} of rational CFT and show that the proof in \cite{fjfrs} that the correlators of the TFT-construction satisfy the factorization constraints extends also to the situation that the world sheet contains an arbitrary network of topological defect lines. Throughout this chapter, \C\ denotes a modular tensor category, with ground field $\CN$, as defined in section \ref{sec_modC}.
\section{The TFT-construction}\label{sec:TFT}
As discussed in the introduction, the calculation of correlation functions of a world sheet $\wsC$ can be split into two parts, see section \ref{fullCFT}. Recall that the second part of the problem amounts to assigning to a surface decorated by data of a modular tensor category a vector in the space of conformal blocks on the double $\wsCD$, that  satisfies the factorization constraints and mapping class group invariance. For a rational CFT a model independent solution of this problem is provided by the TFT-construction  \cite{fuRs4,fuRs8,fuRs10,fjfrs,fjfrs2,fjfs}.

The two key ingredients of the TFT-construction are the \tftfun, relevant to the chiral theory, and the concept of the double $\wsCD$ of a world sheet $\wsC$, that links the full theory to the chiral.
To any world sheet $\ws$, the TFT-construction assigns  a cobordism $\Mws:\emptyset\rightarrow\wsD$, the \emph{connecting manifold}.
Applying the invariant $\tftm(\Mws):\CN\rightarrow\tfts(\wsD)$ to $1\In\CN$, specifies a vector in $\tfts(\wsD)$.
This vector is the correlator of the topological world sheet\index{world sheet!topological} $\ws$.
By identifying the TFT-space $\tfts(\wsD)$ with the space $\cbl(\wsCD)$ of conformal blocks on $\wsCD$, this mechanism also selects a vector in  $\cbl(\wsCD)$, which is the correlator of \wsC.

\subsection{Conformal blocks and the TFT-state space}\label{Cbl_TFTspace}
The TFT-state space $\tfts(\wsD)$, associated to the extended surface $\wsD$ by the \tftfun, and the space $\cbl(\wsCD)$ of conformal blocks of the associated vertex algebra both take a modular tensor category $\C$ as an input. Apart from that, the two spaces are a priori unrelated.

Note that we do not just want to identify $\tfts(\wsD)$ and $\cbl(\wsCD)$ as vector spaces, this would merely be a statement about dimensions. Remember that both $\cbl(\wsCD)$ and $\tfts(\wsD)$ are equipped with actions of the mapping class group\index{mapping class group!representation of} of $\wsCD$.
The first one comes from a projectively flat connection on the bundle of conformal blocks, whereas the latter comes from the isomorphisms \eqref{homeomap} which are part of the \tftfun.
By identifying $\cbl(\wsCD)$ and $\tfts(\wsD)$ we mean first that these two representations are isomorphic. Second, we require that the isomorphism is consistent with cutting and gluing of extended surfaces, i.e.\ with factorization.
It is therefore a non-trivial statement that the two spaces can be identified. In the case of three-dimensional Chern-Simons theory the TFT-state space on a component of the boundary can be thought of as the space of conformal blocks of the corresponding WZW-model, \cite{witt50,frki2}.

For general rational theories there are some results at genus zero and genus one: At genus zero, an explicit isomorphism, consistent with the mapping class group representations, between $\tfts(\wsD)$ and $\cbl(\wsCD)$ is constructed in \cite[Section 5.3.3]{fuRs4}.
Furthermore, as a consequence of the Verlinde conjecture, which is proved in \cite[Theroem 5.2]{huan20}, the two spaces $\tfts(\wsD)$ and $\cbl(\wsCD)$ can be identified also in the case $g\eq1$ and without insertions. Even though there is no general treatment for $g>1$, it is generally believed that such an identification is possible at any genus. To be on the safe side we here impose the condition (which is possibly not a restriction) that the vertex operator algebra \VA\ is such that we can identify the two spaces, and refer to $\tfts(\wsD)$ as the space of conformal blocks. Thus, we define full rational CFT\index{CFT!rational|textbf} to be a CFT with vertex algebra \VA\ such that:
    \begin{itemize}\addtolength{\itemsep}{-6pt}
    \item The strictification \C\ of the representation category of the chiral algebra \VA\ is modular.
    \item The space of conformal blocks $\cbl(\wsCD)$ of \VA\ can be identified (in the sense described above) with the TFT-state space $\tfts(\wsD)$ of the associated topological manifold, at arbitrary genus.
    \end{itemize}
The TFT-construction solves full CFT which is rational in this sense.
In particular, in such a theory we are able to restrict to topological world sheets when discussing the algebraic side of full CFT.

\subsection{Decoration of the world sheet}\label{Dec_ws}
We now describe how to obtain the correlators of such a rational CFT via the \tftfun\ described in section \ref{TFT}. The construction involves many subtleties and technical aspects, which are omitted in the description below. See e.g. \cite[Section 3-4]{fuRs10} for a more complete description or \cite[Appendix B]{fjfrs} for a shorter summary of the TFT-construction. The following description should be considered as a recipe for how to cook up the connecting manifold of a world sheet $\ws$. For the considerations in this thesis we only need oriented world sheets. For this reason, we restrict to orientable world sheets in the description below. In fact, we even implicitly assume that $\ws$ is \emph{oriented}. However, the constructions for the non-orientable case is similar, see \cite{fuRs10} for details.

Consider a topological oriented world sheet $\ws$, with field insertions and possibly with non-empty boundary and a network of finitely many defect lines. We will only consider trivalent network vertices of defect lines. Since defect lines can be fused this is not a restriction, see Remark \ref{rem_fus_vert} (i) below.

Remember that phases of a full CFT are separated by defect lines. As was explained in section \ref{boundFA}, each phase is labeled by a \boundA, i.e.\ a symmetric special Frobenius algebra $\tftA$ in \C. Each boundary segment is labeled by a module over \tftA, and each defect line by an $A$-$B$-bimodule, where $A$ and $B$ are the labels of the adjacent phases. The fusion product of two defects, labeled by an $A$-$B$-defect $X$ and a $B$-$C$-defect $Y$, is labeled by $X\otiB Y$, c.f.\ \eqref{fus_def}. Analogously, a fusion product between an $A$-$B$ defect $X$ and a boundary component labeled by the $B$-module $M$ is labeled by $X\otiB M$.

In addition there are fields living in the bulk, on defect lines or on boundary components. The fields are decorated as follows:
\begin{itemize}
    \item A \textbf{bulk field} in phase $A$ is a tuple $\Phi\eq\BUF UV\phi p{[\gamma]}{\ort(p)}$. Here $U,V$ are objects in $C$, $\phi\In\Homaa{U\otip A\otim V}{A}$, $p\In\ws\setminus\partial\ws$ is the point where the field is inserted, $[\gamma]$ is an arc germ at $p$ such that $\gamma(0)=p$, and $\ort(p)$ is the orientation of $\ws$ in a neighborhood around $p$.
    \item A \textbf{defect field} changing an $A$-$B$ defect $Y$ to the $A$-$B$ defect $X$ is a tuple $\Phi\eq\DF XYUV\phi p{[\gamma]}{\ort(p)}$. Here  $U$ and $V$  are objects in $\C$, $X$ and $Y$ are $A$-$B$-bimodules, $\phi\In\Homab{U\otip X\otim V}{Y}$, $p\In\ws$ is the point on the defect line where the field is inserted, $[\gamma]$ is an arc germ at $p$, and $\ort(p)$ is the orientation of $\ws$ in a neighborhood around $p$.

        Recall that a bulk field can be considered as a defect field on the invisible defect, and that a defect field changing the defect condition to or from the trivial defect is called a \emph{disorder field}.
    \item A \textbf{boundary field} in phase $A$ is a tuple $\Psi\eq\BOF MNV\psi p{[\gamma]}$. Here $M$ and $N$  are left $A$-modules, $V$ is an object in $\C$, $\psi\In\Homa{M\oti V}{N}$, $p\In\partial\ws$ is the insertion point of the boundary field and $[\gamma]$ is an arc germ at $p$ such that $\gamma$ is aligned to $\partial\ws$ at $p$. A boundary field of this type changes the boundary condition from $N$ to $M$.
\end{itemize}
Here, $U\otip X\otim V$ is the bimodule with actions given in \eqref{ind_mod}.

For an orientable world sheet we can fix the orientation of the entire world sheet once and for all. Thus the datum of orientation at insertion points of bulk- and defect fields is somewhat superfluous when dealing with orientable world sheets. The arc germ at $p\In\ws$ can be thought of as a remnant of the local coordinates around $p$ in the world sheet $\wsC$ from which we obtain the topological world sheet $\ws$, c.f.\ \cite[Section 5.3.1]{fuRs10}.

Due to semisimplicity, it is enough to take the objects in \C\ that label field insertions to be \emph{simple}.
For $A$-$A$-defects $X$ and $Y$ we fix once and for all a basis
\be\label{basis_DF}
  \{\, \phi^\alpha_{pq} \,|\, p,q\In\I,\,\alpha=1,2,...,\dim(H^{X,Y}_{p,q}) \,\}
\ee
for the space
\be\label{space_DF}
    H^{X,Y}_{p,q} \,{:=}\, \Homaa{U_p\,\oti^+\!X\oti^-U_q}Y\,.
\ee
The space $H^{X,Y}_{p,q}$ is the \defspace, with chiral labels $p$ and $q$, changing the defect condition from $Y$ to $X$.
We will also need a basis of the space dual to \eqref{space_DF}. We denote it by
\be\label{basis_DFd}
  \{\, \bar\phi^\alpha_{pq} \,|\, p,q\In\I,\,\alpha=1,2,...,\dim(H^{X,Y}_{p,q}) \,\}.
\ee
We will abbreviate
\be
    \phi^\alpha_{pq}\equiv\alpha\quand\bar\phi^\alpha_{pq}\equiv\bar\alpha
\ee
whenever the labels $p,q$ can be read of from the context.
The two bases are dual in the sense that
\be\label{DF_dual}
  \mathrm{Tr}\big(\phi^\alpha_{pq}\circ\bar\phi^\beta_{pq}\big)
  = \delta_{\alpha,\beta}\;\dim(Y)\,.
\ee
Such dual bases exist due to a non-degenerate pairing of the spaces
$\Homaa{U_p\linebreak{\otip}  X\otim U_q}Y$ and $\Homaa Y{U_p\,\otip \otim U_q}$. The existence of such
a pairing, in turn, follows from the following arguments: Due to semisimplicity of \C\ there exists a non-degenerate pairing of the spaces $\Hom(U_p\oti X\oti U_q,Y)$ and $\Hom(Y,U_p\oti X\oti U_q)$. We may chose bases of those spaces in such a way that the pairing is mapping a pair of basis elements $\psi^\alpha_{pq}\In\Hom(U_p\oti X\oti U_q,Y)$ and $\bar\psi^\beta_{pq}\In\Hom(Y, U_p\oti X\oti U_q)$ to
\be
    \mathrm{Tr}\big(\psi^\alpha_{pq}\,{\circ}\,\bar\psi^\beta_{pq}\big)\,{=}\,\delta_{\alpha,\beta}\dim(Y)\,.
\ee
Taking $\psi^\alpha_{pq}\In\Homaa{U_p\otip X\otim U_q}Y$ one shows, by using that $A$ is symmetric Frobenius together with the representation properties, that this pairing restricts to a pairing between \Homaa{U_p\otip X\otim U_q}Y and $\Homaa Y{\linebreak  U_p\otip X\otim U_q}$. (The calculation is an obvious generalization of eq.\ (C.16) in \cite{fjfrs}.) Analogously, taking $\bar\psi^\beta_{pq}\In\Homaa Y{U_p\otip X\otim U_q}$, the non-degene\-rate pairing $\mathrm{Tr}\big(\psi^\alpha_{pq}\,{\circ}\,\bar\psi^\beta_{pq}\big)$ again restricts to a pairing between \Homaa{U_p\linebreak\otip   X\otim U_q}Y and $\Homaa Y{U_p\otip X\otim U_q}$.
\pagebreak
\subsection{The connecting manifold}\label{sec:con_man}
We are now in a position to describe the connecting manifold $\Mws$ of an oriented world sheet. As a three-manifold, $\Mws$ is constructed from an interval bundle $\ws\times[-1,1]$ over the world sheet by pairwise identifying points over the boundary:
\be\label{Mws_def}
		\Mws=\ws\times[-1,1]\;\Big/\sim\;,
\ee
where
\be
(x,t)\sim(x,-t)\;\forall \,x\in\partial \ws\quand\forall \,t\In[-1,1].
\ee
Thus, the connecting manifold is obtained by "filling out" the double $\wsD$ of $\ws$ to a three-manifold.
By construction there is an embedding $\wsemb\mapdef\ws\hookrightarrow\Mws$ of the world sheet in $\Mws$ as the set of points $(p,0)\In\ws\times[-1,1]$. Note that the boundary of the connecting manifold is the double of the world sheet,
\be
    \partial\Mws~=~\wsD\,.
\ee

The double is orientable. We fix once and for all the orientation by the inward pointing normal. There is an interval $p\times[-1,1]$ connecting the two points on the double $p'$ and $p''$ over each point $p$ in the interior of the world sheet. We refer to this interval as the \emph{connecting interval}. A point $p$ on the boundary of the world sheet gives rise to a single point $p'$ on the double and a connecting interval $[0,1]$ stretching from $p$ to $p'$. When a point in the interior is moved to the boundary, the two halves of the connecting interval merge into an interval connecting the point $p\In\partial\ws$ with the corresponding point $p'\In\wsD$. All of this is illustrated in the following picture of the connecting manifold of a disc:
\eqpic{con_man}{175}{80}{
\put(0,0){\setulen93
\put(0,0){\includepic{316}{con_man_1}

\put(59,87)   {\sse$p_1$}
\put(48,156)   {\sse$p_1'$}
\put(45,22)   {\sse$p_1''$}
\put(130,77)   {\sse$p_2$}
\put(146,55)   {\sse$p_2'$}
}
}
}
Here, the thicker lines are examples of connecting intervals over the points $p_1$, in the interior of the world sheet $\ws$, and $p_2\In\partial\ws$.
\subject{The double as an extended surface}
Part of the data of the field insertions are lifted to the double. Each defect field $\DF XYUV\phi p{[\gamma]}{\ort(p)}$ gives rise to two marked points, $p'$ and $p''$, on the double, one of them in a neighborhood with orientation coinciding with the one around $p$, and the other one in a neighborhood with opposite orientation. The one in a neighborhood with orientation $\ort(p)$ gets labeled by the object $U$ and the other one by $V$. The arc germ $[\gamma]$ induces arc germs at the two marked points $p',p''\In\wsD$, and for each marked point, the datum $\eps$ is taken to be $\eps=-1$. Thus, a ribbon ending on the boundary always has core orientation pointing away from the boundary. Since we may think of a bulk field as a special kind of defect field the data of a bulk field are lifted in the same way. The data of boundary fields are lifted to the single point over the insertion on the double in the obvious analogous fashion. Finally, in order to turn the double into an extended surface we make the canonical choice of lagrangian subspace described in Remark \ref{Rem_cob} (ii).

\subject{The ribbon graph in \Mws}
Let us now describe how to construct the ribbon graph embedded in \Mws. First we choose a \emph{dual triangulation} of the world sheet. This involves decorating the world sheet such that there is a line running along each defect line and along each boundary component. We require that every vertex is trivalent and every field insertion lies on an edge of the triangulation and that vertices of the dual triangulation coincide with the network vertices, i.e.\ vertices at which three defect lines are joined. Note that to fulfill the last requirement, it may be necessary to include some invisible defect lines.

The dual triangulation is then covered by \emph{ribbons} and \emph{coupons}. First we place a ribbon on each edge of the triangulation of the embedded world sheet as follows:
\begin{itemize}
    \item
    On a defect line we place a ribbon labeled by the bimodule describing the corresponding defect or condition. Analogously, on a boundary segment we place a ribbon labeled by the left module describing the corresponding  boundary condition\index{boundary condition}. Each ribbon is embedded with core orientation opposite to the one of the corresponding defect line or boundary condition.
    \item On any other edge of the dual triangulation we place a ribbon labeled by a trivial defect condition, i.e.\ the \boundA\ that labels the adjacent regions.
\end{itemize}
All ribbons placed on the dual triangulation are embedded with two-orientation opposite to the one of the world sheet. I.e., using the terminology of Remark \ref{Rem_cob} (i), the ribbons are embedded with the black side facing up.

In addition we place coupons on field insertions. The coupon at a defect field changing the defect condition from $Y$ to $X$ is labeled by the morphism labeling the defect field, i.e.\ a morphism $\phi$ in $\Homab{U\oti^+X\oti^-V}Y$. The order of $X$ and $Y$ is reversed in the morphism since ribbons are embedded with core orientation opposite to the actual defect. The coupon has three incoming ribbons labeled by $U$, $X$ and $V$ and one outgoing labeled by $Y$. The ones labeled by $X$ and $Y$ are the ones placed on the dual triangulation, whereas the ones labeled by $U$ and $V$ are taken to end on the marked points over the insertion point.  We illustrate the ribbon graph in a region around a defect field in the following picture\footnote{In \eqref{DF_ribbon} the ribbon graph around the coupon labeled by $\phi$ is rotated by an angle $\pi$ in such a way that it shows it white side. This emphasizes that when
this piece of ribbon graph is projected to $\R^2$, as described in section \ref{sec:tftfun}, it is
interpreted as the morphism $\phi$.}:
\eqpic{DF_ribbon}{200}{65}{
\put(-40,0)   {\setulen6{07}
  \put(65,0){\includepic{17}{pic4_98a}}
  \put(325,30)     {\tiny $x$}
  \put(310,50)     {\tiny $z$}
  \put(302,47)     {\tiny $y$}
  \put(234,4)      {\tiny $x$}
  \put(214,17)     {\tiny $y$}
  \put(93,170)     {\tiny $x$}
  \put(86,160)     {\tiny $y$}
  \put(226,235)    {\tiny $x$}
  \put(220,225)    {\tiny $y$}
  \put(250,192)    {\scriptsize $U$}
  \put(244,41)     {\scriptsize $V$}
  \put(172,129)    {\scriptsize $X$}
  \put(277,129)    {\scriptsize $Y$}
  \put(186,106)    {\small $\rm d $}
  \put(266,106)    {\small $\rm d $}
  \put(112,105)    {\scriptsize $\rm or(\rm d)$}
  \put(316,105)    {\scriptsize $\rm or(\rm d)$}
  \put(226,129)  {\begin{turn}{280} $\phi$\end{turn}}
}
}
Again this description covers also the bulk field case.
The coupon at a boundary field
is labeled by the morphism labeling the boundary field and the ribbons are joined with the rest of the ribbon graph in the obvious analogous fashion.

In order complete the description of the ribbon graph we need to describe how to join the ribbons at trivalent vertices. At trivalent vertices for which at least one edge is labeled by a \boundA\  we join ribbons by coupons labeled as follows:
\begin{itemize}
    \item At each vertex joining three ribbons labeled by a \boundA\ the coupon is labeled by the product or coproduct of that algebra.
    \item At each vertex such that there is a ribbon labeled by a \boundA\ ending on a defect line or boundary component the coupon is labeled by the corresponding representation morphism.
\end{itemize}
In order for this convention to make sense we have to be able to reverse the core orientation of a ribbon labeled by a \boundA. The conventions listed above may give rise to an edge of the triangulation at which two parts of ribbons labeled by some \boundA\ $A$ meet with core-orientation pointing towards or away from each other. Whenever this happens, we join them by a coupon labeled by $\eta\circ m$ or by $\Delta\circ\eps$. The corresponding pieces of ribbon graphs looks as follows:
\eqpic{rev_vertex} {280} {18} {
\put(0,0){
  \put(-5,0)     {\Includepic{orrev_2}}
  \put(120,21)   {or}
  \put(175,0)     {\Includepic{orrev_1}}
  }}

Finally, consider a network vertex joining three ribbons labeled by an $A$-$B$-bimodule $X$ and a $B$-$C$-bimodule $Y$ with core-orientation pointing towards the vertex, and an $A$-$C$-bimodule $Z$ with core-orientation pointing away from the vertex. On this vertex we place a coupon as in the following picture:
\eqpic{def_vertex} {23} {17} {
    \put(0,-5){\includepic{48}{def_vertex}
    \put(11,44)   {\pl {Z}}
    \put(20,9)   {\pl {Y}}
    \put(-8,9)   {\pl {X}}
    \put(7,26)   {\pl {\kappa}}
    } }
The coupon is labeled by a morphism
\be
    \kappa\In\HomP_{A|C}(X\oti Y,Z)\,.
\ee
Here $\HomP_{A|C}(X\oti Y,Z)\subseteq\Hom_{A|C}(X\oti Y,Z)$ is the space of morphisms $f{\In}\linebreak\Hom_{A|C}(X\oti Y,Z)$ that satisfy
\be
    f\circ P_{X,Y}=f\,,
\ee
where $P_{X,Y}\In\End(X\oti Y)$ is the projector, defined in \eqref{TP_proj}, whose image is $X\otiB Y$.

Any vertex such that the core orientations of ribbons are not exactly as described above can be described analogously by using that reversing the orientation of a ribbon corresponds to exchanging the label of that ribbon by its dual, c.f.\ section \ref{sec_Cob} (see picture \eqref{coupon_ex}). If, e.g.\ the ribbon labeled by $Y$ have core orientation pointing away from the vertex, we label the vertex by a morphism in $\HomP_{A|C}(X\oti Y^\vee,Z)$ instead.

In fact,
it is natural to label a vertex with one ingoing defect labeled by an $A$-$C$-bimodule $X$ and two outgoing defects labeled by an $A$-$B$-bimodule $Y$ and a $B$-$C$-bimodule $Z$ by a morphism
\be
    \kappa\In\HomP_{A|C}(X,Y\oti Z)\,,
\ee
with $\HomP_{A|C}(X,Y\oti Z)$ defined, analogously to $\HomP_{A|C}(X\oti Y,Z)$, by the property that $f\In\HomP_{A|C}(X,Y\oti Z)$ iff
\be
    P_{X,Y}\circ f=f\,.
\ee
It is straightforward to show that if $\kappa\In\HomP_{A|C}(X,Y\oti Z)$, then the morphism $\kappa':=(\id_Y\oti\tilde d_Z)\circ(\kappa\oti\id_{Z^\vee})$ is an element in $\HomP_{A|C}(X\oti Z^\vee,Y)$.
\begin{remark}\label{rem_fus_vert}
    (i) As already mentioned, we will consider an $n$-valent network vertex of defects as an equivalent network vertex being a composition of several trivalent ones by fusing defects with each other as in the following picture
    \eqpic{nTo3}{270}{25}{
    \put(0,0)   {\includepic{34}{def_fus1}}
    \put(125,25)    {\LARGE$ \rightsquigarrow$}
    \put(160,0)   {\includepic{34}{def_fus2}}
    }
    For brevity we will sometimes say that there is an $n$-valent vertex labeled by some morphism $f$. By this we mean that $f$ is a suitable composition of morphisms labeling the trivalent vertices that together constitute the $n$-valent network   vertex.
    \nxl1
    (ii) In \cite{fjfs} a slightly different decoration of the world sheet is used, see \cite[Definition 3.1]{fjfs}. One way in which that description differs from the present one is that instead of a dual triangulation we require only a cell-decomposition of the world sheet. This is useful for proving some of the statements in \cite{fjfs}, e.g.\ it takes care of the issues mentioned in part (i) of this remark. However, here we will not present those proofs in detail, and accordingly we omit that description in favor of the slightly more transparent one given above.
\end{remark}

This completes the description of the construction of the connecting manifold
\be
    \Mws:\emptyset\rightarrow\wsD\,.
\ee
By applying the \tftfun\ we obtain a linear map
\be
    \tft(\Mws)=\tftm(\Mws):\CN\rightarrow\tfts(\wsD)\,.
\ee
The correlator $\Cor(\ws)$ of the world sheet \ws\ is obtained by applying this map to $1\In\CN$:
\be\label{Cor_M}
    \Cor(\ws)=\tftm(\Mws)\,1\equiv\tftm(\Mws)\,.
\ee
Here and below, by a slight abuse of notation, we suppress applying $\tftm(\Mws)$ to $1\In\CN$.
This procedure defines a vector in the space of conformal blocks on the double $\wsD$.
It has been proven that the collection of vectors $\Cor(\ws)$, for all world sheets $\ws$, satisfy the factorization constraints\index{factorization!constraints} and invariance under the action of the mapping class group provided by the \tftfun. This is proved in \cite{fjfrs} for world sheets without defects, and is generalized to oriented world sheets with an arbitrary network of defect lines in \cite{fjfs}. Section \ref{sec:TFT_constr} summarizes the proof that the vectors $\Cor(\ws)$ indeed satisfy the consistency conditions just mentioned.
\subsection{Structure constants}\label{sec:TFT_SC}
Below we will be interested in calculating the structure constants, i.e.\ the expansion coefficients in a basis of the space of conformal blocks on the double:
\be\label{cor_exp_gen}
    \Cor(\ws)=\sum_ic_i\,b_i\,,
\ee
where $\{b_i\}$ is basis for $\tfts(\wsD)$. We will realize each $b_i$ as the invariant $\tftm(\M_{b_i})$ of some cobordism $\M_{b_i}\mapdef\emptyset\to\wsD$. Analogously there is a dual basis $\{b_i^*\}$ satisfying $b_j^*(b_i)\eq\delta_{i,j}$, obtained as invariants of cobordisms $\M_{b_i}^*\mapdef\wsD\to\emptyset$. Thus, the structure constants in \eqref{cor_exp_gen}, obtained in a standard manner by composing both sides with an element of the dual basis, can be written as
\be
    c_i=\tftm(\M_{b_i}^*)\tftm(\Mws)=\tftm(\M_{b_i}^*\circ\Mws)\,.
\ee
This is a ribbon graph in a closed three-manifold. Consider a closed three-manifold $\M$ with a ribbon graph contained in a contractible region.
By projecting the ribbon graph in a non-singular manner, as described in section \ref{sec:tftfun}, to $\R^2$ we obtain a morphism $f$, that is an endomorphism of the tensor unit of $\C$. Since $\End(\one)=\CN$, $f$ is given by a complex number $w_f$.
Thus, the invariant of $\M$ is given by:
\be
    \tftm(\M):=\tftm(\M_0)\,w_f\,,
\ee
where $\M_0$ is the manifold $\M$ with an empty ribbon graph. Of particular interest to us are the following two cases:
\be
    \tftm(S^2\times S^1)=1\quand \tftm(S^3)=S_{0,0}\,.
\ee

\ssubject{$n$-point blocks on the sphere} Consider the two-sphere $S^2$ with $n$ marked points, labeled by objects $U_1,...,U_n$. Denote by $B(U_1,...U_n)$ the corresponding space of conformal blocks. A basis for $B(U_1,...U_n)$ with every $U_p=U_{\ia_p}$ simple is given by (see e.g.\ \cite[Chapter 4.4]{BAki})
\be\label{npbasis_Z}
    \blV[\ia_1,...,\ia_n]_{p_1...p_{n-3},
                  \beta_1...\beta_{n-2}}:=\tftm(\widehat \blS(\ia_1,...,\ia_n)_{p_1...p_{n-3},
                  \beta_1...\beta_{n-2}}),
\ee
where $\widehat \blS(\ia_1,...,\ia_n)_{p_1...p_{n-3},\beta_1...\beta_{n-2}}$ is the following ribbon graph in a full three-ball $B^3$:
\eqpic{npbasis}{300}{74}{
  \put(0,155) {$ \widehat\blS(\ia_1,\ia_2,...,\ia_n)_{p_1p_2...p_{n-3},
                  \beta_1\beta_2...\beta_{n-2}}~:= $}
  \put(80,0){
  \scalebox{0.95}{
  \setulen 81
  \put(0,0){\includepic{308}{npbasis}}
  \put(4,133)       {\pl {\ia_1}}
  \put(22,154.8)    {\pl {\ia_2^{}}}
  \put(48,171)      {\pl {\ia_3}}
  \put(73,179.6)    {\pl {\ia_4^{}}}
  \put(181,173)     {\pl {\ia_{n-1}}}
  \put(211,156.2)   {\pl {\ia_n^{}}}
  \put(62,59)       {\pl {\ib_1}}
  \put(76,82)       {\pl {p_1^{}}}
  \put(95,101)      {\pl {p_2^{}}}
  \put(164,126)     {\pl {p_{n-3}^{}}}
  \put(67,72)       {\pl {\beta_1^{}}}
  \put(86.5,92.5)   {\pl {\beta_2^{}}}
  \put(106,111.4)   {\pl {\beta_3^{}}}
  \put(167,137)     {\pl {\beta_{n-2}^{}}}
  } } }
Here $p_1,...p_{n-3}\in\I$ and $\beta_s$, with $s=1,...,n-2$, labels the basis depicted in \eqref{basis_morphspace} of $\Hom(U_{p_s}\oti U_{\ia_{s+1}},U_{p_{s-1}})$,  where we identify $p_0\equiv\ib_1$ and $p_{n{-}2}\equiv\ia_n$.
The dual basis is obtained as the invariants
\be\label{npdualbasis_norm}
\begin{split}
    \blV^*[\ia_1,...,\ia_n]_{p_1...p_{n-3},
                  \bar\beta_1...\bar\beta_{n-2}}
                  :=&\frac1{S_{0,0}}\;%\\&
                  \tftm(\widehat \blS^*(\ia_1,...,\ia_n)_{p_1...p_{n-3},
                  \bar\beta_1...\bar\beta_{n-2}})
\end{split}
\ee
of the ribbon graph
\eqpic{npdualbasis}{300}{74}{
  \put(5,155) {$ \widehat\blS^*(\ia_1,\ia_2,...,\ia_n)_{p_1p_2...p_{n-3},
                  \bar\beta_1\bar\beta_2...\bar\beta_{n-2}}~:= $}
  \put(80,0){
  \scalebox{0.95}{
  \setulen 81
  \put(0,0){\includepic{308}{dualbasis}}
  \put(2,44)       {\pl {\ia_1}}
  \put(20,22.8)    {\pl {\ia_2^{}}}
  \put(45,3)      {\pl {\ia_3}}
  \put(72,-4)    {\pl {\ia_4^{}}}
  \put(180,2)     {\pl {\ia_{n-1}}}
  \put(212,19)   {\pl {\ia_n^{}}}
  \put(62,115)       {\pl {\ib_1}}
  \put(76,94)       {\pl {p_1^{}}}
  \put(95,73)      {\pl {p_2^{}}}
  \put(160,55)     {\pl {p_{n-3}^{}}}
  \put(67,104)       {\pl {\beta_1^{}}}
  \put(86.5,83.5)   {\pl {\beta_2^{}}}
  \put(106,64.4)   {\pl {\beta_3^{}}}
  \put(167,39)     {\pl {\beta_{n-2}^{}}}
  } } }
in $B^3$. Here $p_1,...p_{n-3}\in\I$ and $\bar\beta_s$ labels the basis depicted in \eqref{basis_morphspace} that is dual to the one used in \eqref{npbasis}. That \eqref{npbasis_Z} and \eqref{npdualbasis_norm} are dual follows from duality of the bases \eqref{basis_morphspace} of each space $\Hom(U_{p_s}\oti U_{\ia_{s+1}},U_{p_{s-1}})$ and its dual, and that the invariant $\tftm(S^3)$ of empty $S^3$ equals $S_{0,0}$.
\ssubject{Conformal blocks on the torus}
A basis for the space of conformal blocks on the torus is given by $\{|\chii_i;\torus\rangle|i\In\I\}$, where $|\chii_i;\torus\rangle$ is the invariant of a full torus $\Mt i$ with a single annular ribbon, labeled by $U_i$, running along the non-contractible standard cycle of the full torus. Thus
\be\label{char_inv}
   |\chii_i;\torus\rangle=\tftm(\Mt i)\,,
\ee
with $\Mt i$ given by
\eqpic{char_MF}{150}{45}{
\put(0,45)  {$\Mt i:=$}
\put (50,0){\setulen80
\includepic{304}{tor_bas}
\put(53.5,59.9) {\scriptsize$i$}
}
}
considered as a cobordism from the empty set to $\torus$, i.e.\ in \eqref{char_inv} top and bottom are identified.
The dual basis is spanned by invariants of similar full tori with opposite orientation of the boundary, see \cite[eq. (5.18)]{fuRs4}.
\subsection{The defect two-point function}
As an illustration and as preparation for the proof of bulk factorization we calculate the structure constants of the \emph{defect two-point function\index{defect!two-point function}}. Denote by  $\wsSD Xpq\alpha\beta$ the world sheet which is a two-sphere with a defect line, labeled by $X$, running from the disorder field $\Thets Xpq\alpha$ to the disorder field $\Thete X\pb\qb\beta$. The world sheet $\wsSD Xpq\alpha\beta$ is displayed in the following picture:
\eqpic{WS_twofieldsdefect}{87}{47}{
\put(0,-7){ \setlength\unitlength{1.2pt}
  \put(0,4)     {\INcludepic{41a}
  \put(30,-9)    {\footnotesize$\Thet X\pb\qb\beta$}
  \put(30,78)    {\footnotesize$\Thet Xpq\alpha$}
  \put(30,56)    {\footnotesize$X$}
  } } }
Following the recipe in section \ref{sec:con_man}, the connecting manifold of $\wsSD Xpq\alpha\beta$ is given by
 \Eqpic{Man_twofieldsdefect}{320}{75}{ \setulen 90
  \put(0,75) {$\MwsSD Xpq\alpha\beta ~=$}
  \put(65,0){
  \put(5,0)   {\includepic{32}{cob_2p_def}}
  \put(177.8,100){\pl{\phi_{\!\alpha}^{}}}
  \put(103.8,97){\pl{\phi_{\!\beta}^{}}}
  \put(65,83)   {\pl A}
  \put(107,161)   {\pl {\pb}}
  \put(182,161)  {\pl {p}}
  \put(100,13)    {\pl {\qb}}
  \put(174,13)   {\pl {q}}
  \put(137,82){\pl X}
  } }
Here each vertical rectangle represents a two-sphere. The top and bottom rectangle represent the two components of the boundary of the double of ${\wsSD Xpq\alpha\beta}$, while the shaded rectangle represents the world sheet $\wsSD Xpq\alpha\beta$.

The correlator, given by $\Cor(\wsSD Xpq\alpha\beta)=\tftm(\MwsSD Xpq\alpha\beta)$, is an element in the space of conformal blocks on the double of the 2-holed sphere, that is in the tensor product of two conformal blocks of the type \eqref{npbasis_Z} with $n=2$:
\be\label{basis_TPB}
    \Cor(\wsSD Xpq\alpha\beta)\in B(U_p, U_\pb)\oti B^{-}(U_q,U_\qb)\,.
\ee
Here $B(U_p, U_\pb)$ is the space of blocks on the sphere oriented by the outward  pointing normal, whereas $B^{-}(U_q,U_\qb)$ is the space of blocks on the sphere oriented by the inward pointing normal.
The space of conformal blocks on a sphere with two marked points, labeled by simple objects, is zero unless the two points are related by charge conjugation, in which case it is one-dimensional. This is the reason why we only consider pairs of fields with chiral labels related by charge conjugation in any two point function on the sphere.
Using the basis in \eqref{basis_TPB} of $2$-point blocks on the sphere, the structure constants of $\Cor(\wsSD Xpq\alpha\beta)$ are obtained by gluing two copies of \eqref{npdualbasis} with $n=2$ to the manifold \eqref{Man_twofieldsdefect}. Thus, we have
\be
    \Cor(\wsSD Xpq\alpha\beta)~=~\cdef {X}pq\alpha\beta\; B[p,\pb]\oti B^-[q,\qb]\,,
\ee
with
\eqpic{2pdef}{170}{46}{
  \put(0,51) {$\cdef {X}pq\alpha\beta ~=\dsty \frac1{S_{0,0}}$}
  \put(80,-3){ \Includepic{def_fact_8}
  \put(24,81)      {\scriptsize $p $}
  \put(24,32)      {\scriptsize $\pb$}
  \put(53,81)      {\scriptsize $q $}
  \put(54,32)      {\scriptsize $\qb$}
  \put(37,42.5)    {\small $\beta$}
  \put(37,91)      {\small $\alpha$}
  \put(43.4,58)    {\small $X$}
} }
Here, the factor $1/S_{0,0}$ is a consequence of the normalization \eqref{npdualbasis_norm} and the invariant of empty $S^3$. To arrive at \eqref{2pdef} we have, after projecting the ribbon graph to $\R^2$, used that $A$ is special and symmetric and that $\phi_\alpha$ and $\phi_\beta$ are bimodule morphisms.
The defect two point function, and in particular the matrix $c^{\mathrm{def}}_{pq}$ consisting of structure constants of the defect two-point function with chiral labels $p,q$, will play a central role when we discuss factorization.
\subsection{Equivalences of world sheets}\label{sec:equiv_ws}
There are many world sheets whose decorations differ slightly, but which still have the same correlator. We will say that two world sheets $\ws$ and $\ws'$ with the same underlying surface $\tws$ are \emph{equivalent}\index{equivalence! of world sheets} iff
\be
    \Cor(\ws)=\Cor(\ws')\,.
\ee
Below we will describe some important equivalences of world sheets, see \cite[Section 3.2]{fjfs} for some more details.
\subject{Intrinsic equivalence}\label{def:intr_eq}
Recall from section \ref{sec:con_man} that, when constructing the correlator, network vertices such that at least one edge is labeled by a trivial defect are labeled by structure morphisms of \boundA s and their (bi-)modules. Two world sheets $\ws$ and $\ws'$ are \emph{intrinsically equivalent} if $\tws\eq\tws'$ and the dual triangulations differ only in the edges labeled by trivial bimodules and the network vertices involving at least one \boundA. The physical interpretation of this statement is  that any modification of a world sheet involving only trivial defects leaves the correlators invariant.
The key ingredient in the proof of intrinsic equivalence is the following lemma, see \cite[Lemma A.1 and A.2]{fjfs}:
\begin{lemma}\label{lemma:oneA_morph}
Let $A$ be a special symmetric Frobenius algebra. Denote by $A^{\eps}$, with $\eps\In\{\pm1\}$, the algebra $A$ if $\eps=+1$, and $A^\vee$ if $\eps=-1$. Consider the space
\be
    R:=\Hom(A^{\eps_1}\oti A^{\eps_2}\oti\cdots\oti A^{\eps_p},A^{\eps_1}\oti A^{\eps_2}\oti\cdots\oti A^{\eps_q})\,,
\ee
where $\eps_i\In\{\pm1\}$. Let $f\In R$ be a morphism composed of structure morphisms of $A$ such that when written graphically, $f$ is a connected graph. There is a unique such morphism and it can be written as
\be
    f=(g^{\eps_1}\oti g^{\eps_2}\oti\cdots\oti g^{\eps_p})\circ\psi\circ\varphi\circ(h^{\eps_1}\oti h^{\eps_2}\oti\cdots\oti h^{\eps_q})\,,
\ee
where $h^{+1}=\id_A=g^{+1}$, $g^{-1}=\Phi\In\Hom(A,A^\vee)$ and $h^{-1}=\Phi^{-1}$, with $\Phi$ defined in \eqref{def_Phi}, and $\varphi$ and $\psi$ the following morphisms:
 \Eqpic{frob_p5}{320}{38}{
  \put(10,38)  {$\varphi:=$}
  \put(50,-7){
  \put(10,10)	{\includepic{35}{def_fact_42g}}
  \put(5,0)	{\footnotesize $A$}
  \put(14,0)	{\footnotesize $A$}
  \put(23,0)	{\footnotesize $A$}
  \put(32,0)	{\footnotesize $\cdots$}
  \put(45,0)	{\footnotesize $A$}
  \put(61,0)	{\footnotesize $A$}
  \put(14,98)	{\footnotesize $A$}
  }
  \put(160,38)    {and}
  \put(226,38)  {$\psi:=$}
  \put(262,-7){
  \put(10,0)	{\includepic{35}{def_fact_42h}}
  \put(5,89)	{\footnotesize $A$}
  \put(14,89)	{\footnotesize $A$}
  \put(23,89)	{\footnotesize $A$}
  \put(32,89)	{\footnotesize $\cdots$}
  \put(45,89)	{\footnotesize $A$}
  \put(61,89)	{\footnotesize $A$}
  \put(13,-11)	{\footnotesize $A$}
  } }
\end{lemma}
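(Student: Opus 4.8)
## Proof strategy for Lemma \ref{lemma:oneA_morph}

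The plan is to argue first about \emph{existence} and the normal form, then about \emph{uniqueness}, exploiting that $A$ is a special symmetric Frobenius algebra so that the Frobenius and specialness relations let us normalize any connected graph built from $m,\eta,\Delta,\eps$. The starting observation is that, by the Frobenius property \eqref{frob} together with associativity and coassociativity, any connected planar graph of structure morphisms of a Frobenius algebra can be brought to a ``normal form'' consisting of a single comultiplication-type tree fanning out from the inputs into one internal line, followed by a multiplication-type tree fanning that line back out to the outputs; this is the classical ``any connected $A$-morphism factors through $A$'' statement (see \cite[Proposition 2.\,something]{fuSt} or the graphical argument in \cite{fuRs4}). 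Concretely, first I would use \eqref{frob} repeatedly to pull all coproducts below all products, reducing $f$ to something of the shape $(\text{product tree})\circ(\text{coproduct tree})$; then associativity/coassociativity collapse each tree into the standard iterated (co)product. Specialness, $m\circ\Delta=\beta_A\id_A$ and $\eps\circ\eta=\beta_\one\id_\one$, is what kills any internal loops or bubbles that connectedness would otherwise allow, so the only freedom left is a global scalar, which we absorb into the normalization convention for the (co)products (Remark \ref{rem_symm}(v)). This yields the claimed shape with $\varphi$ the iterated coproduct from one copy of $A$ to $p$ copies and $\psi$ the iterated product from $q$ copies to one copy.

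Next I would handle the duals. Each input or output leg carries a sign $\eps_i\In\{\pm1\}$; a leg with $\eps_i=-1$ is a copy of $A^\vee$ rather than $A$. Here I use that $A$ is symmetric, so by Remark \ref{rem_symm}(i) the module isomorphism $\Phi:=\Phi_{\text r}=\Phi_{\text l}\in\Hom(A,A^\vee)$ of \eqref{def_Phi} intertwines the $A$-actions, and the structure morphisms on $A^\vee$ (transported via $\Phi$) are precisely the ones obtained by pre/post-composing with $\Phi^{\pm1}$. So on each leg I insert $g^{\eps_i}$ (identity if $\eps_i=+1$, $\Phi$ if $\eps_i=-1$) on the output side and $h^{\eps_i}$ ($\id_A$ resp.\ $\Phi^{-1}$) on the input side; what remains sandwiched in the middle is a connected graph of ordinary (untwisted) structure morphisms of $A$, to which the previous paragraph applies, giving $\psi\circ\varphi$. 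The symmetry of $A$ is exactly what makes the choice of $\Phi$ canonical and consistent regardless of which legs carry which sign; without it one would have to track $\Phi_{\text r}$ versus $\Phi_{\text l}$ and the argument would not close.

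For \emph{uniqueness}: any two connected graphs of structure morphisms with the same source and target are related by a finite sequence of the defining relations (associativity, coassociativity, unit, counit, Frobenius, specialness), so both reduce to the same normal form; hence they are equal as morphisms. Equivalently, one can phrase this via the (co)representation-theoretic fact that a connected such morphism is determined by its value on $\eta^{\otimes p}$ composed with $\eps^{\otimes q}$, together with specialness fixing the scalar. The main obstacle I expect is \emph{bookkeeping rather than conceptual}: carefully checking that the normalization constants produced by each application of specialness combine to the single prescribed scalar (namely $1$, by the convention $\beta_A=1$, $\beta_\one=\dim(A)$ in Remark \ref{rem_symm}(v)), and that the $\Phi^{\pm1}$ insertions on the legs do not interfere with the reduction of the interior graph — this needs the compatibility of $\Phi$ with $m$ and $\Delta$, which is again where the symmetric-Frobenius hypothesis does the work. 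This is the routine-but-delicate step; the rest is standard graphical calculus, and I would refer to \cite[Lemma A.1, A.2]{fjfs} and \cite{fuSt} for the detailed diagram chases rather than reproduce them here.
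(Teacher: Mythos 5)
The paper does not actually prove this lemma: it states it and defers entirely to \cite[Lemma A.1 and A.2]{fjfs}, which is also where your proposal ultimately points. Your outline is the standard and correct argument --- use associativity, coassociativity, the (co)unit axioms, the Frobenius relation \eqref{frob} and specialness to reduce any connected graph to a normal form that factors through a single tensor factor $A$, and peel off the dual legs with $\Phi^{\pm1}$, which is well defined precisely because $A$ is symmetric so that $\Phi_{\text r}=\Phi_{\text l}$. Uniqueness then follows because every connected graph reduces to that one normal form, and the scalar bookkeeping you worry about is in fact immediate: with the normalization $\beta_A=1$ each application of specialness inside a connected graph is $m\circ\Delta=\id_A$, and $\eps\circ\eta$ can only arise from a closed component, which connectedness excludes.

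Two slips are worth fixing. First, you have $\varphi$ and $\psi$ backwards and, more importantly, the reduction direction backwards: in the lemma $\varphi$ is the iterated \emph{product} $A^{\oti p}\to A$ and $\psi$ the iterated \emph{coproduct} $A\to A^{\oti q}$, so the normal form is fan-in first, then fan-out, through a single middle strand. Pulling ``all coproducts below all products'' as you describe produces the opposite shape, $(\text{products})\circ(\text{coproducts})$, whose middle layer has many strands; you need the Frobenius move in the other direction (products below coproducts), after which connectedness plus specialness collapse the middle to one copy of $A$. Second, the ``equivalently'' remark that a connected morphism is determined by its value on $\eta^{\oti p}$ composed with $\eps^{\oti q}$ is false as stated --- that composite is a single scalar and cannot determine $f$ --- so it should be dropped; the uniqueness argument must rest on the normal-form reduction itself. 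Neither issue is fatal, but both would need to be repaired in a written-out proof.
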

\begin{remark}
The notion of intrinsic equivalence used here differs from the one in \cite{fjfs}, where intrinsic equivalence only refers to the auxiliary datum of a \emph{defect graph} used in that paper. Here, intrinsic equivalence is a statement about the world sheets and thus includes the additional requirement $\tws\eq\tws'$, c.f.\ \cite[Lemma 3.4]{fjfs}.
\end{remark}
\subject{Fusion of defect lines}
As already mentioned, defect lines may be fused, either with another defect line or with a boundary component. Let us make this concrete in the TFT-construction: Assume that $\ws'$ and $\ws$ coincide outside the two regions $D$ and $D'$ displayed in the following picture:
 \Eqpic{locfusion}{320}{46}{ \setulen80
  \put(20,-12){
  \put(0,0)	{\includepic{304}{def_fact_24a}}
  \put(-11,130)	{\fbox{$D$}}
  \put(73,71)	{\footnotesize $Y$}
  \put(93,61)	{\footnotesize $X$}}
  \put(0,-12){
  \put(250,0)	{\includepic{304}{def_fact_24b}}
  \put(239,130)	{\fbox{$D'$}}
  \put(244,99)	{\footnotesize $Y$}
  \put(252,25)	{\footnotesize $X$}
  \put(386,102)	{\footnotesize $Y$}
  \put(390,45)	{\footnotesize $X$}
  \put(310,79)	{\footnotesize $X\otiB Y$}
  \put(298,64)	{\footnotesize $r$}
  \put(348,66)	{\footnotesize $e$}
  } }
Here $X$ and $Y$ are $A$-$B$- and $B$-$C$-bimodules, respectively. If the morphisms $e$ and $r$, labeling the two network vertices in $D'$, are the embedding and restriction morphisms in \eqref{TP_dec}, the two world sheets are equivalent.

Analogously a defect line labeled by an $A$-$B$-bimodule $X$ may be fused with a boundary component labeled by some left $B$-module $M$. This equivalence is in fact covered by defect fusion, by considering $M$ as an $A$-$\one$-defect.
\subject{Removal of internal vertices}
Consider a defect line connecting two network vertices $v$ and $w$ of defect lines. Such a defect line can be removed\index{removing internal vertex} provided that the remaining defect lines are joined at a single vertex $v'$. To make this statement more concrete, assume that two world sheets $\ws$ and $\ws'$ coincide outside the following two regions:
  \Eqpic{removeedgefig}{320}{53}{\setulen80
  \put(20,0){
  \put(0,0)	{\includepic{304}{def_fact_32a}}
  \put(-11,130)	{\fbox{$D$}}
  \put(30,138)	{\footnotesize $X^v_1$}
  \put(-7,100)	{\footnotesize $X^v_2$}
  \put(7,15)	{\footnotesize $X^v_k$}
  \put(47,80)	{\footnotesize $v$}
  \put(70,93)	{\footnotesize $X_{vw}$}
  \put(103,89)	{\footnotesize $w$}
  \put(139,45)  {\footnotesize $X_1^{w}$}
  \put(135,103)	{\footnotesize $X_{m-1}^{w}$}
  \put(105,135)	{\footnotesize $X_m^{w}$}}
  \put(270,0)	{\includepic{304}{def_fact_28b}}
  \put(252,130)	{\fbox{$D'$}}
  \put(287,138)	{\footnotesize $X_{m+1}'$}
  \put(248,100)	{\footnotesize $X_{m+2}'$}
  \put(271,15)	{\footnotesize $X_{k+m}'$}
  \put(409,45)	{\footnotesize $X_1'$}
  \put(405,103)	{\footnotesize $X_{m-1}'$}
  \put(375,135)	{\footnotesize $X_{m}'$}
  \put(339,73)	{\footnotesize ${v'}$}
  }
The world sheet $\ws'$ is equivalent to $\ws$ if the vertex $v'$ is labeled by a suitable composition of the two morphisms labeling the vertices $v$ and $w$. Recalling Remark \ref{rem_fus_vert} this statement is almost a tautology.

Similarly, if a defect line $X$ is joining a defect field and a two-valent network vertex, then there is an equivalent vertex, where the defect line together with the network vertex and defect field is replaced by a single defect field. Again the labeling of the latter is a composition of the morphisms labeling the original configuration.

A boundary component\label{rem_bond} stretching between two network vertices can be removed\index{removing boundary component} by completely analogous manipulations.
\subject{Collapsing bubbles}
\index{collapsing defect bubble}A defect configuration forming a "defect bubble" involving a single defect field may be replaced by a single defect field. Assume again that $\ws$ and $\ws'$ coincide outside the following regions:
 \Eqpic{defbubblefig}{320}{54}{ \setulen80
  \put(20,0)	{\includepic{304}{def_fact_32c_1}}
  \put(1,130)	{\fbox{$D$}}
  \put(-6,100)	{\footnotesize $X_{1}$}
  \put(65,93)	{\footnotesize $X_{21}$}
  \put(103,56)	{\footnotesize $X_{23}$}
  \put(39,53)	{\footnotesize $X_{31}$}
  \put(135,103)	{\footnotesize $X_{2}$}
  \put(44,80)	{\footnotesize $v_1$}
  \put(100,90)	{\footnotesize $v_2$}
  \put(81,29)	{\footnotesize $v_3$}
  \put(260,0)	{\includepic{304}{def_fact_32d_1}}
  \put(249,130)	{\fbox{$D'$}}
  \put(254,100)	{\footnotesize $X_{1}$}
  \put(395,103)	{\footnotesize $X_{2}$}
  \put(326.6,76){\footnotesize ${w}$}
  }
Here, $X_1$ and $X_2$ are $A$-$C$-bimodules, $X_{23}$ and $X_{31}$ are $B$-$C$ bimodules and $X_{21}$ is an $A$-$B$ bimodule.
Let the vertices in $D$ be labeled by $\kappa_{v_1}\In\HomP_{A|C}(X_{21}\linebreak{\otiB} X_{31},X_1)$, $\kappa_{v_2}\In\HomP_{A|C}(X_2,X_{21}\oti_{\!B\,} X_{23})$ and $\phi_{v_3}\In\Hombc{U_i\otip X_{23}\linebreak{\otim} U_j}{X_{31}}$, respectively. The world sheets are equivalent if $w$ in $D'$ is labeled by the morphism
\eqpic{collapsebubble}{140}{68}{
  \put(0,70)   {$\phi_w~=$}
  \put(50,10)	{\Includepic{def_fact_32f_1}
  \put(2,-10)	{\footnotesize $U_i$}
  \put(72,-10)	{\footnotesize $U_j$}
  \put(45,-10)	{\footnotesize $X_{2}$}
  \put(51.5,45)	{\begin{turn}{90}\footnotesize $X_{23}$\end{turn}}
  \put(44,29.5)	{\footnotesize $\varkappa_{v_2}$}
  \put(55,82)	{\footnotesize $\phi_{v_3}$}
  \put(10,68)	{\footnotesize $X_{21}$}
  \put(40,109)	{\footnotesize $\varkappa_{v_1}$}
  \put(40,137)	{\footnotesize $X_{1}$}
  \put(63,96)	{\footnotesize $X_{31}$}
  } }
The morphism $\phi_w$ is the same as the one obtained from the region $D$. Note that $\phi_w\In\Homac{U_i\oti^+X_2\oti^-U_j}{X_1}$ and is accordingly an adequate morphism to label the defect field at $w$.

There is again an analogous result involving boundaries: Consider the situation that two end-points of a defect without insertions end at trivalent network vertices on the same boundary component, such that at most one boundary field is located in between the two network vertices. In this situation the bubble can be collapsed\index{collapsing boundary bubble}, giving rise to just a boundary field by arguments analogous to the ones above. This situation is treated
analogously as the "defect bubble" case involving only defects by considering the case $C=\one$.

\section{Constraints in the TFT-construction}\label{sec:TFT_constr}
In this section we outline the proofs in \cite{fjfs} that mapping class group invariance and the factorization constraints are satisfied by the correlators, obtained via the TFT-construction, of oriented world sheets containing defects. We will describe bulk factorization in the presence of defects in some detail, since bulk factorization across defect lines will be a crucial ingredient in the next chapter. Boundary factorization across defect lines on the other hand, is not new. It is covered by the results in \cite{fjfrs} by using fusion of defect lines with a boundary component. Thus we will only briefly discuss boundary factorization.
\subsection{Bulk factorization}\label{sec:bulkfact}
Bulk factorization associates to a world sheet $\ws$ a collection $\{\ws'\}$ of world sheets, with topology different from $\ws$. The factorization constraints require that the correlator of \ws\ can be written as a linear combination in which we sum over the correlators of the world sheets in $\{\ws'\}$. Bulk factorization takes place along a circle $\Scut$, the \emph{cutting circle}\index{cutting!circle}, embedded in a cylindrical region of \ws. We consider the situation that the circle $\Scut$ is crossed by a single $A$-$A$-defect $X$. By using fusion of defects this situation covers also the case that there is a finite number of defects running parallel to each other and crossing the circle $\Scut$.

\subject{Factorization of the world sheet}
In order to obtain the world sheets $\ws'$ related to \ws\ via factorization we proceed in two steps:
\begin{enumerate}
    \item Cut \ws\ along $\Scut$. This creates two new circular boundary components.
    \item Cut the world sheet $\wsSD Xpq\alpha\beta$ of the defect two-point function, displayed in \eqref{WS_twofieldsdefect}, along the equator. The so-obtained half-spheres are glued to the circular boundary components created in step 1 such that two-orientations and core orientations of defect lines match.
\end{enumerate}
The procedure is illustrated in the following picture (compare with the picture \eqref{WS_twofieldsdefect} of $\wsSD Xpq\alpha\beta$):
\Eqpic{dis_sphere}{320}{20}{\setulen 91
  \put(0,10){
  \put(0,0)     {\includepic{34}{def_fact_29a}}
  \put(48,7)   {\begin{turn}{90}\scriptsize$\Scut$\end{turn}}
  \put(128,20)  {$\longmapsto$}
  \put(172,0)   {\includepic{34}{def_fact_29b}}
  \put(77,33)   {\scriptsize$X$}
  \put(187,33)  {\scriptsize$X$}
  \put(347,33)  {\scriptsize$X$}
  \put(245,22)  {\scriptsize$\Thets Xpq\alpha$}
  \put(276,22)  {\scriptsize$\Thete X\pb\qb\beta$}
  } }
This way a collection $\{\ws'\}=\{\wsf pq\alpha\beta\}$ of world sheets is obtained. There is one world sheet for each pair $(\Thets Xpq\alpha,\Thete X\pb\qb\beta)$ of disorder fields having non-zero defect two-point function on the sphere. We will refer to the world sheets in the collection $\{\wsf pq\alpha\beta\}$ as the \emph{factorized world sheets}.

\subject{The gluing homomorphism}
Note that $\wsf pq\alpha\beta$ differs from \ws\ in two respects. Factorization changes the topology of the world sheet and increases the number of marked points. Consequently, the two spaces $\tfts(\widehat\ws)$ and $\tfts(\widehat{\wsf pq\alpha\beta})$ of conformal blocks do not coincide. Thus, $\Cor(\ws)$ cannot immediately be expanded in terms of the correlators $\Cor(\wsf pq\alpha\beta)$. However, there is a linear map which we call \emph{gluing homomorphism}
\be\label{defGLL}
  \GLL pq:\quad \tfts(\Hatwsf pq) \to \tfts(\widehat\ws) \,
\ee
that relates the two spaces. We will write $\Hatwsf pq\equiv\widehat{\wsf pq\alpha\beta}$ since the double of $\wsf pq\alpha\beta$, and accordingly the associated space of conformal blocks, does not depend on the multiplicity labels $\alpha$ and $\beta$. A gluing homomorphism was obtained already in \cite{fjfrs}, where factorization across a trivial defect was proven, see \cite[eq. (2.49)]{fjfrs}. This is the relevant gluing homomorphism also in the case of non-trivial defects. The reason is that what is relevant for the gluing homomorphism is the double of the world sheet, and the double of $\wsf pq\alpha\beta$ indeed coincides with the double of the cut world sheet considered in \cite{fjfrs}. However, we will use a slightly different normalization of the gluing homomorphism than the one considered in \cite{fjfrs}. The construction of $\GLL pq$ uses a basis $\{\bar\lambda^{k\kb}\}$ of the one-dimensional space $\Hom(\one,U_k\oti U_\kb)$ for each $k\in\I$. We will use
\eqpic{lambdachoice}{240}{21}{ \put(0,-9){
  \put(15,0){ \Includepic{def_fact_44a}
  \put(21,66)        {\scriptsize $k $}
  \put(39,66)        {\scriptsize $\kb $}
  \put(14,17)        {\scriptsize $\bar\lambda^{\kb k}$}
  }
  \put(80,34) {$=~ \theta_k\;\RR {\kb} k0\nl\nl \,\bar\lambda^{\kb k}
               ~=~ \theta_k^{-1}\,\Rm {\kb} k0\nl\nl \,\bar\lambda^{\kb k} $}
  } }
instead of $\bar\lambda^{k\kb}$, c.f. Remark 2.4 of \cite{fjfs}.

\subject{The factorization identity}
It is easy to see that $\GLL pq(\Cor(\wsf pq\alpha\beta))$ and $\Cor(\ws)$ lie in the same space. In fact, there is
\cite[Definition 5.1.13(iv)]{BAki} an isomorphism
  \be
  \bigoplus_{p,q} \GLL pq:\quad
  \bigoplus_{p,q} \tfts(\Hatwsf pq)\stackrel{\cong}{\longrightarrow} \tfts(\widehat\ws)
  \ee
of vector spaces. However, this does not guarantee the existence of a factorization identity, since the collection $\{\Cor(\wsf pq\alpha\beta)\}$ of correlators is far from a basis of $\bigoplus_{p,q} \tfts(\Hatwsf pq)$. Nonetheless, we have indeed the following statement from \cite{fjfs} about bulk factorization\index{factorization!bulk}:
\begin{thm}\label{thm:bulkfac}
The correlator $C(\ws)$ for an oriented world sheet $\ws$ can be expressed in
terms of the correlators $C(\wsf pq\alpha\beta)$ of the factorized world sheets as
  \be\label{fact_rel}
  \Cor(\ws) = \sum_{p,q\in\I}\;\sum_{\alpha\in H^{A,X}_{p,q}}
  \sum_{\beta\in H^{X,A}_{\pb,\qb}}\, \Cfact Xpq\alpha\beta\; \GLL pq(\Cor(\wsf pq\alpha\beta)) \,,
  \ee
with $\Cfact Xpq\alpha\beta$ given by
\be
    \Cfact Xpq\alpha\beta=\dim(U_p)\,\dim(U_q)\;
  \cdefinv {X}\pb\qb\alpha\beta\,.
\ee
\end{thm}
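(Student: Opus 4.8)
The plan is to follow the strategy of \cite{fjfs}, working throughout with the TFT invariant of connecting manifolds: recall that $\Cor(\ws)\eq\tftm(\Mws)$, and likewise $\Cor(\wsf pq\alpha\beta)\eq\tftm(\M')$ for the connecting manifold $\M'$ of the factorized world sheet. First I would use the equivalences of world sheets collected in Section~\ref{sec:equiv_ws} — intrinsic equivalence and fusion of defect lines — to bring the cylindrical region around the cutting circle $\Scut$ into a standard form, so that the $A$-$A$-defect $X$ crosses $\Scut$ exactly once and, inside a neighbourhood of the lift of $\Scut$ in $\Mws$, the ribbon graph consists just of a short piece of the $X$-ribbon threaded through the doubled cutting region. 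Since the two sides of \eqref{fact_rel} involve world sheets that agree outside this region, it then suffices to prove the corresponding identity for these local pieces of ribbon graph.

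The central step is to cut $\Mws$ along the extended surface $E$ obtained by doubling the cutting region. Here $E$ is a two-sphere carrying marked points coming from the two lifts to the double of the point $X\cap\Scut$, so that the associated state space $\tfts(E)$ is exactly the one appearing in \eqref{basis_TPB}, namely a tensor product of spaces of two-point blocks on the sphere. By the gluing axiom of the \tftfun\ (see \cite[Definition~5.1.13(iv)]{BAki}) the invariant $\tftm(\Mws)$ is recovered by composing the invariants of the two pieces of $\Mws\setminus E$ along $E$; concretely, this inserts a resolution of the identity of $\tfts(E)$. For the relevant basis I would take the $2$-point blocks \eqref{npbasis_Z} together with their duals \eqref{npdualbasis_norm} — the sum over this basis produces the sum over $p,q\In\I$ — while the remaining freedom, namely the morphism at which the interrupted $X$-ribbon is reattached, is absorbed by using the dual bases \eqref{basis_DF} and \eqref{basis_DFd} of the defect-field spaces $H^{A,X}_{p,q}$ and $H^{X,A}_{\pb,\qb}$; this is where the sums over $\alpha$ and $\beta$ and the non-degenerate pairing \eqref{DF_dual} come in. One of the two pieces of $\Mws\setminus E$ is, after the insertion, precisely the connecting manifold of $\wsf pq\alpha\beta$ glued to $E$, hence contributes $\GLL pq(\Cor(\wsf pq\alpha\beta))$; only the other piece, a ribbon graph in a three-ball, has to be evaluated, and its value is the coefficient of that summand.

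It remains to identify this coefficient with $\Cfact Xpq\alpha\beta$. I would recognise the ribbon graph in the three-ball as, up to normalisation, the inverse of the defect two-point function: closing the $U_p$- and $U_q$-ribbons and using that $A$ is special and symmetric and that the disorder-field morphisms are bimodule morphisms — the same manipulations that produced \eqref{2pdef} — one is left with a graph whose value, after collecting all normalisation factors (the $1/S_{0,0}$ of \eqref{npdualbasis_norm}, the twist factors in the choice \eqref{lambdachoice} of $\bar\lambda$, the factor $\dim(Y)$ from \eqref{DF_dual} with $Y\eq A$, and $\tftm(S^3)\eq S_{0,0}$), equals $\dim(U_p)\dim(U_q)$ times the $(\alpha,\beta)$-entry of the inverse of the matrix $\cdef Xpq\alpha\beta$ of structure constants with chiral labels $p,q$. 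This is exactly $\Cfact Xpq\alpha\beta\eq\dim(U_p)\dim(U_q)\,\cdefinv X\pb\qb\alpha\beta$. Finally I would check, once more via the equivalences of Section~\ref{sec:equiv_ws}, that the identity does not depend on the auxiliary choices (dual triangulation, exact position of $\Scut$).

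The main obstacle I anticipate is twofold. The conceptual skeleton is merely ``cut along a sphere and insert a complete set of states'', but (i) one must set up the cutting surface correctly in the presence of the defect and check that the defect-field spaces $H^{A,X}_{p,q}$ and $H^{X,A}_{\pb,\qb}$, with the pairing \eqref{DF_dual}, really are the right index data for the resolution of the identity — this is the genuinely new ingredient compared with the defect-free bulk factorization of \cite{fjfrs}; and (ii) the evaluation of the remaining three-ball graph requires careful bookkeeping of all normalisation conventions so that the coefficient comes out as $\cdefinv X\pb\qb\alpha\beta$ rather than a rescaling of it. The rest — the reduction to a local statement, the application of the gluing axiom, and the independence of choices — is routine given the material assembled earlier in the text.
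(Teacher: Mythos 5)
Your geometric setup is not the one that works, and the gap it creates is not cosmetic. In the connecting manifold $\Mws$ the locus associated to the cutting circle $\Scut$ is the annulus $Y_S=\pi_\ws^{-1}(\wsemb(\Scut))$, properly embedded with boundary on $\wsD$; there is no embedded two-sphere along which $\Mws$ splits into ``connecting manifold of $\wsf pq\alpha\beta$'' plus ``a ribbon graph in a three-ball''. The correct statement is that $\M_\ws$ and $\MGL\cir\M_{\wsf pq\alpha\beta}$ can both be written as one and the same cobordism $\MST$ composed with a cobordism $\emptyset\to\ExtTdef X$, where $\ExtTdef X$ is a \emph{torus} with two marked points labeled by $X$: the discrepancy is confined to a solid torus, and the two fillings $\TORS$ and $\TOT$ differ not only in their ribbon graphs but also by a modular $S$-transformation of their boundary. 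The space \eqref{basis_TPB} of two-point blocks on spheres that you propose to cut along is the target space of the defect two-point function on the glued-in caps; it is not the state space of any cutting surface inside $\Mws$, so the ``resolution of the identity of $\tfts(E)$'' has no surface $E$ to attach to.

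Even with the correct cutting surface, the resolution-of-the-identity step is exactly what cannot be taken for granted, and you only flag it rather than supply it. The vectors $\tftm(\TOT)$, indexed by $p,q\In\I$ and by bases of $H^{A,X}_{p,q}$ and $H^{X,A}_{\pb,\qb}$, are far from a basis of $\tfts(\ExtTdef X)$, so no gluing axiom hands you an expansion of $\tftm(\TORS)$ in these vectors. The paper's proof (Proposition \ref{surg_prop}) first constructs a projector $\proj=\tftm(\projMF)$ on $\tfts(\ExtTdef X)$, shows $\tftm(\TORS)\In\im(\proj)$ using the representation properties, that $A$ is special symmetric Frobenius, and Lemma \ref{lemma:oneA_morph}, and then shows that the $\tftm(\TOT)$ form a basis of $\im(\proj)$ (the bimodule generalization of Lemma 5.2 of \cite{fjfrs}); only after that does an expansion with well-defined coefficients exist. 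Your final step --- pairing with the dual vectors $\tftm(\instord Xpq\alpha\beta)$ and recognizing $\dim(U_p)\dim(U_q)$ times the inverse of the defect two-point function, with the bookkeeping of $S_{0,0}$, $\dim(Y)$ and the twist factors --- does match the paper's coefficient computation, but it rests on the two missing pieces above, which are the actual content of the proof.
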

The rest of this subsection is devoted to the proof of Theorem \ref{thm:bulkfac}. The proof follows quite closely the proof in \cite{fjfrs} of bulk factorization without defects. Here we will summarize the proof and indicate how the proof in \cite{fjfrs} is modified or generalized. However, there are some technicalities, which will also not be discussed in any detail here, see e.g.\ Lemma A.4 and Lemma A.5 of \cite{fjfs}.
\subsection{Proof of bulk factorization}\label{sec:bulk_proof}
In order to prove the bulk factorization identity \eqref{fact_rel} it is instructive to analyze the action of the gluing homomorphism \eqref{defGLL} in some more detail. The gluing homomorphism is obtained as the invariant of the \emph{gluing cobordism}
\be
  \MGL \equiv \MGL{}_{,pq}\colon\quad \Hatwsf pq \to \widehat\ws \,.
\ee
Thus,
\be
    \GLL pq(\Cor(\wsf pq\alpha\beta))=\tftm(\MGL \circ \M_{\wsf pq\alpha\beta})\,.
\ee
While both are cobordisms from $\emptyset$ to $\tfts(\ws)$, the cobordisms $\M_\ws$ and
$\MGL \circ \M_{\ws'}$ do not coincide, not even as topological manifolds. However, the discrepancy between the two manifolds, including the ribbon graphs, is entirely confined in a suitably embedded solid torus. Explicitly, $\M_\ws$ and $\MGL \circ \M_{\ws'}$ can be realized as
\be\label{ws_torus}
  \M_\ws = \MST \circ \TORS
  \ee
and
  \be\label{ws_fact_torus}
  \MGL \circ \M_{\ws'} = \MST \circ \TOR\,.
  \ee
Here
\be
    \MST:\ExtTdef X\to\wsD
\ee
is a cobordism from $\ExtTdef X$, which is the torus $T^2$ with two marked points labeled by $X$, to the double of the original world sheet. $\TOR$ and $\TORS$ are both cobordisms from $\emptyset$ to $\ExtTdef X$.
However $\TOR$ and $\TORS$ differ with respect to the embedded ribbon graph and, informally, by a modular S-transformation of their boundary, see (5.3) and (5.9) of \cite{fjfrs} and section 2.2 of \cite{fjfs}. To make the latter statement concrete, consider the mapping cylinder $\M_\mathrm{S}$ over $T^2$ of a homeomorphism in the class of the S-transformation of $T^2$. Then $\TORS$ and $\M_\mathrm{S}\circ\TOR$ differ, apart from in the embedded ribbon graphs, by an orientation preserving homeomorphism that restricts to the identity on $\partial\TORS=\partial(\M_\mathrm{S}\circ\TOR)$.
Since the derivation of the cobordisms $\TORS$ and $\TOR=\TOT$ is completely analogous to the derivation of the cobordisms appearing\footnote{In eq.\ (5.3) and (5.9) of \cite{fjfrs}, the cobordisms $\MST \circ \TORS$ and $\MST \circ \TOR$ are displayed for the case $X=A$.} in (5.3) and (5.9) in \cite{fjfrs}, we omit the derivation here and only give the result:
\eqpic{instor}{250}{128}{
  \put(0,128) {$\TOT ~=$}
  \put(80,22){ \Includepic{def_fact_1}
  \put(19,199)    {\tiny $1$}
  \put(5,194)     {\tiny $2$}
  \put(140,210)   {\scriptsize $q$}
  \put(131,186)   {\scriptsize $\qb$}
  \put(120,80)    {\scriptsize $q$}
  \put(109.7,128) {\scriptsize $\pb$}
  \put(119,111)   {\scriptsize $p$}
  \put(91.2,91.6) {\small $\alpha$}
  \put(130.6,144) {\small $\beta$}
  \put(11,141)    {\begin{turn}{-27}\scriptsize $(X,+)$\end{turn}}
  \put(10,90)     {\begin{turn}{-27}\scriptsize $(X,-)$\end{turn}}
  \put(147,190)   {\tiny \begin{turn}{90}$\bar{\lambda}^{\qb q}$\end{turn}}
  \put(143,115)   {\tiny \begin{turn}{90}$\bar{\lambda}^{\pb p}$\end{turn}}
  } }
  and
  \eqpic{deftor}{240}{103}{
  \put(17,109) {$\TORS ~=$}
  \put(75,0){ \Includepic{def_fact_2}
  \put(18,197)    {\tiny $1$}
  \put(4,192)     {\tiny $2$}
  \put(11,150)    {\begin{turn}{-27}\scriptsize $(X,+)$\end{turn}}
  \put(10,64)     {\begin{turn}{-27}\scriptsize $(X,-)$\end{turn}}
  \put(58,98)     {\scriptsize $A$}
  \put(101,90)    {\scriptsize $X$}
  } }
Here, the cobordisms are displayed in "wedge presentation". This means that top and bottom, as well as horizontal white faces, are to be identified, see \cite[Section 5.1]{fjfrs} for more details. In \eqref{instor}, $\alpha$ and $\beta$ label the bases, defined in \eqref{basis_DF}, of the spaces  $\Homaa{U_p\,\otip\!A\otim U_q}X$ and $\Homaa{U_\pb\,\otip \!X\otim U_\qb}A$ respectively.

Applying the \tftfun\ to the cobordisms \eqref{ws_torus} and \eqref{ws_fact_torus} we get
\be\label{C-tor}
  \Cor(\ws) = \tftm(\MST) \circ \tftm(\TORS)
  \ee
and
  \be\label{Cfact-tor}
  \GLL pq(\Cor(\wsf pq\alpha\beta)) = \tftm(\MGL) \circ \tftm(\M_{\ws'})
  = \tftm(\MST) \circ \tftm(\TOT)\,.
  \ee
for the correlators.
Thus the factorization identity \eqref{fact_rel} amounts to a relation between the invariant of $\TORS$ and the invariants of $\TOT$ for all possible labels of $p,q,\alpha,\beta$. Indeed we have the following result from \cite{fjfs}:
\begin{proposition}\label{surg_prop}
The invariant $Z(\TORS )$ can be expanded as
  \be\label{surg_DF}
  Z(\TORS)=\sum_{p,q\in\I}\,\sum_{\alpha\in H^{A,X}_{p,q}}
  \sum_{\beta\in H^{X,A}_{\pb,\qb}}\, \Cfact Xpq\alpha\beta\; Z(\TOT) \,,
  \ee
where the coefficients $C_{pq,\alpha\beta}$ are given by
  \be\label{coeff_value}
  \Cfact Xpq\alpha\beta =
  \dim(U_p)\;\dim(U_q)\;\cdefinv {X}\pb\qb\alpha\beta \,.
  \ee
\end{proposition}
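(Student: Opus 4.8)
The plan is to prove Proposition \ref{surg_prop} by a surgery computation on the solid torus $\TORS$, following closely the corresponding argument in \cite[Section 5]{fjfrs} and making the modifications needed to accommodate the defect ribbon labeled by the bimodule $X$. The key point is that both $\TORS$ and $\TOT$ are ribbon graphs in (wedge presentations of) the same solid torus $\ExtTdef X$, so that the identity \eqref{surg_DF} is purely a statement in the modular tensor category $\C$: it says that one ribbon graph in $T^2\times[0,1]$ can be rewritten as a linear combination of the others. First I would apply the standard completeness relation on the $A$-ribbon that runs along the cut in $\TORS$: insert a complete set of morphisms using the dual bases of $\Homaa{U_p\otip A\otim U_q}X$ and $\Homaa{U_\pb\otip X\otim U_\qb}A$ introduced in \eqref{basis_DF} and \eqref{basis_DFd}, together with the non-degenerate pairing \eqref{DF_dual}. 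Concretely, one cuts the $A$-ribbon and the $X$-ribbon of $\TORS$ transversally and resolves the identity on $\Hom(U_p\oti A\oti U_q,X)$ (restricted to bimodule morphisms, using that $A$ is symmetric special Frobenius), which produces a sum over $p,q$ and multiplicity labels $\alpha,\beta$ of ribbon graphs that, after isotopy, are precisely $\TOT$ up to a scalar.

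The second step is to identify that scalar. After the resolution of identity, the "extra" piece of ribbon graph that has been split off forms a closed ribbon graph in $S^3$ (or a contractible ball), namely essentially the defect two-point configuration of \eqref{Man_twofieldsdefect} evaluated with the dual basis elements; by the computation already carried out in \eqref{2pdef} this evaluates to (a normalization times) the matrix element $\cdef Xpq\alpha\beta$ of the defect two-point function. Because we resolved the identity using the \emph{dual} bases, what actually appears is the inverse matrix $\cdefinv X\pb\qb\alpha\beta$; the factors $\dim(U_p)\dim(U_q)$ arise from the normalization of the coevaluation morphisms $\bar\lambda^{\pb p}$, $\bar\lambda^{\qb q}$ chosen in \eqref{lambdachoice} and from the $\theta_k$, $\RR{}{}{}\nl\nl$ factors there, combined with the $1/S_{0,0}$ in \eqref{2pdef} cancelling against $\tftm(S^3)=S_{0,0}$. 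Bookkeeping these scalars carefully yields the coefficient \eqref{coeff_value}. Once Proposition \ref{surg_prop} is established in this form, the factorization identity \eqref{fact_rel} of Theorem \ref{thm:bulkfac} follows immediately: apply $\tftm(\MST)\circ(-)$ to both sides of \eqref{surg_DF} and use \eqref{C-tor} and \eqref{Cfact-tor} to translate the equality of invariants $Z(\TORS)$ and $\sum Z(\TOT)$ into the equality of correlators $\Cor(\ws)$ and $\sum \Cfact Xpq\alpha\beta\,\GLL pq(\Cor(\wsf pq\alpha\beta))$.

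I expect the main obstacle to be the precise normalization bookkeeping and the verification that the resolution of identity on $\Hom(U_p\oti A\oti U_q,X)$ can indeed be restricted to the subspace $\Homaa{U_p\otip A\otim U_q}X$ of bimodule morphisms. The latter is exactly the content of the argument sketched after \eqref{DF_dual} (generalizing \cite[eq.\ (C.16)]{fjfrs}): one must check, using that $A$ is symmetric special Frobenius and the representation properties of the $X$-action, that the trace pairing $\mathrm{Tr}(\psi^\alpha_{pq}\circ\bar\psi^\beta_{pq})$ restricts non-degenerately to the bimodule-morphism subspaces on both sides, so that the completeness relation closes within those subspaces. The other technical subtlety, handled in \cite{fjfs} by auxiliary lemmas such as Lemma A.4 and A.5, is that the two solid tori carrying $\TORS$ and $\TOT$ are glued into $\wsD$ via $\MST$ in a way that is only well-defined up to the $S$-transformation relating $\TORS$ to $\M_{\mathrm S}\circ\TOT$; one must check that this $S$-transformation is already accounted for in the construction of $\MST$ so that no spurious anomaly or extra $S$-matrix factor survives. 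I would treat these two points as the heart of the proof and refer to \cite{fjfrs,fjfs} for the remaining routine diagrammatic manipulations.
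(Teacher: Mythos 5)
Your overall architecture (establish an expansion of $Z(\TORS)$ in the $Z(\TOT)$, compute the coefficients by evaluating a closed ribbon graph in $S^3$, then apply $\tftm(\MST)$ to deduce Theorem \ref{thm:bulkfac}) matches the paper, and you correctly flag the restriction of the trace pairing to bimodule morphisms as a point needing care. But the mechanism you propose for the first and central step does not work. You claim that cutting the $A$- and $X$-ribbons of $\TORS$ and resolving the identity produces ribbon graphs that ``after isotopy, are precisely $\TOT$ up to a scalar.'' This cannot be right: as the paper stresses, $\TORS$ and $\TOT$ differ (beyond their ribbon graphs) by a modular $S$-transformation of the boundary torus, i.e.\ their ribbons wind around dual cycles of the solid torus, and no isotopy inside the solid torus converts one into the other. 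Moreover, the object $U_p\oti A\oti U_q$ on which you resolve the identity does not occur anywhere in $\TORS$; the simple labels $p,q$ and the morphisms $\phi_\alpha\In\Homaa{U_p\otip A\otim U_q}X$ enter only through the structure of the TFT state space $\tfts(\ExtTdef X)$. The paper's actual argument for existence of the expansion is a linear-algebra statement in that state space: one shows that $\tftm(\TORS)$ lies in $\im(\proj)$ for the projector $\proj=\tftm(\projMF)$ (using the representation property and Lemma \ref{lemma:oneA_morph}), and that the set $\{\tftm(\TOT)\}$ is a \emph{basis} of $\im(\proj)$ (the generalization of Lemma 5.2(ii) of \cite{fjfrs}). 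These two facts are the heart of the proof and are not obtainable by a local dominance move plus isotopy.

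The second gap is the origin of the \emph{inverse} matrix in \eqref{coeff_value}. A genuine resolution of identity with the dual bases \eqref{basis_DF}, \eqref{basis_DFd} would produce Kronecker deltas or entries of the two-point-function matrix $c^{\mathrm{def}}$ itself, not of its inverse; your sentence ``Because we resolved the identity using the dual bases, what actually appears is the inverse matrix'' does not supply a mechanism. In the paper the inverse arises because the coefficients are extracted by composing \eqref{surg_DF} with the explicit dual basis of $\im(\proj)$ spanned by the $\tftm(\instord Xpq\alpha\beta)$ (whose duality to $\{\tftm(\TOT)\}$, including the normalization $\mathcal N$, is the content of Lemma A.4 of \cite{fjfs}); the resulting closed graph in $S^3$, displayed in \eqref{coeff-2}, is then identified with $\dim(U_p)\dim(U_q)\,\cdefinv X\pb\qb\alpha\beta$ via Lemma A.5 of \cite{fjfrs}. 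To repair your proof you would need to replace the resolution-of-identity step by the projector-and-basis argument (or by an honest surgery computation producing the $S$-transformation), and to compute the coefficients by pairing with the dual cobordisms rather than by splitting off a two-point subgraph.
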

\begin{proof}
First observe that $ \tftm(\TORS)$ indeed can be expanded in terms of the invariants $\tftm(\TOT)$. Let us motivate this statement. Consider the space $\im(\proj)$, where
\be
    \proj = \tftm(\projMF)
\ee
is the projector obtained as the invariant of the cobordism
\eqpic{CCinv}{240}{93}{\setulen90
  \put(15,110)  {$\projMF~:=$}
  \put(30,0){ \includepic{342}{def_fact_23}
  \put(65,195)    {\tiny $1$}
  \put(51,192)    {\tiny $2$}
  \put(59,66)     {\begin{turn}{-27}\scriptsize $(X,-)$\end{turn}}
  \put(139,74)    {\scriptsize $A$}
  \put(140,50)    {\scriptsize $A$}
  \put(175,77)    {\begin{turn}{-27}\scriptsize $(X,+)$\end{turn}}
  \put(59,134)    {\begin{turn}{-27}\scriptsize $(X,+)$\end{turn}}
  \put(139,145)   {\scriptsize $A$}
  \put(140,119)   {\scriptsize $A$}
  \put(175,145)   {\begin{turn}{-27}\scriptsize $(X,-)$\end{turn}}
  \put(174,195)   {\tiny $1$}
  \put(166,208)   {\tiny $2$}
  }}
In this picture top and bottom as well as front and back are to be identified, i.e.\ we deal with a cylinder over a torus $\ExtTdef X\eq\partial\TORS\eq\partial\TOT$. In particular the two "loose ends" of $A$-ribbons, starting and ending on an $X$-ribbon are to be identified.
The proof that $\proj$ is a projector follows from the representation properties and the fact that $A$ is special symmetric Frobenius. In the case that $X\eq A$, the proof reduces to the proof of Lemma 5.2 (i) of \cite{fjfrs}.\\
Now we make two observations. First, $ \tftm(\TORS)$ lies in the space $\im(\proj)$.
To see that this is the case, first note that the difference between the composition $\projMF\circ\TORS$ and $\TORS$ is that in $\projMF\circ\TORS$ there are two annular $A$-ribbons instead of one ending on each of the two $X$-ribbons. Next, using the representation property and then applying lemma \ref{lemma:oneA_morph}, it follows that $\tftm(\projMF\circ\TORS)$ is not affected if one of the two annular $A$-ribbons ending on each $X$-ribbon is omitted. This implies that  $\tftm(\projMF\circ\TORS)\eq\tftm(\TORS)$.
In the case that $A=X$ this proof reduces to the proof of lemma 5.2 (iii) of \cite{fjfrs}.
The second observation is that the set
\be\label{Basis_proj}
    \mathcal B:=\{\tftm(\TOT)|p,q\in\I\,, \alpha\in H^{A,X}_{p,q}\, \beta\in H^{X,A}_{\pb,\qb}\}
\ee
constitutes a basis of $\im(\proj)$. In the situation that $X\eq A$, this statement reduces to Lemma 5.2 (ii) of \cite{fjfrs}. The difference in the present setting is that the ribbons here labeled by $X$ are labeled by a \boundA\ in \cite{fjfrs},
but the proof works in the same way.
The proof of Lemma 5.2 (ii) of \cite{fjfrs} combines results of two types: On the one hand statements in which the ribbons here labeled by $X$ are labeled by an arbitrary object in \C, and on the other hand results in which they are labeled by an arbitrary $A$-$A$-bimodule. As a consequence we can combine those results in exactly the same way as in \cite{fjfrs} to prove that the vectors in \eqref{Basis_proj} forms a basis of $\im(\proj)$. Since $ \tftm(\TORS)$ lies in the space $\im(\proj)$, it follows that an expansion of the form \eqref{surg_DF} exists for suitable coefficients.\\
In order to obtain the values of the coefficients in \eqref{surg_DF} we compose both sides with a basis $\mathcal B^*$ dual to the one in \eqref{Basis_proj}. Lemma A.4 of \cite{fjfs} proves that $\mathcal B^*$ is spanned by the vectors $\tftm(\instord X pq\alpha\beta)$, with $\instord X pq\alpha\beta$ the cobordism
  \eqpic{instordual}{220}{115}{
  \put(-12,115) {$\mathcal N^{-1}\,\instord X pq\alpha\beta~:= $}
  \put(90,5){ \Includepic{def_fact_3}
  \put(33,194)    {\scriptsize $\lambda_{\qb q}$}
  \put(14,114.5)  {\scriptsize $\lambda_{\pb p}$}
  \put(136,199)   {\tiny $1$}
  \put(149,194)   {\tiny $2$}
  \put(24,221)    {\scriptsize $q$}
  \put(54,165)    {\scriptsize $\qb$}
  \put(42,72)     {\scriptsize $q$}
  \put(51,130)    {\scriptsize $\pb$}
  \put(45,107)    {\scriptsize $p$}
  \put(21.3,143.5){\small $\bar{\beta}$}
  \put(60.8,91)   {\small $\bar{\alpha}$}
  \put(122,78)    {\begin{turn}{27}\scriptsize $(X,+)$\end{turn}}
  \put(123,132.5) {\begin{turn}{27}\scriptsize $(X,-)$\end{turn}}
  } }
where $\bar\alpha$ and $\bar\beta$ labels bases defined in \eqref{basis_DFd} and $\mathcal N$ is the number
\be
    \mathcal N =\frac{(\dim(U_p)\, \dim(U_q))^2}  {\dim(A)\; \dim(X)}\,.
\ee
The proof of this statement is a generalization of the proof of \cite[eq. (5.20)]{fjfrs}. In particular the proof uses that the basis \eqref{basis_DF}, of the space $H^{A,X}_{p,q}$, and the basis \eqref{basis_DFd}, of the space dual to $H^{A,X}_{p,q}$, satisfy \eqref{DF_dual}.
Composing both sides of \eqref{surg_DF} with $\tftm(\instord X pq\alpha\beta)$ we obtain
\eqpic{coeff-1}{290}{64}{\setulen80
  \put(0,90) {$\mathcal N^{-1}\, \Cfact Xpq\alpha\beta~= $}
  \put(101,2){ \includepic{304}{def_fact_4}
  \put(56.5,102)   {\scriptsize $\pb $}
  \put(45.2,82)    {\scriptsize $p$}
  \put(59,134)     {\scriptsize $\qb $}
  \put(125,165)    {\scriptsize $q$}
  \put(48.2,28)    {\scriptsize $q$}
  \put(65.3,57)    {\small $\bar{\alpha}$}
  \put(26.1,115)   {\small $\bar{\beta}$}
  \put(115,80)     {\scriptsize $A$}
  \put(155,122)    {\scriptsize $X$}
  } }
Above, the right hand side is regarded as the invariant of a ribbon graph in $S^{3}$. Using that $\bar\alpha$ labels a bimodule morphism and that $A$ is symmetric special Frobenius and deforming the ribbon graph we get the following morphism in $\C$:
\eqpic{coeff-2}{176}{47}{ \put(0,-2){
  \put(0,47) {$\Cfact Xpq\alpha\beta~= S_{0,0}\,\mathcal N$}
  \put(98,0){ \Includepic{def_fact_5}
  \put(21,82)        {\scriptsize $p $}
  \put(23,34)        {\scriptsize $\pb$}
  \put(52,82)        {\scriptsize $q $}
  \put(50.5,34)      {\scriptsize $\qb$}
  \put(35.5,19.8)    {\small $\bar{\beta}$}
  \put(36,69.3)      {\small $\bar{\alpha}$}
  \put(40.8,53)      {\scriptsize $X$}
  } } }
The factor $S_{0,0}$ arises when evaluating the ribbon graph in $S^3$.
The number on the left hand side equals $\dim(U_p)\,\dim(U_q)\,\cdefinv {X}\pb\qb\alpha\beta$, see Lemma A.5 of \cite{fjfrs}.\endofproof
\end{proof}
Combining Proposition \ref{surg_prop} with the expressions \eqref{C-tor} and \eqref{Cfact-tor} proves Theorem \ref{thm:bulkfac}.
\begin{remark}\label{rmk:norm_glue}
    Due to our choice of normalization of the gluing homomorphism our result does not directly reproduce the one in \cite{fjfrs} for $X\eq A$. If we instead normalize the gluing homomorphism as in \cite{fjfrs} the coefficients $\Cfact Xpq\alpha\beta$ gets replaced by
    \be
        \Cfactnorm Xpq\alpha\beta=\theta_p^{}\,\theta_q^{-1}\,
  \RR \pb p 0\nl\nl\, \Rm \qb q 0 \nl\nl\Cfact Xpq\alpha\beta\,,
    \ee
which for $X\eq A$ reproduces the result in \cite{fjfrs}, see \cite[Remark 2.4]{fjfs}.
\end{remark}
\subsection{Boundary factorization}
Boundary factorization is performed along a line segment, the \emph{cutting interval}\index{cutting!interval}, embedded in the world sheet in such a way that it connects two boundary components. Boundary factorization in the presence of only trivial defects is treated in \cite[Theorem 2.9 \& 2.10]{fjfrs}. In fact, including non-trivial defects does not add any new features: Consider a strip of a world sheet such that there is a finite number of defect lines running parallel to each other and crossing the cutting interval connecting the two boundary components. Using fusion of defect lines, as described in section \ref{sec:equiv_ws}, all defects can be fused to the boundary so that we obtain an equivalent world sheet with no defect lines crossing the cutting interval. This is the situation studied in \cite{fjfrs}. The procedure is illustrated in the following picture:
\eqpic{WSbnd}{245}{61}{\setulen80
  \put(0,0)	{\includepic{304}{def_fact_36a}}
  \put(140,83)	{\large$\longmapsto$}
  \put(200,0)	{\includepic{304}{def_fact_36b}}
  }
Using this equivalence it is straightforward to establish a genuine boundary defect result.

Let $\ws$ be of the type displayed to the left in \eqref{WSbnd}, but with a single defect line. The role played by the defect two-point function in bulk factorization is in boundary factorization played by the correlator of a disc with two boundary fields and one defect line. Denote by  $D \eq D(M_l,M_r,X,i,\psi_+,\psi_-,r,e)$ the world sheet pictorially given by
 \eqpic{DD2pt}{130}{49}{ \setulen7{47}
  \put(0,3)	{\includepic{284}{def_fact_37b}
  \put(67,147)	{\footnotesize $\psi_+$}
  \put(25,138)	{\footnotesize $r$}
  \put(57,-3)	{\footnotesize $\psi_-$}
  \put(25,5)	{\footnotesize $e$}
  \put(143,70)	{\footnotesize $M_r$}
  \put(-16,70)	{\footnotesize $M_l$}
  \put(40,65)	{\footnotesize $X$}
  } }
Here the defect line is labeled by the $A$-$B$-bimodule $X$ and the boundary edges are labeled the modules $M_l$, $X\otiB M_l$ (twice) and $M_r$, respectively.
Two of the vertices are insertion vertices, labeled
by $\psi_+\In \Hom_A((X\otiB M_l)\,\oti U_i,M_r)$ and by
$\psi_-\In \Hom_A(M_r\,\oti\, U_\ib,X\otiB M_l)$, respectively,
and the other two are labeled by the embedding and restriction morphisms
$e\In\HomP_A(X\otiB M_l,X\oti M_l)$ and
$r\In\HomP_A(X\oti M_l,X\otiB M_l)$ appearing in  \eqref{TP_dec}, respectively.
Denoting by \linebreak$\{\psi_\alpha\}$ and $\{\varphi_\beta\}$ choices
of bases of the morphism spaces for the two boundary fields, the structure constants
$c^{\mathrm{bdef}}$ of the correlator are defined through
\be
  C(D) = (c_{M_l,M_r,X,p}^{\mathrm{bdef}})_{\alpha\beta}\,\blV[p,\pb] \,,
\ee
with $\blV[p,\pb]$ defined in \eqref{npbasis_Z}.

Boundary factorization amounts to cutting the original world sheet  $\ws$ along the cutting interval and gluing to the so obtained boundary components the top and bottom halves of $D(M_l,M_r,X,i,\psi_+,\psi_-,r,e)$. This is illustrated in the following picture:
\Eqpic{BdDfact}{320}{100}{ \put(8,5){\setulen80
  \put(0,10){
  \put(0,45)	{\includepic{304}{def_fact_38d}}
  \put(89,117)	{\footnotesize $M_r$}
  \put(0,117)	{\footnotesize $M_l$}
  \put(55,155)	{\footnotesize $X$}
  \put(118,127) {$ \longmapsto $}
  \put(150,30)	{\includepic{304}{def_fact_38e}}
  \put(240,100)	{\footnotesize $M_r$}
  \put(150,100)	{\footnotesize $M_l$}
  \put(205,88)	{\footnotesize $X$}
  \put(240,150)	{\footnotesize $M_r$}
  \put(150,150)	{\footnotesize $M_l$}
  \put(205,166)	{\footnotesize $X$}
  \put(271,127) {$ \longmapsto $}
  \put(300,0)	{\includepic{304}{def_fact_38f}}
  \put(390,70)	{\footnotesize $M_r$}
  \put(300,70)	{\footnotesize $M_l$}
  \put(355,56)	{\footnotesize $X$}
  \put(390,185)	{\footnotesize $M_r$}
  \put(300,185)	{\footnotesize $M_l$}
  \put(355,189)	{\footnotesize $X$}
  \put(355,133)	{\footnotesize $\varphi_\beta$}
  \put(360,120)	{\footnotesize $\psi_\alpha$}
  \put(330,132)	{\footnotesize $e$}
  \put(330,120)	{\footnotesize $r$}
  } } }
Equation (2.37) of \cite{fjfrs} gives the gluing homomorphism $\GLL p{\phantom{q}}^{\mathrm{bnd}}$ relevant for\linebreak boundary factorization. Just like in the case of bulk factorization it is not relevant for the gluing homomorphism whether we deal with trivial or non-trivial defects. We can now give the following result for oriented world sheets.
\pagebreak
\begin{thm} Boundary defect factorization:\label{thm:bdfact}
The correlators of the world sheets $\ws$ and $\ws_{p,\alpha\beta}$ are related as\index{factorization!boundary}
  \be\label{bdefthm}
  C(\ws) = \sum_{p\in\I} \sum_{\alpha,\beta} \dim(U_p)\, {(c^{\mathrm{bdef}\ -1}
  _{M_l,M_r,X,p})}_{\beta\alpha}^{}\, \GLL p{\phantom q}^{\mathrm{bnd}}
  \big( C(\ws_{p,\alpha\beta})\big) \,.
  \ee
\end{thm}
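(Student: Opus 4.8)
\proof
The argument runs in close parallel with the proof of Theorem~\ref{thm:bulkfac}, the role played there by the defect two-point function $\wsSD Xpq\alpha\beta$ on the sphere being taken over by the disc correlator $D(M_l,M_r,X,i,\psi_+,\psi_-,r,e)$ of \eqref{DD2pt}. The first step is the reduction of the general situation of \eqref{WSbnd} to the case of a single defect line: by the fusion equivalences of section~\ref{sec:equiv_ws} any number of parallel defect lines crossing the cutting interval can be fused with the adjacent boundary component, so it suffices to treat one $A$-$B$-defect $X$ crossing the interval and ending on the boundary, with the two boundary fields as in \eqref{DD2pt}. For that configuration the defect-free content of the statement is precisely \cite[Theorem~2.9 \& 2.10]{fjfrs}; what must be added is the bookkeeping of the extra bimodule $X$ and of the idempotent $P_{X,M_l}$ of \eqref{TP_proj}--\eqref{TP_dec} relating $X\otiB M_l$ to $X\oti M_l$.

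Next I would carry out the cobordism analysis, the boundary analogue of \eqref{ws_torus}--\eqref{Cfact-tor}. One realises $\M_\ws$ and $\MGL^{\mathrm{bnd}}\circ\M_{\ws_{p,\alpha\beta}}$ as $\M_{\mathrm{cyl}}\circ\mathcal T$ and $\M_{\mathrm{cyl}}\circ\mathcal T'_{p,\alpha\beta}$ respectively, where $\M_{\mathrm{cyl}}$ is a fixed cobordism from an extended surface $\ExtTdef X$, carrying two marked points labelled by $X$, to the double $\wsD$, and all the discrepancy between the two sides is confined to the embedded solid pieces $\mathcal T$ and $\mathcal T'_{p,\alpha\beta}\colon\emptyset\to\ExtTdef X$; via the double, factorization along an interval acquires the same local structure as factorization along a circle. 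The ribbon graph in $\mathcal T$ is built from the module $M_l$ and the bimodule $X$ with $A$- and $B$-ribbons prescribed by the representation morphisms, while $\mathcal T'_{p,\alpha\beta}$ carries in addition the boundary-field coupons $\psi_\alpha,\varphi_\beta$, the chiral insertion $U_p$ and the $\bar\lambda$-loops; apart from their ribbon graphs the two differ by an orientation-preserving homeomorphism restricting to the identity on the common boundary together with a modular $S$-transformation, exactly as $\TORS$ and $\M_{\mathrm S}\circ\TOT$ in the bulk case. Since only the double enters its construction, the gluing homomorphism $\GLL p{\phantom q}^{\mathrm{bnd}}$ is the one already obtained in \cite{fjfrs} (up to the normalisation conventions of Remark~\ref{rmk:norm_glue}).

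It then remains to establish the boundary version of Proposition~\ref{surg_prop}: that $\tftm(\mathcal T)$ expands in the invariants $\tftm(\mathcal T'_{p,\alpha\beta})$ with coefficients $\dim(U_p)\,(c^{\mathrm{bdef}\ -1}_{M_l,M_r,X,p})_{\beta\alpha}$. I would follow the three moves of that proof. First, exhibit a projector $\proj^{\mathrm{bnd}}=\tftm(\projMF^{\mathrm{bnd}})$ --- the boundary counterpart of the cobordism in \eqref{CCinv}, with doubled $A$-ribbons ending on the $X$- and $M_l$-ribbons --- and show $\tftm(\mathcal T)\In\im(\proj^{\mathrm{bnd}})$ by using the (bi)module axioms and then applying Lemma~\ref{lemma:oneA_morph} to delete one copy of each doubled ribbon. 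Second, show that $\{\tftm(\mathcal T'_{p,\alpha\beta})\}$ is a basis of $\im(\proj^{\mathrm{bnd}})$, by the argument of \cite[Lemma~5.2(ii)]{fjfrs} carried over to module and bimodule labels. Third, pair both sides with the dual basis, realised by cobordisms of the type $\mathcal N^{-1}\instord Xpq\alpha\beta$ of \eqref{instordual} carrying the dual coupons $\bar\psi_\alpha,\bar\varphi_\beta$ and a normalisation $\mathcal N$ built from $\dim(U_p)$, $\dim(X\otiB M_l)$ and $\dim(X)$; evaluating the resulting ribbon graph in $S^3$ with the help of the module-morphism property of $\bar\psi_\alpha$, of specialness and symmetry of $A$ and $B$, and of $\tftm(S^3)=S_{0,0}$, one obtains the stated inverse of the matrix $c^{\mathrm{bdef}}_{M_l,M_r,X,p}$. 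Combining this expansion with the cobordism decomposition of the previous step yields \eqref{bdefthm}.

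The step I expect to be the main obstacle is this last one, together with the bookkeeping of the tensor product over $B$: one has to check that collapsing the $X\otiB M_l$-bubbles, removing the internal $e$- and $r$-vertices, and invoking specialness of $A$ and $B$ are all compatible with the projector $P_{X,M_l}$, so that the coefficient is genuinely the inverse of $c^{\mathrm{bdef}}$ rather than of a rescaled matrix, and one must fix the normalisation of $\GLL p{\phantom q}^{\mathrm{bnd}}$ accordingly. Given the defect-free result of \cite{fjfrs} and the fusion equivalences of section~\ref{sec:equiv_ws}, this is, as anticipated in the text, straightforward but somewhat tedious.
\endofproof
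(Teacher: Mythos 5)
Your proposal reaches the right statement, but by a far heavier route than the paper's, and it under-uses the one observation that makes the paper's proof essentially a one-liner. The paper does not redo any surgery analysis for the boundary case: it notes that the fusion equivalence of section~\ref{sec:equiv_ws} lets one fuse the defect line $X$ \emph{into the adjacent boundary component}, replacing the boundary condition $M_l$ by $X\otiB M_l$, so that the resulting equivalent world sheet has \emph{no} defect line crossing the cutting interval at all. The configuration is then literally the one treated in \cite[Theorem 2.9]{fjfrs}, with boundary labels $X\otiB M_l$ and $M_r$, and the disc correlator $D(M_l,M_r,X,i,\psi_+,\psi_-,r,e)$ is just the ordinary boundary two-point function for these boundary conditions; the theorem follows with no further work. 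You invoke the same fusion equivalence but stop halfway, reducing only to a single defect still crossing the interval, and then rebuild the entire projector/basis/dual-basis machinery of Proposition~\ref{surg_prop} in a boundary version. That route could in principle be carried through (it is how one would prove the statement from scratch), but it is not needed, and as sketched it contains a concrete misstep: boundary factorization does not localize the discrepancy between $\M_\ws$ and the glued factorized cobordism in a solid \emph{torus}, and there is no modular $S$-transformation involved; in \cite{fjfrs} the boundary analogue confines everything to a solid three-ball whose boundary is a sphere with marked points, so the analogy with $\TORS$ versus $\M_{\mathrm S}\cir\TOT$ that you draw is wrong. If you do want a self-contained proof, the projector and basis arguments must be redone for that three-ball geometry; but the intended (and much shorter) proof is simply the reduction to the defect-free case via fusion with the boundary.
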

By the procedure illustrated in \eqref{WSbnd} the proof of this statement follows directly from \cite[Theorem 2.9]{fjfrs}.

\subsection{Bulk factorization in practice}\label{BF_pract}
In the next section we will use Theorem \ref{thm:bulkfac} to derive the classifying algebra for defects. In order to prepare that description we describe here how to actively "perform factorization" on a world sheet. That is, we will explain how to obtain the cobordisms $\MGL \circ \M_{\wsf pq\alpha\beta}$ from the connecting manifold of a given world sheet $\ws$.

Consider bulk factorization along an embedded circle $S\subset\ws$ such that there is exactly one defect line (which may be trivial) labeled by $X$ intersecting the embedding circle $S\subset\ws$. The preimage
\be
		Y_S:=\pi_\ws\inv(\wsemb(S))\subset\Mws
\ee
of the canonical projection $\pi_\ws$ from $\Mws$ to $\ws$ will be relevant. $Y_S$ is the union of all connecting intervals intersecting the image of $S$ under the embedding $\wsemb:\ws\rightarrow\Mws$. $Y_S$ is an annulus whose two boundary components lie on $\partial\Mws$, one on each of the two copies of $\ws$ on the double with opposite orientation, c.f. \eqref{Mws_def} and \eqref{con_man}. Next we
cut $\Mws$ along the annulus $Y_S$. That is, we
remove $Y_S$ from $\Mws$ and take the closure of the so obtained manifold. The result is a manifold with corners $\Mwscut$ whose boundary contains two copies $Y_S^1$ and $Y_S^2$ of $Y_S$. With the choice of triangulation made above, there is exactly one ribbon ending on each $Y_S^i$. Both of them are labeled by $X$; one has core orientation pointing towards $Y_S^1$, and the other has core orientation pointing away from $Y_S^2$. We will refer to these parts of the boundary as the \emph{sticky parts}.

Next, consider another manifold with corners, which we refer to as the \emph{factorization torus} $ \T_{\!q_1q_2\gamma\delta}$. The factorization torus is the following full torus with embedded ribbon graph
 \eqpic{stickyft}{220}{112}{\setulen80
   \put(0,-5){
  \put(-10,146)   {$ \T_{\!pq\gamma\delta}^{A,X} ~= $}
  \put(75,0)  {  \includepic{304}{stickyftA} }
  \put(175,0) {
  \put(-5,26)     {\pl {q}}
  \put(14,215)    {\pl {\bar q}}
  \put(13,171)    {\pl {\bar p}}
  \put(36,129)    {\pl {p}}
  \put(-16,79)    {\pl {\phi_\gamma}}
  \put(-2,177.6)  {\pl {\phi_\delta}}
  \put(-25,97)    {\pl {X}}
  \put(-27,205)    {\pl {X}}
  \put(-15,180)    {\pl {A}}
%  \put(-103,95)   {\longrsqarrow}
%  \put(-103,205)  {\longrsqarrow}
%  \put(-116,97)   {$ \arcT^2 $}
%  \put(-116,207)  {$ \arcT^1 $}
  \put(65,111)    {\lsqarrow}
  \put(65,219)    {\lsqarrow}
  \put(92,113)    {$Y_\T^2 $}
  \put(92,221)    {$Y_\T^1 $}
  } } }
with top and bottom identified, c.f. \eqref{instor}.
Here there are ribbons labeled by the $A$-$A$-bimodule $X$ ending at the annular sticky parts $Y_\T^1$ and $Y_\T^2$, displayed with a shading in \eqref{stickyft}, of the boundary. The morphisms $\phi_\gamma$ and $\phi_\delta$ label the bases of $H^{A,X}_{p,q}$ and $H^{X,A}_{\pb,\qb}$ defined in \eqref{space_DF}, respectively. Identifying the sticky parts of $Y_S^i$ of $\Mws$ with the sticky parts $Y_\T^i$ of $\T_{\!pq\gamma\delta}^{A,X}$ for $i=1,2$ we obtain the cobordism $\MGL \circ \M_{\wsf pq\alpha\beta}$. That this is indeed the case can be verified by comparing with \eqref{ws_torus}, \eqref{ws_fact_torus}, \eqref{deftor} and \eqref{instor} and using the fact that, apart from the embedded ribbon graph, $\TORS$ and $\TOR$ differ by a modular $S$-transformation.
\subsection{Mapping class group invariance}
In \cite{fjfrs} it is proved that in the TFT-construction any correlator involving only trivial defects are invariant under the action of the mapping class group that comes with the \tftfun. In fact, a more general result has been  established: Correlators behave covariantly under \emph{isomorphisms} of world sheet. In \cite{fjfs} we generalize this result to world sheets with an arbitrary network of finitely many defects, see Theorem 5.2 and Corollary 5.3 of \cite{fjfs}. The following discussion summarizes these results.

%%%%%%%%%%%%
An isomorphism $f\mapdef\ws\rightarrow\ws'$ of world sheets is an orientation preserving homeomorphism $h\mapdef\tws\rightarrow\tws'$ between world sheets such that $f(\ws)$ is intrinsically equivalent, in the sense defined on page \pageref{def:intr_eq}, to $\ws'$ and for which $h^{-1}$ shares these properties. Note that in particular any representative $f$ of a mapping class $[f]$ is an isomorphism of world sheets with $\tws=\tws'$.

As explained in \cite[Section 5.1]{fjfs} any isomorphism of word sheets $f$ has a unique lift to an isomorphism $\hat f:\widehat\ws\rightarrow\widehat{\ws}$ of the doubles. Recall that the \tftfun\ associates to $\hat f$ a linear map, see \eqref{homeomap}, $\hat f_\sharp:\tfts(\wsD\rightarrow\tfts(\wsD'))$. In fact for any isomorphism of world sheets $f$, the correlators behave covariantly \linebreak\cite[Theorem 5.2]{fjfs}:
\be\label{iso_cov}
    \Cor(\ws') = \hat f_\sharp\big(\Cor(\ws)\big)\,.
\ee
The mapping class group $\Map(\ws)$ of the world sheet with defects can be identified with the subgroup of $\Map(\wsD)$ that commutes with the orientation reversing involution of the double and that preserves the homotopy class of the network of defects.
As a consequence, for $[f]\in\Map(\ws)$ the projective action of the mapping class group $\Map(\wsD)$ of the double on $\tfts(\wsD)$, that comes with the \tftfun, furnishes an action of $\Map(\ws)$ on correlators via the linear map $\hat f_\sharp$. In fact, due to the vanishing of the gluing anomaly, $\Map(\ws)$ acts genuinely, see \cite[Remark 3.7]{fjfrs}. We are now in position to state:

\begin{cor}\label{cor:modinv}
Mapping class group invariance of correlators with defects: We have
  \be\label{modinv}
  Cor(\ws) = \hat f_\sharp\big( Cor(\ws)\big)
  \ee
for any mapping class $[f]\In\Map(\ws)$.
\end{cor}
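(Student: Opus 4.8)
The statement to be proved, Corollary \ref{cor:modinv}, asserts that for any mapping class $[f]\In\Map(\ws)$ of a world sheet $\ws$ with an arbitrary network of defect lines, the correlator obtained via the TFT-construction satisfies $\Cor(\ws)=\hat f_\sharp(\Cor(\ws))$. The plan is to deduce this as an immediate consequence of the more general covariance statement \eqref{iso_cov}, which says that for any \emph{isomorphism} of world sheets $f\mapdef\ws\To\ws'$ one has $\Cor(\ws')=\hat f_\sharp(\Cor(\ws))$.

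The first step is to observe that any representative $f$ of a mapping class $[f]\In\Map(\ws)$ is in particular an isomorphism of world sheets in the sense recalled just before the corollary: it is an orientation-preserving homeomorphism $h\mapdef\tws\To\tws$ of the underlying surface that preserves all the decorations, so $f(\ws)$ is literally equal to $\ws$ (hence certainly intrinsically equivalent to it), and the same applies to $h^{-1}$. Thus the case $\ws'=\ws$ of \eqref{iso_cov} applies, and it yields precisely $\Cor(\ws)=\hat f_\sharp(\Cor(\ws))$. The second step is to note that this expression is well-defined on the level of mapping classes: the \tft-functor assigns to $\hat f$ a linear map $\hat f_\sharp$ depending only on the isotopy class of $\hat f$, cf.\ \eqref{homeomap}, and the lift $f\mapsto\hat f$ is compatible with isotopy, so $\hat f_\sharp$ depends only on $[f]$. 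The third step, needed to make \eqref{modinv} a genuine (not merely projective) group action, is to invoke the vanishing of the gluing anomaly for doubles: since $\wsD$ carries an orientation-reversing involution, $m=0$ by \cite[Lemma 2.2]{fffs3}, so $\kappa^m=1$ and $\Map(\ws)$ acts genuinely on $\tfts(\wsD)$, cf.\ \cite[Remark 3.7]{fjfrs}.

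There is essentially no obstacle at the level of the corollary itself; all the work has been absorbed into Theorem \ref{thm:bulkfac}'s companion statement \eqref{iso_cov}, whose proof is the substantive part of \cite{fjfs}. The only thing to be careful about is the identification of $\Map(\ws)$ with the correct subgroup of $\Map(\wsD)$: one must check that a mapping class of the world sheet with defects lifts to a mapping class of the double that commutes with the orientation-reversing involution $\sigma$ and preserves the homotopy class of the defect network, so that the action of $\Map(\wsD)$ on $\tfts(\wsD)$ supplied by the \tftfun\ restricts to an action of $\Map(\ws)$ on the correlator via $\hat f_\sharp$. This is exactly the content of the discussion preceding the corollary, and once it is in place the proof is simply: take $\ws'=\ws$ in \eqref{iso_cov}. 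I would therefore present the proof in one or two sentences, pointing to \eqref{iso_cov} and remarking that every representative of a mapping class is an isomorphism of $\ws$ with itself.
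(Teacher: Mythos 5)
Your proof is correct and follows essentially the same route as the paper: the paper's own argument likewise reduces the corollary to the covariance statement \eqref{iso_cov} by noting that a representative of a mapping class is an isomorphism of world sheets onto an intrinsically equivalent copy of $\ws$ (the paper merely unwinds this via the cobordism $\M_f$ realizing $\hat f_\sharp$). Your additional remarks on isotopy-independence of $\hat f_\sharp$ and the vanishing gluing anomaly match the discussion the paper places just before the corollary.
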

Let us just outline the proof and refer the reader to \cite{fjfs} for further details.
The map $\hat f_\sharp$ is realized as the invariant of a cobordism $\M_f\mapdef\wsD\rightarrow\wsD$. Thus $\hat f_\sharp\big( Cor(\ws)\big)$ is the invariant of the composition $\M_f\circ\M_{\ws}$. This manifold is homeomorphic, via a homeomorphism that restricts to the identity in the boundary, to the connecting manifold of a world sheet $\ws'$ that is intrinsically equivalent to $\ws$. It follows that  $\hat f_\sharp\big( Cor(\ws)\big)\eq\Cor(\ws')\eq\Cor(\ws)$.

\section{Fundamental world sheets}
Let us conclude the discussion of the TFT-construction by discussing fundamental world sheets.
\index{world sheet!fundamental|textbf}We can define a \emph{category of world sheets} by taking as objects all world sheets and as morphisms isomorphisms of world sheets. In fact it is convenient to think of isomorphisms and factorization as two types of arrows in a larger structure, say of isomorphisms as \emph{vertical} arrows and factorization as \emph{horizontal} ones.

Consider a world sheet $\ws$ and a sequence of such arrows from $\ws$ to some world sheet $\ws'$. The factorization theorems \ref{thm:bulkfac} and \ref{thm:bdfact} and covariance under isomorphisms \eqref{iso_cov} assign to every such sequence a way of expressing $\Cor(\ws)$ in terms of $\Cor(\ws')$; or rather as a sum involving a collection of target world sheets differing only in their decoration data. In addition we may, using the equivalences of world sheets in section \ref{sec:equiv_ws}, replace the collection of world sheets $\ws'$ by an equivalent collection $\ws''$.

In fact, the procedure just described allows us to identify a collection \fundws\ of \emph{fundamental world sheets} with the property that the correlators of any world sheet can be written as a sum involving only correlators of world sheet in \fundws. For oriented CFT, this set is in the absence of non-trivial defect lines a finite set $\fundws_\circ$, which can be taken to consist of the following world sheets, see e.g.\ \cite{lewe3,fuRs10}:
\begin{itemize}\addtolength{\itemsep}{-6pt}%
\item
      three bulk fields on the sphere;
\item
      three boundary fields on the disc;
\item
      one bulk and one boundary field on the disc.
\end{itemize}

As it turns out, when including also arbitrary networks of defect lines  the set \fundws\ is not much more complicated \cite{fjfs}:
\begin{thm}\label{thm:cS}
The set $\fundws$ of fundamental world sheets with defects can be taken to consist of
      ~\\[-1.75em]\begin{itemize}\addtolength{\itemsep}{-6pt}%
 \item
      three boundary fields on the disc (without non-trivial defect lines);
 \item
      three defect fields on a circular configuration of defect lines on the sphere;
 \item
      one boundary field and one disorder field on the disc.
  \end{itemize}
\end{thm}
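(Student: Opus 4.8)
\textbf{Proof proposal for Theorem \ref{thm:cS}.}

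The plan is to combine the three factorization theorems established above --- bulk factorization (Theorem \ref{thm:bulkfac}), boundary factorization (Theorem \ref{thm:bdfact}), and covariance under isomorphisms \eqref{iso_cov} --- together with the equivalences of world sheets collected in section \ref{sec:equiv_ws}, to reduce an arbitrary oriented world sheet with a defect network to the three types in the list. First I would set up the reduction of the underlying topology: starting from an arbitrary oriented world sheet $\ws$ of genus $g$ with $b$ boundary components and some number of bulk, boundary and defect field insertions, repeatedly apply bulk factorization along non-contractible cutting circles and boundary factorization along cutting intervals connecting boundary components. Each such cut strictly decreases a suitable complexity measure (e.g.\ $2g+b$ together with the number of field insertions beyond the minimal number allowed on the pieces), so after finitely many steps every resulting piece is a sphere or a disc carrying only a few insertions. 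This is essentially the same argument as in the defect-free case \cite{lewe3,fuRs10}, with the new input that the defect lines are simply carried along: because defect lines are topological they can always be moved, by the isotopy invariance used throughout chapter \ref{chap:RCFT}, away from the cutting locus or made to cross it transversally, and in the latter case fusion of defects (section \ref{sec:equiv_ws}) collapses several parallel strands into a single strand, so that each application of bulk or boundary factorization involves only one defect line, exactly the situation treated in Theorems \ref{thm:bulkfac} and \ref{thm:bdfact}.

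Next I would analyse what survives on the elementary pieces. After the topological reduction each piece is either a sphere or a disc. On a sphere the defect network, being a graph embedded in $S^2$, can be simplified using the world-sheet equivalences of section \ref{sec:equiv_ws}: removal of internal vertices and collapsing of bubbles reduce any trivalent defect network to one with a minimal number of vertices and edges, and a sphere with more than three defect fields is further factorized along a circle separating the insertions into smaller groups, using bulk factorization across the defect line that this circle crosses. What remains on the sphere is a circular configuration of defect lines carrying three defect fields --- the second item. Likewise on the disc, using fusion of defect lines with the boundary (section \ref{sec:equiv_ws}), every defect line can be absorbed into the boundary, so a disc with only boundary insertions reduces, after boundary factorization, to three boundary fields on the disc with no non-trivial defect lines --- the first item. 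The one remaining subtlety is the disc with a disorder field: a disorder field by definition is the endpoint of a non-trivial defect line, so it cannot be removed by fusion into the boundary; the minimal configuration here is one boundary field together with one disorder field on the disc --- the third item. One checks, as in \cite{fjfs}, that these three cannot be reduced further: they carry the minimal number of insertions compatible with non-degeneracy of the relevant two- and three-point correlators, so dropping an insertion would force the correlator to vanish.

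Finally I would assemble the pieces: every sequence of factorization and isomorphism arrows from $\ws$ produces, by Theorems \ref{thm:bulkfac}, \ref{thm:bdfact} and \eqref{iso_cov}, an expression for $\Cor(\ws)$ as a finite sum over correlators of the terminal world sheets, each of which is equivalent --- via the equivalences of section \ref{sec:equiv_ws} and, where needed, covariance under isomorphisms --- to one of the three fundamental types. Since the structure constants appearing in the factorization identities are independent of the particular way the cuts are performed (this is precisely the content of mapping class group invariance, Corollary \ref{cor:modinv}, together with associativity-type relations among repeated factorizations), the resulting collection $\fundws$ is well defined, and it can be taken to be the three-element list stated. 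The main obstacle I anticipate is not the topological reduction, which is routine, but rather the bookkeeping needed to verify that the defect network on each piece can genuinely be brought to the asserted minimal form using only the equivalences of section \ref{sec:equiv_ws}: one must argue that every planar trivalent graph on $S^2$ or on the disc, decorated by bimodules and carrying the allowed number of defect fields, reduces to the minimal configuration, which requires a careful induction on the number of vertices using removal of internal vertices and bubble collapse, and treating with care the special role of disorder fields, which obstruct the naive ``fuse everything into the boundary'' strategy. This is exactly the step carried out in detail in \cite{fjfs}, to which I would refer for the full argument.
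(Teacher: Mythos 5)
Your proposal is correct and follows essentially the same route as the paper: both arguments combine the factorization theorems with the world-sheet equivalences of section \ref{sec:equiv_ws} (fusion, removal of internal vertices, bubble collapse) to reduce an arbitrary decorated world sheet to the three listed types, and both defer the detailed combinatorial bookkeeping of the defect network to \cite[Section 5.2]{fjfs}. The only cosmetic difference is that the paper shortcuts your explicit topological induction by citing the known defect-free fundamental set $\fundws_\circ$ and observing that defects impose no restriction on factorization, so that one only needs to reduce the redundant set $\fundwsred$ by equivalences.
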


We refer the reader to \cite[Section 5.2]{fjfs} for the details of the proof. The basic ideas are the following: Note that inclusion of non-trivial defects does not impose any restrictions on the possible ways to factorize. Thus $\fundws$ is certainly included in the set \fundwsred\ that is obtained from $\fundws_\circ$ by replacing bulk fields by defect fields on arbitrary defects and including arbitrary networks of defect lines. However due to equivalences of world sheets the set \fundwsred\ is hugely redundant. By using the equivalences of world sheets, described in section \ref{sec:equiv_ws}, this set is reduced to the set $\fundws$ given in Theorem \ref{thm:cS}. We will explain how this works in one of the three cases.

Take $\ws$ to be the disc with one boundary field and one defect field and an arbitrary network of defect lines. Without loss of generality we may assume that there is only a single defect line ending on the boundary. If this is not the case we can remove a boundary component, as explained on page \pageref{rem_bond}, and fuse defects to achieve that situation. Thus $\ws$ is equivalent to the leftmost world sheet in the following picture:
\Eqpic{Disk21pt1}{320}{33}{ \put(0,-5){ \put(0,-5){\setulen80
  \put(10,7)	{\Includepicfj{2}{35h}}
  \put(55,105)	{\footnotesize $\psi$}
  \put(0,53)	{\footnotesize $\chi$}
  \put(55,59)	{\footnotesize $\phi$}
  \put(123,51)  {\footnotesize $\longmapsto$}
  \put(165,7)	{\Includepicfj{2}{35i}}
  \put(210,105)	{\footnotesize $\psi$}
  \put(155,53)	{\footnotesize $\chi$}
  \put(210,60)	{\footnotesize $\phi$}
  \put(279,51)  {\footnotesize $\longmapsto$}
  \put(320,7)	{\Includepicfj{2}{35j}}
  \put(365,105)	{\footnotesize $\psi$}
  \put(310,53)	{\footnotesize $\chi$}
  \put(365,70)	{\footnotesize $\phi'$}
  } } }
Here the region with a lighter shading contains an arbitrary network of defect lines. We may assume that this network is connected. If it is not, then we can e.g.\ use fusion of defects to achieve a connected network. If no defect is crossing the dashed line in \eqref{Disk21pt1} we immediately obtain the third picture. If there are more than one defect line crossing the dashed line they may be fused in order to obtain the world sheet in the second picture. The world sheet in the third picture is then obtained by fusing the two parallel defects in the second picture and replacing the resulting defect bubble by a single defect field. Assuming again that the defect network in the shaded region of the last picture is connected, we use equivalences to replace it with a network with a single vertex in that region, as on the left side in the following picture
\eqpic{Disk21pt2}{227}{37}{\setulen80
  \put(10,7)	{\Includepicfj{2}{35kA}}
  \put(55,105)	{\footnotesize $\psi$}
  \put(0,53)	{\footnotesize $\chi$}
  \put(55,60)	{\footnotesize $\phi'$}
  \put(40,60)	{\footnotesize $\varkappa$}
  \put(139,51)  {\footnotesize $\longmapsto$}
  \put(190,7)	{\Includepicfj{2}{35lA}}
  \put(235,105)	{\footnotesize $\psi$}
  \put(181,53)	{\footnotesize $\chi$}
  \put(235,60)	{\footnotesize $\phi''$}
  }
The world sheet in the second picture is obtained from the first by removing the defect bubble.

By analogous arguments the two other types of world sheets in \fundwsred\ can be reduced to the ones in the set \fundws\ given in Theorem \ref{thm:cS}, see \cite{fjfs} for details. This concludes the discussion on fundamental world sheets.
%%%%%%%%%%%%%%%%%%%%%%%%%%%%%%%%%%%%%%%%%%%%%%%%%%%%%%%%%%%%%%%%%%%%%%%%%
\chapter{The classifying algebra for defects}\label{sec:class_alg}

In this chapter we describe the classifying algebra for defects, introduced in section \ref{CD_intro} in more detail.
We start by describing the \emph{defect transmission coefficients}\index{defect!transmission coefficient}: Let $A$ and $B$ be simple \boundA, and let $X$ be a simple $A$-$B$-defect. Consider the world sheet consisting of a sphere with two bulk fields, $\phi^\alpha_{pq}$ and $\phi^\beta_{\pb\qb}$ in phase $A$ and $B$ inserted on the northern and southern hemisphere respectively, separated by a defect running along the equator, labeled by a simple $A$-$B$-bimodule $X$. Labeling the bulk fields by the basis defined in \eqref{basis_DF}, the defect transmission coefficient $\dcoef ij\alpha\beta X$ is (up to a normalization factor) the structure constant, expressed in terms of our standard basis \eqref{npbasis_Z} of conformal blocks, of this correlator. Via the TFT-construction we obtain:
 \eqpic{def_TC}{180}{57} {\setulen90
  \put(0,63)   {$ \dim(X)\; \dcoef \ia\ja\alpha\beta X ~= $}
  \put(97,-6){ {\INcludepicclal{44}{342}}
  \put(47,138)   {\pl X}
  \put(22,103)   {\pl{\phi_{\!\alpha}^{}}}
  \put(50,39)    {\pl{\phi_{\!\beta}^{}}}
  \put(39.5,116) {\pl A}
  \put(46.8,58)  {\pl B}
  \put(2.5,15)   {\pl{\ia}}
  \put(17,7)     {\pl{\ib}}
  \put(58.8,8)   {\pl{\ja}}
  \put(76.7,8)   {\pl{\jb}}
  } }
 \section{The classifying algebra for defects}\label{sec:CD}
Recall from \eqref{CD_space} that, as a vector space, $\CD$ is the morphism space
\be
    \CD= \bigoplus_{p,q\In\I} \Homaa{U_p\otimes^+ A\otimes^- U_p} A\otimes_\CN \Hombb{U_\pb\otimes^+ A\otimes^- U_\qb} B\,.
\ee
Using the basis introduced in \eqref{basis_DF} we also choose a basis
 \be\label{basis_CD}
  \{ \phi^{pq,\alpha\beta}_{} \}
  = \{ \phi^{pq,\alpha}_A \oti \phi^{\pb\qb,\beta}_B \,|\, p,q\In\I\,,\,
  \alpha\eq1,...,Z_{pq}(A)\,,\, \beta\eq1,...,Z_{pq}(B) \}
 \ee
for $\CD$. For clarity we here display the phase of each bulk field. That is, $\phi^{pq,\alpha}_A$ equals $\phi^\alpha_{pq}$  in \eqref{basis_DF} with $X\eq Y\eq A$ and similarly for $\phi^{\pb\qb,\beta}_B$. Note that the dimension of $\CD$ is given by
\be\label{dim_CN}
    \dim_{\CN}(\CD)=\Tr(Z(A)Z(B)^{\text t})\,.
\ee

We define a product on $\CD$ in terms of the basis \eqref{basis_CD}:
\be\label{prod_CD_thm}
  \phi^{ij,\alpha\beta}_{} \cdot \phi^{kl,\gamma\delta}_{}
  := \sum_{p,q\in\I} \sum_{\mu,\nu}
  \clc{ij}\alpha\beta{kl}\gamma\delta{pq}\mu\nu\, \phi^{pq,\mu\nu}_{},
\ee
with the structure constants defined by
\be\label{clc_def}
    \bearl\dsty
  \clc{\ia\ja}\alpha\beta{kl}\gamma\delta{pq}\mu\nu
  := \frac1{S_{0,0}^2} \dim(U_p) \dim(U_q)\,\theta_\ja\,\theta_l\,\theta_q
  \nxl{2.5}\dsty \hsp{8}
  \sum_{\kappa,\lambda}
  {(\cbulki_{A;pq})}_{\kappa\mu}\, {(\cbulki_{B;\pb\qb})}_{\lambda\nu}\,
  Z(\KC \ib\kb p\jb\lb q{\alpha\gamma\kappa}{\beta\delta\lambda}) \,,
  \eear
\ee
where $Z(\KC \ib\kb p\jb\lb q{\alpha\gamma\kappa}{\beta\delta\lambda})$ is the invariant of the cobordism
\eqpic{SCIT:70}{270}{105}{ \setulen 49
  \put(0,222){$\dsty \KC{\ia_1}{\ia_2}{p}{\ja_1}{\ja_2}{q}
               {\alphz\alphv\beta_2}{\alphe\alphd\beta_4} ~:= $}
  \put(97,-123){
  \put(104,110)   {\includepicclax1{862}{70}}
  \put(177,383)   {\pl{\phi_{\!\alphz}^{}}}
  \put(251,386)   {\pl{\phi_{\!\alphv}^{}}}
  \put(304,394)   {\pl{\phi_{\!\beta_2}^{}}}
  \put(197,360)   {\pl{\ib_1^{}}}
  \put(249,360)   {\pl{\ib_2^{}}}
  \put(193,312)   {\pl{\ia_1^{}}}
  \put(251,302)   {\pl{\ia_2^{}}}
  \put(310,313)   {\pl{p}}
  \put(308,360)   {\pl{\pb}}
  \put(210,243)   {\pl{\phi_{\!\alphe}^{}}}
  \put(261,254)   {\pl{\phi_{\!\alphd}^{}}}
  \put(319,256)   {\pl{\phi_{\!\beta_4}^{}}}
  \put(205,182)   {\pl{\ja_1^{}}}
  \put(255.3,182) {\pl{\ja_2^{}}}
  \put(355,180)   {\pl{q}}
  \put(180.6,437) {\pl{\jb_1^{}}}
  \put(228,437)   {\pl{\jb_2^{}}}
  \put(336,440)   {\pl{\qb}}
  \put(294,409.5) {\pA A}
  \put(292,270.5) {\pB B}
  } }
in $S^2\times S^1$. Here, the dotted cylinder represents the $3$-manifold $S^2\times S^1$. The $S^2$-direction is horizontal, i.e.\ each horizontal disc represents a two-sphere, and the vertical direction is the $S^1$-direction, i.e.\ top and bottom are to be identified. In the sequel we will always represent $S^2\times S^1$ this way.

\pagebreak
We are now in a position to state:
\begin{thm}\label{thm:CD}
(i)\,~The complex vector space $\CD$ endowed with the product \eqref{clc_def} has the structure of a semisimple commutative unital associative algebra.
\nxl2
(ii)\,~The (one-dimensional) irreducible representations of the algebra \CD\ are in bijection
with the types of simple topological defects separating the phases
$A$ and $B$, i.e.\ with the isomorphism classes of simple $A$-$B$-bimodules.
Their representation matrices are furnished by the defect transmission coefficients.
\end{thm}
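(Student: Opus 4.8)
The plan is to use bulk factorization across a defect line (Theorem~\ref{thm:bulkfac}) as the central tool, exactly as it was used to derive the classifying algebra for boundary conditions and, in spirit, the $s$-matrix decomposition of a semisimple modular category. The point of departure is to take the world sheet $\ws$ to be a sphere carrying a circular configuration of three defect lines on which three defect fields are inserted (this is precisely the fundamental world sheet of Theorem~\ref{thm:cS}, case two), and to factorize it along two disjoint cutting circles, each crossing one defect line, so that it decomposes into pieces of the type $\wsSD Xpq\alpha\beta$ glued to half-spheres. Running the factorization identity \eqref{fact_rel} once for each cut, and using the defect two-point structure constants $\cdef Xpq\alpha\beta$ from \eqref{2pdef} together with the coefficients $\Cfact Xpq\alpha\beta=\dim(U_p)\dim(U_q)\,\cdefinv X\pb\qb\alpha\beta$, produces an identity which, when read as an expansion of the correlator of the three-defect-field sphere in the standard basis \eqref{npbasis_Z}, is tautologically the associativity relation for the product \eqref{prod_CD_thm}. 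So part~(i) — associativity — is obtained by factorizing a single world sheet in two inequivalent orders and comparing, with the gluing homomorphisms $\GLL pq$ and their covariance \eqref{iso_cov}, \eqref{modinv} guaranteeing that both orders give the same vector in $\tfts(\widehat\ws)$.

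For the remaining algebra axioms: commutativity follows from the symmetry of the construction under exchanging the two cut circles, which on the level of cobordisms in $S^2\times S^1$ is an orientation-preserving homeomorphism restricting to the identity on the boundary, hence leaves the invariant $Z(\KC\ib\kb p\jb\lb q{\alpha\gamma\kappa}{\beta\delta\lambda})$ invariant; one must check that the $\theta$- and $\cbulki$-prefactors in \eqref{clc_def} are arranged so that this symmetry survives, which is a bookkeeping matter with ribbon moves. The unit is the element $\phi^{00,11}$ built from the embeddings of the tensor units $U_0$ into $A$ and into $B$ (i.e.\ the identity defect fields), and the unit property reduces to collapsing a trivial defect bubble via the equivalences of Section~\ref{sec:equiv_ws}. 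Semisimplicity will follow once (ii) is in hand: once we know the irreducible representations of $\CD$ are one-dimensional and indexed by the (finitely many) simple $A$-$B$-bimodules, a dimension count — $\dim_\CN\CD=\Tr(Z(A)Z(B)^{\mathrm t})$ from \eqref{dim_CN}, matched against the number of simple bimodules with multiplicity — shows $\CD$ is the algebra of functions on that finite set, hence semisimple commutative.

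For part~(ii) proper: to each simple $A$-$B$-bimodule $X$ one associates the linear functional on $\CD$ sending a basis element $\phi^{ij,\alpha\beta}$ to the defect transmission coefficient $\dcoef ij\alpha\beta X$ of \eqref{def_TC}, and the claim is that this is an algebra homomorphism $\CD\to\CN$ and that all one-dimensional representations arise this way. That these functionals are homomorphisms is, again, bulk factorization applied to the sphere with two bulk fields separated by the defect $X$ \eqref{def_TC}: cutting off one bulk field by a circle crossing $X$ rewrites $\dcoef{}{}{}{}X$ against products of $\dcoef{}{}{}{}X$'s and the structure constants \eqref{clc_def}, which is exactly the homomorphism property. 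That every irreducible representation is of this form, and that the assignment $X\mapsto(\dcoef{}{}{}{}X)$ is injective on isomorphism classes of simple bimodules, uses on one side non-degeneracy of the defect two-point function on the sphere (so the transmission coefficients determine the defect operator $D_X$, and $D_X$ determines $X$ up to isomorphism by \cite{ffrs5}), and on the other side the dimension count above to see there are no further characters.

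\textbf{Main obstacle.} The genuinely hard part is the associativity computation: one must identify, after cutting the three-defect sphere two different ways and resolving all the gluing cobordisms $\MGL\circ\M_{\wsf pq\alpha\beta}$ into ribbon graphs in $S^2\times S^1$ via the factorization-torus recipe of Section~\ref{BF_pract}, that the two resulting expansions of $\Cor(\ws)$ agree \emph{coefficient by coefficient} in the basis \eqref{npbasis_Z}. This forces one to track precisely where every twist $\theta$, every normalization $1/S_{0,0}$, every occurrence of $\cbulki_{A;pq}$ and $\cbulki_{B;\pb\qb}$, and every choice of $\bar\lambda$-basis \eqref{lambdachoice} enters, and to verify that the net effect is an honest associative product rather than a twisted one. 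The $\theta$-prefactors in \eqref{clc_def} are exactly the kind of thing that has to come out right for both associativity and commutativity simultaneously, and getting the ribbon-graph manipulations in $S^2\times S^1$ (sliding $A$- and $B$-ribbons past the defect ribbons, using that $A$, $B$ are symmetric special Frobenius and that the $\phi$'s are bimodule morphisms) to close up cleanly is where the real work lies.
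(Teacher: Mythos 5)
Your toolbox is the right one (bulk factorization across defect lines, comparison of two inequivalent factorizations of a single world sheet, projection to the vacuum channel, symmetry of surgery invariants, non-degeneracy of the matrix of defect transmission coefficients, and a dimension count against $\Tr(Z(A)Z(B)^{\mathrm t})$), and your treatment of commutativity, the unit, injectivity of $X\mapsto \dcoef ij\alpha\beta X$ and the final semisimplicity count is essentially the paper's. But the two central steps are anchored to the wrong world sheets, and in both cases the argument as you state it would fail.

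First, the homomorphism property of the defect transmission coefficients is the identity \eqref{dcoef_quad}, which is \emph{quadratic} in the $\dcoef ij\alpha\beta X$ on one side and linear on the other. It cannot be extracted from the two-bulk-field sphere \eqref{def_TC}: a single application of Theorem \ref{thm:bulkfac} to that world sheet expresses its correlator as a sum that is linear in correlators of factorized world sheets, and there is no mechanism that produces a product of two transmission coefficients. The paper instead starts from the sphere with \emph{four} bulk fields (two in each phase) separated by the defect $X$, see \eqref{cob_4p:42}, and compares the \emph{defect-crossing} factorization, which after projection to the vacuum channel yields the product $\dcoef{\ib_1}{\jb_1}{\alphz}{\alphe}X\,\dcoef{\ib_2}{\jb_2}{\alphv}{\alphd}X$, with the \emph{double bulk} factorization, which yields $\sum\clc{ij}\alpha\beta{kl}\gamma\delta{pq}\mu\nu\,\dcoef pq\mu\nu X$ and in the process fixes the $\theta$- and $\cbulki$-prefactors appearing in \eqref{clc_def}; cf.\ the schematic picture \eqref{schematicview}. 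Second, your associativity argument starts from the sphere with three defect fields on a circular defect configuration. The elements of $\CD$ are pairs of \emph{bulk} fields, one in phase $A$ and one in phase $B$; defect fields on non-trivial defects live in different morphism spaces, and there is no reason the factorization of that world sheet produces the structure constants \eqref{clc_def} at all, let alone ``tautologically'' their associativity. The paper proves associativity algebraically: it defines a manifestly totally symmetric \emph{ternary} product by replacing the trace over the three-holed-sphere endomorphism in \eqref{SCIT:70} by the analogous four-holed-sphere invariant, shows that one bracketing of the iterated binary product equals this ternary product, and then uses commutativity to conclude. A purely geometric derivation in your spirit would require a world sheet carrying three bulk-field pairs across a single defect line, factorized in two nested ways --- not the three-defect-field sphere. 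A smaller remark: your semisimplicity argument presupposes that all irreducible representations are one-dimensional; in the paper this is itself a conclusion, obtained by combining the $\dim_\CN(\CD)$ inequivalent one-dimensional representations furnished by the transmission coefficients with the general bound $n_{\text{simp}}(\CD)\leq\dim_\CN(\CD)$.
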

The rest of this chapter is devoted to the proof of this statement.
\section{Structure constants from factorization}
In this section we will describe the procedure in \cite{fuSs2}, that uses the bulk factorization identity \eqref{fact_rel},  to show that the defect transmission coefficients indeed are one-dimensional representation matrices of $\CD$:
\be\label{dcoef_quad}
  \dcoef ij\alpha\beta X\, \dcoef kl\gamma\delta X = \sum_{p,q}\sum_{\mu,\nu}
  \clc{ij}\alpha\beta{kl}\gamma\delta{pq}\mu\nu\, \dcoef pq\mu\nu X\,.
\ee

Our strategy is to start from a suitable world sheet and perform factorization in two different directions. By projecting the correlator to a particular basis element in the space of conformal blocks we obtain the two sides of \eqref{dcoef_quad}.
We start from the world sheet $\ws$ which is a sphere with two bulk fields in phase $A$ on the northern hemisphere and two bulk fields in phase $B$ on the southern. The two phases are separated by a simple $A$-$B$-defect $X$ running along the equator. We display the connecting manifold of $\ws$:
  \Eqpic{cob_4p:42}{320}{85}{ \setulen 80
  \put(0,110) {$\Mws ~=$}
  \put(50,-16){
  \put(5,0)   {\includepicclax3{04}{42}}
  \put(69,121.5) {\pl{\phi_{\!\alphe}^{}}}
  \put(140,129.9){\pl{\phi_{\!\alphz}^{}}}
  \put(237.8,96.1){\pl{\phi_{\!\alphd}^{}}}
  \put(253,139.7){\pl{\phi_{\!\alphv}^{}}}
  \put(168,117)  {\pA A}
  \put(275,129.6){\pA A}
  \put(80,104)   {\pB B}
  \put(265.5,108){\pB B}
  \put(83,219)   {\pl {\ia_1}}
  \put(146,219)  {\pl {\ia_3}}
  \put(202,219)  {\pl {\ia_2}}
  \put(262,219)  {\pl {\ia_4}}
  \put(81,26)    {\pl {\ja_1}}
  \put(144,26)   {\pl {\ja_3}}
  \put(201.4,26) {\pl {\ja_2}}
  \put(260,26)   {\pl {\ja_4}}
  \put(305.5,134){\pX X}
  } }
Note that, just like in \eqref{Man_twofieldsdefect}, each vertical rectangle represents a two-sphere. In particular the middle shaded rectangle is the world sheet $\ws$ embedded into $\Mws$.
The
world sheets obtained from the
two factorization operations
is illustrated schematically in the following picture\footnote{We are interested in a projection to a subspace of conformal blocks (c.f.\ \eqref{vac_comp}). For this reason, two identity defect fields have been left out in the picture of the world sheet obtained from the defect-crossing factorization.}:
\Eqpic{schematicview}{320}{95}{ \put(32,10){ \setulen60
  \put(171,245)   {\includepicclax23{501}
                   \put(23,33){\pX X} \put(57,57){\boxA} \put(57,16){\boxB} }
  \put(-20,58)  {\begin{turn}{50} \btcs \nxx\vlleftarrow{15}4
                  \nxx \budef\hspace*{5.8em}factoriz.\end{tabular}\end{turn}}
  \put(34,147)    {\includepicclax2{28}{502}}
  \put(-61,-5)    {\includepicclax2{95}{503}
                   \put(12,14){\pl X} \put(38,42){\boxA} \put(38,10){\boxB} }
  \put(11,-5)     {\includepicclax2{95}{503}
                   \put(12,14){\pl X} \put(38,42){\boxA} \put(38,10){\boxB} }
  \put(450,130)   {\includepicclax2{95}{506} \put(32,30){\boxA} }
  \put(450,65)    {\includepicclax2{95}{503}
                   \put(12,14){\pl X} \put(38,42){\boxA} \put(38,10){\boxB} }
  \put(450,0)     {\includepicclax2{95}{505} \put(32,30){\boxB} }
  \put(230,259) {\begin{turn}{-40} \btcs \nxx\vlrightarrow{15}1 \nxx double
                   bulk\hspace*{6.5em}factorization \end{tabular}\end{turn}}
  \put(317,122)   {\includepicclax2{28}{504}}
   } }
The \emph{defect crossing factorization} is a factorization along a circle intersecting the defect line twice and running between the two bulk fields in phase $A$ and phase $B$, respectively.
In the \emph{double bulk factorization} two separate factorization operations are performed along circles running parallel with the defect line such that there are no bulk fields in between. Note that the result of the defect crossing factorization yields two copies of the world sheet underlying the defect transmission coefficients whereas the double bulk factorization yields one copy of this world sheet together with two three-point functions on the sphere. This should be compared to the occurrences of factors of defect transmission coefficients in  \eqref{dcoef_quad}.

Recall that even though the world sheets we arrive at after factorization have different topology and additional field insertions compared to the original world sheet $\ws$, the factorization identity is a statement about vectors in the space of conformal blocks on the double $\wsD$ of the original world sheet. The space of conformal blocks on $\wsD$ is in analogy with \eqref{basis_TPB} given by
\be\label{CF_4psphere}
    B(U_{\ia_1}, U_{\ia_3},U_{\ia_2},U_{\ia_4})\oti B^{-}(U_{\ja_1}, U_{\ja_3},U_{\ja_2},U_{\ja_4})\,.
\ee
We will be interested in the contribution to the \emph{vacuum channel} for both factors in \eqref{CF_4psphere}. This means first that we will be interested in the subspace of conformal in which
 \be\label{labels_vacuum}
  \ia_3 = \ib_1\,, \quad \ia_4 = \ib_2\,, \quad \ja_3 = \jb_1
  \quad{\rm and}\quad \ja_4 = \jb_2 \,.
\ee
Second, using the basis \eqref{npbasis_Z} of four-point blocks on the sphere, the simple object labeling the basis elements is taken to be the tensor unit. This space is zero-dimensional unless the labels of the marked points are related as in \eqref{labels_vacuum}, in which case it is one-dimensional, c.f \eqref{npbasis}. Thus, we will be interested in coefficient of the basis element
\be\label{vac_comp}
     B[{\ia_1}, {\bar\ia_1},{\ia_2},{\bar\ia_2}]_0\oti B^{-}[{\ja_1}, {\bar\ja_1},{\ja_2},{\bar\ja_2}]_0\,
\ee
for each $\ia_1,\ia_2,\ja_1,\ja_2\In\Obj(\C)$. Here $B[{\ia_1}, {\bar\ia_1},{\ia_2},{\bar\ia_2}]_0$ is the invariant of the cobordism
 \Eqpic{4pbasis:75,4pbasisvac:76} {320} {45} { \setulen 70
  \put(0,69)  {$ \widehat\blS(\ia_1,\ib_1,\ia_2,\ib_2)_0 ~:= $}
  \put(128,0)  { \includepicclax2{66}{75}
  \put(8,118)  {\pg {\ia_1} }
  \put(29,136) {\pg {\ib_1} }
  \put(68,43)  {\pg {\ib_1} }
  \put(84,65)  {\pg 0 }
  \put(105,135){\pg {\ia_2} }
  \put(126,117){\pg {\ib_2} }
  }
  \put(295,69) {$ \equiv $}
  \put(331,0)  { \includepicclax2{66}{76}
  \put(8,118)  {\pg {\ia_1} }
  \put(29,136) {\pg {\ib_1} }
  \put(105,135){\pg {\ia_2} }
  \put(126,117){\pg {\ib_2} }
  } }
and $B^{-}[{\ja_1}, {\bar\ja_1},{\ja_2},{\bar\ja_2}]_0$ is the invariant of the corresponding cobordism with oppositely oriented boundary.
We denote by
\be
    \cXo\equiv c(\PHi{\alphz}^{\ib_1\jb_1\,\ccA},\PHi{\alphv}^{\ib_2\jb_2\,\ccA};
  X;\PHi{\alphe}^{\ia_1\ja_1\,\ccB},\PHi{\alphd}^{\ia_2\ja_2\,\ccB})_0^{}\,,
\ee
the structure constant of $\Cor(\ws)\,{\equiv}\, C(\PHi{\alphz}^{\ib_1\jb_1\,\ccA},\PHi{\alphv}^{\ib_2\jb_2\,\ccA};
  X;\PHi{\alphe}^{\ia_1\ja_1\,\ccB},\PHi{\alphd}^{\ia_2\ja_2\,\ccB})$,\linebreak corresponding to the basis element \eqref{vac_comp}.

Before going into details we illustrate the procedure schematically, also in terms of cobordisms, in the following picture (in which the connecting manifold on the top is the one of $\ws$, c.f.\ \eqref{cob_4p:42})
\be\label{schematic_MF}
   \begin{picture}(400,400)
   \put(0,10){
   \put(116,300)   { {\includepicclax12{42}} }
   \put(45,289) {\begin{turn}{35} \btcs \budef     \nxx \vlleftarrow57
                  \nxx factorization \end{tabular}\end{turn}}
   \put(236,333){\begin{turn}{-35}\btcs double bulk\nxx \vlrightarrow57
                  \nxx factorization \end{tabular}\end{turn}}
   \put(0,154)   { \includepicclax15{46t} }
   \put(239,154)   { \includepicclax0{87}{62t} }
   \put(30,140) {\begin{turn}{-50} \tovacuum \end{turn}}
   \put(280,97) {\begin{turn}{50}  \tovacuuw \end{turn}}
   \put(32,17)     { {\includepicclax20{44}}\put(30,51){\pX{\scriptstyle X}}}
   \put(77,17)    { {\includepicclax20{44}}\put(30,51){\pX{\scriptstyle X}}}
   \put(128,51)    { \vleq34 }
   \put(137,59)    { {\footnotesize \erf{dcoef_quad}} }
   \put(181,-5)    { {\includepicclax10{70t}} }
   \put(271,17)    { {\includepicclax20{44}}\put(30,51){\pX{\scriptstyle X}}}}
   \end{picture}
\ee
Here we have left out various summations and labels. The pictures of manifolds should be thought of as representing their invariants. All cylinders in \eqref{schematic_MF} represent $S^2\times S^1$.

When studying bulk factorization of a correlator of a sphere, using the strategy described in section \ref{BF_pract}, the manifold with corners $\Mwscut$ will always we a disjoint union of two three-balls. Thus the topology of the manifold obtained from the pairwise identification of sticky components of   $\Mwscut$ and \eqref{stickyft} is not obvious if we use the embedding \eqref{stickyft} of the factorization torus $ \T_{\!pq\gamma\delta}^{A,X}$ into $\R^3$. We will resolve this issue by instead embedding $ \T_{\!pq\gamma\delta}^{A,X}$ into $S^2\times S^1$. Thus we display $ \T_{\!pq\gamma\delta}^{A,X}$ as
\eqpic{glue_tor_inv:45} {250} {100} { \setulen 80 \put(0,-17){
  \put(0,144)   {$ \T_{\!pq\gamma\delta}^{A,Y} ~= $}
  \put(85,0)  {
  \put(0,15)    {\includepicclax3{04}{45}}
  \put(188,46)    {\pl {q}}
  \put(195,197)   {\pl {\qb}}
  \put(215,141)   {\pl {\pb}}
  \put(221,115)   {\pl {p}}
  \put(159,103.4) {\pl {\phi_\gamma}}
  \put(181,154)   {\pl {\phi_\delta}}
  \put(136,114)   {\pl Y}
  \put(157,187)   {\pl Y}
  \put(183,106)   {\pl A}
  \put(171,168)   {\pl A}
%  \put(86,177)    {\begin{turn}{25}\lsqarrov\end{turn}}
%  \put(112,190)   {$\vio \rmY_\T^1 $}
  } } }
Here the dotted cylinder represents $S^2\times S^1$ as described after \eqref{SCIT:70}. Note also that in this chapter we follow the conventions in \cite{fuSs2} to use the normalization in \cite{fjfrs} of the gluing homomorphism, c.f. Remark \ref{rmk:norm_glue}. Therefore, the ribbon graph embedded in \eqref{glue_tor_inv:45} differs slightly from the one in \eqref{stickyft}.
With the description \eqref{glue_tor_inv:45} of $\T_{\!pq\gamma\delta}^{A,Y}$ it is straightforward to glue in two three-balls with corners with embedded ribbon graphs.

\subsection{Defect crossing factorization}
Making use of the prescription, given in section \ref{BF_pract}, for how to perform factorization of a correlator, and the embedding \eqref{glue_tor_inv:45} of the factorization torus into $S^2\times S^1$, it is straightforward to perform the defect crossing factorization. We first replace $\ws$ by an equivalent world sheet in which we have fused the defect $X$ with it self in the region between the two bulk fields $\phi_{\!\alphz}^{}$ and $\phi_{\!\alphv}^{}$. Thus we may write the connecting manifold as
\Eqpic{cob_4p_df:43}{320}{100}{ \setulen 80
  \put(0,235) {$\Mws ~=~ \dsty \sum_Y\sum_\tau$}
  \put(50,-2){
  \put(0,0)   {\includepicclax3{04}{43}}
  \put(65,124)   {\pl{\phi_{\alphe}}}
  \put(134,131)  {\pl{\phi_{\alphz}}}
  \put(220,98)   {\pl{\phi_{\alphd}}}
  \put(248.2,144){\pl{\phi_{\alphv}}}
  \put(152,117)  {\pA A}
  \put(269,134)  {\pA A}
  \put(76,107)   {\pB B}
  \put(261,111)  {\pB B}
  \put(139,103)  {\pX X}
  \put(179,136)  {\pX Y}
  \put(176,101)  {\begin{rotate}{51}\pl {\wsemb(S)}\end{rotate}}
  \put(166,138.1){\pl \tau}
  \put(209.1,136){\pl {\bar\tau}}
  \put(79,222)   {\pl {\ia_1}}
  \put(141,222)  {\pl {\ia_3}}
  \put(197,222)  {\pl {\ia_2}}
  \put(256,222)  {\pl {\ia_4}}
  \put(76.5,27)  {\pl {\ja_1}}
  \put(140,27)   {\pl {\ja_3}}
  \put(196.4,27) {\pl {\ja_2}}
  \put(255,27)   {\pl {\ja_4}}
  \put(300,136)  {\pX X}
  } }
Here and below we slightly abuse notation. The equality \eqref{cob_4p_df:43} is to be read as an equality of invariants: The invariant of $\Mws$ equals the sum over the invariants of the manifolds displayed on the right hand side. The summation over $Y$ is taken over isomorphism classes of simple $A$-$B$-bimodules and $\tau$ labels a basis of the morphism space $\Hombb{X^\vee\otiA X}Y$. The dashed-dotted line in \eqref{cob_4p_df:43} is the embedding $\wsemb(\Scut)$ of the cutting circle.

Next we cut the connecting manifold in \eqref{cob_4p_df:43} along the preimage of $\wsemb(\Scut)$ as described in section \ref{BF_pract}. The so obtained manifold with corners consists of two three-balls,  each of them with a ribbon labeled by $Y$ ending or starting at the "sticky" annular part. The identification of these sticky parts with the ones in \eqref{glue_tor_inv:45} is now straightforward. The result is the following manifold:
 \eqpic{cob_4p_df_glue:46} {260} {125} { \put(0,-6){\setulen80
  \put(0,148)   {$ \M_{pq\gamma\delta}^{Y,\tau} ~= $}
  \put(75,0)  {  \INcludepicclal{46}{304}
  \put(215,160)   {\pl {\pb}}
  \put(220,136)   {\pl {p}}
  \put(188,36)    {\pl {q}}
  \put(197,226)   {\pl {\qb}}
  \put(160,122.8) {\pl {\phi_{\!\gamma}^{}}}
  \put(181,174)   {\pl {\phi_{\!\delta}^{}}}
  \put(132,132)   {\pX Y}
  \put(170,228)   {\pX Y}
%  \put(86,177)    {\begin{turn}{25}\lsqarrov\end{turn}}
%  \put(112,190)   {$\vio \rmY_\T^1 $}
  \put(42,123)    {\pl{\phi_{\alphe}}}
  \put(67,147.2)  {\pl{\phi_{\alphz}}}
  \put(42,174)    {\pl{\ia_1}}
  \put(43,78.5)   {\pl{\ja_1}}
  \put(59,172)    {\pl{\ia_3}}
  \put(61,78.5)   {\pl{\ja_3}}
  \put(30,244)    {\pl{\phi_{\alphd}}}
  \put(85,235)    {\pl{\phi_{\!\alphv}}}
  \put(43,56)     {\pl{\ja_2}}
  \put(43,197)    {\pl{\ia_2}}
  \put(59,54)     {\pl{\ja_4}}
  \put(62.5,197)  {\pl{\ia_4}}
  \put(97.5,252)  {\pA A}
  \put(90,136)    {\pA A}
  \put(172,187)   {\pB B}
  \put(36,126)    {\pl {\bar\tau}}
  \put(119.5,259) {\pl \tau}
  \put(115,232)   {\pX X}
  \put(101,155)   {\pX X}
  } } }
Thus we have obtained an expression for the composition $\MGL \circ \M_{\ws'}$, whose invariant equals the gluing cobordism applied to one of the factorized world sheets, c.f. section \ref{sec:bulk_proof}.
The factorization identity \eqref{fact_rel} implies that the correlator of $\ws$ can be written as
\be\label{Corr_def_fact}
    \Cor(\ws)=\sum_{Y}\sum_{\tau}
  \sum_{q,p\in\I}\sum_{\gamma,\delta}\,\Cfactnorm Ypq\alpha\beta\, Z(\M_{pq\gamma\delta}^{Y,\tau})\,.
\ee

Next we project to the vacuum sector. That is, we compose both sides of \eqref{Corr_def_fact} with the vector in \eqref{npdualbasis_norm} dual to the one in \eqref{vac_comp}. In terms of manifolds this amounts to composing \eqref{cob_4p_df_glue:46} with cobordisms of the types displayed in  \eqref{npdualbasis}. The result is the following expression for the vacuum channel structure constants:
\be\label{def_fact_coef}
    \cXo= \frac1{S_{0,0}^2}\sum_{Y}\sum_{\tau} \sum_{p,q,\gamma,\delta}
  \Cfactnorm Xpq\alpha\beta\,
  Z(\breve\M_{pq\gamma\delta}^{\,Y,\tau})\,,
\ee
with $\breve\M_{pq\gamma\delta}^{\,Y,\tau}$ the following ribbon graph:
\eqpic{cob_4p_df_glue_proj:47} {270} {122} {\setulen80
  \put(0,-6){
  \put(0,164)   {$ \breve\M_{pq\gamma\delta}^{\,Y,\tau} ~= $}
  \put(85,0)  {  \INcludepicclal{47}{304}
  \put(187,33)    {\pl {q}}
  \put(196,212)   {\pl {\qb}}
  \put(214,141)   {\pl {p}}
  \put(220,118)   {\pl {\pb}}
  \put(160,103.4) {\pl {\phi_\gamma}}
  \put(181,154.6) {\pl {\phi_\delta}}
  \put(134,113)   {\pX Y}
  \put(171,222)   {\pX Y}
%  \put(86,177)    {\begin{turn}{25}\lsqarrov\end{turn}}
%  \put(112,190)   {$\vio \rmY_\T^1 $}
  \put(42,103)    {\pl{\phi_{\alphe}}}
  \put(68,128)    {\pl{\phi_{\alphz}}}
  \put(31,134)    {\pl{\ia_1}}
  \put(52,62)     {\pl{\ja_1}}
  \put(53,144)    {\pl{\ib_1}}
  \put(35,60)     {\pl{\jb_1}}
  \put(35,216)    {\pl{\phi_{\alphd}}}
  \put(105,225)   {\pl{\phi_{\alphv}}}
  \put(39,235)    {\pl{\ia_2}}
  \put(77,168)    {\pl{\ja_2}}
  \put(70,240)    {\pl{\ib_2}}
  \put(58,166)    {\pl{\jb_2}}
  \put(121,209)   {\pA A}
  \put(92,116)    {\pA A}
  \put(173,168)   {\pB B}
  \put(36,107.7)  {\pl {\bar\tau}}
  \put(139,217)   {\pl {\tau}}
  \put(103,236)   {\pX X}
  \put(106,135)   {\pX X}
  } } }
in the closed three-manifold $S^2\times S^1$.
However, the summation in \eqref{def_fact_coef} is redundant. First note that $Z(\breve\M_{pq\gamma\delta}^{\,Y,\tau})$ is the trace of an endomorphism in the space of conformal one-point blocks on the sphere. This space is zero unless $p=q=0$. Second, due to semisimplicity of $\C_{B|B}$ and the fact that $B$ and $Y$ are simple, the space $\Hombb{U_0\otip Y\otim U_0}B\iso\Hombb{Y}B$ is zero unless $Y\eq B$. Thus only $Y\eq B$ gives a contribution to the sum in \eqref{def_fact_coef}. Finally, the space $\Hombb{X^\vee\otiA X}B$ is one-dimensional. By choosing a suitable basis of this space and its dual (see \cite[eq. (4.13)]{fuSs2}) in terms of (co-)evaluation morphisms we end up with the following expression for $\cXo$:
\be\label{fact_dcoeff1}
  \cXo
  = \frac1{S_{0,0}^{}}\,\dim(X)\, \dcoef {\ib_1}{\jb_1}{\alphz}{\alphe} X \,
  \dcoef {\ib_2}{\jb_2}{\alphv}{\alphd} X .
\ee
\pagebreak
\subsection{Double bulk factorization}
The double bulk factorization is more involved than the defect crossing factorization. We indicate the cutting circles in the following picture of the connecting manifold:
\Eqpic{cob_4p_BulkF:48}{320}{95}{ \setulen 80
  \put(0,118)   {$ \Mws ~= $}
  \put(50,3){
  \put(5,-2)   {\includepicclax3{04}{48}}
  \put(71,120)    {\pl{\phi_{\alphe}}}
  \put(114,130)   {\pl{\phi_{\alphd}}}
  \put(197,132)   {\pl{\phi_{\alphz}}}
  \put(234,135)   {\pl{\phi_{\alphv}}}
  \put(211,118)   {\pA A}
  \put(133,103)   {\pB B}
  \put(89,94)     {\begin{rotate}{15} \pl{\wsemb_1(S)}\end{rotate}}
  \put(265,138)   {\begin{rotate}{-36}\pl{\wsemb_2(S)}\end{rotate}}
  \put(95,219)    {\pl {\ia_1}}
  \put(133,219)   {\pl {\ia_2}}
  \put(187,219)   {\pl {\ia_3}}
  \put(242,219)   {\pl {\ia_4}}
  \put(92,26)     {\pl {\ja_1}}
  \put(130,27)    {\pl {\ja_2}}
  \put(186,27)    {\pl {\ja_3}}
  \put(241,27)    {\pl {\ja_4}}
  \put(290,140)   {\pX X}
  } }
Our aim is to obtain a manifold $\M_{n;pqrs}^{\beta_1\beta_2\beta_3\beta_4}$ whose invariant is a double application of a gluing homomorphism to the correlator of the world sheet after a double factorization.
We perform both factorizations simultaneously; accordingly the procedure involves two sticky full tori.
Thus $\Mws$ is cut into three pieces. Two of them (the ones being the union of the connecting intervals over the regions inside the two circles in \eqref{cob_4p_BulkF:48}) are three-balls with sticky part running along the equator, see \eqref{ball_n:49,ball_s:50} for pictures. The third one, depicted in
\eqref{cyl:51}, is a full torus with two sticky components running along the non-contractible cycle.

There are in total four identifications of sticky parts to be performed. Three of them are straight-forward to perform. The sticky parts of the two three-balls in \eqref{ball_n:49,ball_s:50} and the sticky part on the "exterior" of the full torus in \eqref{cyl:51} are easily identified with the sticky parts of the two full tori embedded in $S^2\times S^1$.
We omit all details (see \cite[Section 5]{fuSs2} for details)
and only display the result.
After the three pairwise identifications we obtain the two following manifolds with corners:
  \Eqpic{34glue_1}{260}{105}{ \setulen 75
  \put(0,148)   {$ \MN^{A;\,\ia_3\ja_3,\ia_4\ja_4}
                    _{\!pq\beta_1\beta_2} ~= $}
  \put(107,3){
  \put(0,0)   {\includepicclax2{85}{54}}
  \put(189,35)    {\pl {p}}
  \put(194.6,198) {\pl {\pb}}
  \put(214,142)   {\pl {q}}
  \put(221,117)   {\pl {\qb}}
  \put(157,101.8) {\pl {\phi_{\beta_1}}}
  \put(179,154)   {\pl {\phi_{\beta_2}}}
  \put(137,117)   {\pA A}
  \put(157,187)   {\pA A}
  \put(11,216)    {\pl{\ja_3}}
  \put(71,220)    {\pl{\ja_4}}
  \put(22,166)    {\pl{\ia_3}}
  \put(70,170)    {\pl{\ia_4}}
  \put(30,194)    {\pl{\phi_{\alphz}}}
  \put(80,204)    {\pl{\phi_{\alphv}}}
  \put(2,102)     {\includepicclax2{85}{rsqarrov}}
  \put(-29,98)    {\fbY \T A2}
   } }
and
  \Eqpic{34glue_2}{260}{103}{
  \setulen 75
  \put(0,148)   {$ \MN^{B;X,\ia_1\ja_1,\ia_2\ja_2}
                   _{\!rs\beta_3\beta_4} \,= $}
  \put(115,3){
  \put(0,0)   {\includepicclax2{85}{57}}
  \put(203,106)   {\pl {r}}
  \put(213,141)   {\pl {\rb}}
  \put(188,35)    {\pl {s}}
  \put(195,195)   {\pl {\sb}}
  \put(159,101.8) {\pl {\phi_{\!\beta_3}^{}}}
  \put(180,154)   {\pl {\phi_{\!\beta_4}^{}}}
  \put(145,109)   {\pB B}
  \put(158,187)   {\pB B}
  \put(12,221)    {\pl{\ja_1}}
  \put(72,221)    {\pl{\ja_2}}
  \put(23,167)    {\pl{\ia_1}}
  \put(70,171)    {\pl{\ia_2}}
  \put(30,194)    {\pl{\phi_{\alphe}}}
  \put(81,204)    {\pl{\phi_{\alphd}}}
  \put(145,125)   {\pX X}
  \put(2,102)     {\includepicclax2{85}{rsqarrov}}
  \put(-28,99)    {\fbY SA2}
  } }
The manifold $\M_{n;pqrs}^{\beta_1\beta_2\beta_3\beta_4}$ is then obtained by identifying the sticky components of the manifolds in \eqref{34glue_1} and \eqref{34glue_2}.

The procedure to perform the final identification is a bit lengthy; it is described in appendix A of \cite{fuSs2}.
Informally it can be described as follows: The manifold $\MN^{B;X,\ia_1\ja_1,\ia_2\ja_2}_{\!rs\beta_3\beta_4}$ in \eqref{34glue_2} can be turned into a ribbon graph in $S^3$ by performing surgery along a suitable tubular neighborhood. The resulting manifold is a ribbon graph in a manifold with corners that topologically is $S^3$ with a three-ball cut out and a annular component running along the boundary. This manifold, which is topologically a three-ball, can be "turned inside out" in the sense that we embedded it in $\R^3$ as the interior of a three-ball with an annular sticky component along the equator. This sticky component can easily be identified with the one of $\MN^{A;\,\ia_3\ja_3,\ia_4\ja_4}_{\!pq\beta_1\beta_2}$ in \eqref{34glue_1}. The result is
\Eqpic{fact_MF_S2S1:62}{320}{153}{ \setulen 5{658}
  \put(0,270){$ \dsty\M_{n;pqrs}^{\beta_1\beta_2\beta_3\beta_4} ~= $}
  \put(55,0){
  \put(84,-7)   {\includepicclax2{15}{62}}
  \put(213,452)   {\pl{\phi_{\!\alphz}^{}}}
  \put(264,438)   {\pl{\phi_{\!\alphv}^{}}}
  \put(361,410)   {\pl{\phi_{\!\beta_2}^{}}}
  \put(435,320)   {\pl{\phi_{\!\beta_1}^{}}}
  \put(314,436)   {\pA A}
  \put(391,275)   {\pA A}
  \put(284,270)   {\pB B}
  \put(261,195)   {\pX X}
  \put(195,397)   {\pl{\ia_3}}
  \put(248,397)   {\pl{\ia_4}}
  \put(202,353)   {\pl{\ia_1}}
  \put(249,352)   {\pl{\ia_2}}
  \put(410,400)   {\pl{\pb}}
  \put(444,235)   {\pl{q}}
  \put(205,246)   {\pl{\phi_{\!\alphe}^{}}}
  \put(257,256)   {\pl{\phi_{\!\alphd}^{}}}
  \put(303,225)   {\pl{\phi_{\!\beta_4}^{}}}
  \put(284,173)   {\pl{\phi_{\!\beta_3}^{}}}
  \put(331,224)   {\pl{\rb}}
  \put(192,72)    {\pl{\ja_3}}
  \put(235,77)    {\pl{\ja_4}}
  \put(190,114)   {\pl{\ja_1}}
  \put(235,115)   {\pl{\ja_2}}
  \put(168,200)   {\pl{n}}
  \put(254,133)   {\pl{s}}
  } }
Having obtained $\M_{n;pqrs}^{\beta_1\beta_2\beta_3\beta_4}$ we can apply the bulk factorization formula \eqref{fact_rel} twice. Thus $\Cor(\ws)$ can be written as
\be\label {doublebulk_Corr}
  \bearl\dsty
  \Cor(\ws) = \sum_{p,q,r,s\in\I}\!
  \dim(U_{p}) \dim(U_{q}) \dim(U_{r}) \dim(U_{s})
  \nxl{2}\dsty \hsp{5}
  \sum_{\beta_1,\beta_2,\beta_3,\beta_4}\! {(\cbulki_{A;p,q})}
  _{\beta_2\beta_1}\, {(\cbulki_{B;r,s})}_{\beta_4\beta_3}\,
  \sum_{n\in\I} S_{0,n}\,
  Z(\M_{n;pqrs}^{\beta_1\beta_2\beta_3\beta_4}) \,.
  \eear
\ee
The summation over $n$ is a result of the surgery turning $S^{2}\times S^1$ into $S^3$.

Next we project the result \eqref{doublebulk_Corr} to the vacuum channel. That amounts to composing $\M_{n;pqrs}^{\beta_1\beta_2\beta_3\beta_4}$ with manifolds of the type displayed in \eqref{npdualbasis}, dual to the ones in \eqref{4pbasis:75,4pbasisvac:76}. The result of this composition is
\Eqpic{fact_MF_S2S1_res:66}{320}{160}{ \setulen 5{658}
  \put(0,288){$\dsty \Mf_{0;pq\pb\qb}^{\beta_1\beta_2\beta_3\beta_4} ~= $}
  \put(33,-25){
  \put(104,27)   {\includepicclax2{15}{66}}
  \put(300,418)   {\pA A}
  \put(208,385)   {\pl{\phi_{\alphz}}}
  \put(259,395)   {\pl{\phi_{\alphv}}}
  \put(338,402)   {\pl{\phi_{\!\beta_2}^{}}}
  \put(472.5,381) {\pl{\phi_{\!\beta_1}^{}}}
  \put(193,360)   {\pl{\ib_1^{}}}
  \put(248,357)   {\pl{\ib_2^{}}}
  \put(198,319)   {\pl{\ia_1^{}}}
  \put(250,321)   {\pl{\ia_2^{}}}
  \put(403,380)   {\pl{\pb}}
  \put(428.8,91)  {\pl{q}}
  \put(332,465)   {\pl{\qb}}
  \put(291,276)   {\pB B}
  \put(211,247)   {\pl{\phi_{\alphe}}}
  \put(260,256)   {\pl{\phi_{\alphd}}}
  \put(316,258)   {\pl{\phi_{\!\beta_4}^{}}}
  \put(435,439.5) {\pl{\phi_{\!\beta_3}^{}}}
  \put(481,437)   {\pX X}
  \put(456,387)   {\pA A}
  \put(436,454.5) {\pB B}
  \put(375,262)   {\pl{p}}
  \put(326,122)   {\pl{\qb}}
  \put(189.7,93)  {\pl{\ja_1^{}}}
  \put(255.7,93)  {\pl{\ja_2^{}}}
  \put(156,437)   {\pl{\jb_1^{}}}
  \put(208,429)   {\pl{\jb_2^{}}}
  \put(498,342)   {\pl{p}}
  \put(510,365)   {\pl{\pb}}
  \put(475,338)   {\pl{q}}
  \put(457,334)   {\pl{\qb}}
  } }
To obtain \eqref{fact_MF_S2S1_res:66} we have, after having composed with the basis element dual to \eqref{vac_comp}, performed a series of straightforward manipulations of the ribbon graph, see appendix B of \cite{fuSs2} for details.
This way we obtain a second expression for the coefficient $\cXo\equiv c(\PHi{\alphz}^{\ib_1\jb_1\,\ccA},\PHi{\alphv}^{\ib_2\jb_2\,\ccA};
  X;\PHi{\alphe}^{\ia_1\ja_1\,\ccB},\PHi{\alphd}^{\ia_2\ja_2\,\ccB})_0^{}$:
   \be
   \bearl\dsty
  \cXo = \frac1{S_{0,0}^3} \sum_{p,q\in\I}\! \dim(U_{p})^2_{}
  \dim(U_{q})^2_{} \!\!
  \nxl2 \dsty \hsp{6}
   \sum_{\beta_1,\beta_2,\beta_3,\beta_4}\!
  {(\cbulki_{A;p,q})}_{\beta_2\beta_1}\, {(\cbulki_{B;\pb,\qb})}
  _{\beta_4\beta_3}\, Z(\Mf_{0;pq\pb\qb}^{\beta_1\beta_2\beta_3\beta_4})\,.
  \eear
\ee
Finally, comparing with the expression \eqref{def_TC} for the defect transmission coefficients we can perform a few more manipulations resulting in
 \be\label{fact_dcoeff2}
  \bearl\dsty
  \cXo = \frac1{S_{0,0}^3} \dim(X)\, \theta_{\ja_1}^{}\, \theta_{\ja_2}^{}
  \!\sum_{p,q\in\I}\! \dim(U_{p}) \dim(U_{q})\,\theta_{q}^{}
  \nxl{2}\dsty \hsp{6}
  \sum_{\beta_1,\beta_2,\beta_3,\beta_4}\! {(\cbulki_{A;p,q})}
  _{\beta_2\beta_1}\, {(\cbulki_{B;p,q})}_{\beta_4\beta_3}\,
  Z(\KC{\ia_1}{\ia_2}{p}{\ja_1}{\ja_2}{q}{\alphz\alphv\beta_2}
  {\alphe\alphd\beta_4})\, \dcoef{p}{q}{\beta_1}{\beta_3}X \,,
  \eear
\ee
with $Z(\KC{\ia_1}{\ia_2}{p}{\ja_1}{\ja_2}{q}{\alphz\alphv\beta_2}
  {\alphe\alphd\beta_4})$ as given in \eqref{SCIT:70}.
Comparing \eqref{fact_dcoeff1} and \eqref{fact_dcoeff2} proves \eqref{dcoef_quad} and thus that the defect transmission coefficients furnish one-dimensional representations of the algebra $\CD$ with the product \eqref{clc_def}.
\section{The algebra structure on $\CD$}
Here we outline the rest of the proof of Theorem \ref{thm:CD}, i.e.\ that $\CD$ is a semisimple commutative unital associative algebra and that the defect transmission coefficients exhaust the irreducible representations of $\CD$. We refer to \cite{fuSs2} for a more detailed proof.
\ssubject{Commutativity}
Let $C$ be a \boundA.
Using the unit property of $C$ and the bimodule property one shows that for any $\phi_{\!\alpha}^{}\In\Homcc{U_i\linebreak\otip C\otim U_l}C$
and $\phi_\beta^{}\In\Homcc{U_j\otip C\otim U_k}C$ one has:
 \Eqpic{BF_ex_1}{320}{44}{
 \put(10,0){ \setulen 77
  \put(-17,10){\includepicclax2{926}{74a}
  \put(-2,-9)  {\pl i}
  \put(18,-9)  {\pl j}
  \put(32,-9)  {\pl C}
  \put(28.2,41){\pl C}
  \put(32.8,104) {\pl C}
  \put(50,-9)  {\pl k}
  \put(70,-9)  {\pl l}
  \put(42,25.6){\pl{\phi_\beta^{}}}
  \put(42,75.6){\pl{\phi_{\!\alpha}^{}}}
  \put(78,35)  {$=$}
  }
  \put(78,10){\includepicclax2{926}{74c}
  \put(-2,-9)  {\pg i}
  \put(18,-9)  {\pg j}
  \put(32,-9)  {\pl C}
  \put(31.1,48){\pl C}
  \put(33.2,104) {\pl C}
  \put(50,-9)  {\pg k}
  \put(70,-9)  {\pg l}
  \put(78,35)  {$=$}
  }
  \put(171,10){\includepicclax2{926}{74d}
  \put(-2,-9)  {\pg i}
  \put(18,-9)  {\pg j}
  \put(32.2,-9){\pl C}
  \put(34.4,106) {\pl C}
  \put(50,-9)  {\pg k}
  \put(70.5,-9)  {\pg l}
  \put(78,35)  {$=$}
  }
  \put(266,10){\includepicclax2{926}{74e}
  \put(-2,-9)  {\pg i}
  \put(18,-9)  {\pg j}
  \put(32.5,-9){\pl C}
  \put(31.5,110) {\pl C}
  \put(50,-9)  {\pg k}
  \put(70.5,-9){\pg l}
  \put(77,35)  {$=$}
  }
  \put(360,10){\includepicclax2{926}{74f}
  \put(-2,-9)  {\pg i}
  \put(18,-9)  {\pg j}
  \put(32.4,-9){\pl C}
  \put(28.2,31){\pl C}
  \put(33.4,104) {\pl C}
  \put(50,-9)  {\pg k}
  \put(70.5,-9){\pg l}
  \put(40.1,29.1){\pl{\phi_{\!\alpha}^{}}}
  \put(42,75.8){\pl{\phi_\beta^{}}}
  } } }
and the analogous relation with braidings and inverse braidings exchanged:
  \eqpic{BF_ex_2}{180}{43}{ \setulen 90
  \put(0,10){\includepicclax3{42}{74a}
  \put(-2,-9)  {\pg i}
  \put(18,-9)  {\pg j}
  \put(32.3,-9){\pl C}
  \put(28.3,41){\pl C}
  \put(33.4,103.5) {\pl C}
  \put(50,-9)  {\pg k}
  \put(70,-9)  {\pg l}
  \put(42,25.6){\pl{\phi_\beta^{}}}
  \put(42,75.6){\pl{\phi_{\!\alpha}^{}}}
  \put(96,35)  {$=$}
  }
  \put(126,10){\includepicclax3{42}{74g}
  \put(-2,-9)  {\pg i}
  \put(18,-9)  {\pg j}
  \put(32.3,-9){\pl C}
  \put(28.2,31){\pl C}
  \put(33.1,103.5) {\pl C}
  \put(50,-9)  {\pg k}
  \put(70,-9)  {\pg l}
  \put(40.1,29.1){\pl{\phi_{\!\alpha}^{}}}
  \put(42,75.8){\pl{\phi_\beta^{}}}
  } }
Using \eqref{BF_ex_1} and \eqref{BF_ex_2} one establishes that the invariant $Z(\KC ikpjlq{\alpha\gamma\kappa}{\beta\delta\lambda})$ is \emph{totally} symmetric
in the three quadruples $(ij\alpha\beta)$, $(kl\gamma\delta)$ and
$(pq\kappa\lambda)$. By comparing with \eqref{clc_def} we see that this result extends to a symmetry of the structure constants:
\be
  \clc{ij}\alpha\beta{kl}\gamma\delta{pq}\mu\nu
  = \clc{kl}\gamma\delta{ij}\alpha\beta{pq}\mu\nu \,.
\ee
Thus $\CD$ is commutative.
\ssubject{Unitality}
Using dominance it is easily checked that if one pair of bulk fields in each phase is an identity field, the invariant of \eqref{SCIT:70} is (up to normalization) a product of two two-point functions on the sphere:
 \be
  Z(\KC \ia k0\ja l0{\beta\delta\nl}{\alpha\gamma\nl})
  = \frac{S_{0,0}^2} {\dim(U_{\ia}) \dim(U_{\ja})\,\theta_{\ia}\,\theta_{\ja}}\,
  \delta_{\ib,k}\, \,\delta_{\jb,l}\,
  {(\cbulk_{B;\ia,\ja})}_{\gamma\alpha}\, {(\cbulk_{A;\ib,\jb})}_{\beta\delta}\,,
  \ee
and analogously (using the symmetry of the invariant) if $\ia=\ja=0$ or $k=l=0$. There are two implications of this result. First:
\be
  \clc{00}\nl\nl{ij}\alpha\beta{pq}\mu\nu
  = \delta_{i,p}\, \delta_{j,q}\, \delta_{\alpha,\mu}\, \delta_{\beta,\nu}\,
  = \clc{ij}\alpha\beta{00}\nl\nl{pq}\mu\nu \,,
\ee
which means that the basis element $\phi^{00,\nl\nl}$ is a unit for $\CD$. Second, the linear map from $\CD\otiC\CD$ to $\CN$ defined by
\be\label{bil_form_CD}
  \phi^{ij,\alpha\beta}_{} \oti \phi^{kl,\gamma\delta}_{}
  \mapsto \clc{ij}\alpha\beta{kl}\gamma\delta{00}\nl\nl
\ee
is a non-degenerate bilinear form on $\CD$.
\ssubject{Associativity}
Proving associativity is more lengthy. The strategy is to define a commutative \emph{ternary} product and show that one bracketing of the twofold binary product equals the ternary product. It is then an elementary fact that this implies associativity of $\CD$.

The ternary product  on $\CD$ used for this calculation is obtained by defining structure constants similar to the ones  \eqref{clc_def} for the binary product. The  difference is that the invariant of \eqref{SCIT:70}, which is a trace over an endomorphism of the three-holed sphere, is replaced by the trace over the analogous endomorphism (with one extra pair of bulk fields) of a four-holed sphere, see \cite[eq. (6.22)]{fuSs2}

Let us also remark that associativity implies that the non-degenerate bilinear form in \eqref{bil_form_CD} is also invariant in the sense of Definition \ref{def_sym} (ii). Thus $\CD$ is in addition Frobenius, c.f. Remark \ref{rem_symm} (iii).
\ssubject{Semisimplicity and representation matrices}
We have already shown that the defect transmission coefficients furnish one-dimensional representations of \linebreak$\CD$ labeled by isomorphism classes of simple $A$-$B$-bimodules. Since the \linebreak$\dim_\CN(\CD)\times \dim_\CN(\CD)$-matrix furnished by the defect transmission coefficients is non-degenerate \cite[Theorem 4.2]{ffrs5}, non-isomorphic simple bimodules give rise to inequivalent representations. Thus, the number of inequivalent irreducible representations is at least as large as the dimension of $\CD$:
\be\label{n_LB}
    n_{\text{simp}}(\CD)\geq\dim_\CN(\CD)\,.
\ee
However, as any finite-dimensional associative algebra, $\CD$ is isomorphic as a module over itself to the direct sum over all inequivalent indecomposable projective $\CD$-modules, each one occurring with the multiplicity of the dimension of the corresponding simple module, see e.g. \cite[Satz G.10]{JAsc}. In particular this means that
\be
    n_{\text{simp}}(\CD)\leq\dim_\CN(\CD)\,.
\ee
In view of \eqref{n_LB} this is only possible if the number of inequivalent indecomposable representations equals the dimension of $\CD$, which implies that all irreducible representations are one-dimensional and projective. This in turn implies that $\CD$ is semisimple.

In addition, the dimension $\Tr(Z(A)Z(B)^{\text t})$ of $\CD$ equals the number of isomorphism classes of simple bimodules, see  \cite[Remark 5.19 (ii)]{fuRs4}. Thus, the irreducible representations are exhausted by the defect transmission coefficients. This completes the proof of Theorem \ref{thm:CD}.
\begin{remark}
    In the case $A\eq B$, an algebra $\mathscr D_{\!A|A}^{\rm PZ}$ over \CN\ with the same dimension as $\mathscr D_{\!A|A}$ has been obtained in \cite{pezu6}.
    That construction uses the numbers
    \be
    \gcoef ij\alpha\beta X
    := \sqrt{\dim(U_i)\,\dim(U_j)}\, \dim(X)\, \dcoef ij\alpha\beta X\,.
    \ee
    Under the assumption that the matrix formed by these numbers is unitary and that the mapping
    $\gcoef ij\alpha\beta X\mapsto (\gcoef ij\alpha\beta X)^*$ corresponds to an involution on the set of labels $(ij\alpha\beta)$, it follows that $\mathscr D_{\!A|A}^{\rm PZ}$ is commutative.
    Under the same assumptions one can show that $\mathscr D_{\!A|A}$ is isomorphic to $\mathscr D_{\!A|A}^{\rm PZ}$ as an algebra over \CN\ and that the irreducible representations of $D_{\!A|A}^{\rm PZ}$ are given by the defect transmission coefficients, see \cite{fuSs2} for some more details.
    Thus, $\CD$ can be seen as a generalization of the results in \cite{pezu6} to the case $B\neq A$.
\end{remark}
\section{Defect partition functions}\label{sec:def_PF}
We conclude the discussion of the classifying algebra by considering the \emph{defect partition function} $Z\TXY$. That is, the partition function of a torus with two circular defect lines, labeled by an $A$-$B$-defect $X$ and a $B$-$A$-defect $Y$, running parallel to a non-contractible cycle. Such partition functions where introduced in \cite{pezu5} were they where called "generalized twisted partition functions". Via the TFT-construction, $Z\TXY$, is obtained as the invariant of
\eqpic{manT2d_op:78}{190}{71}{\setulen80
  \put(0,86){$\dsty \M\TXY ~=$}
  \put(70,-5)   {
  \put(0,-2)   {\includepicclax3{04}{78}}
  \put(133.1,63){\pX X}
  \put(134,111) {\pX Y}
  \put(28.5,32) {\pB B}
  \put(54.5,69) {\pA A}
  } }
see e.g.\ \cite[Section 5.10]{fuSs4}. In \eqref{manT2d_op:78} top and bottom are identified, i.e.\ each horizontal cylinder represents a torus.

The \emph{defect partition function} partition function is a vector in the space of conformal blocks on the double $\torus\sqcup-\torus$ of the torus. Recall from section \ref{sec:TFT_SC} that a basis of the space of conformal  blocks on the torus is given by $\{|\chii_i;\torus\rangle|i\In\I\}$ with $|\chii_i;\torus\rangle$ the invariant of the full torus \eqref{char_MF}. Thus we can expand the defect partition function as
\be\label{DF_part_exp}
    Z\TXY=\sum_{i,j\In\I}\,Z\TXYij\,|\chii_i;\torus\rangle\oti|\chii_j;-\torus\rangle\,.
\ee
The structure constants are obtained by gluing to \eqref{manT2d_op:78}, the manifolds underlying the basis elements dual to the ones in \eqref{char_inv}.

In \cite{fuSs2} we apply a double bulk factorization along two circles running parallel to, and in between, the two defect lines. Thus the two cutting circles lie in horizontal planes in \eqref{manT2d_op:78}. The procedure is very similar to the factorizations performed in order to derive the classifying algebra for defects and introduce no novelties. We refer the reader to \cite[Section 7]{fuSs2} for details. The result is that
\be\label{def_PF_fact}
  \bearl\dsty
  Z\TXY = \sum_{p,q,r,s\in\I}
  \dim(U_{p}) \dim(U_{q}) \dim(U_{r}) \dim(U_{s})
  \nxl{-1}\dsty \hsp{6}
  \sum_{\beta_1,\beta_2,\beta_3,\beta_4}
  {(\cbulki_{A;p,q})}_{\beta_2\beta_1}\,
  {(\cbulki_{B;r,s})}_{\beta_4\beta_3}\,
  Z(\Tfact XY{pr}{r s}{\beta_1\beta_2}{\beta_3\beta_4}) \,,
  \eear
  \ee
where $\Tfact XY{pq}{rs}{\beta_1\beta_2}{\beta_3\beta_4}$ is the ribbon graph
\eqpic{M_tor:85}{270}{104}{
  \put(0,104)   {$\Tfact XY{pq}{rs}{\beta_1\beta_2}{\beta_3\beta_4}~=$}
  \put(0,0){\setulen80
  \put(104,-5)  {\includepicclax3{04}{85}}
  \put(104,-16) {
  \put(186,43)  {\pl {s}}
  \put(194,194) {\pl {\bar s}}
  \put(203,106) {\pl {r}}
  \put(213,140) {\pl {\bar r}}
  \put(165.8,117.5) {\pl {\phi_{\!\beta_3}^{}}}
  \put(179,156) {\pl {\phi_{\!\beta_4}^{}}}
  \put(95,191)  {\pA A}
  \put(148,104) {\pB B}
  \put(171,167) {\pB B}
  \put(157,191) {\pB B}
  \put(119,115) {\pA A}
  \put(207,88)  {\pX Y}
  \put(60.2,43) {\pl {q}}
  \put(50,171)  {\pl {\pb}}
  \put(42,155)  {\pg {\pb}}
  \put(62.8,115.5){\pl {\phi_{\!\beta_2}^{}}}
  \put(74,203.2){\pl {\phi_{\!\beta_1}^{}}}
  \put(89,175)  {\pX X}
  } } }
in $S^2\Times S^1$ with two full tori cut out.
Upon composing  \eqref{def_PF_fact} with dual basis elements we obtain an expansion of the structure constants in \eqref{DF_part_exp} in terms of the defect transmission coefficients (see \cite[Section 7]{fuSs2} for details):
\be\label{def_PF_dcoeff}
    \bearl\dsty
  Z\TXYij = \frac{\dim(X)\dim(Y)}{S_{0,0}^2}
  \nxl{0}\dsty \hsp{4.5}
  \sum_{p,q\in\I} S_{i,p}^{}\, S_{j,q}^{*} \!
  \sum_{\beta_1,\beta_2,\beta_3,\beta_4} \!
  {(\cbulki_{A;p,q})}_{\beta_2\beta_1}\,
  {(\cbulki_{B;p,q})}_{\beta_4\beta_3}\,
  \dcoef pq{\beta_1}{\beta_4}X\, \dcoef pq{\beta_3}{\beta_2}Y \,.
   \eear
\ee
%%%%%%%%%%%%%%%%%%%%%%%%%%%%%%%%%%%%%%%%%%%%%%%%%%%%%%%%%%%%%%%%%%%%%%%%
\chapter{\BulkA s beyond rational CFT}\label{sec:beyond}
In this chapter we loosen the restrictions of rational CFT and take \C\ to be a factorizable finite ribbon category. Recall from section \ref{coends} that this means that $\C$ satisfies condition \CFN, given in page \pageref{cond:CFN}.
That is, \C\ is a finite (in the sense of Definition \ref{def_finite}) $\k$-linear ribbon category such that the Hopf pairing of the canonical Hopf algebra \L\ is non-degenerate. This is a natural generalization of the notion of a modular tensor category. The modular categories are contained in this class of theories but  we do not require \C\ to be semisimple.

The problem of characterizing the proper class of categories relevant for logarithmic CFT has been addressed in the literature from various points of view, see e.g.\ \cite{fgst3,jf30,garW,huan29,koSai,naTs2}.
The categories suggested have many similarities with factorizable finite ribbon categories. Below we consider $\C\eq H$\Mod\ for $H$ a finite-dimensional factorizable ribbon Hopf algebra over an algebrai\-cally closed field \k\ of characteristic zero, c.f. section \ref{fact_ribbon_hopf}. The Hopf algebras appearing in connection with the $(1,p)$-models \cite{fgst,fgst2,naTs2} are not quite of this form, but very close. While having a factorizable $Q$-matrix, they do not have an $R$-matrix. We will see below that in our construction the $Q$-matrix, rather than the $R$-matrix, plays a fundamental role.

In section \ref{sec_bulkA} the \bulkAC\ \BF\ was described in terms of a coend, see \eqref{BulkA_coend}. This description  does not rest on semisimplicity of $\C$ and is consequently suitable for the present setting.
In the case $\C\eq H$\Mod, \BF\ can be endowed with the structure of a commutative (as well as cocommutative) symmetric Frobenius algebra. Note that $H$\Mod\ is a factorizable finite ribbon category but need not be semisimple;  $H$\Mod\ is semisimple iff $H$ is semisimple as an algebra, in which case $H$\Mod\ is even modular.

We are now in a position to describe the morphism
\be
    \Corrgn\In\Hom_{\eC}(\HK^{\otii g},\BF^{\otii n})
\ee
that will be the center of our attention for the rest of this chapter. Let \C\ be a factorizable finite ribbon category.
Recall from section \ref{BhHopf} that this means that also \eC\ is a finite \k-linear ribbon category.
Assume that the Hopf pairing of  the \bhHopf\  in \eC, i.e.\ the coend \HK\ defined in \eqref{def_bhHopf}, is non-degenerate. This means that \eC\ satisfies condition \CFN, i.e.\ \eC\ is a factorizable finite ribbon category.
When \C\ is such that the coend \BF\ in \eC can be equipped with the structure of a symmetric commutative Frobenius algebra, i.e.\ \BF\ has an interpretation as a \bulkAC, we define the morphism
\eqpic{Sk_morph} {250} {105} {\setulen80
    \put(0,131)  {$\Sk gpq~:=~$}
      \put(50,0) {\includepichtft{132e}
  \put(-5,-8.5) {\sse$ \HK $}
  \put(8.2,-8.5){\sse$ \HK $}
  \put(25,-8.5){\sse$ \dots $}
  \put(43.2,-8.5){\sse$ \HK $}
  \put(60,-8.5)  {\sse$ \BF $}
  \put(75,-8.5){\sse$ \dots $}
  \put(96,-8.5)  {\sse$ \BF $}
  \put(119,-8.5)  {\sse$ \BF$}
  \put(28,278)  {\sse$ \BF $}
  \put(50,278)  {\sse$ \BF $}
  \put(66,278){\sse$ \dots $}
  \put(86,278)  {\sse$ \BF $}
  \put(47,77)     {\sse$\rho^{\HK}_{\BF}$}
  \put(47,147)     {\sse$\rho^{\HK}_{\BF}$}
  \put(47,190)     {\sse$\rho^{\HK}_{\BF}$}
  \put(130,268)   {\catpic}
  }
  \put(170,131)  {$\In\Hom_{\eC}(\HK^{\otii g}\oti\BF^{\otii p},\BF^{\otii q})$.}
  }
In this chapter, most pictures will be drawn in $\Vect$. However, some pictures, like \eqref{Sk_morph} above, will be drawn in a general factorizable finite ribbon category $\C$, or in $H$\Bimod. From here on, whenever we draw pictures in a category other than $\Vect$, this will be indicated as in  \eqref{Sk_morph}.

We define the morphism \Corrgn\ associated to the surface $\Surf gn$ to be
\be\label{CorrgnC}
    \Corrgn~:=\Sk g1n\circ(\id_{\HK^{\otii g}}\oti\eta_{\BF})\equiv \Sk g0n\,.
\ee

The morphism space $\Hom_{\eC}(\HK^{\otii g},\BF^{\otii n})$ carries a projective action, $\piKC$, of the mapping class group \Mapgn\ of $\Surf gn$, see Proposition \ref{Lyubact_prop}. We show in section \ref{sec:MGC_inv} that, in the case $\C\eq H$\Mod, the morphism $\Corrgn$ is invariant under this action for any $g,n\geq0$. To this end we work in the category $H$\Bimod, which is braided equivalent to $\HMod\rev\btimes\HMod$

More generally, we obtain a family of symmetric commutative Frobenius algebras in $H$\Bimod\ as follows:
We associate to any {ribbon Hopf algebra} automorphism $\ra$ a commutative (as well as cocommutative) symmetric Frobenius algebra $\BFw$. Thus any such algebra, $\BFw$, shares essential properties with the bulk state space of a full CFT. We show in section \ref{sec:twist_bulk} that for any surface $\Surf gn$ the morphism in $\Corrgnw\In\Hom(\HKH^{\otii g},\BFw^{\otii n})$, obtained by replacing $\BF$ by $\BFw$ in \eqref{CorrgnC}, is invariant under the projective action $\piKC$ of the mapping class group. Thus, for any $\ra$, we refer to the algebra $\BFw$ as a \bulkA\ of a full CFT whose chiral data are described by the category $H$\Mod, and the morphisms \Corrgnw\ are then candidates for correlators of bulk fields. However, we do not have a classification of the possible bulk state spaces. It is natural to explore whether a classification, similar to the one for rational theories, see \eqref{BS_FC}, also exists in the non-semisimple case. An analysis in this direction is carried out in \cite{rgaW}.

\section{Representations of \Mapgn\ from \L}\label{sec:Lyub_rep}
In order to prepare the description of the morphisms $\Corrgn$ we will discuss the coend \L\ in some more detail. Recall that when \C\ satisfies condition \CF, then the coend \L\ exists and carries the structure of a Hopf algebra in \C.
If in addition the Hopf pairing \eqref{coend_pair} is non-degenerate, i.e.\ if \C\ is factorizable ribbon,  \L\ can be equipped with a two-sided integral \cite{lyub8}. By comparing \eqref{coend_pair} and \eqref{Sij} it follows that in the semisimple case, non-degeneracy of $\omega_\L$ is equivalent to a non-degenerate $s$-matrix. For this reason, non-degeneracy of $\omega_\L$ is a natural generalization of the condition that the $s$-matrix is non-degenerate to the non-semisimple setting. For any factorizable finite ribbon category, the coend \L\ gives rise to an action of the mapping class group \Mapgn\ for any $g,n\geq0$.
\subsection{Generators of the mapping class group \Mapgn}
\index{mapping class group|textbf}
Denote, as in section \ref{BhHopf},  by $\Surf gn$ a closed oriented surface of genus $g$ with $n$ holes, labeled by objects $U_1,\dots U_n$, and by $\Mapgn$, the mapping class group of $\Surf gn$.
There are various finite presentations of \Mapgn. For checking invariance of the morphism $\Corrgn$, under the action $\piKC$, it is enough to consider the generators. One finite set of generators arises from  exact sequence
\be
  1\rightarrow B_{g,n} \rightarrow\Mapgn\rightarrow \Map_{g,0}\rightarrow 1 \,,
\ee
(compare \cite[Thm.\,9.1]{FAma}), where $B_{g,n}$ is a central extension of the
surface braid group by $\mathbb Z^n$. As a consequence of this exact sequence one can take, as a set of generators for $\Mapgn$, the union of the set of generators for a presentation of $\Map_{g,0}$ \cite{wajn}, and the one for a presentation of $B_{g,n}$ \cite{scotG3}. This set of generators is used in \cite{lyub6} and \cite{lyub11}. There are many slightly different notions of mapping class groups in the literature, see e.g.\ for a discussion \cite{FAma}. Here we include explicitly generators that interchanges two holes.
\pagebreak

\ssubject{Generators of $\Mapgn$}The mapping class group \Mapgn\ of the surface \Surf gn, depicted in \eqref{surf_gn_PIC}, is generated by (see \cite[Section 4]{lyub6} and \cite[Section 3]{lyub11}):
\begin{itemize}\addtolength{\itemsep}{-6pt}%
    \item Braidings $\wi$ interchanging the $i$'th and $i{+}1$'st hole, for $i\eq1,..., n{-}1$
    \item Dehn twists $\Ri$ around the $i$'th hole, for $i\eq1,...,n$.
    \item  Homeomorphisms $S_l$, for $l\eq1,...,g$, which act as the identity outside the $T^2\setminus D$ neighborhood $F_l$ and as an S-transformations of a one-holed torus $F_l'\subset F_l$.
    \item Inverse Dehn twists in tubular neighborhoods of the cycles $a_m$ and $e_m$ for $m\eq2,...,g$.
    \item Inverse Dehn twists in tubular neighborhoods of the cycles $b_m$ and $d_m$ for $m\eq1,...,g$.
    \item Inverse Dehn twists in tubular neighborhoods of the cycles $t_{j,k}$ for $k\eq1,\linebreak...,g$  and $j=1,...,n-1$.
\end{itemize}
The cycles and the neighborhoods $F_l$ are depicted in the following picture:
\Eqpic{surf_gn_PIC} {320} {38} {
   \put(0,0)  {\includepic{50}{surf_gnA}
   \put(49,46)    {\sse$a_m$}
   \put(71,38)    {\sse$b_m$}
   \put(61,54)    {\sse$d_m$}
   \put(61,22)    {\sse$e_m$}
   \put(125,54)    {\sse$S_l$}
   \put(127,38)    {\sse$b_{l}$}
   \put(188,38)    {\sse$b_{k}$}
   \put(253,39)    {\sse$b_{g}$}
   \put(223,26)    {\sse$t_{j,k}$}
   \put(291,80)    {\sse$U_1$}
   \put(307,44)    {\sse$U_j$}
   \put(307,31)    {\sse$U_{j+1}$}
   \put(293,-1)    {\sse$U_n$}
   } }
For brevity we refer to Dehn twists around any of these cycles by the same symbol as the cycle itself.
\subsection{Partial monodromies}
We can use the dinatural family $\iota^\L$, of the coend \L, to define endomorphisms of $\L\oti\L$ and $\L\oti Y$ for any $Y\In\Obj(\C)$. We define the \emph{partial monodromy} $\QQ_{\L,\L}\linebreak\In\End(\L\oti\L)$ by
\eqpic{QHH} {135} {42} {
   \put(0,-3)  {\Includepichtft{103e}
   \put(-15.9,60.7){\small$ \QQ_{\L,\L} $}
   \put(-9.8,21)  {\sse$ \iota_X^\L $}
   \put(-4,-8.5)  {\sse$ X^{\!\vee} $}
   \put(5.3,109)  {\sse$ \L $}
   \put(7,-8.5)   {\sse$ X $}
   \put(26.5,-8.5){\sse$ Y^{\!\vee} $}
   \put(28.8,109) {\sse$ \L $}
   \put(38,-8.5)  {\sse$ Y $}
   \put(42.4,21)  {\sse$ \iota_Y^\L $}
   }
   \put(62,42)    {$ := $}
   \put(97,-3) {\Includepichtft{103f}
   \put(-6,-8.5)  {\sse$ X^{\!\vee} $}
   \put(-5.8,89)  {\sse$ \iota_X^\L $}
   \put(8,-8.5)   {\sse$ X $}
   \put(6.1,109)  {\sse$ \L $}
   \put(22.8,27.3){\sse$ c $}
   \put(22.8,58.4){\sse$ c $}
   \put(32,-8.5)  {\sse$ Y^{\!\vee} $}
   \put(36,109)   {\sse$ \L $}
   \put(46.2,89)  {\sse$ \iota_Y^\L $}
   \put(46,-8.5)  {\sse$ Y $}
   \put(70,100)    {\catpic}
   }
   }
Similarly we have the \emph{left partial monodromy} $\Qq_{\L,Y}\In\End(\L\oti Y)$, defined for any $Y\In\Obj(\C)$ by
\eqpic{QHX} {110} {41} {
   \put(0,0)  {\Includepichtft{103a}
   \put(-15,60)  {$ \Qq_{\L,Y} $}
   \put(-4,-8.5) {\sse$ X^{\!\vee} $}
   \put(5.3,109) {\sse$ \L $}
   \put(7,-8.5)  {\sse$ X $}
   \put(31.3,-8.5){\sse$ Y $}
   \put(30.4,109){\sse$ Y $}
   }
   \put(60,50)   {$ := $}
   \put(94,0) {\Includepichtft{103b}
   \put(-6,-8.5) {\sse$ X^{\!\vee} $}
   \put(6.3,109) {\sse$ \L $}
   \put(8,-8.5)  {\sse$ X $}
   \put(22.5,27.3) {\sse$ c $}
   \put(22.5,58.4) {\sse$ c $}
   \put(34,-8.5) {\sse$ Y $}
   \put(34.3,109){\sse$ Y $}
   \put(70,100)    {\catpic}
   }
   }
Since the composition of the partial monodromy $\Qq_{\L,Y}$ with a member $\iota^\L_X$ of the dinatural family is nothing but an ordinary monodromy between $X$ and $Y$, it follows that
\be\label{Qq_com}
    \Qq_{\L,Y}\circ(\id_{\L}\oti f)=(\id_{\L}\oti f)\circ\Qq_{\L,Z}\,,
\ee
for any $f\in\Hom(Z,Y)$.

Note that, with $\eps_\L$ the counit of \L\ (see\eqref{Lyub_Hopf}), the Hopf pairing \eqref{coend_pair} can be written as
\be\label{Hopf_pair_eps}
    \omega_\L=(\eps_\L\oti\eps_\L)\circ\QQ_{\L,\L}\,.
\ee
In addition, the action \eqref{Lact} of \L\ on $V\In\Obj(\C)$ can be written in terms of $\Qq_{\L,Y}$ as
\be\label{Lact_Qq}
    \Lact V=(\eps_{\L}\oti\id_Y)\circ\Qq_{\L,Y}\,.
\ee

Using the partial monodromy $\QQ_{\L,\L}$ we construct two additional morphisms $\TL$ and $\SL$ in $\End(\L)$. First, we define a morphism $\TL$ via the dinatural family
\be\label{def_TL}
    \TL\circ\iota_U^\L:=\iota_U^\L\circ(\theta_{U^\vee}\oti\id_U)\,.
\ee
Second, using the partial monodromy $\QQ_{\L,\L}$, the counit and two-sided integral of $\L$, the morphism $\SL$ is given by
\be\label{def_SK}
    \SL:=(\eps_L\oti\id_\L)\circ\QQ_{\L,\L}\circ(\id_\L\oti\Lambda_\L)\,.
\ee
It can be shown \cite{lyub8} that
\be
    (\SL\,\TL)^3=\lambda\SL^2\quand\SL^2=\apo_\L^{-1}\,,
\ee
with $\apo_\L$ the antipode of $\L$, for some scalar $\lambda$ that depends only on the category \C\ in question.
Together with the partial monodromies, the morphisms $\SK$ and $\TK$
will be central ingredients in the projective representations of mapping class, to which we now turn our attention.
\pagebreak
\subsection{Actions of mapping class groups}\label{sec:map_act}
We are now in a position to describe the projective actions of mapping class groups associated to \L.
Consider a surface $\Surf gn$ with $n$ holes labeled by objects $U_1,...,U_n$. Denote by $\mathfrak N=\mathfrak N(U_1,...,U_n)$ the subgroup of the symmetric group $\mathfrak S_n$ that
is generated by those permutations $\sigma\In\mathfrak S_n$ for which $U_i$ and $U_{\sigma(i)}$ are non-isomorphic for at least on $i=1,...,n$. We define the object
\be\label{Lyubspacedef}
    \Lyubspace:=\bigoplus_{\sigma\In \mathfrak N}U_{\sigma(1)}\oti U_{\sigma(2)}\oti\cdots\oti U_{\sigma(n)}\,.
\ee

To prepare the description of the action of $\Mapgn$ we introduce a collection of morphisms, one for each generator $\gamma$ of the mapping class group $\Mapgn$. First, for any integer $g\geq1$ and any $\gamma=\ak,\ek,\bk,\dk,S_k$, with $k=2,...,g$ and $k=1,...,g$ respectively, we define the endomorphisms
\be\label{LyubactC_pre}
\begin{split}
    \LAak&~:=~\id_{\L^{\otii g-k}}\oti[\QQ_{\L,\L}\circ(\TL\oti\TL)]\oti \id_{\L^{\otii k-2}},\\
    \LAek&~:=~\id_{\L^{\otii g-k}}\oti(\TL\oti\theta_{\L^{\otii k-1}})\circ\Qq_{\L,\L^{\otii k-1}},\\
    \LAbk&~:=~\id_{\L^{\otii g-k}}\oti(\SL^{-1}\circ\TL\circ\SL)\oti\id_{\L^{\otii k-1}},\\
    \LAdk&~:=~\id_{\L^{\otii g-k}}\oti\TL\oti\id_{\L^{\otii k-1}},\\
    \LASk&~:=~\id_{\L^{\otii g-k}}\oti\SL\oti\id_{\L^{\otii k-1}},\\
\end{split}
\ee
in $\End(\L^{g})$. Second, we define the endomorphisms $\LAwi$, with $i\eq1,...,n-1$, and $\LARi$, with $i\eq1,...,n$, of $\Lyubspace$, that act as
\be\label{LyubactC_post}
\begin{split}
    \LAwi\big|_{U_1\oti\cdots\oti U_n}&~=~\id_{U_1}\oti\cdots c_{U_i,U_{i+1}}\oti\cdots\oti\id_{U_n},\\
    \LARi\big|_{U_1\oti\cdots\oti U_n}&~=~\id_{U_1}\oti\cdots \theta_{U_i}\oti\cdots\oti\id_{U_n},\\
\end{split}
\ee
on the direct summand $U_1\oti\cdots\oti U_n$ of $\Lyubspace$. Finally, for any $j\eq1,...,n-1$ and any $k\eq1,...,g$ we define the linear map $\LAtjk$ that maps any $f\In\Hom(\L^{\otii g},U_1\linebreak\oti...\oti U_n)$ to
\be\label{tjkact}
  \bearll
  \LAtjk(f) := \!\!& \big(\, \big[\, ( \id_{U_1\otimes\cdots\otimes U_j}
  \oti \tilde d_{U_{j+1}\otimes\cdots\otimes U_n} )
  \circ
  ( f \oti \id_{\ld U_n\otimes\cdots\otimes \ld U_{j+1}} )
  \\{}\\[-.6em]& \quad
  \circ\, \{ \id_{\L^{\otimes g-k}_{}} \oti [
  \Qq_{\L^{\otimes k-1}\otimes \ld U_n\otimes\cdots\otimes \ld U_{j+1}}
  \\{}\\[-.6em]& \quad
  \circ\,
  (\TL \oti \theta_{\L^{\otimes k-1}\otimes \ld U_n\otimes\cdots\otimes
  \ld U_{j+1}}) ] \} \,\big]
  \otimes\, \id_{U_{j+1}\otimes\cdots\otimes U_n} \,\big)
  \\{}\\[-.6em]& \hspace*{15.8em}
  \circ\, \big(
  \id_{\L^{\otimes g}_{}} \oti \tilde b_{U_{j+1}\otimes\cdots\otimes U_n} \big)\,,
  \eear
\ee
in $\Hom(\L^{\otii g},U_1\oti...\oti U_n)$, and acts analogously on a any morphism in\linebreak $\Hom(\L^{\otii g},U_{\sigma(1)}\oti...\oti U_{\sigma(n)})$ for any $\sigma\In\mathfrak N$.

Pictorially, $\LAtjk$ acts as
\eqpic{t_act} {300} {140} {
    %\put(0,140)  {$f\; \stackrel{{\piL(\tjk)}}{\longmapsto}$}
    \put(0,140)  {$\LAtjk(f):=$}
    \put(85,0) {\Includepichtft{140aA}
    \put(32,222)   {\sse$f$}
    \put(40,176.5)   {\sse$\Qq_{\L,\L^{\otii k-1}\oti^\vee\!U_{n}\!\oti\!\cdots\oti^\vee\!U_{j+1}}$}
    \put(37,86)   {\sse$\TK$}
    \put(5,-8.5)   {\sse$\L$}
    \put(13,-8.5)   {\sse$\dots$}
    \put(28,-8.5)   {\sse$\L$}
    \put(42,-8.5)   {\sse$\L$}
    \put(50,-8.5)   {\sse$\L$}
    \put(60,-8.5)   {\sse$\dots$}
    \put(78,-8.5)   {\sse$\L$}
    \put(1,294)   {\sse$U_1$}
    \put(12,294)   {\sse$\dots$}
    \put(23,294)   {\sse$U_j$}
    \put(148,294)   {\sse$U_{j+1}$}
    \put(172,294)   {\sse$\dots$}
    \put(190,294)   {\sse$U_{n}$}
    \put(210,300)    {\catpic}
    }
    }

We can now rephrase the results of \cite[Section 4]{lyub6} and \cite[Section 3]{lyub11} as follows
\begin{proposition}\label{Lyubact_prop}\index{mapping class group!representation of|textbf}
Consider the morphism space $\Hom_\C(\L^{\otii g}, \Lyubspace)$, with $\Lyubspace$ as defined in \eqref{Lyubspacedef}.
The morphisms $\LAg$, defined in \eqref{LyubactC_pre}, \eqref{LyubactC_post} and \eqref{tjkact}, associated to the generators $\gamma$ of $\Mapgn$, endow the space $\Hom_\C(\L^{\otii g}, \Lyubspace)$ with a projective action $\piL$ of \Mapgn. For any $f\In\Hom_\C(\L^{\otii g}, \Lyubspace)$, and for $\gamma=\ak,\ek$ with $k=2,...,g$ and for $\bk,\dk,S_k$ with $k=1,...,g$, the action is given by
\be\label{piLpre}
    [\piL(\gamma)](f)=f\circ\LAg^{-1}\,.
\ee
For $\gamma\eq \wi,\Ri$, with $i=1,...,n-1$ and $i=1,...,n$ respectively, the action is given by
\be\label{piLpost}
    [\piL(\gamma)](f)=\LAg^{-1}\circ f\,.
\ee
Finally, for $\gamma\eq\tjk$, with $j\eq1,...,n-1$ and $k\eq1,...,n$, the action of $\tjk$ is given by
\be\label{piLtjk}
    [\piL(\tjk)](f)=\LAtjk^{-1}(f)\,.
\ee
\end{proposition}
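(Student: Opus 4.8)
The plan is to verify, generator by generator, that the morphisms defined in \eqref{LyubactC_pre}, \eqref{LyubactC_post} and \eqref{tjkact} satisfy the defining relations of a presentation of $\Mapgn$, up to an overall scalar. Since the statement is that these morphisms furnish a \emph{projective} action, I would not track scalars carefully; it suffices to check each relation holds up to a nonzero element of $\k$. The strategy is to reduce everything to properties of the coend $\L$ already established in the excerpt: that $\L$ is a Hopf algebra with non-degenerate Hopf pairing $\omega_\L$, that $\SL^2=\apo_\L^{-1}$ and $(\SL\TL)^3=\lambda\,\SL^2$, the dinaturality of $\iota^\L$, and the braid/twist compatibility encoded in the partial monodromies $\QQ_{\L,\L}$ and $\Qq_{\L,Y}$. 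The key observation making \eqref{piLpre}, \eqref{piLpost}, \eqref{piLtjk} consistent is that the "pre-composition" generators (those built from $\ak,\ek,\bk,\dk,S_k$) act on the source $\L^{\otii g}$ while the "post-composition" generators ($\wi,\Ri$) act on the target $\Lyubspace$; these two families manifestly commute with each other as operators on $\Hom_\C(\L^{\otii g},\Lyubspace)$, so the only nontrivial relations are those internal to each family, plus the mixed relations involving $\tjk$.

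First I would treat the genus-zero part, i.e.\ the subgroup generated by $\wi$ and $\Ri$: the $\LAwi$ satisfy the braid relations because $c_{-,-}$ does (the braid relations \eqref{braid_rel}, here in the symmetric-looking form on a surface), the $\LARi$ commute among themselves since distinct tensor factors commute, and the compatibility $\LAwi\LARi\LAwi^{-1}=Z_{R_{i+1}}$ etc.\ follows from naturality of the braiding with respect to $\theta$. Next, the pure surface-braid-group part $B_{g,n}$ in genus $\geq 1$: here I would invoke that $\tjk$ is precisely the operator that "drags the $j$-th hole around the $k$-th handle", and its interaction with $\ak,\ek$ and with $\wi$ is governed by repeated use of \eqref{Qq_com} and the hexagon-type identities for $\Qq_{\L,Y}$; the relevant relations are exactly those verified in \cite{lyub11}, and I would cite that the morphisms \eqref{tjkact} are designed to reproduce them. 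Then the closed-surface part $\Map_{g,0}$ (generated by the $\ak,\bk,\dk,\ek,S_k$): for a single handle this is the $SL(2,\Z)$ relation, which is exactly $(\SL\TL)^3=\lambda\,\SL^2$ and $\SL^2=\apo_\L^{-1}$ central, i.e.\ the statement quoted right after \eqref{def_SK}; for the interactions between handles one uses that the Dehn twists around disjoint cycles are implemented on distinct tensor factors of $\L^{\otii g}$, while the twists around the connecting cycles $t_{j,k}$ and $e_m$ are implemented by partial monodromies linking adjacent factors, and their relations again come from the Hopf-algebra axioms for $\L$ and the dinaturality of $\iota^\L$.

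Concretely, the steps in order: (1) fix a finite presentation of $\Mapgn$ compatible with the generating set listed before \eqref{surf_gn_PIC} (the Wajnryb presentation of $\Map_{g,0}$ together with the Scott/surface-braid presentation of $B_{g,n}$, glued via the exact sequence $1\to B_{g,n}\to\Mapgn\to\Map_{g,0}\to 1$); (2) check that $[\piL(\gamma)]$ as defined really is invertible for each generator, so that \eqref{piLpre}--\eqref{piLtjk} make sense — for the pre-composition ones this is because $\QQ_{\L,\L}$, $\SL$, $\TL$ are invertible (non-degeneracy of $\omega_\L$ gives invertibility of the relevant structures), for $\wi,\Ri$ it is clear, and for $\tjk$ one exhibits the inverse using $\tilde b$, $\tilde d$ and the inverse partial monodromy; (3) verify the braid and disjointness relations, which are immediate; (4) verify the $SL(2,\Z)$ relations on each handle from the quoted identities for $\SL,\TL,\apo_\L$; (5) verify the "handle-interaction" relations (lantern-type and the relations expressing how $S_k$ conjugates the twists around $a_m,b_m,d_m,e_m$) by graphical manipulation reducing to Hopf-algebra axioms and dinaturality; (6) verify the mixed relations tying $\tjk$ to the handle generators and to the $\wi$. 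I expect step (5), and especially the precise form of the relations between the $S_k$ and the various connecting-cycle Dehn twists, to be the main obstacle: these are the genuinely three-dimensional relations whose verification in the non-semisimple setting requires careful bookkeeping of the two-sided integral and antipode of $\L$, and is exactly the technical heart of \cite{lyub6,lyub11}; my proof would therefore quote those references for the detailed checks and only indicate how the graphical calculus for coends reduces each relation to an identity among structure morphisms of the Hopf algebra $\L$.
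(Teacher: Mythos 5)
Your proposal is sound, but the paper itself gives no proof of this proposition: it is explicitly introduced as a rephrasing of the results of \cite[Section 4]{lyub6} and \cite[Section 3]{lyub11}, so the entire verification is deferred to those references. Since your outline likewise quotes \cite{lyub6,lyub11} for the genuinely hard relations (the handle-interaction and $S_k$-conjugation checks), your argument rests on the same foundation as the paper's and is essentially the same approach, with the added value of spelling out which relations reduce to the quoted identities $\SL^2=\apo_\L^{-1}$, $(\SL\TL)^3=\lambda\,\SL^2$ and to dinaturality of $\iota^\L$.
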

\begin{remark}\label{rem:repio}
    Consider the subgroup $\Mapio g{p,q}$ of the mapping class group \linebreak$\Mapio g{p+q}$ that leaves two subsets, of size $p$ and $q$, of the holes invariant. Take the labels of the two sets of holes to be $U_1,...,U_p$ and $V_1,...,V_q$, respectively,  and define object $\lyubspace Up$ and $\lyubspace Vq$ analogously as in \eqref{Lyubspacedef}. The right duality of $\C$ provides a linear isomorphism
    \be
    \varphi:\quad \Hom(\L^{\otii g}\oti\,V,U) \stackrel\cong\longrightarrow
  \Hom(\L^{\otii g},U\oti V^\vee) \,.
  \ee
  Then
  \be\label{def_repio}
    \PiLio pq:=\varphi^{-1} \circ \piLio pq \circ \varphi
  \ee
  defines a projective representation of $\Mapio g{p,q}$ on $\Hom(\L^{\otii g}\oti\,V,U)$.
\end{remark}
\section{The \bulkAC\ in $H$\Bimod}\label{Frob_Hmod}
We now turn to the description of the \bulkAC\ \BF\ in the case $\C\eq H$\Mod\ for $H$ a finite-dimensional factorizable ribbon Hopf algebra over an algebraically closed field of characteristic zero.
As we will explain below, we will be able to work in $H$\Bimod\ instead of $(H\Mod)\rev\boxti H\Mod$.

Consider the functor
\be\label{def_Fbx}
    \Fbx :~~ H\Mod\op \Times H\Mod \,\to\, H\Bimod\,,
\ee
that acts on objects as
\be\label {Fbx_act}
  \bearl\dsty
  (X,\rho_X) \Times (Y,\rho_Y)
  \nxl{1}\dsty \hsp{2}
  \stackrel\Fbx\longmapsto~
   \big( X^\vee\otik Y\,,\, \rho_{X^\vee_{}}\oti\id_Y\,,\, \id_{X^*}
   \oti(\rho_Y\cir \tau_{Y,H}^{} \cir (\id_Y\oti\apo^{-1})) \big)\,,
  \eear
\ee
and on  morphisms as $f \times g\mapsto f^\vee\Otik g$.
Pictorially, the action on objects looks as
\Eqpic{Fboxmorph} {320} {42} { \put(0,10){
    \setulen80
   \put(0,0)   {\INcludepichtft{66a}{304}
   \put(-4,-8.8) {\sse$H $}
   \put(15.1,-8.8) {\sse$ X $}
   \put(16.2,95.5) {\sse$ X $}
   \put(22,52)   {\sse$\rho_X^{}$}
   }
   \put(48,38)   {$ \times$}
   \put(73,0) { \INcludepichtft{66a}{304}
   \put(-4,-8.8) {\sse$H $}
   \put(16.1,-8.8) {\sse$ Y $}
   \put(17.2,95.5) {\sse$ Y $}
   \put(22,52)   {\sse$\rho_Y^{}$}
   }
   \put(146,38)  {$\stackrel\Fbx\longmapsto$}
   \put(200,0)   {\INcludepichtft{66b}{304}
   \put(-4,-8.8) {\sse$H $}
   \put(13,-8.8) {\sse$X^\vee$}
   \put(14,95.5) {\sse$X^\vee$}
   \put(32,-8.8) {\sse$ Y $}
   \put(33,95.5) {\sse$ Y $}
   \put(51.5,-8.8) {\sse$ H $}
   \put(59,10.1) {\sse$ \apo^{-1} $}
   }
   \put(278,38)   {$ \equiv $}
   \put(308,0) { \INcludepichtft{66c}{304}
   \put(-4.7,9.8){\sse$ \apo $}
   \put(-3,-8.8) {\sse$ H $}
   \put(8.5,-8.8){\sse$ \Xs $}
   \put(31,46)   {\sse$ \rho_X^{} $}
   \put(41.6,95.5) {\sse$ \Xs $}
   \put(59,-8.8) {\sse$ Y $}
   \put(60,95.5) {\sse$ Y $}
   \put(63.7,65.6) {\sse$ \rho_Y^{} $}
   \put(79,-8.8) {\sse$ H $}
   \put(87,10.1) {\sse$ \apo^{-1} $}
   } }
   %\put(309,90)    {\Vectpic}
   }
For $\C\eq H$\Mod, the \bulkAC\ $\BF$ is the coend
\be\label{bulkF_H}
    \BFH:=\int^X\,\Fbx(X,X)
\ee
in $H$\Bimod.
The next subsection motivates this terminology. That is, that the coend $\BFH$ in \eqref{bulkF_H} should be considered as a special case of the coend $\BF=\int^X X^\vee\boxtimes X$, introduced in \eqref{BulkA_coend}.

\subsection{Equivalences of categories}\label{sec:equiv_cat}
Denote by $H\coop$, the Hopf algebra obtained from $H$ by replacing the coproduct by the opposite product, as in \eqref{Rmat1}. Analogously $H\op$ is obtained from $H$ by replacing the product by the opposite product $m\op:=m\cir\flip HH$.
When $H$ is quasitriangular with $R$-matrix $R$ there are, according to \cite[Lemma A.4]{fuSs3}, equivalences
\be\label{equi_bim_tens}
     H\Bimod \;\simeq\, (\overline H\Otik H\coop)\Mod
   \;\simeq\, (\overline H\Otik H\op)\Mod
\ee
of braided monoidal categories. Here $H$\Bimod\ is equipped with the braiding \eqref{bibraid} and $\overline H$ is obtained from $H$ by replacing the R-matrix by $R_{21}^{-1}$ and the ribbon element $v$ by the inverse ribbon element $v^{-1}$. The equivalence is furnished by two functors which act as the identity on morphisms, and on objects as
\Eqpic{HHbimiso} {320} {39} {
    \put(0,10){\setulen90
   \put(5,0)   {\INcludepichtft{65a}{342}
   \put(-4,-8.5) {\sse$ H $}
   \put(6,-8)    {\sse$ H$}
   \put(24.1,-8.5) {\sse$ M $}
   \put(24.8,86) {\sse$ M $}
   \put(-5.5,47) {\sse$ \rho^{H\Oti H} $}
   }
   \put(63,34)   {$ \mapsto$}
   \put(105,0) { \INcludepichtft{65b}{342}
   \put(-4,-8.5) {\sse$ H $}
   \put(12.3,-8.5) {\sse$ M $}
   \put(12.9,86) {\sse$ M $}
   \put(29,-8.5) {\sse$ H$}
   \put(36,25)   {\sse$ \apo^{-1} $}
   \put(-17.2,26.4){\sse$ \rho^{H\Oti H} $}
   \put(-17.2,66.4){\sse$ \rho^{H\Oti H} $}
   }
   \put(180,34)  {and as}
   \put(230,0) {\INcludepichtft{65c}{342}
   \put(-4,-8.5) {\sse$H $}
   \put(5.2,55.4){\sse$ \rho^H $}
   \put(13,-8.5) {\sse$ M $}
   \put(13.8,86) {\sse$ M $}
   \put(22.2,43.4){\sse$ \ohr^{\!H} $}
   \put(32,-8.5) {\sse$ H$}
   }
   \put(290,34)   {$ \mapsto$}
   \put(322,0) { \INcludepichtft{65d}{342}
   \put(-4,-8.5) {\sse$ H $}
   \put(7,-8.5)  {\sse$ H$}
   \put(22,-8.5) {\sse$ M $}
   \put(22.8,86) {\sse$ M $}
   \put(36,29.5) {\sse$ \apo $}
   } } }
respectively.
In addition, the $R$-matrix furnishes a braided monoidal equivalence between $H$\Mod\ and $H\coop$\Mod. Thus, combined with \eqref{equi_bim_tens} we also have an equivalence
\be\label{equi_bim_tens_H}
     H\Bimod \;\simeq\, (\overline H\Otik H)\Mod
\ee
as braided monoidal categories.

Next, by the universal property \eqref{uni_deligne} of the Deligne product there is a unique functor
\be\label{Fbb_fun}
    \Fbb : \quad H\Mod \boxti H\Mod \,\longrightarrow\, H\Bimod
\ee
such that
\be\label{Fbb_split}
    \Fbx = \Fbb \circ (?^\vee \boxti \Id)\,,
\ee
where $(?^\vee \boxti \Id)$ is the functor \eqref{BA_fun}. Note that on objects of the form $X\boxti Y\linebreak\In H\Mod \boxti H\Mod$, $\Fbb$ acts as
\be
  \bearl\dsty
  (X,\rho_X) \boxti (Y,\rho_Y)
  \nxl{1}\dsty \hsp{3}
  \stackrel\Fbb\longmapsto~
   \big( X\otik Y\,,\, \rho_X\oti\id_Y\,,\, \id_X\oti(\rho_Y\cir \tau_{Y,H}
   \cir (\id_Y\Oti\apo^{-1})) \big) \,.
  \eear
\ee

By \cite[Proposition 5.3]{deli} (see also \cite[Example 7.10]{franI}) there is an equivalence $H\Mod \boxti H\Mod\,{\simeq}\, (H\Otik H)\Mod$ of abelian categories. Combining this equivalence with the equivalence \eqref{equi_bim_tens_H}, it follows that the functor \eqref{Fbb_fun} furnishes an equivalence of abelian categories. Next, remember from section \ref{sec:Deligne} that $H\Mod \boxti H\Mod$ can be equipped with the structure of a braided monoidal category, c.f. \eqref{oti_Del} and \eqref{c_Deli}. With respect to this tensor product and braiding, \eqref{Fbb_fun} extends to an equivalence of braided monoidal categories.
In view of this equivalence of categories and the functors \eqref{Fbb_split} and \eqref{BulkA_coend}, the coend $\BFH$, as defined in \eqref{bulkF_H}, is indeed the \bulkAC\ in $H$\Bimod.
\begin{remark}\label{rem:bibraid_choice}
    Recall that we want to describe $(\HMod)\rev\boxtimes\HMod$. The arguments above imply that $(\HMod)\rev\boxtimes\HMod$ is equivalent, as a braided monoidal category, to $H$\Bimod\ with the braiding defined in \eqref{bibraid}.
\end{remark}
\subsection{The coregular bimodule as a coend}
Let us now describe the coend $\BFH$, defined in \eqref{bulkF_H}, in detail.
As an object in $H$\Bimod, $\BFH$ is the coregular bimodule, i.e.\ the vectors space \Hs\ endowed with the actions \eqref{rhoHb,ohrHb}. Indeed we have, \cite[Lemma A.2]{fuSs3}:
\begin{lemma}
The family $\iHb$ of morphisms
\be\label{def_iHb}
   \iHb_X := (d_X \oti \id_{\Hs}) \cir [ \id_{X^*_{}} \oti (\rho_X^{}\cir\tau_{X,H})
   \oti \id_{\Hs}] \cir ( \id_{X^*_{}} \oti \id_X \oti b_H)\,
\ee
with $ X\In H\Mod$ is a dinatural family from $\Fbx$ to the coregular bimodule $\BFH$ in $H$\Bimod.
\end{lemma}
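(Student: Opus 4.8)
The statement to establish is that the family $\{\iHb_X\}_{X\in H\text{-Mod}}$ defined in \eqref{def_iHb} is a dinatural transformation from the functor $\Fbx$ of \eqref{def_Fbx} to the coregular bimodule $\BFH=(\Hs,\rho_{\BFH},\ohr_{\BFH})$ of \eqref{rhoHb,ohrHb}. There are two distinct things to verify: first, that for each $X\In H$\Mod\ the linear map $\iHb_X$ in \eqref{def_iHb} is actually a morphism in $H$\Bimod, i.e.\ that it intertwines the left and right $H$-actions on $\Fbx(X,X)=\big(X^*\otik X,\ \rho_{X^\vee}\oti\id_X,\ \id_{X^*}\oti(\rho_X\cir\tau_{X,H}\cir(\id_X\oti\apo^{-1}))\big)$ with the coregular actions $\rho_{\BFH},\ohr_{\BFH}$ on $\Hs$; second, that the family satisfies the dinaturality square \eqref{def_dina}, i.e.\ that for every morphism $f\In\Hom_{H\text{-Mod}}(X,Y)$ one has $\iHb_X\cir\Fbx(f,\id_X)=\iHb_Y\cir\Fbx(\id_Y,f)$, which after unwinding the definition of $\Fbx$ on morphisms reads $\iHb_X\cir(f^\vee\otik\id_X)=\iHb_Y\cir(\id_{Y^*}\otik f)$.

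For the bimodule property I would work entirely in $\Vectk$ using graphical calculus, drawing $\iHb_X$ as in \eqref{def_iHb}: a right coevaluation $b_H\In\Hom_\k(\k,H\oti\Hs)$ on the output side, the left action $\rho_X$ of $H$ on $X$ composed with the flip $\tau_{X,H}$, and a right evaluation $d_X$ pairing $X^*$ with $X$. To check left-$H$-covariance I would act with $H$ on the $X^*$-leg via $\rho_{X^\vee}$ — which, recalling the coregular/dual conventions, is built from $\apo$ and the duality (cf.\ \eqref{Hbim_rightdualactions}, since $\BFH$ is the right dual of the regular bimodule) — push this $H$-line through the evaluation $d_X$ so that it reappears as an action on the $X$-leg, and then use that $\rho_X$ is an $H$-action together with the fact that $\apo$ is an anti-algebra morphism \eqref{anti_alg_apo} to reorganise the result into $\rho_{\BFH}$ acting on the output $\Hs$. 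The right-$H$-covariance check is analogous: act with $H$ on the second factor of $X^*\otik X$ via $\rho_X\cir\tau_{X,H}\cir(\id_X\oti\apo^{-1})$, slide it along $b_H$ to the $\Hs$-output leg, and identify the outcome with $\ohr_{\BFH}$; here the relation $(\apo\oti\id_H)\cir R=R^{-1}$ etc.\ of \eqref{Rmat2} and the invertibility of $\apo$ are the algebraic inputs that make the $\apo$-$\apo^{-1}$ bookkeeping close. In fact most of this is a direct transcription of the computation that $\Psi$ intertwines the regular and coregular actions, noted after \eqref{def_fmap}, so I would cite that where possible rather than redo it.

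For dinaturality I would again draw both sides graphically. On the left-hand side $\iHb_X\cir(f^\vee\otik\id_X)$, the morphism $f^\vee\In\Hom(Y^*,X^*)$ sits on the left leg; using the definition $f^\vee=(d_X\oti\id_{Y^*})\cir(\id_{X^*}\oti f\oti\id_{Y^*})\cir(\id_{X^*}\oti b_Y)$ I would slide $f$ through the evaluation $d_X$ appearing in $\iHb_X$, converting the composite into $f$ acting on the $X$-leg; since $f\In\Hom_{H\text{-Mod}}(X,Y)$ it commutes with $\rho_X$ versus $\rho_Y$, and a standard zig-zag (snake) identity for the (co)evaluations of $X$ and $Y$ then rearranges the picture into exactly $\iHb_Y\cir(\id_{Y^*}\otik f)$. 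This is the same kind of duality-plus-naturality manipulation that underlies the dinaturality of $\iota^\L$ for the Lyubashenko coend, cf.\ \eqref{dinat_trafo}.

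\textbf{Expected main obstacle.} The routine part is the diagram chasing; the genuinely delicate point is getting the conventions for $\rho_{X^\vee}$, $\rho_{\BFH}$, $\ohr_{\BFH}$, the position of $\apo$ versus $\apo^{-1}$, and the flips $\tau$ exactly consistent, because $\BFH$ is simultaneously "the right dual of the regular bimodule" and "the coend $\int^X\Fbx(X,X)$", and the two descriptions must be matched leg-by-leg. In particular one must be careful that the $\apo^{-1}$ introduced in the right action of $\Fbx$ in \eqref{Fbx_act} is precisely what is needed so that, after transport along $b_H$, the right coregular action $\ohr_{\BFH}$ — which itself carries an $\apoi$ in \eqref{rhoHb,ohrHb} — comes out with the correct single power of the antipode and no spurious braiding; reconciling these sign/convention choices, rather than any deep structural fact, is where the real work lies.
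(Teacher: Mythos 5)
Your plan is correct and follows the same route as the paper, which verifies exactly the two points you identify — that $\iHb_X$ intertwines the $\Fbx(X,X)$-actions with the coregular actions \eqref{rhoHb,ohrHb} (using the representation property and the anti-(co)algebra property of $\apo$), and that dinaturality \eqref{def_dina} reduces to $f$ being an $H$-module morphism plus a snake identity. One small correction: the $R$-matrix identities \eqref{Rmat2} play no role here, since neither $\iHb_X$ nor the relevant actions involve $R$; the only antipode bookkeeping needed is $\apo\cir\apo^{-1}=\id$ together with \eqref{anti_alg_apo}.
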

Pictorially, $\iHb_X$ is given by
\eqpic{def_iA_X_bi} {100} {35} {
   \put(-4,0)  { \Includepichtft{11a}}
   \put(2,-8.5)  {\sse$ \Xs $}
   \put(8.9,44.9){\sse$ \iHb_X $}
   \put(9.2,89)  {\sse$ \Hss $}
   \put(16,-8.5) {\sse$ X $}
   \put(49,38)   {$ = $}
   \put(77,0) { \Includepichtft{11b}
   \put(-4,-8.5) {\sse$ \Xs $}
   \put(25,-8.5) {\sse$ X $}
   \put(31,43)   {\sse$ \rho_{\!X}^{} $}
   \put(43,89)   {\sse$ \Hss $}
   } }
That $\iHb_X$ is a morphism of bimodules follows directly by using the action \eqref{Fboxmorph} of $\Fbx$ on objects, the  representation properties and \eqref{rhoHb,ohrHb}, see \cite[eq. (A.8) and (A.9)]{fuSs3}. That $\iHb_X$ is dinatural, i.e.\ satisfies \eqref{def_dina} for any $f$ in $H\Mod$ follows directly from the action of $\Fbx$ on morphisms and that $f$ is a morphism of $H$-modules, see \cite[eq. (A.10)]{fuSs3}. In addition we have the following result:
\begin{prop}\label{prop-F=coend}
The $H$-bimodule \BFH\ together with the dinatural family $(\iHb_X)$
given by \eqref{def_iA_X_bi} is the coend of the functor \Fbx:
   \be
   (\BFH,\iHb) = \coen X \Fbx(X,X) \,.
    \ee
\end{prop}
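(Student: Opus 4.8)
The plan is to verify the universal property of the coend directly, using the characterization recalled in Remark \ref{rem:morph_coend}: a morphism out of a coend is the same as a dinatural family out of the functor. Concretely, I would show that for any $H$-bimodule $B$ together with a dinatural transformation $\varphi:\Fbx\Rightarrow B$ (i.e.\ a family $\varphi_X\In\Hom_{H|H}(\Fbx(X,X),B)$ satisfying \eqref{def_dina}), there exists a unique bimodule morphism $h\In\Hom_{H|H}(\BFH,B)$ with $h\circ\iHb_X=\varphi_X$ for all $X\In H\Mod$.

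The key observation that makes this work is that the coend is a \emph{colimit}, hence can be computed in the underlying category and then checked to lift. First I would recall (or invoke) the equivalences of abelian categories established in section \ref{sec:equiv_cat}: by \eqref{Fbb_split} and \eqref{Fbb_fun}, $\Fbx = \Fbb\circ(?^\vee\boxti\Id)$ with $\Fbb$ an equivalence $H\Mod\boxti H\Mod\simeq H\Bimod$. Since equivalences preserve colimits, $\coen X\Fbx(X,X) = \Fbb(\coen X X^\vee\boxti X) = \Fbb(\BF)$, where $\BF$ is the coend \eqref{BulkA_coend}, which exists because $H\Mod$ is finite $\k$-linear (by \cite[Corollary 5.1.8]{KEly}, as quoted after \eqref{BulkA_coend}). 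So existence of the coend is immediate; what remains is to identify it, as an object with its dinatural structure, with $(\BFH,\iHb)$. For this I would argue that $\Fbb$ applied to the dinatural family $\iota_X\In\Hom(X^\vee\boxti X,\BF)$ of \BF\ produces precisely the family $\iHb_X$ of \eqref{def_iHb}: one checks that $\Fbb(\iota_X)$ and $\iHb_X$ agree as linear maps $X^*\otik X\to\Hss$, which is a direct graphical computation comparing the Lyubashenko-style dinatural family of \BF\ in $H\Mod\boxti H\Mod$ (transported through the equivalence) with \eqref{def_iA_X_bi}. Since $\Fbb$ is an equivalence and hence fully faithful, the universal property of $(\BF,\iota)$ transports verbatim to $(\Fbb(\BF),\Fbb(\iota)) = (\BFH,\iHb)$.

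Alternatively — and this is perhaps cleaner to present since the paper works concretely in $\Vectk$ — I would verify the universal property by hand. Given a dinatural $\varphi$ to an $H$-bimodule $B$, the image of $\varphi_X$ under the forgetful functor to $\Vectk$ is a dinatural family to the vector space $B$; since $\Vectk$ has all coends and one computes readily that $\coen X X^*\otik X\cong H^*$ as a vector space with dinatural family the underlying maps of $\iHb_X$ (this is the standard fact that $\int^X X^*\oti X$ is the ``function algebra'', here $H^*$), there is a unique linear map $h_0:\Hss\to B$ with $h_0\circ\iHb_X=\varphi_X$. The real content is then: (a) this $h_0$ is automatically a bimodule morphism, and (b) $h_0$ is the unique \emph{bimodule} morphism with this property. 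Uniqueness (b) is free, since any bimodule morphism satisfying the equations is in particular a linear one, and $h_0$ is the unique linear solution. For (a), I would use dinaturality of $\iHb$ to reduce the claim to the fact that the left and right $H$-actions on $\Hss$ from \eqref{rhoHb,ohrHb} are ``generated'' by the images of the $\iHb_X$: more precisely, I would show that the actions $\rho_{\BFH},\ohr_{\BFH}$ can be expressed through the dinatural family (using that $\iHb$ is dinatural with respect to the $H$-action morphisms $\rho_X$ viewed as morphisms in $H\Mod$, applied to suitable $X$ such as $X=H$ itself), so that the intertwining property of $h_0$ with the actions follows by composing the defining equation $h_0\circ\iHb_X=\varphi_X$ with those action morphisms and using that each $\varphi_X$ is a bimodule morphism.

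The main obstacle I anticipate is step (a): cleanly proving that the unique linear lift $h_0$ respects both the left and right $H$-actions \eqref{rhoHb,ohrHb}. The subtlety is that the bimodule structure on $\BFH$ involves the antipode and its inverse (as in \eqref{rhoHb,ohrHb} and \eqref{Fboxmorph}), so one must track these through the dinaturality manipulations carefully; in particular one needs that $\apo$ is an anti-algebra and anti-coalgebra morphism (recalled around \eqref{anti_alg_apo}) to match conventions. I expect this to be exactly the kind of graphical calculation carried out in \cite[Appendix A]{fuSs3} — specifically the companion computations to \cite[eqs.\ (A.8)--(A.10)]{fuSs3} that already establish $\iHb_X$ is a dinatural family of \emph{bimodule} morphisms — so the cited lemma does most of the work and what remains is to package it into the universal property. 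If one instead takes the equivalence-of-categories route, the obstacle collapses to the single identification $\Fbb(\iota_X)=\iHb_X$, which is a finite diagram chase.
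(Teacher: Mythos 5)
Your second, ``by hand'' route is essentially the paper's own proof: the paper constructs the factorizing map as $\kappa^Z = j^Z_H\circ(\idHs\oti\eta)$, obtains $j^Z_X=\kappa^Z\circ\iHb_X$ by applying dinaturality to the left-module morphisms $f_{x_\circ}=\rho_X\circ(\id_H\oti x_\circ)\In\HomH(H,X)$ out of the regular module, checks the right-module intertwining property by a further dinaturality argument with respect to $m\circ(\id_H\oti(\apoi\circ h))\In\EndH(H)$ (exactly the antipode bookkeeping you anticipate), and gets uniqueness from $\iHb_H\circ(\idHs\oti\eta)=\idHs$, just as you say. Your first route, transporting $\int^X X^\vee\boxti X$ through the equivalence $\Fbb$, is not the argument the paper gives here, and as you yourself note it would still require concretely identifying $\Fbb(\iota_X)$ with $\iHb_X$, which amounts to redoing the same computation.
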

\begin{proof}
   We have to show that the dinatural family $\iHb$ satisfies \eqref{coend_uni}. Thus, let $j^Z$ be any dinatural transformation from \Fbx\ to $Z\In H\Bimod$. Given any $X\In H\Mod$, take any element $x_\circ$ in $X$, described as $x_\circ\In \Homk(\k,X)$, and consider the morphism $f_{x_\circ} \,{:=}\, \rho_X \cir (\id_H \oti x_\circ) \In \HomH(H,X)$. Here $H$ is regarded as an $H$-module via the left regular action. Applying dinaturalness to $f_{x_\circ}$ yields
\eqpic{rZ_etc_2B} {300} {38} { \setlength\unitlength{.8pt}
   \put(0,0)   { \includepichtft{17e}
   \put(0.2,-8.5) {\sse$ \Xs $}
   \put(8.8,81.1) {\sse$ j^Z_{\!X^{}} $}
   \put(10.6,115) {\sse$ Z $}
   \put(16,-8.5)  {\sse$ H $}
   \put(45.3,18)  {\sse$ x_\circ $}
   }
   \put(77,51)   {$ \equiv $}
   \put(108,0) { \includepichtft{17d}
   \put(0.2,-8.5) {\sse$ \Xs $}
   \put(9.2,64.1) {\sse$ j^Z_{\!X^{}} $}
   \put(10.6,115) {\sse$ Z $}
   \put(16,-8.5)  {\sse$ H $}
   \put(25.4,30)  {\sse$ f_{x_\circ} $}
   }
   \put(174,51)  {$ = $}
   \put(207,0) { \includepichtft{17f}
   \put(-12.6,32.3) {\sse$ f_{x_\circ}^* $}
   \put(0.2,-8.5) {\sse$ \Xs $}
   \put(8.8,67.1) {\sse$ j^Z_{\!H^{}} $}
   \put(10.6,117) {\sse$ Z $}
   \put(16,-8.5)  {\sse$ H $}
   }
   \put(270,51)  {$ \equiv $}
   \put(303,0) { \includepichtft{17g}
   \put(-5,-8.5)  {\sse$ \Xs $}
   \put(20.7,20.9){\sse$ x_\circ $}
   \put(41.5,83.4){\sse$ j^Z_{\!H^{}} $}
   \put(43.2,117) {\sse$ Z $}
   \put(49,-8.5)  {\sse$ H $}
   } }
Next we compose this equality with $\id_{\Xs}\oti\eta$:
\eqpic{rZ_etc_3B} {240} {38} { \setlength\unitlength{.8pt}
   \put(-18,51)   {$ j^Z_X\circ(\id_{\Xs}\oti x_0)~ = $}
   \put(116,0) { \includepichtft{17h}
   \put(2,-8.5)   {\sse$ \Xs $}
   \put(8.8,81.1) {\sse$ j^Z_{\!X^{}} $}
   \put(10.6,115) {\sse$ Z $}
   }
   \put(185,51)  {$ = $}
   \put(216,0) { \includepichtft{17i}
   \put(-4,-8.5)  {\sse$ \Xs $}
   \put(41.5,83.7){\sse$ j^Z_{\!H^{}} $}
   \put(43.6,117) {\sse$ Z $}
   }
   %\put(306,51)  {$ =~ j^Z_H  \cir (\iHb_X \oti \eta) \cir (\id_{X^*_{}} \oti x_\circ) $}
   }
Since $x_\circ$ is an arbitrary element of $X$ this implies
\be\label{j_kappa}
   j^Z_X = \kappa^Z \circ \iHb_X \,,
\ee
with $\kappa^Z$ defined as
\be
   \kappa^Z := j^Z_H \circ (\idHs \oti \eta)\,.
\ee
In fact, $\kappa^Z$ is a bimodule morphism from $\BFH$ to $Z$. Compatibility with the left $H$-action is immediate from the fact that $j^Z_H$ is a morphism of left modules. Compatibility with the right action is seen from the following chain of equalities:
 \Eqpic{kappazbim} {320} {32} {\setulen80
   \put(3,0)   {\includepichtft{70a}
   \put(-11.3,30) {\sse$\ohr_{\BFH}$}
   \put(-2.3,-8.5) {\sse$\Hs $}
   \put(17.4,49)  {\sse$j_H^Z$}
   \put(5,92.5)  {\sse$ Z $}
   \put(17,6)    {\sse$h$}
   }
   \put(52,38)   {$=$}
   \put(79,0) { \includepichtft{70b}
   \put(-5,-8.5) {\sse$ \Hs $}
   \put(28.7,92.5) {\sse$ Z $}
   \put(33.5,11) {\sse$h$}
   \put(41.4,62.8) {\sse$j_H^Z$}
   }
   \put(146,38)  {$=$}
   \put(187,0) {\includepichtft{70d}
   \put(-2,-8.5) {\sse$\Hs $}
   \put(5.4,92.5){\sse$ Z $}
   \put(24.5,12) {\sse$h$}
   \put(17.7,51.2){\sse$j_H^Z$}
   }
   \put(238,38)  {$=$}
   \put(280,0) {\includepichtft{70e}
   \put(-2,-8.5) {\sse$\Hs $}
   \put(5,92.5)  {\sse$ Z $}
   \put(24.5,9)  {\sse$h$}
   \put(17.7,51.2){\sse$j_H^Z$}
   }
   \put(334,38)  {$=$}
   \put(378,0) {\includepichtft{70f}
   \put(-2,-8)   {\sse$\Hs $}
   \put(5.7,92.5){\sse$ Z $}
   \put(13,66)   {\sse$ \ohr_Z $}
   \put(24.5,9)  {\sse$h$}
   \put(-12.5,38.2){\sse$j_H^Z$}
   } }
Here the second equality invokes dinaturalness for the morphism\linebreak $m\cir(\id_H \oti (\apoi\cir h)) \In \EndH(H)$, and the last equality follows from the definition \eqref{Fboxmorph} of $\Fbx$ and the fact that $j_H^Z$ is a morphism of right $H$-modules. Thus there indeed exists a morphism $\kappa^Z\In H\Bimod$ from $\BFH$ to $Z$. \\
It remains to show that $\kappa^Z$ is unique. Assume that $\kappa'^Z$ is any morphism in $\Hom(F,Z)$ making \eqref{coend_uni} commute, i.e.
\be\label{kappa'}
    j^Z_X = \kappa'^Z \circ \iHb_X\,.
\ee
Next, note that $\iHb_H \cir (\idHs\oti\eta) \eq (d_H\oti\idHs)\cir (\idHs\oti b_H) \eq \idHs$.
Considering \eqref{kappa'} with $X=H$ and composing with $\idHs\oti\eta$ it therefore follows that
\be
    \kappa'^Z=j^Z_H\cir (\idHs\oti\eta)=\kappa^Z.
\ee
Thus $\kappa^Z$ is indeed uniquely determined.
\endofproof\end{proof}

\subsection{The Frobenius algebra structure on \BFH}
In this subsection we prove that $\BFH$ can be naturally equipped with the structure of a commutative symmetric Frobenius algebra.
\subject{The structure morphisms}
We start by introducing the structure morphisms. To this end we define the following linear maps:
\Eqpic{pic-Hb-Frobalgebra} {320} {32} {\setulen80 \put(0,-13){
   \put(0,45)       {$ m\ASF~:= $}
   \put(48,0)  {\includepichtft{79a}
   \put(-5.9,-8.8)  {\sse$\Hss $}
   \put(6.5,-8.8)   {\sse$\Hss $}
   \put(49.7,106.8) {\sse$\Hss $}
   }
   \put(135,45)     {$ \eta\ASF~:= $}
   \put(175,24) {\includepichtft{81j}
   \put(10.7,44.1)  {\sse$\Hss $}
   }
   \put(224,45)     {$ \Delta\ASF~:= $}
   \put(272,0) {\includepichtft{82a}
   \put(-4.3,-8.8)  {\sse$\Hss $}
   \put(21.7,71)    {\sse$\lambda$}
   \put(48.2,89.4)  {\sse$\Hss $}
   \put(61.1,89.4)  {\sse$\Hss $}
   }
   \put(369,45)     {$ \eps\ASF~:= $}
   \put(409,24) {\includepichtft{82b}
   \put(-4.3,-8.8)  {\sse$\Hss $}
   \put(15.6,16.3)  {\sse$\Lambda$}
   } } }
When $\Hss$ is equipped with the bimodule structure \eqref{rhoHb,ohrHb}, these morphisms are not just linear maps, but bimodule morphisms, i.e.\ morphisms in $H$\Bimod\ \cite[Proposition 2.9]{fuSs3}. In order to discuss the proof of this statement we give the following useful identities from \cite[Lemma 2.7]{fuSs3}. First, for any Hopf algebra $H$ we have
\Eqpic{Hopf_Frob_trick} {320} {40} {
   \put(0,-3){\setulen80
   \put(0,10)   {\includepichtft{84a}
   \put(-3,-8.5) {\sse$H$}
   \put(4,107)   {\sse$H$}
   \put(15,-8.5) {\sse$H$}
   \put(21,107)  {\sse$H$}
   }
   \put(50,57)   {$=$}
   \put(76,0) { \includepichtft{84b}
   \put(8,-8.5)  {\sse$H$}
   \put(20,107)  {\sse$H$}
   \put(30,-8.5) {\sse$H$}
   \put(37,107)  {\sse$H$}
   }
   \put(147,57)  {$=$}
   \put(178,10)   {\includepichtft{84c}
   \put(1,-8.5)  {\sse$H$}
   \put(19.7,107){\sse$H$}
   \put(31,-8.5) {\sse$H$}
   \put(37,107)  {\sse$H$}
   }
   \put(246,57)  {$=$}
   \put(274,10) { \includepichtft{84d}
   \put(1,-8.5)  {\sse$H$}
   \put(8,107)   {\sse$H$}
   \put(27.7,107){\sse$H$}
   \put(33,-8.5) {\sse$H$}
   }
   \put(347,57)  {$=$}
   \put(375,10) { \includepichtft{84e}
   \put(1,-8.5)  {\sse$H$}
   \put(8,107)   {\sse$H$}
   \put(19,-8.5) {\sse$H$}
   \put(24,107)  {\sse$H$}
   } } }
Further, if $H$ is unimodular with integral $\Lambda$, we have
   \eqpic{Hopf_Frob_trick2} {300} {32} {\setulen80
   \put(0,0) {\includepichtft{84f}
   \put(2.2,0){
   \put(-4.2,-8.5) {\sse$ H $}
   \put(2.2,92)  {\sse$ H $}
   \put(27.7,10) {\sse$ \Lambda $}
   \put(31,92)   {\sse$ H $}
   } }
   \put(60,43)   {$ = $}
   \put(86,0) { \includepichtft{84g}
   \put(-4.2,-8.5) {\sse$ H $}
   \put(6.6,92)  {\sse$ H $}
   \put(26.6,92) {\sse$ H $}
   \put(27.7,10) {\sse$ \Lambda $}
   }
   \put(174,43)  {and}
   \put(240,0) { \includepichtft{84h}
   \put(-3.2,92) {\sse$ H $}
   \put(15.7,10) {\sse$ \Lambda $}
   \put(26.1,92) {\sse$ H $}
   \put(29.8,-8.5) {\sse$ H $}
   }
   \put(300,43)  {$ = $}
   \put(326,0) { \includepichtft{84i}
   \put(1.2,92)  {\sse$ H $}
   \put(15.7,10) {\sse$ \Lambda $}
   \put(21.7,92) {\sse$ H $}
   \put(29.8,-8.5) {\sse$ H $}
   } }

We now outline the proof of the statement that the maps \eqref{pic-Hb-Frobalgebra} are bimodule morphisms.
\ssubject{$\boldsymbol{m\ASF}$ is a bimodule morphism} That $m\ASF$ intertwines the left action of $H$ can be seen as follows:
   \Eqpic{mb_left} {320} {32} { \put(0,3){\setulen70
   \put(5,0)   {\INcludepichtft{79b}{266}
   \put(-4,-8.5) {\sse$H$}
   \put(13.8,-8.5) {\sse$\Hss $}
   \put(27.8,-8.5) {\sse$\Hss $}
   \put(70.6,107){\sse$\Hss $}
   }
   \put(88,47)   {$=$}
   \put(93,0) { \INcludepichtft{79c}{266}
   \put(-4,-8.5) {\sse$H$}
   \put(9.5,-8.5){\sse$\Hss $}
   \put(23,-8.5) {\sse$\Hss $}
   \put(70.6,107){\sse$\Hss $}
   }
   \put(184,47)  {$=$}
   \put(193,0)   {\INcludepichtft{79d}{266}
   \put(-4,-8.5) {\sse$H$}
   \put(9.5,-8.5){\sse$\Hss $}
   \put(23,-8.5) {\sse$\Hss $}
   \put(75.6,107){\sse$\Hss $}
   }
   \put(284,47)   {$=$}
   \put(288,0) { \INcludepichtft{79e}{266}
   \put(-4,-8.5) {\sse$H$}
   \put(9.5,-8.5){\sse$\Hss $}
   \put(23,-8.5) {\sse$\Hss $}
   \put(75.8,107){\sse$\Hss $}
   }
   \put(383,47)   {$=$}
   \put(394,0) { \INcludepichtft{79f}{266}
   \put(1,-8.5)  {\sse$H$}
   \put(14.5,-8.5) {\sse$\Hss $}
   \put(28,-8.5) {\sse$\Hss $}
   \put(70.2,107){\sse$\Hss $}
   } } }
Here, the first and the last equality just implement \eqref{rhoHb,ohrHb}. The second and third equality follow from the connecting axiom \eqref{con_axiom} and the anti-coalgebra morphism property \eqref{anti_alg_apo}, respectively. That $m\ASF$ is also a morphism of right $H$-modules follows in an analogous fashion, see \cite[eq. (2.22)]{fuSs3}.

\ssubject{$\boldsymbol{\eta\ASF}$ is a bimodule morphism} That $\eta\ASF$ is a bimodule morphism follows from \eqref{rhoHb,ohrHb} and direct application of \eqref{eta_m} and \eqref{eta_inv_s}, see \cite[eq. (2.23) \& (2.24)]{fuSs3}.

\ssubject{$\boldsymbol{\Delta\ASF}$ is a bimodule morphism} The chain of equalities
\Eqpic{Deltab_left} {320} {25} {
\put(5,-8){\setulen70
   \put(0,0)   {\INcludepichtft{85a}{266}
   \put(-5,-8.5) {\sse$H$}
   \put(7,-8.5)  {\sse$\Hss$}
   \put(65,107)  {\sse$\Hss$}
   \put(78.5,107){\sse$\Hss$}
   }
   \put(105,47)  {$=$}
   \put(130,0) { \INcludepichtft{85b}{266}
   \put(-5,-8.5) {\sse$H$}
   \put(7,-8.5)  {\sse$\Hss$}
   \put(61,107)  {\sse$\Hss$}
   \put(74,107)  {\sse$\Hss$}
   }
   \put(234,47)  {$=$}
   \put(255,0)   {\INcludepichtft{85c}{266}
   \put(-5,-8.5) {\sse$H$}
   \put(7,-8.5)  {\sse$\Hss$}
   \put(61,107)  {\sse$\Hss$}
   \put(74.6,107){\sse$\Hss$}
   }
   \put(357,47)  {$=$}
   \put(375,0) { \INcludepichtft{85d}{266}
   \put(2,-8.5)  {\sse$H$}
   \put(13,-8.5) {\sse$\Hss$}
   \put(73,107)  {\sse$\Hss$}
   \put(85,107)  {\sse$\Hss$}
   } } }
establishes that $\Delta\ASF$ is a morphism of left $H$-modules. Here the first equality follows from \eqref{Hopf_Frob_trick}, while the second and third equality uses \eqref{anti_alg_apo} and associativity respectively. \\
In order to show that $\Delta\ASF$ is also a morphisms of right modules, we use that there is an alternative expression for $\Delta\ASF$, \cite[Lemma 2.8]{fuSs3}:
\eqpic{Deltaprime} {133} {25} {\setulen70
   \put(0,35)    {$ \Delta\ASF' ~:=$}
   \put(68,-2) {\INcludepichtft{87a}{266}
   \put(-5,-8.5) {\sse$\Hss $}
   \put(21,62)   {\sse$\lambda$}
   \put(40,91)   {\sse$\Hss $}
   \put(54,91)   {\sse$\Hss $}
   } }
That $\Delta\ASF'=\Delta\ASF$ follows by using \eqref{fmap_inv} and coassociativity of the coproduct $\Delta$ of $H$. Using the expression \eqref{Deltaprime}, the following chain of equalities establishes that $\Delta\ASF$ is a morphism of left modules:
\Eqpic{Deltab_right} {320} {80} {
\put(0,-15){
\setulen80
   \put(0,128) {\includepichtft{86a}
   \put(-6,-8.5) {\sse$\Hss $}
   \put(43,107)  {\sse$\Hss $}
   \put(55,107)  {\sse$\Hss $}
   \put(62,-8.5) {\sse$H$}
   }
   \put(88,175)  {$=$}
   \put(120,128) { \includepichtft{86b}
   \put(-6,-8.5) {\sse$\Hss $}
   \put(62,-8.5) {\sse$H$}
   \put(61,107)  {\sse$\Hss $}
   \put(73,107)  {\sse$\Hss $}
   }
   \put(227,175) {$=$}
   \put(265,128) {\includepichtft{86c}
   \put(-6,-8.5) {\sse$\Hss $}
   \put(62,-8.5) {\sse$H$}
   \put(61,107)  {\sse$\Hss $}
   \put(73,107)  {\sse$\Hss $}
   } \put(-10,-2){
   \put(50,47)   {$=$}
   \put(75,0) { \includepichtft{86d}
   \put(-6,-8.5) {\sse$\Hss $}
   \put(62,-8.5) {\sse$H$}
   \put(66,107)  {\sse$\Hss $}
   \put(78,107)  {\sse$\Hss $}
   }
   \put(180,47)  {$=$}
   \put(207,0) { \includepichtft{86e}
   \put(-6,-8.5) {\sse$\Hs $}
   \put(62,-8.5) {\sse$H$}
   \put(66,107)  {\sse$\Hs$}
   \put(78,107)  {\sse$\Hs$}
   }
   \put(314,47)  {$=$}
   \put(340,0) { \includepichtft{86f}
   \put(-6,-8.5) {\sse$\Hs $}
   \put(35,-8.5) {\sse$H$}
   \put(50.5,107){\sse$\Hs$}
   \put(62.5,107){\sse$\Hs$}
   } } } }
Here, the first equality uses the anti-coalgebra morphism property of $\apo^{-1}$ and the connecting axiom. The second equality follows from \eqref{OS2_mod}. In the third equality the upper inverse antipode is pushed through the product by using that $\apo^{-1}$ is an anti-coalgebra morphism. The two last equalities use associativity and the defining property \eqref{apo_prop} of $\apo$.

\ssubject{$\boldsymbol{\eps\ASF}$ is a bimodule morphism} Finally, that $\eps\ASF$ is a bimodule morphism follows directly from the actions \eqref{rhoHb,ohrHb}, the defining property of the two-sided integral (i.e.\ that $\Lambda$ satisfies \eqref{int_prop} and the corresponding equality for a right integral) and \eqref{eta_inv_s}, see \cite[eq. (2.28)]{fuSs3}.

\subject{The Frobenius algebra structure}
Having verified that the maps \eqref{pic-Hb-Frobalgebra} are indeed morphisms in $H$\Bimod, we continue to state  \cite[Proposition 2.5]{fuSs3}
\begin{prop}
The morphisms \erf{pic-Hb-Frobalgebra} endow the object $\BFH \eq (\Hs,\rho_{\BFH},\ohr_{\BFH})$
with the structure of a Frobenius algebra in \HBImod. That is, $(\BFH,m\ASF,\eta\ASF)$ is a (unital associative)
algebra, $(\BFH,\Delta\ASF,\eps\ASF)$ is a (counital coassociative) coalgebra,
and the two structures are compatible in the sense
   \be
   (\id_{\BFH}\oti m\ASF) \circ (\Delta\ASF\oti\id_{\BFH}) = \Delta\ASF \circ m\ASF
   = (m\ASF\oti\id_{\BFH}) \circ (\id_{\BFH}\oti\Delta\ASF) \,.
   \ee
\end{prop}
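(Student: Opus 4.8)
The plan is to verify, one structure at a time, that the six linear maps in \eqref{pic-Hb-Frobalgebra} satisfy the defining axioms of an algebra, a coalgebra, and the Frobenius compatibility condition, working throughout in \Vectk\ (i.e.\ treating the maps as ordinary linear maps between $\Hss$-tensor powers) since we have already established they are bimodule morphisms. The key observation that makes this tractable is that $m\ASF$, $\eta\ASF$, $\Delta\ASF$, $\eps\ASF$ are built from the product, coproduct, unit, counit, antipode, integral $\Lambda$ and cointegral $\lambda$ of $H$ via the Frobenius map $\Psi$ and its inverse \eqref{def_fmap}. So the strategy is to transport everything through $\Psi$: since $\Psi:H\to\Hss$ is an isomorphism that intertwines regular and coregular actions (Remark after \eqref{fmap_inv}), I expect $m\ASF$ to be conjugate under $\Psi$ to a product on $H$ closely related to the ordinary multiplication, and $\Delta\ASF$ — via the alternative expression $\Delta\ASF'$ in \eqref{Deltaprime}, which equals $\Delta\ASF$ by \eqref{fmap_inv} and coassociativity — to be conjugate to the ordinary coproduct of $H$ twisted by the cointegral. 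Associativity and coassociativity then follow from the corresponding properties of $m$ and $\Delta$ in $H$, together with the centrality/normalization facts \eqref{norm_int_1}, \eqref{norm_int_2}.

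Concretely, first I would prove that $(\BFH,m\ASF,\eta\ASF)$ is a unital associative algebra: associativity of $m\ASF$ is a graphical manipulation using coassociativity of $\Delta$, the anti-coalgebra property \eqref{anti_alg_apo} of $\apo$, and associativity of $m$; the unit property uses \eqref{eta_m}, \eqref{eta_inv_s} and the defining relation \eqref{apo_prop}. Second, dually, $(\BFH,\Delta\ASF,\eps\ASF)$ is a counital coassociative coalgebra: here I would use the expression \eqref{Deltaprime} for $\Delta\ASF$, coassociativity of $\Delta$, the integral property \eqref{int_prop}, \eqref{Hopf_Frob_trick2} and the normalization $\lambda\circ\Lambda=1$; counitality reduces to \eqref{prop_lambda} and \eqref{eta_inv_s}. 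Third, the Frobenius relation
\[
   (\id_{\BFH}\oti m\ASF) \circ (\Delta\ASF\oti\id_{\BFH})
   \,=\, \Delta\ASF \circ m\ASF
   \,=\, (m\ASF\oti\id_{\BFH}) \circ (\id_{\BFH}\oti\Delta\ASF)\,.
\]
For this I would expand both outer compositions using \eqref{Hopf_Frob_trick} (the five-fold equality that lets one slide a product past a coproduct at the cost of antipodes and a braiding, all trivial here since the braiding in \Vectk\ is symmetric) and the identities \eqref{Hopf_Frob_trick2}; the bialgebra connecting axiom \eqref{con_axiom} for $H$ is the engine that converts $\Delta\circ m$ into the cross-term shape, and the cointegral property \eqref{prop_lambda} is what collapses the extra strands so that all three expressions coincide. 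Since the two equalities in the Frobenius condition are mirror images under the obvious top-bottom reflection (which exchanges $m\leftrightarrow\Delta$, $\Lambda\leftrightarrow\lambda$ appropriately), it suffices to prove one of them and one of the two "middle = outer" equalities.

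The main obstacle I anticipate is the bookkeeping in the Frobenius relation: $m\ASF$ contains one $\apo$ and $\Delta\ASF$ contains one $\lambda$ (plus, in the form \eqref{Deltaprime}, the constant $(\lambda\circ\Lambda)^{-1}=1$), so after composing, one has a graph with two antipodes, one or two copies of $\lambda$, and several products and coproducts of $H$ that must be reorganized into a manifestly symmetric form. The delicate point is ensuring the antipodes end up in the right place — this is exactly where \eqref{OS2} and its consequence \eqref{OS2_mod} (valid because $H$ is unimodular, i.e.\ factorizable ribbon) are needed to move an $\apo^2$ across a multiplication, just as in the proof that $\Delta\ASF$ is a right-module morphism \eqref{Deltab_right}. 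I would therefore model the computation on that proof. Everything else is routine diagrammatic calculus with the Hopf axioms, and the normalization \eqref{norm_int_1} guarantees no stray scalars survive; I do not expect to need specialness (indeed the paper notes $\BFH$ is special iff $H$ is semisimple, which is not assumed here).
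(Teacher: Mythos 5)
Your decomposition into the three axioms, and your observation that one may work with plain linear maps in \Vectk\ (the bimodule-morphism property having been established separately), match the paper. The algebra and coalgebra parts are essentially right, though you over-equip them: the paper's proof of unital associativity is the one-line remark that $(\BFH,m\ASF,\eta\ASF)=(\Hs,\Delta^*,\eps^*)$ is literally the dual of the coalgebra $(H,\Delta,\eps)$ --- no antipode, no associativity of $m$ and no appeal to \erf{anti_alg_apo} enters (contrary to your aside that $m\ASF$ ``contains one $\apo$''; the antipodes live only in the actions \erf{rhoHb,ohrHb}, which are irrelevant at this stage). Coassociativity is obtained from the mixed relation $(\id_{\Hs}\oti\Delta\ASF')\circ\Delta\ASF=(\Delta\ASF\oti\id_{\Hs})\circ\Delta\ASF'$ together with $\Delta\ASF'=\Delta\ASF$, and the counit property is exactly \erf{fmap_inv}; your ingredients for these steps are in the right neighbourhood.

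The genuine problem is your plan for the Frobenius compatibility, which is the heart of the statement. You name the bialgebra connecting axiom \erf{con_axiom} as ``the engine that converts $\Delta\circ m$ into the cross-term shape'' and anticipate needing unimodularity via \erf{OS2} and \erf{OS2_mod} to reposition antipodes. Neither is relevant here: $\Delta\ASF$ is not $m^*$ (that would make $\Hs$ a bialgebra, and a bialgebra structure is never Frobenius-compatible), but $m^*$ twisted by the cointegral, and the two forms \erf{pic-Hb-Frobalgebra} and \erf{Deltaprime} exhibit it as ``$m\ASF$ applied to one tensor leg of a fixed two-legged element''. With that presentation each of the two equalities in \erf{frob} is an instance of associativity of $m\ASF$, i.e.\ of coassociativity of the coproduct $\Delta$ of $H$ --- the first equality using the form $\Delta\ASF$, the second using $\Delta\ASF'$. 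The connecting axiom would instead produce four-strand cross terms of the wrong shape, and there is no mechanism by which \erf{prop_lambda} ``collapses'' them back to the two-strand Frobenius identity; \erf{OS2_mod} is needed only for the separate, already-established statement that $\Delta\ASF$ intertwines the right $H$-action. As written, your third step would not close.
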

\begin{proof}
We outline the proof given in \cite{fuSs3}. First, that $(\BFH,m\ASF,\eta\ASF)=(H^*,\Delta^*,\eps^*)$ is a unital algebra
follows directly by duality from the fact that $(H,\Delta,\eps)$ is a counital coalgebra. Second, coassociativity of $\Delta$ immediately implies
\be
    (\id_{\Hs}\oti\Delta\ASF')\circ\Delta\ASF=(\Delta\ASF\oti\id_{\Hs})\circ\Delta\ASF'\,,
\ee
which in turn implies that $\Delta\ASF'\eq\Delta\ASF$ is a coassociative coproduct. That $\eps\ASF$ is a counit for this product follows from \eqref{fmap_inv}, see \cite[eq. (2.32)]{fuSs3}. Finally that $\BFH$  satisfies the first equality in \eqref{frob} follows by writing out the linear maps, inserting $\Delta\ASF$ and using coassociativity of $\Delta$. Analogously the second equality in \eqref{frob} follows by inserting $\Delta\ASF'$ and using coassociativity of $\Delta$. Thus \BFH\ is indeed a Frobenius algebra in $H$\Bimod.
\endofproof\end{proof}
\subject{$\boldsymbol{\BFH}$ is commutative and has trivial twist}
We have from  \cite[Proposition 3.1]{fuSs3}
\begin{prop}
The product $m\ASF$ of the Frobenius algebra \BFH\ in $H$\Bimod\ is commutative
with respect to the braiding \eqref{bibraid}:
   \be
   m\ASF \circ c^{}_{\BFH,\BFH} = m\ASF \,.
   \ee
\end{prop}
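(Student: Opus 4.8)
The plan is to verify the identity $m\ASF\cir c^{}_{\BFH,\BFH}=m\ASF$ by unfolding everything into linear maps in \Vectk, where both sides become explicit expressions built from $\Delta$, $m$, $\apo$, $\apoi$ and the $R$-matrix. Recall that $m\ASF=\Delta^*$ (as in \eqref{pic-Hb-Frobalgebra}), and that the braiding $c^{}_{\BFH,\BFH}$ on $H$\Bimod\ is the one in \eqref{bibraid}, built from $R^{-1}$ acting on the left and $R$ acting on the right together with the flip $\flip{}{}$. So the first step is to substitute these concrete expressions and write $m\ASF\cir c^{}_{\BFH,\BFH}$ as a single morphism $\Hs\oti\Hs\to\Hs$ in \Vectk, taking care (as emphasized in section \ref{sec:Hmod_mon}) that the result is genuinely a bimodule morphism, which it is since $m\ASF$ and $c^{}_{\BFH,\BFH}$ both are.

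Next I would push the two copies of the $R$-matrix through. The key input is that on the dual side the coregular actions \eqref{rhoHb,ohrHb} involve the antipode, and that the product $m\ASF=\Delta^*$ is dual to the coproduct of $H$; after dualizing, the statement $m\ASF\cir c^{}_{\BFH,\BFH}=m\ASF$ is equivalent to a statement about the coproduct $\Delta$ of $H$ twisted by the $R$-matrix. This is precisely where the defining quasitriangularity relations come in: \eqref{Rmat1}, which says $\ad_R(\Delta)=\Delta\op=\flip HH\cir\Delta$, together with the relations \eqref{Rmat2} for $\apo$ on $R$ and the monodromy-matrix identities. The idea is that the braiding \eqref{bibraid} contributes a factor $R^{-1}_{21}$ on one side and $R$ on the other; combined with the flip this assembles into a conjugation by the monodromy $Q=R_{21}R$ (or its inverse), and \eqref{Rmat1} then exactly converts $\Delta$ into $\Delta\op$, so that composing with the (co)product and using coassociativity of $\Delta$ collapses the extra flip. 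Equivalently, dualizing: $m=m^*$-type manipulations turn commutativity of $m\ASF$ against \eqref{bibraid} into the statement that $\Delta$ is "cocommutative up to the monodromy", which is just \eqref{Rmat1} read backwards.

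Concretely, the steps in order are: (i) write out $m\ASF\cir c^{}_{\BFH,\BFH}$ graphically in \Vectk\ using \eqref{bibraid} and the definition of $m\ASF$ in \eqref{pic-Hb-Frobalgebra}; (ii) use that $\apo$ and $\apoi$ are anti-(co)algebra morphisms, \eqref{anti_alg_apo}, to move the antipodes past the crossings, exactly as was done in \eqref{mb_left} for the left-module property of $m\ASF$; (iii) recognize the resulting double-$R$-matrix insertion as a monodromy and apply \eqref{Rmat1} (in the form $R\cdot\Delta=\Delta\op\cdot R=\flip HH\cir\Delta\cdot R$) to swap the two tensor factors of the coproduct; (iv) use the relations \eqref{Rmat2} to cancel the antipodes against the remaining $R$-factors, and coassociativity of $\Delta$ to absorb the leftover flip; (v) read off that what remains is precisely $\Delta^*=m\ASF$. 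The main obstacle is step (iii)–(iv): keeping track of which copy of $R$ versus $R^{-1}_{21}$ sits on which strand after the antipodes have been dragged through, and checking that the flip introduced by \eqref{Rmat1} is exactly the one needed to undo the braiding flip rather than producing a spurious extra twist — i.e. a careful bookkeeping of over- versus under-braiding in $H$\Bimod\ (cf. Remark \ref{rem:bibraid_choice}). Once the signs and orientations are lined up, the cancellation is forced by quasitriangularity, and one could alternatively shortcut the whole computation by invoking the equivalence \eqref{equi_bim_tens_H} of $H$\Bimod\ with $(\overline H\Otik H)\Mod$ and the fact that $\BFH$ corresponds to $\BFss=\bigoplus U^\vee\boxti U$-type coend, whose product \eqref{prod_BF} is manifestly commutative because the braiding in \eqref{prod_BF} is symmetric in the two coend factors after using the Yang--Baxter move \eqref{R_YB}; I would present the direct Hopf-algebraic computation as the primary argument and mention the coend argument as a conceptual check.
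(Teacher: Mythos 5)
Your plan is correct and follows essentially the same route as the paper: the proof there is precisely a direct computation in $\Vectk$ that inserts the braiding \eqref{bibraid} and the coregular actions \eqref{rhoHb,ohrHb} into $m\ASF\cir c^{}_{\BFH,\BFH}$ and then applies the antipode--$R$-matrix relations \eqref{Rmat2} followed by the intertwining property \eqref{Rmat1} to collapse the expression back to $m\ASF$. The only caveat is your closing aside: commutativity of the coend product \eqref{prod_BF} is not ``manifest'' (it is itself a nontrivial statement), so that shortcut should not be leaned on, but since you present the direct Hopf-algebraic computation as the primary argument this does not affect the proof.
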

\begin{proof}
By applying \eqref{bibraid}, \eqref{rhoHb,ohrHb}, \eqref{Rmat2} and \eqref{Rmat1} it follows that
\Eqpic{Hb_comm} {320} {90} {\setulen80 \put(-7,-6) {
   \put(0,181)   {$m\ASF\circ c_{\BFH,\BFH} ~= $}
   \put(88,134) { \includepichtft{91a}
   \put(5,-8.5)  {\sse$\Hs $}
   \put(19,-8.5) {\sse$\Hs $}
   \put(62,107)  {\sse$\Hs $}
   }
   \put(184,181)  {$=$}
   \put(221,134) {\includepichtft{91b}
   \put(-6,-8.5) {\sse$\Hss $}
   \put(8,-8.5)  {\sse$\Hss $}
   \put(70,107)  {\sse$\Hss $}
   }
   \put(32,0){
   \put(0,47)    {$=$}
   \put(26,0) { \includepichtft{91c}
   \put(-6,-8.5) {\sse$\Hss $}
   \put(8,-8.5)  {\sse$\Hss $}
   \put(68.7,107){\sse$\Hss $}
   }
   \put(125,47)  {$=$}
   \put(151,0) { \includepichtft{91d}
   \put(-6,-8.5) {\sse$\Hss $}
   \put(8,-8.5)  {\sse$\Hss $}
   \put(68.7,107){\sse$\Hss $}
   }
   \put(250,47)  {$=$}
   \put(279,0) { \includepichtft{79a}
   \put(-6,-8.5) {\sse$\Hss $}
   \put(8,-8.5)  {\sse$\Hss $}
   \put(50.7,107){\sse$\Hss $}
   }
   \put(359,47)  {$ = ~m\ASF \,.$}
   } } }
Thus \BFH\ is commutative.
\endofproof
\end{proof}

Next we observe
\begin{prop}
The twist of \BFH\ is trivial:
\be\label{triv_twist}
    \theta_{\BFH}=\id_{\BFH}\,.
\ee
\end{prop}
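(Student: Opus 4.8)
The claim is that the twist $\theta_{\BFH}$ on the bimodule $\BFH=(\Hs,\rho_{\BFH},\ohr_{\BFH})$ equals the identity. The twist of $H$\Bimod\ was given in \eqref{bimod-twist}: it acts by the ribbon element $v$ from the left and by $v^{-1}$ from the right. So what has to be shown is that, on the coregular bimodule $\Hss$ with the actions \eqref{rhoHb,ohrHb}, the composite
\be
   \theta_{\BFH}=\ohr_{\BFH}\circ(\rho_{\BFH}\oti\id_{v^{-1}}\text{-stuff})\dots
\ee
— more precisely, the linear map obtained by feeding $v$ into the left action $\rho_{\BFH}$ and $v^{-1}$ into the right action $\ohr_{\BFH}$ — reduces to $\id_{\Hss}$. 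First I would write out $\theta_{\BFH}$ explicitly using \eqref{rhoHb,ohrHb}: the left action applies $\apo$ to $v$ and then acts coregularly, the right action applies $\apoi$ to $v^{-1}$ and acts coregularly on the other side. So the plan is to substitute these and simplify.

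The key simplifying inputs are the defining relations of the ribbon element, \eqref{prop_v}: $\apo\circ v=v$ (so $\apo(v)=v$ and likewise $\apoi(v^{-1})=v^{-1}$ since $\apo\circ v^{-1}=v^{-1}$ by applying $\apo$ to $v\cdot v^{-1}=\eta$), the counit relation $\eps\circ v=1$, and crucially the comultiplication relation $\Delta\circ v=(v\oti v)\cdot Q^{-1}$, where $Q$ is the monodromy matrix \eqref{Qmat}. The point is that the coregular action of $v$ on $\Hss$ is dual to right multiplication by $v$ on $H$, i.e.\ it is $(\id_{\Hss}\oti\langle v,-\rangle)\circ\Delta^*$-type expression; when one combines the left coregular action by $v$ and the right coregular action by $v^{-1}$, the two ribbon elements get "pushed together" and one is left evaluating something like $\langle \Delta(v)\cdot(v^{-1}\oti v^{-1}),-\rangle$ against the coproduct of $H$. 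Using $\Delta(v)=(v\oti v)Q^{-1}$ together with centrality of $v$, the $v\oti v$ cancels the $v^{-1}\oti v^{-1}$ and what survives is an evaluation against $Q^{-1}$. Then I would use that the monodromy matrix satisfies $(\apo\oti\apo)\circ Q=\flip HH\circ Q$ (stated just after \eqref{Qmat}) and the Yang–Baxter-type relations \eqref{R_YB} for the components of $R$, together with $(\apo\oti\id_H)\circ R=R^{-1}$ from \eqref{Rmat2}, to show that the contribution of $Q^{-1}$ in this particular double-action configuration collapses: the over-crossing from the right coregular action and the under-crossing coming from $Q^{-1}$ cancel by a Reidemeister-II-type move, i.e.\ by $R\cdot R^{-1}=\id$. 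The cleanest way to organize this is graphically, in $\Vectk$, drawing $\theta_{\BFH}$ as a morphism $\Hss\to\Hss$ and pulling the $v$-boxes through using \eqref{prop_v}, then cancelling the resulting braidings.

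An alternative, possibly shorter, route is to use the coend description: by Proposition \ref{prop-F=coend}, $\BFH$ is the coend $\int^X\Fbx(X,X)$ with dinatural family $\iHb_X$ from \eqref{def_iHb}. A morphism out of the coend is determined by its precomposition with all the $\iHb_X$, so it suffices to check $\theta_{\BFH}\circ\iHb_X=\iHb_X$ for every $X\In H\Mod$. Precomposing, the twist $\theta_{\BFH}$ (acting by $v$ on the left and $v^{-1}$ on the right of $\Hss$) gets transported, through the definition of $\iHb_X$ as a configuration of the $X$-action $\rho_X$ and a braiding $\tau_{X,H}$, into a twist on the $X$-leg. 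Concretely, using that $\iHb_X$ builds $\Hss$ from $X^\vee\otik X$ via $\rho_X$, the action of the ribbon element on $\BFH$ corresponds under $\iHb_X$ to the twist $\theta_X$ of $H$\Mod\ composed with $\theta_{X^\vee}^{-1}$ — and since $H$\Mod\ is ribbon (with the braiding \eqref{braid_Hmod} and twist defined from $v$), the compatibility of twist with duality, $\theta_{X^\vee}=(\theta_X)^\vee$, makes these cancel after going around the (co)evaluation in $\iHb_X$. In either approach the main obstacle is purely bookkeeping: keeping track of which antipode ($\apo$ versus $\apoi$) and which braiding ($R$ versus $R_{21}^{-1}$) appears in the left versus the right coregular action, and in the twist \eqref{bimod-twist}, so that the Reidemeister-II cancellation is clean; the substantive ingredient is just the third relation in \eqref{prop_v}, $\Delta\circ v=(v\oti v)\cdot Q^{-1}$, combined with centrality of $v$ and the relations \eqref{Rmat2}, \eqref{R_YB} governing $R$ and $Q$. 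I would carry out the graphical cancellation in $\Vectk$ and cite \eqref{prop_v}, \eqref{Rmat2}, \eqref{R_YB}, \eqref{rhoHb,ohrHb} and \eqref{bibraid}, \eqref{bimod-twist} as the needed ingredients.
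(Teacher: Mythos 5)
You start in the right place (the twist formula \eqref{bimod-twist}) and you name the two facts that actually carry the proof, namely $\apo\cir v=v$ and centrality of $v$; this is precisely what the paper uses. The paper's argument is nothing more than: insert the coregular actions \eqref{rhoHb,ohrHb} into \eqref{bimod-twist}, note that the antipodes occurring there leave $v$ and $v^{-1}$ unchanged because $\apo\cir v=v$, and observe that the resulting map sends $f\In\Hss$ to $x\mapsto f(v^{-1}xv)$ (up to the left/right convention), which equals $f$ by centrality.

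Your main computation (the first route), however, inserts a step that is not there. In \eqref{bimod-twist} the ribbon element is fed \emph{once} into the left action and its inverse \emph{once} into the right action; at no point is a coproduct of $v$ taken, so nothing of the form $\langle\Delta(v)\cdot(v^{-1}\oti v^{-1}),-\rangle$ ever arises, the relation $\Delta\cir v=(v\oti v)\cdot Q^{-1}$ plays no role, and there is no $Q^{-1}$ to cancel against braidings by a Reidemeister-II move. If you find yourself needing $\Delta(v)$ and the monodromy matrix you are in effect re-deriving the formula \eqref{bimod-twist} from the general expression \eqref{twist_NS} for the twist of a sovereign braided category --- but the paper already takes \eqref{bimod-twist} as established, so this detour is at best redundant and, carried out as literally described, would not land on $\id_{\Hss}$ without first collapsing back to the two-line substitution above. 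Your second route, via the coend and the dinatural family $\iHb_X$, is a legitimate alternative: checking $\theta_{\BFH}\cir\iHb_X=\iHb_X$ does suffice by the universal property, and the cancellation you describe works because the $v$ acting on the $X^\vee$-leg is transported through the evaluation to a $v$ acting on $X$, which must then be commuted past $\rho_X$ --- so it again uses exactly $\apo\cir v=v$ and centrality, only with more bookkeeping than the direct substitution.
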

\begin{proof}
After having inserted the left and right actions \eqref{rhoHb,ohrHb} of $H$ on $\BFH$ in the expression \eqref{bimod-twist} for the twist of $H$\Bimod, the equality \eqref{triv_twist}  is obtained by using $\apo\circ v=v$ and that $v$ is central.
\endofproof\end{proof}
\subject{$\boldsymbol{\BFH}$ is symmetric}
We have  \cite[Theorem 4.4]{fuSs3} the following result
\begin{prop}
    The Frobenius algebra \BFH\ is symmetric.
\end{prop}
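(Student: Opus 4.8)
The plan is to show that $\BFH$, equipped with the Frobenius structure \eqref{pic-Hb-Frobalgebra} and the invariant pairing $\kappa_\eps:=\eps\ASF\cir m\ASF$ (which is non-degenerate by Remark \ref{rem_symm} (iii), since $\BFH$ is Frobenius as in Definition \ref{def_Frobalgebra}), satisfies the symmetry condition \eqref{pic_csp_14}. By Remark \ref{rem_symm} (i), since $H$\Bimod\ is sovereign, this is equivalent to checking that $\Phi_{\text r}=\Phi_{\text l}$ for the morphisms defined in \eqref{def_Phi} built from $(m\ASF,\eta\ASF)$; here one must be slightly careful because $H$\Bimod\ is sovereign but not \emph{strictly} sovereign, so the relevant criterion is the one involving the sovereign isomorphism $\piv_X$ from \eqref{pivX}, i.e.\ one checks \eqref{pic_csp_14} directly with $\kappa=\kappa_\eps$. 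Since both equalities in \eqref{pic_csp_14} imply each other (Remark \ref{rem_symm} (ii)), it suffices to establish one of them.

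First I would write out $\kappa_\eps=\eps\ASF\cir m\ASF$ explicitly as a linear map in $\Vect$, using the formulas \eqref{pic-Hb-Frobalgebra}: $m\ASF$ is dual to the coproduct of $H$ precomposed with appropriate antipodes and $\lambda$-insertions coming from $\Delta\ASF$, and $\eps\ASF$ is (pairing with) the integral $\Lambda$. The key computational input will be the unimodularity identities \eqref{OS2} and \eqref{OS2_mod}, which encode exactly the cyclic-symmetry-up-to-$\apo^2$ behaviour of $\lambda\cir m$, together with the integral identities \eqref{Hopf_Frob_trick2} and the normalization $\lambda\cir\Lambda=1$ from \eqref{norm_int_1}. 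The goal is to massage $\kappa_\eps$ into a manifestly symmetric form; the relation $\apo^2=\ad_t$ from \eqref{apo_t} will be used to absorb the $\apo^2$ that appears when commuting $\lambda\cir m$ past a flip, and the resulting conjugation by the special group-like element $t$ is precisely what gets reinterpreted as the sovereign isomorphisms $\piv_{\BFH}$ and $\piv_{\BFH}^{-1}$ via \eqref{pivX}. Concretely, one shows
\eqpic{sym_plan}{60}{10}{\put(-25,0){$\kappa_\eps\cir c^{}_{\BFH,\BFH}=\kappa_\eps$ up to the insertion of $\piv_{\BFH}$ on one leg,}}
which is the content of \eqref{pic_csp_14}.

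The main obstacle I expect is bookkeeping the antipodes and the element $t$: the formula for $m\ASF$ already contains $\apo$ and $\apoi$ coming from the coregular actions \eqref{rhoHb,ohrHb}, the braiding \eqref{bibraid} contributes $R$ and $R_{21}^{-1}$, the twist of $H$\Bimod\ contributes the ribbon element (though by \eqref{triv_twist} the twist on $\BFH$ itself is trivial, which simplifies matters), and the sovereign structure \eqref{pivX} contributes two copies of $t$. Keeping track of where each of these lands, and verifying that the net effect of commuting $\lambda\cir m$ through the relevant flips — governed by \eqref{OS2} and its consequence \eqref{OS2_mod} — produces exactly a conjugation by $t$ (hence exactly $\piv_{\BFH}$), is the delicate step. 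I would organize this as a chain of graphical equalities in $\Vect$ entirely analogous to the ones used to prove that $m\ASF$ and $\Delta\ASF$ are bimodule morphisms (cf.\ \eqref{mb_left}, \eqref{Deltab_right}), invoking \eqref{Hopf_Frob_trick}, \eqref{Hopf_Frob_trick2}, the connecting axiom \eqref{con_axiom}, the anti-(co)algebra property \eqref{anti_alg_apo} of $\apo$, and finally \eqref{apo_t} to recognize the $t$-conjugation. The detailed graphical computation is carried out in \cite[Theorem 4.4]{fuSs3}, to which I refer for the remaining routine manipulations.
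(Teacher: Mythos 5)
Your strategy is workable, but it is a genuinely different route from the one the thesis takes, and it is considerably heavier. The paper never unpacks anything into $\Vect$: it proves the rightmost equality of \erf{pic_csp_14} for $\kappa\eq\eps\ASF\cir m\ASF$ by a three-step chain drawn entirely in $\HBimod$ — a deformation, then the general identity \erf{twist_NS} expressing the twist through the braiding and the sovereign isomorphism to recognize $\theta_{\BFH}$ (which is trivial by \erf{triv_twist}), then commutativity of $m\ASF$. In other words, the paper's proof is an instance of the purely categorical fact that a commutative Frobenius algebra with trivial twist in a sovereign braided category is symmetric; no integral, cointegral, antipode or $t$-bookkeeping ever appears, because all of that was already absorbed into the earlier proofs that $\BFH$ is commutative and has trivial twist. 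What your approach buys is an explicit formula-level verification; what it costs is exactly the "delicate step" you flag, which you do not carry out but defer to a citation — so as written the argument is a plan rather than a proof.

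Two concrete cautions if you do pursue your route. First, the identities \erf{OS2} and \erf{OS2_mod} govern $\lambda\cir m$, i.e.\ the Frobenius pairing on $H$ itself, whereas $\kappa_\eps\eq\eps\ASF\cir m\ASF$ on $\BFH\eq\Hs$ is evaluation against $\Delta\cir\Lambda$; you need the dual statements (the symmetry of $\Delta\cir\Lambda$ up to $\apo^{\pm2}$, which uses unimodularity via $\apo\cir\Lambda\eq\Lambda$ and \erf{Hopf_Frob_trick2}), not \erf{OS2} itself. Second, the symmetry condition \erf{pic_csp_14} is formulated with dualities and $\piv$ only — the braiding (and hence $R$, $R_{21}^{-1}$) does not enter the condition you must verify, only the proof via commutativity; listing the $R$-matrices among the data to be tracked suggests you are setting up a larger computation than necessary. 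Given that commutativity and triviality of the twist are already established in the text, the categorical argument is both shorter and strictly less error-prone.
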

\begin{proof}
According to Definition \ref{def_sym} (iii) and Remark \ref{rem_symm} (iii) \BFH\ is symmetric if \eqref{pic_csp_14} holds for $\kappa=\eps\ASF\cir m\ASF$. Indeed we have the following chain of equalities
\eqpic{sym_proof} {260} {25} {
   \put(220,10) {\Includepichtft{99g}
   \put(-4,-8.5)  {\sse$\BFH$}
   \put(7.7,-8.5) {\sse$\BFH$}
   \put(12.2,26.3) {\sse$m\ASF$}
   \put(11.2,44.3) {\sse$\eps\ASF$}
   }
   \put(70,25)  {$ = $}
   \put(180,25)  {$ = $}
   \put(0,0) {\Includepichtft{99f}
   \put(-3.3,-8.5) {\sse$\BFH$}
   \put(22.7,44.5) {\sse$\eps\ASF$}
   \put(22.7,30.5) {\sse$m\ASF$}
   \put(8,-8.5)   {\sse$\BFH$}
   \put(50,39)     {\sse$\piv_{\!\BFH}^{}$}
   }
   \put(100,0) {\Includepichtft{99h}
   \put(-3.3,-8.5) {\sse$\BFH$}
   \put(20.7,56) {\sse$\eps\ASF$}
   \put(22.7,43.5) {\sse$m\ASF$}
   \put(8,-8.5)   {\sse$\BFH$}
   \put(49,17)     {\sse$\piv_{\!\BFH}^{}$}
   }
   \put(260,52)    {\Hbimodpic}
   }
in $H$\Bimod.
Note that this is a picture drawn in $H$\Bimod, not in \Vect. In particular the braiding is the braiding of $H$\Bimod. Here the first equality is a deformation. The second equality is obtained by first using \eqref{twist_NS} to recognize the twist $\theta_{\BFH}$, which is trivial, and then using that \BFH\ is commutative. \eqref{sym_proof} proves the rightmost equality in \eqref{pic_csp_14} and thus that \BFH\ is symmetric.
\endofproof\end{proof}
\begin{remark}
    Any symmetric commutative Frobenius algebra in a sove\-reign braided category is also cocommutative and has trivial twist. For a strictly sovereign category \C, this has been shown in \cite[Proposition 2.25]{ffrs}. The proof easily extends to the non-strict case.
\end{remark}

\subject{Specialness}
We first note that \cite[Lemma 4.5]{fuSs3}
\be
   \eps\ASF \cir \eta\ASF = \eps\cir\Lambda \quand
   m\ASF \cir \Delta\ASF = (\lambda\cir\eps)\, \id_\Hs \,.
\ee
The first equality is immediate from \eqref{pic-Hb-Frobalgebra}, while the second uses coassociativity of $H$ and the defining property of the antipode, see \cite[eq. (4.18)]{fuSs3}. A finite-dimensional Hopf algebra is semisimple iff the \emph{Maschke number} $\eps\cir\Lambda\In\k$ is non-zero, and it is cosemisimple iff $\lambda\cir\eta\In\k$ is non-zero \cite{laSw}. In addition, when $\k$ is of characteristic zero, cosemisimplicity is implied by semisimplicity \cite{laRad2}. Thus we have
\begin{cor}
The Frobenius algebra \BFH\ in $H$\Bimod\ is special iff $H$ is semisimple.
\end{cor}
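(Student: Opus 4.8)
The proof of this corollary is essentially a matter of assembling facts that have already been established, together with two classical results about finite-dimensional Hopf algebras. The plan is as follows.

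First, recall the definition of specialness (Definition \ref{def_sym}(i)): the Frobenius algebra \BFH, which is also a coalgebra, is special iff $m\ASF\cir\Delta\ASF=\beta_{\BFH}\,\id_{\BFH}$ and $\eps\ASF\cir\eta\ASF=\beta_{\one}\,\id_{\one}$ for \emph{non-zero} scalars $\beta_{\BFH},\beta_{\one}$. So I would first quote the two identities $\eps\ASF\cir\eta\ASF=(\eps\cir\Lambda)\,\id_\Hs$ and $m\ASF\cir\Delta\ASF=(\lambda\cir\eta)\,\id_\Hs$ established just above (these follow from the explicit form \eqref{pic-Hb-Frobalgebra} of the structure morphisms, coassociativity of $\Delta$, and the antipode axiom \eqref{apo_prop}; note that the second identity uses $\eps$ applied to $\lambda$, i.e.\ the pairing $\lambda\cir\eta\In\k$, rather than $\lambda\cir\eps$). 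Thus specialness of \BFH\ is equivalent to $\eps\cir\Lambda\neq0$ \emph{and} $\lambda\cir\eta\neq0$.

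Next I would invoke the two classical facts: by \cite{laSw}, a finite-dimensional Hopf algebra $H$ is semisimple iff the Maschke number $\eps\cir\Lambda\In\k$ is non-zero, and it is cosemisimple iff $\lambda\cir\eta\In\k$ is non-zero. Combined with the previous step, \BFH\ is special iff $H$ is simultaneously semisimple and cosemisimple.

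Finally, the last ingredient is that over a field $\k$ of characteristic zero semisimplicity of a finite-dimensional Hopf algebra already implies cosemisimplicity \cite{laRad2} (indeed in characteristic zero the two notions coincide, since $H$ semisimple implies $H^*$ semisimple). Hence the condition ``semisimple and cosemisimple'' collapses to ``semisimple'', and we conclude that \BFH\ is special iff $H$ is semisimple. Since $\k$ is assumed algebraically closed of characteristic zero throughout (the standing Convention on factorizable ribbon Hopf algebras), this hypothesis is in force. I do not anticipate a genuine obstacle here; the only point requiring care is bookkeeping — making sure that the two scalars appearing are precisely the Maschke number $\eps\cir\Lambda$ and the ``co-Maschke'' number $\lambda\cir\eta$, so that both classical criteria apply, and that the characteristic-zero hypothesis is used exactly to eliminate the independent cosemisimplicity requirement.
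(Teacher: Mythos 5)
Your proposal is correct and follows essentially the same route as the paper: it evaluates $\eps\ASF\cir\eta\ASF$ and $m\ASF\cir\Delta\ASF$ to the scalars $\eps\cir\Lambda$ and $\lambda\cir\eta$, invokes the Larson--Sweedler criteria for semisimplicity and cosemisimplicity of a finite-dimensional Hopf algebra, and uses the Larson--Radford result that in characteristic zero semisimplicity implies cosemisimplicity. Your remark that the relevant second scalar is $\lambda\cir\eta$ (the paper's displayed formula contains a typo reading $\lambda\cir\eps$) is also correct; the only cosmetic slip is that $\eps\ASF\cir\eta\ASF$ is a multiple of $\id_\one$, not of $\id_{\Hs}$.
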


\section{Mapping class group invariants in $H$\Bimod}\label{sec:MGC_inv}
We are now a in position to give the main result of this chapter. Recall the projective action, $\piL$, of mapping class groups, given in Proposition \ref{Lyubact_prop}. Note that for $U_1\eq\dots\eq U_n\eq V$ we have $\Lyubspace\eq V^{\otii n}$. Accordingly, $\piL$ is a projective action on the space $\Hom(\L^{\otii g}, V^{\otii n})$.
\begin{theorem}\label{thm: MPG_Hopf}
Let $H$ be a finite-dimensional factorizable ribbon Hopf algebra over an algebraically closed field \k\ of characteristic zero and \Mapgn, the mapping class group of a closed oriented surface $\Surf gn$. Then, for $\C\eq H$\Mod, the morphism $\Corrgn\linebreak\In\Hom(\HK^{\otii g},\BF^{\otii n})$, defined in \eqref{CorrgnC}, is invariant under the projective action $\piKC$ of \Mapgn,  given in Proposition \ref{Lyubact_prop}, for any $g,n\geq0$.
\end{theorem}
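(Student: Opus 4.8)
The plan is to verify invariance of the morphism $\Corrgn$ under each of the generators of $\Mapgn$ listed after \eqref{surf_gn_PIC}, using the explicit description of the projective action $\piKC\equiv\piL$ given in Proposition \ref{Lyubact_prop} and the fact that the algebra $\BFH$ carries the rich structure established in section \ref{Frob_Hmod}: it is a commutative, cocommutative, symmetric Frobenius algebra with trivial twist. Since $\Corrgn=\Sk g0n$ is built from iterated applications of the $\HK$-action $\rho^{\HK}_{\BFH}$ together with the structure morphisms of $\BFH$, the whole argument reduces to a finite list of graphical identities in $H$\Bimod, each asserting that $\Corrgn$ is annihilated (up to a scalar) by the corresponding $\LAg$ or $\LAtjk$. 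Concretely, for the pre-composition generators $\gamma=\ak,\ek,\bk,\dk,S_k$ one must show $\Corrgn\circ\LAg^{-1}=\Corrgn$, i.e.\ that $\Corrgn$ is fixed by the endomorphisms \eqref{LyubactC_pre}; for the post-composition generators $\wi,\Ri$ one must show $\LAg^{-1}\circ\Corrgn=\Corrgn$ using \eqref{LyubactC_post}; and for $\tjk$ one must show $\LAtjk^{-1}(\Corrgn)=\Corrgn$ with $\LAtjk$ as in \eqref{tjkact}--\eqref{t_act}.

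\textbf{Key steps, in order.} First I would dispose of the "easy" post-composition generators. The Dehn twist $\Ri$ acts by $\theta_{\BFH}$ on the $i$-th output leg, and by the Proposition stating $\theta_{\BFH}=\id_{\BFH}$ (eq.\ \eqref{triv_twist}) this is trivial. The braiding $\wi$ acts by $c^{}_{\BFH,\BFH}$ on two adjacent output legs; since $\BFH$ is commutative and, by the Remark following the symmetry proof, also cocommutative, $c^{}_{\BFH,\BFH}$ commutes with the relevant piece of the diagram defining $\Sk g0n$ (the coproduct that distributes the action onto the output legs), so $\wi$-invariance follows. Second, for the generators $\dk$ (which act by $\TL$ on one $\HK$-leg) and $\ak$ (which act by $\QQ_{\HK,\HK}\circ(\TL\oti\TL)$ on two $\HK$-legs) I would use the compatibility of $\rho^{\HK}_{\BFH}$ with $\TK$ and with the partial monodromy: here I expect the relevant input is \eqref{def_TL}, \eqref{QHH}, \eqref{Lact_Qq}, together with the trivial twist of $\BFH$ once more (a $\TL$ acting through the action on $\BFH$ can be traded for a twist on $\BFH$). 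The $\ek$ and $t_{j,k}$ generators, which involve the mixed partial monodromies $\Qq_{\HK,\HK^{\otii k-1}}$ resp.\ $\Qq_{\HK,\HK^{\otii k-1}\oti\,^\vee U\cdots}$ together with a $\TL$, are handled by the naturality property \eqref{Qq_com} of $\Qq$ and the action identity \eqref{Qrep} displayed after \eqref{Lact}; these express that sliding an $\HK$-action through a partial monodromy reproduces a composite action, which is precisely what $\Sk g0n$ already does. Third, the genuinely substantial generator is $S_l$, which acts by $\SL$ on a single $\HK$-leg. Here I would use the definition \eqref{def_SK} of $\SL$ in terms of $\QQ_{\HK,\HK}$, the counit and the two-sided integral of $\HK$; invariance amounts to showing that the $S$-move does not change the invariant of the handle $F_l$, which is the essential "modular invariance" content — this is where the Frobenius structure of $\BFH$ (its coproduct $\Delta\ASF$ built from the cointegral $\lambda$ of $H$, its counit $\eps\ASF$ built from the integral $\Lambda$) must conspire with the integral of $\HK$. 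The key technical ingredients are the normalizations \eqref{norm_int_1}, \eqref{norm_int_2} and the Drinfeld-map identities \eqref{fQS_Psi}, which relate the integral and cointegral of $H$, and hence control how $\SL$ interacts with the cointegral-built coproduct on $\BFH$.

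\textbf{Main obstacle.} The hard part will be the $S_l$-invariance (and, entangled with it, the correct bookkeeping of projective scalars so that the equalities hold on the nose rather than merely up to phase). For genus $g\ge 2$ one must also check that the $\ak,\ek,\bk,\dk$ generators for the "separating" handles interact correctly with the $t_{j,k}$ generators that braid a hole past a handle; this is a purely diagrammatic but lengthy verification. My expectation is that each individual identity, once set up in $H$\Bimod\ with the pictures of section \ref{Frob_Hmod}, reduces to repeated use of (co)associativity of $H$, the connecting axiom \eqref{con_axiom}, the antipode property \eqref{apo_prop}, the unimodularity relations \eqref{OS2}--\eqref{OS2_mod}, and the integral/cointegral normalization \eqref{norm_int_1}; the real work is organizing these into a coherent induction on $g$ and $n$, building $\Sk g1n$ up one handle and one hole at a time, so that adding a handle manifestly commutes with the $S$-move on the already-assembled handles and adding a hole manifestly commutes with the braid moves on the already-assembled holes. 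I would structure the proof as such an induction, with the base cases $g=0$ (only $\wi,\Ri$, handled by triviality of the twist and commutativity of $\BFH$) and $n=0$ (only the handle generators, the core modular-invariance computation) done first, and the inductive step reduced to the compatibility lemmas for $\rho^{\HK}_{\BFH}$ with $\TK$, $\SK$ and the partial monodromies sketched above.
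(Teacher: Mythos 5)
Your proposal follows essentially the same route as the paper: a generator-by-generator verification in $H$\Bimod, with $\Ri$ and $\wi$ disposed of by the trivial twist and (co)commutativity of $\BFH$, the handle generators reduced to invariance of the torus one-point function under $\SKH$ and $\TKH$ (proved via the Drinfeld-map identities \eqref{fQS_Psi}, the cointegral property \eqref{OS2} and the invertibility of the Frobenius map), and the $\ek$, $\ak$, $\tjk$ generators handled by compatibility lemmas for the $\HKH$-action with the partial monodromies. The only cosmetic difference is that the paper organizes the reduction to the one-handle case as a sequence of lemmas built on the identity $\SkH 111=m_{\BFH}\circ(\HCorr 11\oti\id_{\BFH})$ rather than as a formal induction on $g$ and $n$, but the content is the same.
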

The rest of this section is devoted to the proof of this statement. It will be convenient to work in the category $\HBimod\simeq\HMod\rev\boxtimes\HMod$.
\subsection{The \bhHopf in $H$\Bimod}
In the case $\C\eq H$\Mod, i.e.\ $\eC\simeq H$\Bimod, the \bhHopf\, defined in \eqref{def_bhHopf}, is the coend $\HKH=\L_{\HBImod}$. As an object in $H$\Bimod, $\HKH$ is the bimodule \cite[Proposition A.6]{fuSs3}
\be
    \HKH = (\Hs \Otik \Hs,\rho\coa\oti\idHs,\idHs\oti\rho\coar)\,.
\ee
That is, $\HKH$ is the vector space $\Hs\otik\Hs$, endowed with the left and right coadjoint actions, defined in \eqref{def_lads}, on the first and second factor, respectively.
The dinatural family for the coend is given by
 \eqpic{def_iHaa_X} {120} {45} {
   \put(0,0)   {\Includepichtft{11a}}
   \put(-1.6,89)  {\sse$ \Hss\Oti\Hss $}
   \put(2,-8.5)   {\sse$ X^{\!\vee} $}
   \put(9.2,45.2) {\sse$ \iHKH_X $}
   \put(16,-8.5)  {\sse$ X $}
   \put(48,45)    {$ := $}
   \put(80,0) { \Includepichtft{11c}
   \put(-4.1,-8.5){\sse$ \Xs $}
   \put(19.7,28)  {\sse$ \ohr_{\!X}^{} $}
   \put(25.3,-8.5){\sse$ X $}
   \put(31.2,62)  {\sse$ \rho_{\!X}^{} $}
   \put(42.6,104) {\sse$ \Hss $}
   \put(59.4,104) {\sse$ \Hss $}
   } }
We omit the proof of this statement and refer the reader to \cite{fuSs} for details.

Since $\HKH=\L_{\HBImod}$ is the coend of the tensor product functor $\Lfun$, defined in \eqref{Lfundef}, considered in $H$\Bimod, $\HKH$ carries a natural structure of a Hopf algebra in $H$\Bimod. The structure morphisms are obtained by inserting \eqref{def_iHaa_X} in \eqref{Lyub_Hopf}. The unit, counit and coproduct are
\be\label{eem_HKH}
   \bearl
   \eta\ASK = \eps^\vee \oti \eps^\vee \,, \qquad
   \eps\ASK = \eta^\vee \oti \eta^\vee \,,
   \Nxl4
   \Delta\ASK = (\idHs \oti \tau_{\Hss,\Hss}\oti\idHs)
                  \circ \big( {(m\op)}^\vee \oti m^\vee \big) \,,
   \eear
\ee
and the antipode and product are given by
   \eqpic{apo_Haa} {200} {49} {
   \put(0,49)    {$ \apo\ASK ~= $}
   \put(52,0)  {\Includepichtft{122a}
   \put(-5,-8.5) {\sse$ \Hss $}
   \put(59,100.4){\sse$ \Hss $}
   \put(10,-.5)  {\sse$ R^{-1} $}
   \put(75,-8.5) {\sse$ \Hss $}
   \put(138.8,100.5) {\sse$ \Hss $}
   \put(93,-.5)  {\sse$ R $}
   } }
and
\Eqpic{mHaa1,2} {320} {65} {\setulen7{37}
   \put(0,87)   {$ m\ASK ~= $}
   \put(61,0)  {\INcludepichtft{111a}{28}
   \put(-5,-8.5) {\sse$ \Hss $}
   \put(11.7,-8.5) {\sse$ \Hss $}
   \put(27.4,52.3) {\sse$ R^{-1} $}
   \put(75,189.6){\sse$ \Hss $}
   \put(83,-8.5) {\sse$ \Hss $}
   \put(100,-8.5){\sse$ \Hss $}
   \put(124.9,51.1){\sse$ R $}
   \put(169,189.6) {\sse$ \Hss $}
   }
   \put(254,87)    {$ = $}
   \put(285,0)  {\INcludepichtft{111b}{28}
   \put(-5.4,-8.5) {\sse$ \Hss $}
   \put(11.2,-8.5) {\sse$ \Hss $}
   \put(30,65.4) {\sse$ R^{-1} $}
   \put(75,189.6){\sse$ \Hss $}
   \put(87.7,-8.5) {\sse$ \Hss $}
   \put(104,-8.5){\sse$ \Hss $}
   \put(132.1,65.4){\sse$ R $}
   \put(173,189.6) {\sse$ \Hss $}
   } }
The second expression for $m\ASK$ is obtained by using that the $R$-matrix intertwines the coproduct and the opposite coproduct.

In addition, the right cointegral $\lambda$ and the two-sided integral $\Lambda$ of $H$ give rise to an integral and a cointegral of $\HKH$. Explicitly
\be
   \lambda\ASK := \Lambda^\vee \oti \Lambda^\vee
\ee
is a two-sided cointegral of $(\HKH,m\ASK,\eta\ASK,\Delta\ASK,\eps\ASK,\apo\ASK)$ and
\be\label{Lambda_HKH}
   \Lambda\ASK := \lambda^\vee \oti \lambda^\vee
\ee
is a two-sided integral of $(\HKH,m\ASK,\eta\ASK,\Delta\ASK,\eps\ASK,\apo\ASK)$ \cite[Proposition A.8]{fuSs3}. That $\lambda\ASK$ is a two-sided cointegral is an immediate consequence of the fact that $\Lambda$ is a two-sided integral. That $\Lambda\ASK$ is a left integral follows from the left-most expression in \eqref{mHaa1,2} together with the fact that $\lambda$ is a right cointegral and satisfies \eqref{OS2}. It follows in the same manner, by using the rightmost expression in \eqref{mHaa1,2}, that $\Lambda\ASK$ is a right integral.

\subsection{Morphisms in $H$\Bimod\ as linear maps}
In order to work with the underlying linear maps for performing calculations in $H$\Bimod\ we need to express various morphisms, introduced earlier in this chapter, as linear maps.
We start by writing down the relevant morphisms, implementing the representation in Proposition \ref{Lyubact_prop}, as linear maps.
\ssubject{Partial monodromies}
Inserting the dinatural family \eqref{def_iHaa_X} and the linear map \eqref{bibraid}, implementing the braiding in $H$\Bimod, in the expressions \eqref{QHH} and \eqref{QHX} for the two partial monodromies, we obtain the partial monodromies of $\HKH$:
   \eqpic{QQ_Haa} {230}{63} {
   \put(0,57)     {$ \QQ_{\HKH,\HKH} ~= $}
   \put(81,0)  { \Includepichtft{121b}
   \put(-4.3,-8.5) {\sse$ \Hss $}
   \put(20.7,-8.5) {\sse$ \Hss $}
   \put(29.4,139)  {\sse$ \Hss $}
   \put(44.5,4.3)  {\sse$ Q $}
   \put(54.8,139)  {\sse$ \Hss $}
   \put(62.8,13.1) {\sse$ \apoi $}
   \put(78.7,-8.5) {\sse$ \Hss $}
   \put(86.6,53.5) {\sse$ Q^{-1} $}
   \put(87.6,77.9) {\sse$ \apo $}
   \put(102.9,-8.5){\sse$ \Hss $}
   \put(112.2,139) {\sse$ \Hss $}
   \put(137.7,139) {\sse$ \Hss $}
   } }
and
\eqpic{Qq_X} {160} {49} {\setulen80
    \put(0,49)  {$\Qq_{\HKH,X}=~$}
      \put(80,0) {\includepichtft{133c}
  \put(-4,-8.5) {\sse$ \Hs $}
  \put(21,-8.5){\sse$ \Hs $}
  \put(93.2,-8.5){\sse$ X $}
  \put(53,4.5)  {\sse$ Q $}
  \put(73,50.5)  {\sse$ Q^{-1} $}
  \put(103,67.5)  {\sse$ \ohr_X$}
  \put(83,101.5)  {\sse$ \rho_X$}
  \put(30,140) {\sse$ \Hs $}
  \put(55,140){\sse$ \Hs $}
  \put(93.2,140){\sse$ X $}
  }
  }
Recall from \eqref{Hopf_pair_eps} that the Hopf pairing of $\HKH$ is given by $\omega_\HKH\eq(\eps_\HKH\oti\eps_\HKH)\cir \QQ_{\HKH,\HKH}$. It follows that $\omega_\HKH$
is non-degenerate iff the Drinfeld map of $H$ is invertible. Thus $H$\Bimod\ satisfies condition \CFN, i.e. $H$\Bimod\ is factorizable, iff $H$ is factorizable.

Inserting this result for $\Qq_{\HKH,X}$, together with $\eps\ASK \eq \eta^\vee \oti \eta^\vee$, into the expression \eqref{Lact_Qq} for the action of \L\ on any object in \C, it follows that the action of $\HKH$ on any $H$-bimodule $X$ is given by
\eqpic{Lyubact_HKH} {120} {47} {
    \put(0,47)  {$\rhoHKH X=~$}
      \put(50,4) {\Includepichtft{133f}
  \put(-4,-6.5) {\sse$ \Hs $}
  \put(10,-6.5){\sse$ \Hs $}
  \put(47.2,-6.5){\sse$ X $}
  \put(30,7.5)  {\sse$ Q $}
  \put(17,47.5)  {\sse$ Q^{-1} $}
  \put(55,67.5)  {\sse$ \ohr_X$}
  \put(35,90.5)  {\sse$ \rho_X$}
  \put(47.2,109){\sse$ X $}
  }
  }
\ssubject{The maps $\TL$ and $\SL$}
In order to study the mapping class group representation in Proposition \ref{Lyubact_prop} we also need the maps $\TL$ and $\SL$, which in the $H$\Bimod\ case we denote by $\TKH$ and $\SKH$. First, inserting the dinatural family $\iHKH$, given  in \eqref{def_iHaa_X}, and the expression \eqref{bimod-twist} for the twist of \HBimod\ in the definition \eqref{def_TL} of $\TL$ we obtain
\eqpic{TKH} {170}{39} {
   \put(4,39)    {$ \TKH ~= $}
   \put(55,0)  { \Includepichtft{121a}
   \put(-4.4,-8.5) {\sse$ \Hss $}
   \put(12,15)     {\sse$ v $}
   \put(39.6,94)   {\sse$ \Hss $}
   \put(61.2,-8.5) {\sse$ \Hss $}
   \put(89.7,34.5) {\sse$ v^{-1} $}
   \put(105.3,94)  {\sse$ \Hss $}
   }
   }
Second, inserting the expression \eqref{QQ_Haa} for $\QQ_{\HKH,\HKH}$, and the unit $\eps\ASK \eq \eta^\vee \oti \eta^\vee$ and integral $\Lambda\ASK \eq \lambda^\vee \oti \lambda^\vee$ of \HKH, in the definition \eqref{def_SK} of $\SL$ we obtain
\eqpic{S_KH} {140}{47} {
   \put(0,47)    {$ \SKH ~= $}
   \put(50,-2) { \Includepichtft{121c}
   \put(-4.5,-8.5) {\sse$ \Hss $}
   \put(13,51.8)   {\sse$ Q^{-1} $}
   \put(26.5,-8.5) {\sse$ \Hss $}
   \put(32.5,87)   {\sse$ \lambda $}
   \put(44.6,106)  {\sse$ \Hss $}
   \put(47.4,12.1) {\sse$ Q $}
   \put(63.5,56)   {\sse$ \lambda $}
   \put(75.6,106)  {\sse$ \Hss $}
   } }
\begin{remark}
Note that, using the definitions \eqref{def_fmap} and \eqref{Drin_def} of the Frobenius map and the Drinfeld map, combined with the property \eqref{OS2} of $\lambda$, the map $\SKH$ can be written as
\be
    \SKH=(\Psi\circ f_{Q^{-1}})\oti(\Psi\circ f_{Q})\,.
\ee
Thus, $\SKH$ is invertible iff the Drinfeld map is invertible, i.e.\ iff $H$ is factorizable.
It is easily checked, using the relations \eqref{fmap_inv} and \eqref{fQS_Psi}, that the inverse is given by
\be\label{SKHinv}
    \SKH^{-1}=(\Psi\circ f_{Q})\oti(\Psi\circ f_{Q^{-1}})\,.
\ee
\end{remark}
\ssubject{The morphism $\Sk g11$}
We will also need the morphism $\Sk g11$, defined in \eqref{Sk_morph}, expressed as a linear map. In the $H$\Bimod\ case we denote the morphism $\Sk g11$ by $\SkH g11$. We start with $\SkH 101=\HCorr 11$. Inserting the expressions, given in \eqref{pic-Hb-Frobalgebra}, for the product and coproduct of $\BFH$, and the action  \eqref{Lyubact_HKH} of $\HKH$, combined with the left and right action \eqref{rhoHb,ohrHb} of $H$ on $\BFH$, in the definition \eqref{Sk_morph} of $\SkH 101$, we obtain
\Eqpic{PF_1} {320} {62} {\setulen95
   \put(-5,59)   { $\SkH 101~=$ }
   \put(53,0)  {\INcludepichtft{123f_2}{361}
   \put(-5.5,-8.5) {\sse$ \Hss $}
   \put(7.4,-8.5){\sse$ \Hss $}
   \put(25.2,49) {\sse$ Q^{-1} $}
   \put(26.5,23) {\sse$ Q $}
   \put(61,118.7){\sse$ \lambda $}
   \put(71,139)  {\sse$ \Hss $}
   }
   \put(161,59)  {$=$}
   \put(194,0) {\INcludepichtft{123g_1}{361}
   \put(-4.5,-8.5) {\sse$ \Hss $}
   \put(10,-8.5) {\sse$ \Hss $}
   \put(11.5,33) {\sse$ Q^{-1} $}
   \put(29.5,10.8) {\sse$ Q $}
   \put(34.1,115){\sse$ \lambda $}
   \put(44,139)  {\sse$ \Hss $}
   }
   \put(264,59)  {$=$}
   \put(297,0) {\INcludepichtft{123h_1}{361}
   \put(-4.5,-8.5) {\sse$ \Hss $}
   \put(10,-8.5) {\sse$ \Hss $}
   \put(11.5,33) {\sse$ Q^{-1} $}
   \put(29.5,10.8) {\sse$ Q $}
   \put(34.1,115){\sse$ \lambda $}
   \put(44,139)  {\sse$ \Hss $}
   } }
Here the second equality follows from associativity of $m$ and the third follows by using several times that $\apo$ is a anti-(co)algebra morphism. Next, using that the morphism $f_{Q^{-1}}$ intertwines the coadjoint and the adjoint left action, see \eqref{norm_int_inv}, we obtain the first of the following equalities:
\Eqpic{PF_2} {320} {45} {
   \put(10,0){
   \put(-3,49)    {$ \SkH 101~= $}\setulen95
   \put(63,0)  {\INcludepichtft{123i_1}{361}
   \put(-5,-8.5) {\sse$ \Hss $}
   \put(32,-8.5) {\sse$ \Hss $}
   \put(12,11)   {\sse$ Q^{-1} $}
   \put(51.6,11) {\sse$ Q $}
   \put(49.7,69.5) {\sse$ \lambda $}
   \put(23,107)  {\sse$ \Hss $}
   }
   \put(143,49)  {$=$}
   \put(176,0) {\INcludepichtft{123j_1}{361}
   \put(-5,-8.5) {\sse$ \Hss $}
   \put(31,-8.5) {\sse$ \Hss $}
   \put(12,11)   {\sse$ Q^{-1}$}
   \put(45.6,11) {\sse$ Q $}
   \put(48.9,69.2) {\sse$ \lambda $}
   \put(23,107)  {\sse$ \Hss $}
   }
   \put(252,49)  {$=$}
       \put(285,0) {\INcludepichtft{123k}{361}
   \put(-5,-8.5) {\sse$ \Hss $}
   \put(8.2,-8.5){\sse$ \Hss $}
   \put(36,3.5)  {\sse$ \Lambda $}
   \put(46,107)  {\sse$ \Hss $}
   } } }
Here the second equality follows from ${\apo\oti\apo}\cir Q\eq \flip HH\circ Q$. The third equality is obtained by applying the second equality in \eqref{fQS_Psi} and then using that $\apoi$ is an anti-coalgebra  morphism and $\apo\circ\Lambda=\Lambda$.

Having obtained  the morphism $\SkH 101=\HCorr 11$ as a linear map, we can easily write down the linear map underlying $\SkH 111$. By the unit property, the Frobenius property and associativity of $\BFH$ we have
\be\label{SkHfrob}
    \SkH 111=m_{\BFH}\circ[(\SkH 111\circ(\id_{\HKH}\oti\eta_{\BFH}))\oti\id_{\BFH}]=m_{\BFH}\circ(\HCorr 11\oti\id_{\BFH})\,.
\ee
Inserting $m_\BFH\eq\Delta_H^\vee$ and the right-most expression in \eqref{PF_2} for $\HCorr 11=\SkH101$ in \eqref{SkHfrob} we obtain
\Eqpic{Corr12H} {320} {55} {\setulen80
    \put(0,55)  {$\SkH 111~=$}
    \put(60,23) {\includepichtft{132b}
  \put(-5,-8.5) {\sse$ \HKH $}
  \put(19.2,-8.5){\sse$ \BFH $}
  \put(21.2,82){\sse$ \BFH $}
  \put(0,109) {\Hbimodpic}
  }
  \put(120,55)  {$=$}
      \put(160,0) {\includepichtft{131a}
  \put(-5,-8.5) {\sse$ \Hs $}
  \put(8.2,-8.5){\sse$ \Hs $}
  \put(43.2,-8.5){\sse$ \Hs $}
  \put(36,3.5)  {\sse$ \Lambda $}
  \put(81,139)  {\sse$ \Hs $}
  \put(0,132) {\Vectpic}
  }
  \put(270,55)  {$=$}
      \put(310,0) {\includepichtft{131b}
  \put(-5,-8.5) {\sse$ \Hs $}
  \put(8.2,-8.5){\sse$ \Hs $}
  \put(43.2,-8.5){\sse$ \Hs $}
  \put(36,3.5)  {\sse$ \Lambda $}
  \put(43,69.5)  {\sse$ \rad$}
  \put(81,139)  {\sse$ \Hs $}
  \put(0,132) {\Vectpic}
  } }
Here, the third equality follows by recognizing the right adjoint action \eqref{adj_act}. Note that in \eqref{Corr12H}, the first picture is a morphism in $\HBimod$, whereas the two last pictures are linear maps.

Below we will need the linear map underlying $\SkH gpq$ only for $p\eq q\eq1$. From the definition \eqref{Sk_morph} of $\Sk gpq$, and the result \eqref{Corr12H} for $g=1$, it follows immediately that, as a linear map, $\SkH g11$ is given by
\eqpic{CorrgnH} {300} {117} { \put(0,18){ \setulen90
  \put(0,120)      {$\SkH g11=~$}
    \put(81,0) {\INcludepichtft{131c}{342}
  \put(-5.9,-13)   {\sse$ \underbrace{\hspace*{17.5em}}
                    _{g~ {\rm factors~of}~\Hs{\otimes}\Hs} $}
  \put(-5,-9.2)    {\sse$ \Hss $}
  \put(8.2,-9.2)   {\sse$ \Hss $}
  \put(44.5,-9.2)  {\sse$ \Hss $}
  \put(56.5,-9.2)  {\sse$ \Hss $}
  \put(76.5,-9.2)  {\sse$ \dots\dots$}
  \put(111.5,-9.2) {\sse$ \Hss $}
  \put(123.5,-9.2) {\sse$ \Hss $}
  \put(154.5,-9.2) {\sse$ \Hss $}
  \put(234,260)    {\sse$ \Hss $}
  \put(36,180.5)   {\sse$ \Lambda$}
  \put(70,130.5)   {\sse$ \Lambda$}
  \put(137,69.5)   {\sse$ \Lambda$}
  \put(43,220)     {\sse$ \rad $}
  \put(91.4,169.4) {\sse$ \rad $}
  \put(158.2,108.5){\sse$ \rad $}
  } } }
Here, the dotted lines in the pictures means that a piece of the morphism is repeated $g$ times: There are $g\,{-}\,3$ additional occurrences of the "loops" containing the integral and there are $g\,{-}\,3$ additional coproducts such that the rightmost line coming out of this coproduct is ending on the adjoint right action on one of those "loops". Similar notation will be used in various pictures below.
\subsection{Proof of invariance}\label{sec:inv_proof}
This subsection is devoted to the proof of Theorem \ref{thm: MPG_Hopf}.
We will establish a series of lemmas, which together prove Theorem \ref{thm: MPG_Hopf}. First of all, we recall Lemma 5.8 and 5.9 of \cite{fuSs3}:
\begin{lemma}\label{Torus_inv}
    The one-point function on the torus $\HCorr 11=\SkH 111\circ(\id_{\HK}\oti\eta_{\BF})$ satisfies
    \be\label{Sinv_1p}
        \HCorr 11\circ\SKH=\HCorr 11\,
    \ee
    and
    \be\label{T_inv_11}
        \HCorr 11\circ\TKH=\HCorr 11\,.
    \ee
\end{lemma}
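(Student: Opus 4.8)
The plan is to prove each of the two identities \eqref{Sinv_1p} and \eqref{T_inv_11} by a direct graphical calculation in \Vect, using the explicit linear-map expression for $\HCorr 11 = \SkH 101$ that is already available as the rightmost picture in \eqref{PF_2}, namely $\HCorr 11 = \Delta_H^\vee \cir (\ldots)$ with the integral $\Lambda$ inserted; and the explicit forms \eqref{TKH} for $\TKH$ and \eqref{S_KH} for $\SKH$. Since all three morphisms have been written out as concrete maps between $\Hs\otik\Hs$ and $\Hs$, the statements become Hopf-algebra identities in \Vectk\ that can be verified by pushing antipodes, $R$-matrices, $Q$-matrices, integrals and cointegrals around using the relations collected earlier: the defining property \eqref{int_prop} of a left integral, \eqref{prop_lambda} of a right cointegral, the normalization $\lambda\cir\Lambda=1$ from \eqref{norm_int_1}, unimodularity $\apo\cir\Lambda=\Lambda$, the properties \eqref{Rmat2}, \eqref{R_YB} of the $R$-matrix, the $Q$-matrix relation $(\apo\oti\apo)\cir Q=\flip HH\cir Q$, and the relations \eqref{prop_v} for the ribbon element.

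For \eqref{T_inv_11} I would start from $\HCorr 11\cir\TKH$, insert \eqref{TKH}, and use that the twist $\theta_{\BFH}$ of the bulk algebra is trivial (the Proposition establishing $\theta_{\BFH}=\id_{\BFH}$, which in turn rests on $\apo\cir v=v$ and centrality of $v$). The strategy is to recognize that acting with $\TKH$ on the two tensor legs of $\HKH$ before feeding them into the action $\rhoHKH{\BFH}$ amounts, after using \eqref{def_TL}, to a twist on the module on which $\HKH$ acts; because that module is $\BFH$ and its twist is trivial, the $v$'s cancel. Concretely one transports the $v$ and $v^{-1}$ through the $Q$- and $Q^{-1}$-matrices in \eqref{Lyubact_HKH} using $\Delta\cir v=(v\oti v)\cdot Q^{-1}$ from \eqref{prop_v} together with the coregular actions \eqref{rhoHb,ohrHb} on $\BFH$, whereupon the contributions combine into $\theta_{\BFH}=\id$. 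Equivalently, and perhaps more cleanly, one can argue categorically: $\TL$ is the ``mapping torus'' endomorphism $\Dcoef kk k{}$ of \L\ realizing the Dehn twist around a hole, and $\HCorr 11=\SkH g11\cir(\id\oti\eta_{\BFH})$ with $g=1$ is by construction built from the structure morphisms of $\BFH$ and the action of $\HK$; invariance under this particular Dehn twist is then exactly the statement $\theta_{\BFH}=\id$ transported through the coend, which is how \cite{fuSs3} phrases it.

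For \eqref{Sinv_1p} the computation is the more substantial one. Starting from $\HCorr 11\cir\SKH$ with $\SKH$ given by \eqref{S_KH}, I would use the alternative form $\SKH=(\Psi\cir f_{Q^{-1}})\oti(\Psi\cir f_Q)$ noted in the Remark after \eqref{S_KH}, together with the form of $\HCorr 11$ from \eqref{PF_2} in which only a single $\Lambda$ survives. The key inputs are: the intertwining properties \eqref{drin_int}, \eqref{norm_int_inv} of the Drinfeld maps, which let one commute $f_Q$, $f_{Q^{-1}}$ through the adjoint/coadjoint actions; the identity \eqref{fQS_Psi} relating $f_Q\cir\Psi\cir f_{Q^{-1}}\cir\Psi$ to the map $H\to H$ factoring through $\Lambda$; the fact that $\Psi$ intertwines regular and coregular actions; and the unimodularity relations \eqref{OS2}, \eqref{OS2_mod}. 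The calculation should collapse the $Q$, $Q^{-1}$ and the extra $\lambda$'s coming from $\SKH$ against the structure of $\HCorr 11$, leaving exactly $\HCorr 11$ again. The main obstacle I anticipate is bookkeeping: several $Q$-matrices, two copies of the Frobenius/Drinfeld maps, and an integral all interact, and one must be careful about over- versus under-braidings (even though \Vectk\ is symmetric, the $R$- and $Q$-matrix insertions are not) and about which leg of $\HKH=\Hs\otik\Hs$ carries which coadjoint action. The cleanest route is probably to follow \cite{fuSs3} and first prove an auxiliary lemma expressing $\SkH 101\cir\SKH$ directly in terms of a single diagram, then match it graphically to \eqref{PF_2}; the hard analytic content is entirely in \eqref{fQS_Psi} and the unimodularity identity \eqref{OS2_mod}, both of which are available, so no genuinely new estimate is needed — only a disciplined sequence of graphical moves.
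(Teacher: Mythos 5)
Your proposal is correct and follows essentially the same route as the paper: both identities are verified as explicit Hopf-algebra computations in $\Vectk$ starting from the linear-map form of $\HCorr 11$ in \eqref{PF_2}, with \eqref{Sinv_1p} reduced to \eqref{fQS_Psi}, the unimodularity/cointegral identities and the invertibility of the Frobenius map, and \eqref{T_inv_11} reduced to $\apo\cir v=v$ and centrality of $v$ (the paper cancels the $v$, $v^{-1}$ directly via \eqref{Hopf_Frob_trick2}, which is the concrete form of your "trivial twist" argument). The only cosmetic difference is that the paper composes with $\SKH^{-1}$ rather than $\SKH$, so that the $Q$-matrices from \eqref{S_KHinv} pair against those in the first expression of \eqref{PF_1} via \eqref{fQS_Psi}; this is equivalent to your plan since $\SKH$ is invertible.
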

\begin{proof}
    (i)
We will show invariance under  $\SKH^{-1}$. Recall from \eqref{SKHinv} that $\SKH^{-1}$ is given by
\eqpic{S_KHinv} {140}{42} {
   \put(0,42)    {$ \SKH^{-1} ~= $}
   \put(50,-6) { \Includepichtft{121g}
   \put(-4.5,-6.5) {\sse$ \Hss $}
   \put(16,44.8)   {\sse$ Q $}
   \put(26.5,-6.5) {\sse$ \Hss $}
   \put(32.5,87)   {\sse$ \lambda $}
   \put(44.6,106)  {\sse$ \Hss $}
   \put(47.4,5.1) {\sse$ Q^{-1} $}
   \put(63.5,39)   {\sse$ \lambda $}
   \put(75.6,106)  {\sse$ \Hss $}
   } }
Thus, composing $\SKH^{-1}$ with the first expression in \eqref{PF_1} we obtain
\Eqpic{Sinv-6} {320} {67} {
   \put(7,67)  {$\HCorr 11\circ\SKH^{-1} ~=$}
       \put(117,0) {\Includepichtft{124a_1}
   \put(-5.4,-8.5) {\sse$ \Hss $}
   \put(7,-8.5)  {\sse$ \Hss $}
   \put(16.3,52.8) {\sse$ Q $}
   \put(31.9,53.2) {\sse$ Q^{-1} $}
   \put(23.7,2.2){\sse$ Q^{-1} $}
   \put(47.1,2.3){\sse$ Q $}
   \put(32.3,39) {\sse$ \lambda $}
   \put(34.5,91) {\sse$ \lambda $}
   \put(72,126)  {\sse$ \lambda $}
   \put(81,147)  {\sse$ \Hss $}
   }
   \put(225,63)  {$=$}
       \put(257,0) {\Includepichtft{124b}
   \put(-5,-8.5) {\sse$ \Hss $}
   \put(9,-8.5)  {\sse$ \Hss $}
   \put(26.5,48) {\sse$ \Lambda $}
   \put(40,9)    {\sse$ \Lambda $}
   \put(60.4,126){\sse$ \lambda $}
   \put(70.5,147){\sse$ \Hss $}
   } }
Here we have used the anti-algebra morphism property of the antipode $\apo$ in the first equality. The second equality is obtained by applying \eqref{fQS_Psi} together with the property \eqref{OS2} of $\lambda$.
According to Lemma \ref{lemma:Sinv-7}, the last expression in \eqref{Sinv-6} equals
\eqpic{Sinv-8} {340} {62} {
   \put(4,67)  {$\HCorr 11\circ\SKH^{-1} ~= $}
   \put(103,0) {\Includepichtft{124h}
   \put(-5,-8.5) {\sse$ \Hss $}
   \put(9,-8.5)  {\sse$ \Hss $}
   \put(40,9)    {\sse$ \Lambda $}
   \put(70.5,147){\sse$ \Hss $}
   }
   \put(210,67)  {$ = $}
   \put(241,0)   {\Includepichtft{124i}
   \put(-5,-8.5) {\sse$ \Hss $}
   \put(9,-8.5)  {\sse$ \Hss $}
   \put(40,9)    {\sse$ \Lambda $}
   \put(70.5,147){\sse$ \Hss $}
   } }
The second expression in \eqref{Sinv-8} coincides with the right-most expression in \eqref{PF_2} for $\HCorr 11$. This proves $\HCorr 11\circ\SKH=\HCorr 11$.\nxl1
    (ii) From the expression \eqref{TKH} for $\TKH$ and the last expression in \eqref{PF_2} it follows that
\eqpic{T-inv} {190} {45} {
   \put(10,52)   {$ \HCorr 11\circ\TKH ~= $}
   \put(118,0) {\Includepichtft{123l}
   \put(-5,-8.5) {\sse$ \Hss $}
   \put(8.5,-8.5){\sse$ \Hss $}
   \put(23.5,13.5) {\sse$ v $}
   \put(49,13.5) {\sse$ v^{-1} $}
   \put(36,2.5)  {\sse$ \Lambda $}
   \put(45.2,112){\sse$ \Hss $}
   } }
After applying \eqref{Hopf_Frob_trick2}, the central elements $v$ and $v^{-1}$ cancel each other. This proves $\HCorr 11\circ\TKH=\HCorr 11$.
\endofproof\end{proof}
A direct consequence of this result is
\begin{lemma}\label{inv_STetci} For any integer $k\geq0$, the morphism $\SkH k11$ satisfies
\be\label{inv_S}
    \SkH k11\circ(\id_{\HKH}^{\otii r}\oti\SKH\oti \id_{\HKH}^{\otii s}\oti\id_{\BFH})=\SkH k11\,
\ee
and
\be\label{inv_T}
    \SkH k11\circ(\id_{\HKH}^{\otii r}\oti\TKH\oti \id_{\HKH}^{\otii s}\oti\id_{\BFH})=\SkH k11\,,
\ee
for all $r+s\eq k-1$.
\end{lemma}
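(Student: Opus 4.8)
The plan is to reduce Lemma~\ref{inv_STetci} to Lemma~\ref{Torus_inv} by exploiting the structural form of $\SkH k11$ displayed in \eqref{CorrgnH}. Recall that as a linear map $\SkH k11$ is built from $k$ copies of the pattern "apply $\HCorr 11$ on the last $\Hss\otimes\Hss$ factor, then feed the output through adjoint right actions on the remaining $k-1$ loops, each of which carries an integral $\Lambda$". The key observation is that the dependence of $\SkH k11$ on the $j$-th tensor factor $\HKH=\Hss\otimes\Hss$ (for $j$ not the last one) enters \emph{only} through the action morphism $\rhoHKH{\BFH}$ of $\HKH$ on $\BFH$, i.e.\ through the right adjoint action coupled to the integral $\Lambda$ in the corresponding loop. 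Likewise, precomposing with $\SKH$ or $\TKH$ on a factor $j$ amounts to modifying the input to that single action.

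First I would make this precise: using the definition \eqref{Sk_morph} of $\Sk gpq$ and the Frobenius/associativity manipulations already carried out to obtain \eqref{CorrgnH}, one writes
\be
  \SkH k11 = \big(\text{iterated }m_{\BFH}\text{ and }\rhoHKH{\BFH}\big)\circ\big(\HCorr 11\oti\id\big)\circ(\text{reshuffle})\,,
\ee
so that $\SkH k11\circ(\id^{\otii r}\oti\SKH\oti\id^{\otii s}\oti\id_{\BFH})$ differs from $\SkH k11$ by inserting $\SKH$ either (a) directly before the copy of $\HCorr 11$ appearing in \eqref{SkHfrob}, when the modified factor is the "active" one, or (b) before one of the $k-1$ copies of $\rhoHKH{\BFH}$. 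In case (a), invariance is exactly \eqref{Sinv_1p} of Lemma~\ref{Torus_inv} (together with the fact that $m_{\BFH}$ and the other actions are applied afterwards and are unaffected). In case (b), I would need the companion statement that $\rhoHKH{\BFH}\circ(\SKH\oti\id_{\BFH})=\rhoHKH{\BFH}$ and $\rhoHKH{\BFH}\circ(\TKH\oti\id_{\BFH})=\rhoHKH{\BFH}$; but this follows by the same argument as Lemma~\ref{Torus_inv}, since $\rhoHKH{\BFH}$ is itself obtained from $\eps_{\HKH}$ and $\Qq_{\HKH,\BFH}$ via \eqref{Lyubact_HKH}, which is the same building block (the $Q$-matrix conjugated by the actions \eqref{rhoHb,ohrHb}) out of which $\HCorr 11$ is assembled in \eqref{PF_1}--\eqref{PF_2}. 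Concretely one repeats the chain of equalities \eqref{Sinv-6}--\eqref{Sinv-8} (respectively \eqref{T-inv}) with the integral $\Lambda$ of $\BFH$ replaced by the adjoint-action-with-$\Lambda$ loop, invoking \eqref{fQS_Psi}, \eqref{OS2}, Lemma~\ref{lemma:Sinv-7} and \eqref{Hopf_Frob_trick2} verbatim, together with centrality of $v$ for the $\TKH$ case. The constraint $r+s=k-1$ simply records that exactly one of the $k$ copies of $\Hss\otimes\Hss$ is being modified.

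The main obstacle I anticipate is bookkeeping rather than conceptual: in \eqref{CorrgnH} the loops are nested and the $j$-th $\HKH$-factor is attached to the rest of the diagram through several coproducts and adjoint actions, so one must check carefully that inserting $\SKH$ (resp.\ $\TKH$) on that factor really does localize to a single occurrence of $\rhoHKH{\BFH}$ (resp.\ the single occurrence of $\HCorr 11$) and does not interact with the integrals in neighbouring loops. I would handle this by an induction on $k$: the case $k=1$ is Lemma~\ref{Torus_inv}, and the inductive step peels off the outermost loop, observing that $\SkH k11$ factors as $(\text{outer loop built from }m_{\BFH},\ \rad,\ \Lambda)\circ(\SkH{k-1}11\oti\id)$ up to reshuffling, so that a modification on any of the first $k-1$ factors is absorbed by the inductive hypothesis and a modification on the last factor is absorbed by the auxiliary identity $\rhoHKH{\BFH}\circ(\SKH\oti\id)=\rhoHKH{\BFH}$ proved above. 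This yields \eqref{inv_S} and \eqref{inv_T} for all $r+s=k-1$, completing the proof of Lemma~\ref{inv_STetci}.
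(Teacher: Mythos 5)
Your overall strategy --- reduce to the one-handle case by exploiting the iterated structure of $\SkH k11$ --- is the same as the paper's, but the pivot of your case (b) is a false identity. The bare action $\rhoHKH{\BFH}$ is \emph{not} invariant under precomposition with $\SKH$ (nor with $\TKH$). As a sanity check, for a trivial module the action is the counit, and by \eqref{def_SK} together with \eqref{Hopf_pair_eps} one gets $\eps_{\HKH}\circ\SKH=\omega_{\HKH}\circ(\id_{\HKH}\oti\Lambda_{\HKH})$, i.e.\ pairing with the integral, which is not $\eps_{\HKH}$; likewise the cancellation of $v$ and $v^{-1}$ in the $\TKH$-argument \eqref{T-inv} requires \eqref{Hopf_Frob_trick2} and hence the integral $\Lambda$. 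Invariance holds only for the \emph{closed} handle block, where the cointegral $\lambda$ sitting inside $\Delta_{\BFH}$ and the integral $\Lambda$ conspire through \eqref{fQS_Psi}. Your parenthetical ``replace $\Lambda$ by the adjoint-action-with-$\Lambda$ loop'' gestures at exactly this, but then you are no longer proving the displayed auxiliary identity for $\rhoHKH{\BFH}$ --- you are re-deriving Lemma~\ref{Torus_inv} from scratch for each handle.

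The case distinction (a)/(b) is also based on a misreading of \eqref{CorrgnH} and is unnecessary; removing it gives the paper's (much shorter) proof. By \eqref{SkHfrob}, $\SkH 111=m_{\BFH}\circ(\HCorr 11\oti\id_{\BFH})$, so Lemma~\ref{Torus_inv} immediately yields $\SkH 111\circ(\SKH\oti\id_{\BFH})=\SkH 111$ and $\SkH 111\circ(\TKH\oti\id_{\BFH})=\SkH 111$. Every one of the $k$ handles of $\SkH k11$ enters through a full copy of this block --- each loop in \eqref{CorrgnH} carries its own $\Lambda$ and $\rad$ and is attached to the running $\BFH$-line by a coproduct and a product, so there is no distinguished ``active'' factor versus $k-1$ bare actions. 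These two equalities therefore already cover all $r+s=k-1$ uniformly. Your induction on $k$ is fine as scaffolding, but the inductive step must peel off a full handle block $\SkH 111$, not a bare action $\rhoHKH{\BFH}$.
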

\begin{proof}
From \eqref{SkHfrob} and Lemma \ref{Torus_inv} it follows that the morphism $\SkH 111$ satisfies
\be
    \SkH 111\circ(\TKH\oti\id_{\BFH})=\SkH 111 \quand \SkH 111\circ(\SKH\oti\id_{\BFH})=\SkH 111.
\ee
These equalities directly imply \eqref{inv_S} and \eqref{inv_T}.
\endofproof\end{proof}
\begin{lemma}\label{inv_twist}
$\SkH k11$ satisfies
       \be\label{lemma_twist}
            \SkH k11\circ(\id_{\HKH}^{\oti r}\oti\theta_{\HKH^{\otii s}}\oti \id_{\HKH}^{\oti t}\oti\id_{\BFH})=\SkH k11\,,
       \ee
for any integers $r,s,t$ such that $r+s+t= k$.
\end{lemma}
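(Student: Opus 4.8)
The plan is to reduce the twist $\theta_{\HKH^{\otii s}}$ acting on a block of $s$ consecutive tensor factors to a single elementary twist on one factor, which can then be removed with the help of Lemma~\ref{inv_STetci}. The structural input I would use is that the $k$ copies of $\HKH$ entering $\SkH k11$ act \emph{sequentially} on one and the same $\BFH$-strand, cf.\ \eqref{Sk_morph}. Hence, by the module axiom \eqref{pic_csp_13} -- that is, associativity of the $\HKH$-action on $\BFH$ -- together with associativity of the product of the Hopf algebra $\HKH$, one gets, for $r+s+t=k$,
\[
  \SkH k11 ~=~ \SkH{(r+1+t)}11\circ\big(\id_{\HKH}^{\otii r}\oti m_{\HKH}^{(s)}\oti\id_{\HKH}^{\otii t}\oti\id_{\BFH}\big)\,,
\]
where $m_{\HKH}^{(s)}\colon\HKH^{\otii s}\to\HKH$ is the $s$-fold iterated product of $\HKH$ (with $m_{\HKH}^{(1)}=\id_{\HKH}$).

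Composing the left-hand side of \eqref{lemma_twist} with this identity and then pushing $m_{\HKH}^{(s)}$ through the twist by \emph{naturality} of $\theta$ -- since $m_{\HKH}^{(s)}$ is a morphism $\HKH^{\otii s}\to\HKH$, one has $m_{\HKH}^{(s)}\circ\theta_{\HKH^{\otii s}}=\theta_{\HKH}\circ m_{\HKH}^{(s)}$ -- one is left with a single copy of the object twist $\theta_{\HKH}$ sitting on the $(r{+}1)$-st entry of $\SkH{(r+1+t)}11$, followed again by $\id_{\HKH}^{\otii r}\oti m_{\HKH}^{(s)}\oti\id_{\HKH}^{\otii t}\oti\id_{\BFH}$. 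Using the displayed identity once more, the lemma thus reduces to its special case $s=1$: namely that $\SkH k11$ is invariant under insertion of the object twist $\theta_{\HKH}$ on a single factor.

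It remains to treat this $s=1$ case, and this is where the real work lies, because the object twist $\theta_{\HKH}$ -- which by \eqref{bimod-twist} acts by $v$ from the left and $v^{-1}$ from the right -- is \emph{not} the morphism $\TKH$ of \eqref{def_TL}/\eqref{TKH} appearing in Lemma~\ref{inv_STetci}. Using dinaturality of the family $\iota^{\HKH}$ (eq.\ \eqref{dinat_trafo}) and the identity $\theta_{U^\vee}=(\theta_U)^\vee$, one finds that $\theta_{\HKH}$ and $\TKH$ differ by a partial self-monodromy; at the level of the coend this difference is built from the monodromy matrix $Q$ (and, through $\SKH^2=\apo_{\HKH}^{-1}$, possibly an antipode), i.e.\ $\theta_{\HKH}$ equals $\TKH$ composed with a correction of the type displayed in \eqref{QQ_Haa}. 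I would then verify invariance of $\SkH k11$ under this correction by a direct computation in $\Vect$: insert the explicit linear-map form of the correction into the explicit form \eqref{CorrgnH} of $\SkH k11$, and absorb the resulting factors of $Q$ and $Q^{-1}$ using the connecting axiom \eqref{con_axiom}, the anti-(co)algebra morphism property \eqref{anti_alg_apo} of $\apo$, the $R$-matrix relations \eqref{Rmat2}, and -- decisively -- the integral identities \eqref{Hopf_Frob_trick} and \eqref{Hopf_Frob_trick2} together with \eqref{OS2}. This is precisely the mechanism by which the central elements $v,v^{-1}$ were made to cancel in the derivation of \eqref{T_inv_11} in Lemma~\ref{Torus_inv}, now with the non-central matrix $Q$ in place of $v$.

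The main obstacle will be exactly this last computation: since $Q$ is not central, showing that it is swallowed by the integrals and the adjoint right actions occurring in \eqref{CorrgnH} is the genuine Hopf-algebraic content of the lemma. Everything preceding it -- the merging of factors, the use of naturality of $\theta$, and the final appeal to \eqref{inv_T} (and, should the antipode correction appear, to \eqref{inv_S}) -- is purely formal.
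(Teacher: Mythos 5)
Your opening reduction (merging the $s$ twisted factors into one via the module axiom and naturality of $\theta$ with respect to $m^{(s)}_{\HKH}$) is formally sound but unnecessary, and the place where you locate ``the real work'' is where your argument actually breaks off. You correctly note that $\theta_{\HKH}$ is not $\TKH$, and you even quote the relevant formula \eqref{bimod-twist} --- the twist of an object of $\HBimod$ is the left action of $v$ composed with the right action of $v^{-1}$ --- but then you abandon it in favour of an unestablished decomposition $\theta_{\HKH}=\TKH\circ(\text{$Q$-correction})$, followed by a ``direct computation'' that you explicitly do not carry out and whose feasibility you flag as the main obstacle. That decomposition is never pinned down (the self-monodromy of the coend is not among the morphisms \eqref{QQ_Haa}, \eqref{QHX} you cite, and extracting it from dinaturality plus $\theta_{U^\vee}=(\theta_U)^\vee$ is itself a nontrivial claim), and the mechanism you propose for absorbing the resulting non-central $Q$-matrices --- integrals and \eqref{Hopf_Frob_trick2} --- is not the one that handles $Q$'s elsewhere in this chapter. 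So the decisive cancellation is neither set up correctly nor performed.

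The point you miss is that no $Q$-matrix enters at all. Apply \eqref{bimod-twist} directly to $X=\HKH^{\otii s}$: the twist is the left action of $v$ and the right action of $v^{-1}$ on that block, for \emph{any} $s$, so there is nothing to reduce. Lemma \ref{lemma_helpid}\,(ii) (identity \eqref{Hpastloops}) then converts these left and right $H$-actions on $\HKH^{\otii s}$, in the presence of the rest of $\SkH k11$, into a left multiplication by $\apo\cir v$ and a right multiplication by $\apoi\cir v^{-1}$ on one and the same $H$-line; by \eqref{prop_v} the antipode preserves $v$, and $v$ is central, so the two insertions cancel and the lemma follows. This is the same two-step pattern as the proof of \eqref{T_inv_11} in Lemma \ref{Torus_inv}, which you cite but do not exploit: the ribbon element, not the monodromy matrix, is what gets pushed through the loops.
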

\begin{proof}
First we apply the expression \eqref{bimod-twist} for the twist and use Lemma \ref{lemma_helpid} (ii) to replace the left and right actions of $v$ and $v^{-1}$ on $\HKH^{\otii s}$ by left and right multiplication. Then, the left hand side of \eqref{lemma_twist} differs from the right hand side by a left multiplication with $v\circ\apo$ and a right multiplication with $v^{-1}\circ\apoi$, both of them multiplying the same $H$-line. The equality \eqref{lemma_twist} then follows by using $v\circ\apo=v$ and that the ribbon element $v$ is central.
\endofproof\end{proof}

\begin{lemma}\label{lemma:inv_QQ}
    For any $k=2,...,g$, we have
    \be\label{inv_QQ}
    \HCorrgn\circ(\id_{\HKH}^{\otii g-k}\oti\QQ_{\HKH,\HKH}\oti \id_{\HKH}^{\otii k-2})=\HCorrgn\,.
\ee
\end{lemma}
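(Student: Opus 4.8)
<br>

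The plan is to prove \eqref{inv_QQ} by exploiting that $\QQ_{\HKH,\HKH}$ inserted between two tensor factors of $\HKH$ in the argument of $\HCorrgn=\SkH g0n$ (after building $\SkH g11$ from $\HCorr{}{}{}$ via the Frobenius property as in \eqref{SkHfrob}--\eqref{CorrgnH}) can be absorbed into the pair of copies of $\HCorr 11$ acting on the two corresponding "handle" tensorands. Concretely, the two adjacent factors $\HKH\oti\HKH$ on which $\QQ_{\HKH,\HKH}$ acts enter $\SkH g11$ through two of the $\rho^{\HKH}_{\BFH}$ actions in \eqref{Sk_morph}; I would first localise the problem to a two-handle fragment, i.e. reduce \eqref{inv_QQ} to the single identity
\be\label{plan:twohandle}
    \SkH 211\circ(\QQ_{\HKH,\HKH}\oti\id_{\BFH})=\SkH 211\,,
\ee
using that all the other handle loops and all the $\BFH$-legs are untouched by $\QQ_{\HKH,\HKH}$ and that bimodule morphisms can be slid past the relevant projectors. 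This is the same kind of reduction already used implicitly in Lemmas \ref{inv_STetci} and \ref{inv_twist}.

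For \eqref{plan:twohandle} I would pass to linear maps in $\Vect$ via the explicit formula \eqref{CorrgnH} for $\SkH g11$ and the explicit formula \eqref{QQ_Haa} for $\QQ_{\HKH,\HKH}$. The key observation is that $\QQ_{\HKH,\HKH}$ is built entirely from the monodromy matrix $Q$, $Q^{-1}$ and antipodes, and that in $\SkH 211$ each handle contributes, after \eqref{PF_2}, essentially an integral $\Lambda$ fed through the (co)adjoint actions; one then uses the intertwining properties \eqref{drin_int}, \eqref{norm_int_inv} of the Drinfeld map together with $f_Q(\lambda)=\Lambda$ and $f_{Q^{-1}}(\lambda)=\Lambda$ to turn the $Q$-dressing of $\QQ_{\HKH,\HKH}$ acting on the $\Lambda$-legs into trivial factors. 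More precisely, I expect the proof to run along the lines of the proof of Lemma \ref{Torus_inv}: rewrite $\QQ_{\HKH,\HKH}$ acting on the handle legs, use $(\apo\oti\apo)\circ Q=\flip HH\circ Q$ and the relations \eqref{fQS_Psi} to collapse the $Q$-, $Q^{-1}$-, $\apo$-decorations against the integrals $\Lambda$ sitting on those legs, invoke $\apo\circ\Lambda=\Lambda$ (unimodularity), and recognise the result as $\SkH 211$ again. A helper identity of the type "$\HCorr 11$ absorbs a partial monodromy leg" — analogous to the auxiliary Lemma \ref{lemma:Sinv-7} used in the proof of Lemma \ref{Torus_inv} — would be extracted and proved separately so that the main computation stays short.

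The main obstacle will be bookkeeping: $\QQ_{\HKH,\HKH}$ entangles \emph{both} handle slots simultaneously through the "inner" $Q^{-1},\apo$ block in \eqref{QQ_Haa}, so one cannot simply treat the two handles independently. I anticipate that the cleanest route is to first commute the inner block past one of the two $\rho^{\HKH}_{\BFH}$ actions using the coassociativity/connecting-axiom manipulations already exploited in \eqref{PF_1}--\eqref{PF_2}, thereby decoupling the two slots, and only then apply the single-handle absorption twice. Once \eqref{plan:twohandle} is established, \eqref{inv_QQ} follows by the localisation argument of the first paragraph, since $\id_{\HKH}^{\otii g-k}\oti\QQ_{\HKH,\HKH}\oti\id_{\HKH}^{\otii k-2}$ only modifies the $k$-th and $(k{-}1)$-th handle loops, all of which sit symmetrically in \eqref{CorrgnH}. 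This completes the proof of Lemma \ref{lemma:inv_QQ}.
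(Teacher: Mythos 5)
Your overall strategy coincides with the paper's: Lemma \ref{lemma:inv_QQ} is indeed reduced to the single two-handle identity $\SkH 211\circ(\QQ_{\HKH,\HKH}\oti\id_{\BFH})=\SkH 211$, which is then verified by writing everything out as linear maps via \eqref{CorrgnH} and \eqref{QQ_Haa}. However, the mechanism you propose for the crucial cancellation is the wrong one, and as stated it would not go through. You want to collapse the $Q$-, $Q^{-1}$- and $\apo$-decorations of the partial monodromy against the integrals $\Lambda$ sitting in the handle loops, using the Drinfeld-map intertwiners \eqref{drin_int}, \eqref{norm_int_inv}, the normalisations $f_Q(\lambda)=\Lambda=f_{Q^{-1}}(\lambda)$ and the relations \eqref{fQS_Psi}. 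But those identities are the engine behind the $\SKH$-invariance of Lemma \ref{Torus_inv}(i): they all require a cointegral $\lambda$ sandwiched between the $Q$- and $Q^{-1}$-blocks, and $\SKH$ supplies exactly that. The partial monodromy $\QQ_{\HKH,\HKH}$ in \eqref{QQ_Haa} contains no cointegral whatsoever, so \eqref{fQS_Psi} has nothing to act on, and the $Q$-legs that get multiplied into the $\Hs$-inputs cannot be traded for trivial factors this way.

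What actually removes the $Q$-matrices in the paper is a different and more elementary fact: after using \eqref{Hopf_Frob_trick2}, the anti-(co)algebra property of the antipode, the definition \eqref{adj_act} of the adjoint action and associativity, the $Q$ and $Q^{-1}$ end up conjugating an iterated coproduct, and conjugation of $\Delta_H$ by the monodromy matrix is the identity — this is Lemma \ref{removeQ_1} (equivalently, the commutativity of $\BFH$ in $H$\Bimod), packaged for the present situation as Lemma \ref{removeQ_2}(i), together with $(\apo^2\oti\apo^2)\circ Q=Q$. So the invariance under $\QQ_{\HKH,\HKH}$ is a consequence of quasitriangularity alone, not of factorizability or of any property of the (co)integrals. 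Your intuition that the "inner block" of \eqref{QQ_Haa} must first be commuted through the handle structure via coassociativity and the connecting axiom is correct and matches the manipulations leading to \eqref{Pic_QQ2}; it is only the final cancellation step that needs to be replaced by the monodromy-conjugation argument.
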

\begin{proof}
From the expression \eqref{QQ_Haa} for $\QQ_{\HKH,\HKH}$ and the expression \eqref{CorrgnH} for $\SkH 211$ it follows that
\Eqpic{Pic_QQ1} {320} {76} {\setulen80
  \put(0,92)   {$\SkH 211\circ(\QQ_{\HKH,\HKH}\oti\id_{\BFH})=$}
  \put(130,0) {\includepichtft{132d}
  \put(20,-8.5) {\sse$ \HKH $}
  \put(2,-8.5) {\sse$ \HKH $}
  \put(47,-8.5) {\sse$ \BFH $}
  \put(2,33) {\sse$ \QQ_{\HKH,\HKH} $}
  \put(47,200) {\sse$ \BFH $}
  \put(-15,185)   {\Hbimodpic}
  }
  \put(215,92)   {$=$}
  \put(250,0) {\includepichtft{134aA}
  \put(-6,-11.2)   {\sse$ \Hss $}
  \put(11,-11.2)   {\sse$ \Hss $}
  \put(43.7,-11.2) {\sse$ \Hss $}
  \put(60,-11.2)   {\sse$ \Hss $}
  \put(112.2,-11.2){\sse$ \Hss $}
  \put(39.5,105.5) {\sse$ \Lambda $}
  \put(88.5,42.5)  {\sse$ \Lambda $}
  \put(88,7.5)     {\sse$ Q$}
  \put(62,82)      {\sse$ Q^{-1}$}
  \put(96,81.5)    {\sse$\rad$}
  \put(46,142.5)   {\sse$\rad$}
  \put(164,192)    {\sse$ \Hss $}
  \put(-15,185)   {\Vectpic}
  }
  }
Using first \eqref{Hopf_Frob_trick2} and then that $\apoi$ is an anti-algebra morphism we obtain the left hand side of the following equality:
\Eqpic{Pic_QQ2} {320} {87} {\setulen90
  \put(5,3) {\INcludepichtft{134bA}{342}
  \put(-4,-8.5) {\sse$ \Hs $}
  \put(8,-8.5) {\sse$ \Hs $}
  \put(44.5,-8.5){\sse$ \Hs $}
  \put(56,-8.5){\sse$ \Hs $}
  \put(98.2,-8.5){\sse$ \Hs $}
  \put(31,97.5)  {\sse$ \Lambda $}
  \put(85,42.5)  {\sse$ \Lambda $}
  \put(74,14.5)  {\sse$ Q$}
  \put(69,78.5)  {\sse$ Q^{-1}$}
  \put(92,80.5)  {\sse$\rad$}
  \put(42,142.5)  {\sse$\rad$}
  \put(151,199)  {\sse$ \Hs $}
  }
  \put(180,92)   {$=$}
  \put(210,0) {\INcludepichtft{134c}{342}
  \put(-4,-8.5) {\sse$ \Hs $}
  \put(8,-8.5) {\sse$ \Hs $}
  \put(44.5,-8.5){\sse$ \Hs $}
  \put(56,-8.5){\sse$ \Hs $}
  \put(98.2,-8.5){\sse$ \Hs $}
  \put(31,117.5)  {\sse$ \Lambda $}
  \put(70,54.5)  {\sse$ \Lambda $}
  \put(65,26.5)  {\sse$ Q$}
  \put(119,53.5)  {\sse$ Q^{-1}$}
  \put(147,199)  {\sse$ \Hs $}
  }
  }
The right hand side of \eqref{Pic_QQ2} is obtained by inserting the definition \eqref{adj_act} of the right adjoint action followed by associativity of $H$ and again that the antipode is the anti-algebra morphism.\\
By applying $(\apo^2\oti\apo^2)\circ Q=Q$ and then Lemma \ref{removeQ_2} (i)
we can omit the $Q$-matrices on the left hand side of \eqref{Pic_QQ2}.
Comparing the result with \eqref{CorrgnH} it follows that
\be\label{QQ_inv}
    \SkH 211\circ(\QQ_{\HKH,\HKH}\oti\id_{\BFH})=\SkH 211\,.
\ee
Finally, the left hand side of \eqref{inv_QQ} differs from the right hand side exactly by the action the application of $\QQ_{\HKH,\HKH}$ on two adjacent $\HKH$-factors. Thus \eqref{inv_QQ} follows directly from \eqref{QQ_inv}.
\endofproof\end{proof}

\begin{lemma}\label{inv_Qq_S}
For any integer $k\geq2$, $\SkH k02$ satisfies
\be\label{inv_Qq_S_state}
    (\id_\BF\oti\tilde b_\BF)\circ(\SkH k02\oti\id_{^\vee\!\BF})\circ\Qq_{\HK,\HK^{\otii k-1}\oti^\vee\!\BF}=(\id_\BF\oti\tilde b_\BF)\circ(\SkH k02\oti\id_{^\vee\!\BF})\,
\ee
in $\Hom_{H|H}(\HK^{\otii k}\oti^\vee\!\BF,\BF)$.\\
That is, graphically
\eqpic{removeQq_t} {250} {80} {
    \put(0,0) {\includepichtft{140g}
    \put(17,26.7)   {\sse$\Qq_{\HKH,\HKH^{\otii k-1}\oti^\vee\!\BFH}$}
    \put(8,-8.5)   {\sse$\HKH$}
    \put(25,-8.5)   {\sse$\dots$}
    \put(36,-8.5)   {\sse$\HKH$}
    \put(77,-8.5)   {\sse$^\vee\!\BFH$}
    \put(53,174)   {\sse$\BFH$}
    %\put(0,154) {\Hbimodpic}
    }
    \put(130,80)    {$=$}
    \put(150,0) {
    \put(13,0)   {\includepichtft{140h}}
    \put(8,-8.5)   {\sse$\HKH$}
    \put(25,-8.5)   {\sse$\dots$}
    \put(36,-8.5)   {\sse$\HKH$}
    \put(77,-8.5)   {\sse$^\vee\!\BFH$}
    \put(53,174)   {\sse$\BFH$}
    \put(100,169) {\Hbimodpic}
    }
    }
\end{lemma}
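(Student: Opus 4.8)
The plan is to prove Lemma \ref{inv_Qq_S} by reducing the statement, via the Frobenius and unit properties, to the already-established invariance of the torus one-point function $\HCorr 11$ together with the structural identities for $\BFH$ and $\HKH$ collected earlier in this chapter. First I would unpack both sides as linear maps in \Vect, using the expression \eqref{CorrgnH} for $\SkH g11$ and hence for $\SkH k02 = \SkH k11 \circ (\id_{\HKH}^{\otii k} \oti \eta_{\BFH})$, together with the expression \eqref{Qq_X} for the left partial monodromy $\Qq_{\HKH,X}$ with $X = \HKH^{\otii k-1} \oti\, ^\vee\!\BFH$. The left duality data $\tilde b_{\BFH}, \tilde d_{\BFH}$ in \eqref{removeQq_t} are those of $H$\Bimod; I would first move them inside, using naturality of the duality and the fact that $\BFH$ has trivial twist \eqref{triv_twist} (so that left and right duals behave symmetrically up to the sovereign isomorphism $\piv_{\BFH}$), to recognize the morphism as one built out of the action $\rhoHKH{\BFH}$ given in \eqref{Lyubact_HKH} composed with the structure morphisms of $\BFH$.

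The key reduction step is the observation that, because $\SkH k11$ is obtained from $\HCorr 11 = \SkH 111 \circ (\id_{\HKH} \oti \eta_{\BFH})$ by the iterated Frobenius/adjoint-action construction \eqref{CorrgnH}, the effect of inserting $\Qq_{\HKH, \cdot}$ on the last $\HKH$-leg together with the $^\vee\!\BFH$-leg is, after using dinaturality of $\iHKH$ and the definition \eqref{Lact} of the $\L$-action, the same as acting with the monodromy $\QQ_{\HKH,\HKH}$ on two adjacent $\HKH$-factors — a situation already handled in Lemma \ref{lemma:inv_QQ}, more precisely in \eqref{QQ_inv}. Concretely, I would rewrite $\Qq_{\HKH,\HKH^{\otii k-1}\oti^\vee\!\BFH}$ in terms of the $Q$-matrix as in \eqref{Qq_X}, then use $(\apo^2 \oti \apo^2)\circ Q = Q$ (which follows from \eqref{Rmat2}) and the cancellation lemma \ref{removeQ_2}(i) — exactly as in the last step of the proof of Lemma \ref{lemma:inv_QQ} — to remove the $Q$-matrices against the integrals $\Lambda$ appearing in \eqref{CorrgnH}. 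The $^\vee\!\BFH$-leg contracted with $\tilde b_{\BFH}$ and $\tilde d_{\BFH}$ should, after these manipulations, simply close up into the coregular bimodule structure, leaving precisely the right-hand side of \eqref{removeQq_t}.

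I expect the main obstacle to be the careful bookkeeping of the left-duality caps and cups on $\BFH$ when they are pushed past the $Q$-matrices and the adjoint actions: one must verify that the sovereign structure $\piv_{\BFH}$ of $H$\Bimod\ (given by the special group-like element $t$ as in \eqref{pivX}) interacts correctly with $\apo$, using $\apo^2 = \ad_t$ from \eqref{apo_t} and the fact that $t$ is group-like. In the semisimple / modular case the analogous statement is essentially \eqref{Qrep} together with \eqref{Lact}, so the conceptual content is clear; the work lies in translating it into the linear-map language of $H$\Bimod\ without losing track of antipodes. Once the $Q$-matrices are cancelled, the remaining identity is a consequence of coassociativity of $\Delta$, the defining property \eqref{apo_prop} of the antipode, and unimodularity \eqref{OS2}, in the same spirit as the proofs of Lemmas \ref{Torus_inv} and \ref{lemma:inv_QQ}, so no genuinely new ingredient is needed beyond those already available.
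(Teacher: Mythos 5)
Your overall strategy---unpack both sides as linear maps via \eqref{CorrgnH} and \eqref{Qq_X} and then cancel the $Q$-matrices---matches the paper's, but your central reduction step fails, and it fails exactly at the point you dismiss as "bookkeeping". The partial monodromy $\Qq_{\HKH,\HKH^{\otii k-1}\oti{}^\vee\!\BFH}$ is \emph{not} equivalent to a monodromy $\QQ_{\HKH,\HKH}$ between two adjacent $\HKH$-factors: one strand of it wraps around the $^\vee\!\BFH$-leg, which is an output leg of the correlator carrying the left dual $H$-actions on the bulk algebra, not an integral loop. The mechanism of Lemma \ref{lemma:inv_QQ}---push $Q$ and $Q^{-1}$ onto the integrals $\Lambda$ and cancel them using $(\apo^2\oti\apo^2)\circ Q=Q$ and Lemma \ref{removeQ_2}(i), which rests only on quasitriangularity---accounts for the part of the monodromy passing the other $\HKH$-factors, but leaves a residual double braiding attached to the $\BFH$/${}^\vee\!\BFH$ strands. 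That residual monodromy cannot be removed by quasitriangularity, the sovereign structure, $\apo^2=\ad_t$, or unimodularity; it is removed precisely because $\BFH$ is a \emph{commutative} Frobenius algebra.

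In the paper this enters through Lemma \ref{lemma:comFA_mon} (a monodromy past the product of a commutative Frobenius algebra can be undone), which, after translating the dual actions on $^\vee\!\BFH$ via \eqref{Fv_act} and reorganizing the $H$-actions with Lemma \ref{lemma_helpid}(ii), yields the cancellation identity of Lemma \ref{lemma:cop_Delta}(ii) that is actually invoked; Lemma \ref{removeQ_2}(i) plays no role in this proof. So your closing claim that no new ingredient is needed beyond those used for Lemmas \ref{Torus_inv} and \ref{lemma:inv_QQ} is the gap: commutativity of $\BFH$ is indispensable here, and without it the statement (which is what makes $\tjk$-invariance hold) would be false. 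A minor further slip: $\SkH k11\circ(\id_{\HKH}^{\otii k}\oti\eta_{\BFH})$ equals $\SkH k01$, not $\SkH k02$; to produce the two output legs you must post-compose with $\Delta\ASF$, and it is precisely these two legs and the loop they form under the duality contraction that host the problematic part of the monodromy.
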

\proof
The left hand side of \eqref{removeQq_t} is expressed as a linear map by using the expression \eqref{CorrgnH} for $\SkH k01$, the unit $\eta\ASF\eq\eta_H^\vee$ and coproduct of $\BFH$, given in \eqref{pic-Hb-Frobalgebra}, and the expression \eqref{Qq_X} for $\Qq_{\HKH,\HKH^{\otii k-1}\oti^\vee\!\BFH}$. It then follows, by applying  Lemma \ref{lemma_helpid} (ii), that
\Eqpic{inv_tjk_1} {310} {115} {
    \put(0,240)  {$(\id_\BF\oti\tilde b_\BF)\circ(\SkH k02\oti\id_{^\vee\!\BF})\circ\Qq_{\HK,\HK^{\otii k-1}\oti^\vee\!\BF}=~$}
      \put(120,10) {\includepichtft{140cA}\setulen80
 \put(-7,-11.2)   {\sse$ \Hss $}
  \put(8.2,-11.2)  {\sse$ \Hss $}
  \put(60.5,-11.2) {\sse$ \Hss $}
  \put(74.5,-11.2) {\sse$ \Hss $}
  \put(28.2,-11.2) {\sse$ \cdots\cdots$}
  \put(229.5,-11.2){\sse$ H $}
  \put(229,297)    {\sse$ \Hss $}
  \put(30,195.5)   {\sse$ \Lambda$}
  \put(100,132)    {\sse$ \Lambda$}
  \put(216,188)    {\sse$ \lambda$}
  \put(173,3.5)    {\sse$ Q$}
  \put(120,49,2)   {\sse$ Q^{-1}$}
  }
  }
Next, inserting the definition \eqref{adj_act} of $\rad$, using \eqref{Hopf_Frob_trick2}, then associativity of $m_H$ together with the properties of $\apo$, and finally \eqref{Fv_act} we obtain
\eqpic{inv_tjk_2} {340} {130} {
    \put(0,255)  {$(\id_\BF\oti\tilde b_\BF)\circ(\SkH k02\oti\id_{^\vee\!\BF})\circ\Qq_{\HK,\HK^{\otii k-1}\oti^\vee\!\BF}=~$}
      \put(120,0) {\includepichtft{140dB}\setulen80
  \put(-5,-8.5) {\sse$ \Hs $}
  \put(8.2,-8.5){\sse$ \Hs $}
  \put(62.5,-8.5){\sse$ \Hs $}
  \put(74.5,-8.5){\sse$ \Hs $}
  \put(94.5,-8.5){\sse$ \dots\dots$}
  \put(195,-8.5){\sse$ H$}
  \put(196,310){\sse$ \Hs $}
  \put(30,213)  {\sse$ \Lambda$}
  \put(96,142.5)  {\sse$ \Lambda$}
  \put(181,131)  {\sse$ \Lambda$}
  \put(137,1)  {\sse$ Q$}
  \put(118,47)  {\sse$ Q^{-1}$}
  }
  }
Now, by Lemma \ref{lemma:cop_Delta} (ii), the $Q$-matrices can be omitted. The so-obtained linear map is $(\id_\BF\oti\tilde b_\BF)\circ(\SkH k02\oti\id_{^\vee\!\BF})$.
\endofproof

By composing \eqref{removeQq_t} with $\id_{\HKH^{\otii k}}\oti^\vee\!\eps\ASF$ and using \eqref{Qq_com} we obtain
\begin{cor}\label{lemma:inv_Qq}
For any integer $k\geq2$ we have
\be\label{inv_Qq_Sk}
    \SkH k01\circ\Qq_{\HKH,\HKH^{\otii k-1}}=\SkH k01\,.
\ee
\end{cor}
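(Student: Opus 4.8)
The plan is to derive Corollary~\ref{lemma:inv_Qq} as an immediate consequence of Lemma~\ref{inv_Qq_S}, which has already been established. Recall that $\SkH k01 = \SkH k02 \circ (\id_{\HKH^{\otii k}} \oti \,{}^\vee\!\eta\ASF)$ by the definition \eqref{Sk_morph}: taking $p\eq0$, $q\eq1$ in $\SkH k01$ and $p\eq0$, $q\eq2$ in $\SkH k02$ differs only by inserting one unit of $\BFH$, and via the right duality $\,{}^\vee\!\eps\ASF$ pairs against the extra $\,{}^\vee\!\BFH$-leg of $\SkH k02$. More precisely, I would first note that from the left/right duality morphisms one has the identity
\[
    \SkH k01 = (\id_{\BFH} \oti \,{}^\vee\!\eps\ASF) \circ \big( \big[(\id_{\BFH}\oti\tilde b_{\BFH})\circ(\SkH k02 \oti \id_{{}^\vee\!\BFH})\big] \oti \id_{{}^\vee\!\BFH}\big)
\]
up to the obvious snake-identity simplification, so that $\SkH k01$ is obtained from the morphism appearing on the left-hand side of \eqref{removeQq_t} by post-composing with $\,{}^\vee\!\eps\ASF$ on the dangling $\,{}^\vee\!\BFH$-leg.

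Next I would compose both sides of the equality \eqref{removeQq_t} in Lemma~\ref{inv_Qq_S} with $\id_{\HKH^{\otii k}} \oti \,{}^\vee\!\eps\ASF$, i.e.\ with the counit of $\BFH$ transported through the right duality. On the right-hand side this immediately produces $\SkH k01$ by the remark of the previous paragraph. On the left-hand side one gets
\[
    \SkH k01' \circ \Qq_{\HKH,\HKH^{\otii k-1}\oti {}^\vee\!\BFH} \,,
\]
and here the key point is the commutation property \eqref{Qq_com} of the left partial monodromy: since $\,{}^\vee\!\eps\ASF$ is a morphism (of $H$-bimodules) from $\,{}^\vee\!\BFH$ to the tensor unit, $\Qq$ is natural in its second argument and the counit can be slid past $\Qq$, turning $\Qq_{\HKH,\HKH^{\otii k-1}\oti {}^\vee\!\BFH}$ into $\Qq_{\HKH,\HKH^{\otii k-1}}$ followed by $\id_{\HKH} \oti \id_{\HKH^{\otii k-1}} \oti \,{}^\vee\!\eps\ASF$. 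Collecting the counit contributions then yields exactly $\SkH k01 \circ \Qq_{\HKH,\HKH^{\otii k-1}}$ on the left-hand side.

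Putting the two sides together gives $\SkH k01 \circ \Qq_{\HKH,\HKH^{\otii k-1}} = \SkH k01$, which is \eqref{inv_Qq_Sk}. The only real care needed is bookkeeping: one must check that the duality-contraction used to pass from the $q\eq2$ morphism of Lemma~\ref{inv_Qq_S} to the $q\eq1$ morphism of the corollary is precisely the one that identifies $\SkH k02 \circ (\id \oti \,{}^\vee\!\eta\ASF)$ with $\SkH k01$ after the snake identity, and that the counit slides past $\Qq$ on the correct leg. I expect this to be routine, so the main (and essentially only) obstacle is verifying the naturality step \eqref{Qq_com} applies with $f = \,{}^\vee\!\eps\ASF$, which holds since $\,{}^\vee\!\eps\ASF$ is a bimodule morphism (it is dual to the unit $\eta\ASF$, and $\eta\ASF$ is a bimodule morphism by the results of section~\ref{Frob_Hmod}). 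No genuinely new computation is required beyond what Lemma~\ref{inv_Qq_S} already supplies.
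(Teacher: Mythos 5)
Your proposal is correct and is precisely the paper's own argument: the Corollary is obtained by composing \eqref{removeQq_t} with $\id_{\HKH^{\otii k}}\oti{}^\vee\!\eps\ASF$ and sliding the (dual of the) counit past the partial monodromy via \eqref{Qq_com}. The only blemishes are cosmetic — your opening identity $\SkH k01=\SkH k02\circ(\id\oti{}^\vee\!\eta\ASF)$ is type-incorrect as written (the correct relation, which your "more precise" formulation recovers, is $(\id_{\BFH}\oti\eps\ASF)\circ\SkH k02=\SkH k01$), and ${}^\vee\!\eps\ASF$ is the dual of the counit, not of the unit.
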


\begin{lemma}\label{lemma:twist_S_remove}
For any integer $k>0$, we have the following equality
\Eqpic{twist_S_remove} {320} {110} {\setulen80
    \put(0,0) {\includepichtft{140e}
    \put(-4,-8.5)   {\sse$\HKH$}
    \put(13,-8.5)   {\sse$\dots$}
    \put(31,-8.5)   {\sse$\HKH$}
    \put(50,265)   {\sse$\BFH$}
    \put(115,265)   {\sse$\BFH$}
    \put(1.5,39)     {\sse$ \theta_{\!\HKH^{\otimes k}_{}\otimes{}^{\vee\!}\!\BFH} $}
    %\put(25,230) {\begin{rotate}{90}\Hbimodpic\end{rotate}}
    }
    \put(150,130)   {$=$}
    \put(155,0) {\includepichtft{140i}
    \put(-4,-8.5)   {\sse$\HKH$}
    \put(13,-8.5)   {\sse$\dots$}
    \put(31,-8.5)   {\sse$\HKH$}
    \put(50,265)   {\sse$\BFH$}
    \put(115,265)   {\sse$\BFH$}
    %\put(25,230) {\begin{rotate}{90}\Hbimodpic\end{rotate}}
    }
    \put(300,130)   {$=$}
    \put(300,0) {\includepichtft{140f}
    \put(-4,-8.5)   {\sse$\HKH$}
    \put(13,-8.5)   {\sse$\dots$}
    \put(31,-8.5)   {\sse$\HKH$}
    \put(50,265)   {\sse$\BFH$}
    \put(73,265)   {\sse$\BFH$}
    \put(115,230) {\begin{rotate}{90}\Hbimodpic\end{rotate}}
    }
    }
of morphisms in $\Hom_{H|H}(\HKH^{\otii p},\BFH\oti\BFH)$.
\end{lemma}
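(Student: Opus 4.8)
<br>

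The plan is to prove Lemma~\ref{lemma:twist_S_remove} by reducing the twist on $\HKH^{\otii k}\oti{}^{\vee\!}\BFH$ to something that can be absorbed into the structure morphisms of $\SkH k02$, exploiting that $\BFH$ has trivial twist, that $\HKH$ is a Hopf algebra in $H$\Bimod, and the invariance properties already established in this subsection. First I would use the general formula \eqref{twist_NS} for the twist of a sovereign braided monoidal category, applied to the object $\HKH^{\otii k}\oti{}^{\vee\!}\BFH$, to rewrite $\theta_{\HKH^{\otii k}\oti{}^{\vee\!}\BFH}$ in terms of the braiding $c^{}_{\HKH^{\otii k}\oti{}^{\vee\!}\BFH,\HKH^{\otii k}\oti{}^{\vee\!}\BFH}$ and the pivotal isomorphism. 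Using the tensoriality of the twist and of the pivotal structure (and $\piv_{U\oti V}=\piv_U\oti\piv_V$), the twist on the tensor product decomposes into the individual twists on each tensor factor, braided past one another; since $\theta_{\BFH}=\id_{\BFH}$ by \eqref{triv_twist} and hence $\theta_{{}^{\vee\!}\BFH}=\id$ as well, the only nontrivial contributions come from the $\HKH$-factors together with the "interaction braidings" between them.

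The heart of the argument is then to peel off the twists on the $\HKH$-factors one at a time using the three invariance results proved above. The full braid of one $\HKH$-factor past everything to its right and back is precisely a sequence of partial monodromies $\Qq_{\HKH,-}$ and $\QQ_{\HKH,\HKH}$, composed with $\TKH$'s coming from the individual twists, acting on adjacent tensor slots — compare \eqref{bimod-twist} and the definitions \eqref{QHH}, \eqref{QHX}. So after rewriting, the left-hand side of \eqref{twist_S_remove} is $\SkH k02$ precomposed with a product of operators each of which is of the form appearing in Lemma~\ref{inv_twist} (internal twist $\theta_{\HKH^{\otii s}}$), Lemma~\ref{lemma:inv_QQ} (partial monodromy $\QQ_{\HKH,\HKH}$ on adjacent slots), Corollary~\ref{lemma:inv_Qq} together with Lemma~\ref{inv_Qq_S} (left partial monodromy $\Qq_{\HKH,\HKH^{\otii k-1}\oti{}^{\vee\!}\BFH}$ and its $\eps\ASF$-composed version), and Lemma~\ref{inv_STetci} (the individual $\TKH$'s). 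Applying these invariances successively collapses the whole operator to the identity, yielding the first equality $\theta_{\HKH^{\otii k}\oti{}^{\vee\!}\BFH}$-insertion $=$ no insertion. For the second equality in \eqref{twist_S_remove}, I would note that the difference between the two right-hand pictures is exactly a braiding $c_{\BFH,\BFH}$ applied to the two output legs, and $\SkH k02$ lands in a symmetric-commutative regime: $\BFH$ is commutative by the proposition preceding \eqref{triv_twist}, and one checks — using $m\ASF\cir c^{}_{\BFH,\BFH}=m\ASF$ and the Frobenius/associativity relations, exactly as in \eqref{SkHfrob} and \eqref{Corr12H} — that swapping the two $\BFH$-outputs of $\SkH k02$ leaves it invariant.

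The main obstacle I anticipate is bookkeeping rather than conceptual: making precise which composite of partial monodromies and twists the braiding in \eqref{twist_NS} unfolds into, and verifying that each factor sits in adjacent tensor positions in exactly the form demanded by Lemmas~\ref{inv_twist}, \ref{inv_STetci}, \ref{lemma:inv_QQ}, \ref{inv_Qq_S} and Corollary~\ref{lemma:inv_Qq}. In particular one must be careful that the $\Qq$ acting on the last $\HKH$-slot interacts with ${}^{\vee\!}\BFH$ through the induced-module action on $\BFH$ (so that \eqref{removeQq_t} applies with its dual leg), and that the $\QQ_{\HKH,\HKH}$'s between interior $\HKH$-slots only ever involve two neighbouring coends at a time so that \eqref{inv_QQ} is directly applicable; reordering the tensor factors to achieve adjacency may require a few applications of the braid relations \eqref{braid_rel} and the naturality \eqref{Qq_com}, which is routine but must be done carefully. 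Once the unfolding is pinned down, the rest is a direct chain of substitutions into the already-proved invariance lemmas.
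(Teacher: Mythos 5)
Your overall strategy coincides with the paper's for the reduction steps: the actual proof also starts from the compatibility of braiding and twist (\eqref{tens_braid}), applied with $U=\HKH^{\otii k}$ and $V={}^{\vee\!}\BFH$, uses $\theta_{\BFH}=\id$ (hence $\theta_{{}^{\vee\!}\BFH}=\id$) together with Lemma \ref{inv_twist} to dispose of the factor twists, and is left with the single block monodromy $c_{{}^{\vee\!}\BFH,\HKH^{\otii k}}\cir c_{\HKH^{\otii k},{}^{\vee\!}\BFH}$. The divergence --- and the gap --- lies in how that residual monodromy is removed. You propose to unfold it into the individual monodromies of each $\HKH$-factor with ${}^{\vee\!}\BFH$ and to kill each by substitution into the stated invariance results. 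But Lemma \ref{inv_Qq_S} and Corollary \ref{lemma:inv_Qq} only concern a partial monodromy emanating from the \emph{first} tensor slot; the monodromies of the interior factors $\HKH^{(i)}$, $i\geq 2$, with ${}^{\vee\!}\BFH$, which in your unfolding take the form $\id_{\HKH}^{\otii\, i-1}\oti\Qq_{\HKH,\HKH^{\otii k-i}\oti{}^{\vee\!}\BFH}$, are not covered by any of Lemmas \ref{inv_twist}, \ref{lemma:inv_QQ}, \ref{inv_Qq_S} or Corollary \ref{lemma:inv_Qq}, and they cannot be assembled from those results by naturality and braid relations alone. The paper avoids this entirely: it keeps $\HKH^{\otii k}$ as a single block and disposes of the one remaining block monodromy by a direct graphical manipulation, commuting it through all the $\BFH$-loops of $\SkH k02$. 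Your route could be repaired by proving slot-shifted analogues of Lemma \ref{inv_Qq_S} (the same computation works with the $Q$-matrices attached to the $i$-th pair of $\Hs$-lines), but as written your ``direct chain of substitutions'' does not close for $k\geq2$.

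Two smaller points. You conflate the partial twist $\TKH$ of the coend, defined by $\TL\cir\iota_U=\iota_U\cir(\theta_{U^\vee}\oti\id_U)$, with the categorical twist $\theta_{\HKH}$ of \eqref{bimod-twist}; the individual twists produced by the decomposition are the latter and are handled by Lemma \ref{inv_twist}, not Lemma \ref{inv_STetci} --- harmless since you cite both, but they are different morphisms. For the second equality, the discrepancy between the two right-hand pictures is not a braiding of the two output legs: it is the passage from the coevaluation-based realization of the second output (via $\tilde b_{\BFH}$ and the ${}^{\vee\!}\BFH$-leg) to the coproduct-based one, which is precisely the content of Lemma \ref{lemma:comFA_mon} for the commutative Frobenius algebra $\BFH$. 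Your tools (commutativity and the Frobenius property) are the right ones, but the identity to be verified is that of Lemma \ref{lemma:comFA_mon}, not invariance under swapping the outputs.
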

\proof
Using that $\BFH$ has trivial twist and Lemma \ref{inv_twist} , it follows from the compatibility between braiding and twist (see \eqref{tens_braid}) that
the twist morphism on the left hand side can be replaced by a monodromy $c_{^\vee\!\BFH,\HKH^{\otii k}}\cir c_{\HKH^{\otii k},^\vee\!\BFH}$ between $\HKH^{\otii k}$ and $^\vee\!\BFH$. The so-obtained monodromy can be commuted through all the $\BFH$-loops. This results in the first equality. The second equality is obtained by using that $\BFH$ is commutative Frobenius, by a calculation analogous tot the one in the proof of Lemma \ref{lemma:comFA_mon}.
\endofproof
\begin{lemma}\label{lemma:tjk_help}
For any integers $g,n\geq0$ we have
    \eqpic{t_act_Cor} {270} {175} {
  \put(210,177)     {$ = ~\HCorrgn $}
  \put(0,15) {\includepichtft{140bA}
  \put(49.3,339)   {\sse$ \overbrace{\hspace*{6em}}^{j~ \rm factors} $}
  \put(139.3,339)  {\sse$ \overbrace{\hspace*{6em}}^{n-j~ \rm factors} $}
  \put(46,64.7)    {\sse$\Qq_{\HKH,\HKH^{\otii k-1}\oti^\vee\!\BFH}$}
  \put(95,30)      {\sse\begin{turn}{40} $ \theta_{\!\HKH^{\otimes m-1}_{}
                   \otimes{}^{\vee\!}\!\BFH} $\end{turn}}
  \put(31,38)      {\sse$\TK$}
  \put(-4,-8.5)    {\sse$ \HKH $}
  \put(13,-8.5)    {\sse$\cdots$}
  \put(31,-8.5)    {\sse$ \HKH $}
  \put(51,-8.5)    {\sse$ \HKH $}
  \put(62,-8.5)    {\sse$\cdots$}
  \put(77,-8.5)    {\sse$ \HKH $}
  \put(48,329)     {\sse$ \BFH $}
  \put(66,329)     {\sse$ \BFH $}
  \put(78,329)     {\sse$\cdots$}
  \put(95,329)     {\sse$ \BFH $}
  \put(138,329)    {\sse$ \BFH $}
  \put(156,329)    {\sse$ \BFH $}
  \put(168,329)    {\sse$\cdots$}
  \put(185,329)    {\sse$ \BFH $}
  \put(215,307)    {\Hbimodpic}
  \put(-9.7,-12)   {\sse$ \underbrace{\hspace*{5em}}_{g-m+1~\rm factors~} $}
  \put(45,-12)     {\sse$ \underbrace{\hspace*{3.7em}}_{~~m-1~ \rm factors} $}
  } }
for all $m=1,2,...,g$.
\end{lemma}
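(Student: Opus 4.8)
\textbf{Proof strategy for Lemma \ref{lemma:tjk_help}.}
The plan is to prove the claim by induction on $m$, using the explicit linear-map expression \eqref{CorrgnH} for $\SkH g11$ together with the various simplification lemmas established earlier in this section. The base case $m=1$ is the statement that, when the partial monodromy $\Qq_{\HKH,\HKH^{\otii k-1}\oti^\vee\!\BFH}$ and the big twist $\theta_{\HKH^{\otii m-1}\oti^\vee\!\BFH}$ act on the "outermost" slot, they can be removed entirely. For $m=1$ there is no twist factor at all and the monodromy acts through the whole collection of $\HKH$-lines together with a single dualized $\BFH$-leg; this is essentially the content of Lemma \ref{lemma:twist_S_remove} combined with Corollary \ref{lemma:inv_Qq} (or Lemma \ref{inv_Qq_S}). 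Concretely, I would first use Lemma \ref{lemma:twist_S_remove} to push the twist/monodromy past all the $\BFH$-loops in \eqref{CorrgnH}, turning it into a monodromy acting only on the relevant $\HKH$-factors, and then invoke the invariance statements \eqref{inv_Qq_Sk} and \eqref{removeQq_t} to delete it.

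For the inductive step I would assume the statement holds for $m-1$ and show it for $m$. The difference between the $m$ and $m-1$ configurations is localized: moving from $m-1$ to $m$ shifts the "position" where the $\TK$-decorated line and the partial monodromy attach by one $\HKH$-factor, and correspondingly changes the range of the big twist by one tensor factor. The key is that the extra $\HKH$-line that gets absorbed can be handled using three ingredients already available: (i) the compatibility of the twist with the braiding \eqref{tens_braid}, which lets one split $\theta_{\HKH^{\otii m-1}\oti^\vee\!\BFH}$ into $\theta_{\HKH^{\otii m-2}\oti^\vee\!\BFH}$, a twist on the single extra factor, and a monodromy connecting them; (ii) Lemma \ref{inv_twist}, which removes twists on individual $\HKH$-factors when composed with $\SkH k11$ (hence also with $\HCorrgn$); and (iii) Lemma \ref{lemma:inv_QQ}, which removes partial monodromies $\QQ_{\HKH,\HKH}$ between adjacent $\HKH$-slots. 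Together these let one "peel off" one factor at a time, reducing the $m$-case to the $(m-1)$-case. One also needs the commutation relation \eqref{Qq_com} to slide the left partial monodromy $\Qq$ through morphisms acting on the target objects, and Lemma \ref{inv_STetci} (invariance under $\SKH$ and $\TKH$) to absorb the stray $\TK$.

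The main obstacle I anticipate is bookkeeping: the morphism in \eqref{t_act_Cor} is a fairly intricate graph with $g$ input legs, $n$ output legs, a $\TK$-insertion, a left partial monodromy reaching across $k-1$ factors and a dualized $\BFH$-leg, and a "big twist" whose range depends on $m$. Translating the induction into honest manipulations means carefully tracking which $\HKH$-line each operator attaches to after each move, and making sure that when I apply \eqref{tens_braid} to split the twist the resulting monodromy really is of the form handled by Lemma \ref{lemma:inv_QQ} or can be commuted into that form using \eqref{Qq_com} and the dinaturality of $\iHKH$. A secondary subtlety is the $\BFH$-loops: one must verify, as in the proof of Lemma \ref{lemma:twist_S_remove}, that monodromies commute through them using commutativity of $\BFH$ (cf.\ the argument behind Lemma \ref{lemma:comFA_mon}), so that the only genuine work happens on the $\HKH$-factors. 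Once the inductive peeling is set up cleanly, each individual step is a direct application of one of the cited lemmas, and at $m=1$ we land exactly on $\HCorrgn$ as given by \eqref{CorrgnC} and \eqref{CorrgnH}.
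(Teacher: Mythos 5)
You have assembled exactly the right toolbox, but the paper's proof is far more direct than your induction on $m$: it simply removes the three decorations one after the other. By Lemma \ref{inv_Qq_S} the partial monodromy $\Qq_{\HKH,\HKH^{\otii k-1}\oti{}^\vee\!\BFH}$ can be dropped outright; then Lemma \ref{inv_STetci} kills $\TKH$; then Lemma \ref{lemma:twist_S_remove} combined with the Frobenius property removes the big twist; finally duality, the Frobenius property and coassociativity of $\Delta_H$ identify what remains with $\HCorrgn$. No induction is needed, because each of these lemmas is already stated uniformly in the slot position or range (Lemma \ref{inv_STetci} for all $r+s=k-1$, Lemma \ref{inv_twist} for all $r+s+t=k$, Lemma \ref{lemma:twist_S_remove} for any $k>0$), so the claim holds for every $m=1,\dots,g$ by the same three applications.

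Your inductive step, as described, also has an unresolved point. Splitting $\theta_{\HKH^{\otii m-1}\oti{}^\vee\!\BFH}$ via \eqref{tens_braid} produces a monodromy between a single $\HKH$-factor and the mixed object $\HKH^{\otii m-2}\oti{}^\vee\!\BFH$. Lemma \ref{lemma:inv_QQ} only covers monodromies $\QQ_{\HKH,\HKH}$ between two adjacent $\HKH$-slots; the component of your split-off monodromy that wraps around ${}^\vee\!\BFH$ is not of that form, and disposing of it requires commuting it through the $\BFH$-loops using commutativity of $\BFH$ --- which is precisely the content of Lemma \ref{lemma:twist_S_remove} and Lemma \ref{inv_Qq_S}. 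You flag this subtlety but do not resolve it; once you invoke those lemmas to close it, the peeling argument collapses into the paper's direct proof, and the induction buys you nothing. I would recommend discarding the inductive scaffolding and writing the three removals in sequence, being careful at the end to record the duality/Frobenius/coassociativity step that converts the surviving graph (which still carries the dualized $\BFH$-leg and the $\tilde b_{\BFH}$) back into $\HCorrgn$.
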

\begin{proof}
We first note that by Lemma \ref{inv_Qq_S}, the endomorphism $\Qq_{\HKH,\HKH^{\otii k-1}\oti^\vee\!\BFH}$ can be omitted.
Having done that, it follows from Lemma \ref{inv_STetci} that $\TKH$ acts trivially. Finally, by Lemma \ref{lemma:twist_S_remove}, combined with the Frobenius property, the twist can be omitted as well. By duality, the Frobenius property and coassociativity of $\Delta_H$, the so-obtained morphism equals $\HCorrgn$.
\endofproof\end{proof}
\subject{Proof of Theorem \ref{thm: MPG_Hopf}}
We are now in a position to prove Theorem \ref{thm: MPG_Hopf}.

\begin{proof}
We first verify that the morphism $\HCorrgn$ in $H$\Bimod\ is invariant under the action $\piK$ for all generators of the presentation, given in section \ref{sec:Lyub_rep}, of \Mapgn\ for any $g,n\geq0$.
That is, we prove that the actions defined in \eqref{piLpre}, \eqref{piLpost} and \eqref{piLtjk} acts trivially on $\HCorrgn$:
\nxl1
(i) $\bk$, $\dk$ and $\sk$ for $k=1,...,g$: Invariance follows from the expressions \eqref{LyubactC_pre} for $\LAbk$, $\LAdk$  and $\LASk$ and straightforward application of Lemma \ref{inv_STetci}.\nxl1
(ii) $\wi$ for $i=1,...,n-1$: Invariance follows since $F$ is cocommutative.\nxl1
(iii) $\Ri$ for $i=1,,,n$: Invariance follows since $F$ has trivial twist.\nxl1
(iv) $\ak$, for $k=2,...,g$: It follows from Lemma \ref{inv_STetci} that
\be
    \HCorrgn\circ[\id_{\HKH^{\otii g-k}}\oti(\TKH\oti\TKH)\oti \id_{\HKH^{\otii k-2}}]=\HCorrgn\,.
\ee
Together with Lemma \ref{lemma:inv_QQ} this proves invariance.\nxl1
(v): $\ek$ for $k=2,...,g$: Invariance amounts to the equality
\be\label{inv_ek_Sk}
    \SkH k01\circ[(\TKH\oti\theta_{\HKH^{\otii k-1}})\circ\Qq_{\HKH,\HKH^{\otii k-1}}]=\SkH k01\,,
\ee
for $k=2,...,g$.
It follows from Lemma \ref{inv_STetci} and \ref{inv_twist} that
$\SkH k01\circ(\TKH\oti\theta_{\HK^{\otii k-1}})=\SkH k01$. Together with Corollary \ref{lemma:inv_Qq} this establishes \eqref{inv_ek_Sk}. \nxl1
(vi): $\tjk$, for $j\eq1,...,n-1$ and $k\eq1,...,g$:
First, recall from \eqref{Qq_com} that any $f\in\Hom(X,Y)$ can be commuted through the partial monodromy $\Qq_{K,Y}$.
Second, inserting the definition of \HCorrgn\ in the action  \eqref{t_act} of $\tjk$ and using first coassociativity and the Frobenius property of $\BFH$ and then \eqref{Qq_com} together with functionality of the twist it follows that the $n{-}j{-}1$-fold coproduct of $\BFH$ can be pushed through the partial monodromy $\Qq_{\HKH,\HKH^{\otii k-1}\oti(^\vee\!\BFH)^{\otii(n-j)}}$ and the twist. The so-obtained morphism coincides with the left hand side of \eqref{t_act_Cor}. Thus, invariance under $\tjk$ follows from Lemma \ref{lemma:tjk_help}.
\nxl1
This proves that the morphism $\HCorrgn$ in $H$\Bimod\ is invariant under the action $\piK$ of \Mapgn\ for any $g,n\geq0$. Finally, combining this result with the equivalences in section \ref{sec:equiv_cat} completes the proof of Theorem \ref{thm: MPG_Hopf}.
\endofproof\end{proof}
\begin{remark} (i)
When discussing factorization in CFT, one has to distinguish between \emph{ingoing} and \emph{outgoing} insertions.
This means that there is a partition of the holes of the into $p$ outgoing and $q$ ingoing holes for any integers $p$ and $q$ such that $p+q\eq n$.
The correlator has to be invariant under an action of the subgroup $\Mapio g{p,q}$ of the mapping class group $\Mapio g{p+q}$ that leaves each of these two subsets of ingoing and outgoing insertions separately invariant, c.f.\ Remark \ref{rem:repio}.
In that setting, a natural candidate for a correlator is the morphism $\Sk gqp\In\Hom_{H|H}(\HKH^{\otii g}\oti\BFH^{\otii q},\BFH^{\otii p})$, defined in \eqref{Sk_morph}. Combining the projective action $\PiLio pq$ of $\Mapio g{p,q}$, constructed in Remark \ref{rem:repio}, with the isomorphism $\Phi\In\Hom(\BFH^\vee,\BFH)$, that comes with the Frobenius structure of $\BFH$, we obtain a projective action $\PIKio pq$ of $\Mapio g{p,q}$ on $\Hom_{H|H}(\HKH^{\otii g}\oti\BFH^{\otii q},\BFH^{\otii p})$. It is straightforward to verify, using that $\BFH$ is a symmetric Frobenius algebra, that invariance under this action follows from Theorem \ref{thm: MPG_Hopf}.
\nxl2
(ii) As explained above, the morphism $\Sk 110$ is invariant under $\PIKione 10(\sk)$. In the semisimple case, this is equivalent to $\BFH$ being modular invariant in the sense of \cite[Definition 3.1 (ii)]{koRu2}.
\end{remark}
%%%%%%%%%%%%%%%%%%%%%%%%%%%
\subsection{Holomorphic factorization}
Recall that holomorphic factorization implies that the partition function of a rational CFT is a bilinear combination, with integer coefficients $Z_{i,j}$,  of characters of the vertex algebra \V\ of the theory, see c.f. \eqref{PF_mat}.
A similar result is established for $\HCorr 10$ in \cite{fuSs4}.

Consider $\Surf 10$, i.e.\ the torus without holes. The morphism $\HCorr 10$ serves as a candidate for a partition function.
Using that the \bulkAC\ \BFH\ is symmetric Frobenius it follows that $\HCorr 10$ is nothing but the $\HKH$-character, as defined in Definition \ref{def:char}, of \BFH.
Via the equivalences described in section \ref{sec:equiv_cat}, the \bulkAC\ \BFH\ can be considered as a module over $\HK$ in $H\otimes H\op$\Mod, with the action of $H\oti H\op$ obtained by translating the bimodule structure of \BFH\ via the equivalence in \eqref{HHbimiso}.

Denote by $\LH$ the categorical Hopf algebra \L, described in section \ref{coends},   in \HMod. Recall that even when  $H$\Mod\ is not semisimple, it still has a finite set number of isomorphism classes of simple objects. We choose a set $\{M_j|j\In\IJ\}$ of representatives of the isomorphism classes of simple objects.
Working in $H\oti H\op$\Mod, we establish the following Theorem \cite[Theorem 3]{fuSs4}
\begin{theorem}\label{thm_HF_Hopf}
The partition function $Z=\HCorr 10$ can be chirally decomposed as
  \be\label{HF_Hmod}
  Z = \sum_{i,j\in\IJ} c_{\overline i,j}^{}\, \chi^\LH_i \otimes \chii^\LH_j \,,
  \ee
where for each $i\in\IJ$,
$\chii^\LH_i$ is the character of the \LH-module given by the simple $H$-module $M_i$, with the \LH-action defined in \eqref{Lact},
$\overline i \,{\in}\, \IJ$ is the label of the isomorphism class of $M_i^\vee$,
and $\,C \,{=}\, \big(c_{i,j}\big){}_{i,j\in\IJ}^{}$ is the Cartan matrix of \HMod.
\end{theorem}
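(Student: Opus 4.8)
The plan is to compute the $\HKH$-character of the coregular bimodule $\BFH$ directly, translate it through the equivalence of section~\ref{sec:equiv_cat} into a module over $\LH$ in $(H\Oti H\op)\Mod$, and then decompose this module into simples to read off the coefficients. First I would recall from the remark after Proposition~\ref{prop-F=coend} that, because $\BFH$ is symmetric Frobenius, $\HCorr 10=\SkH 110\circ(\id_{\HKH}\oti\eta_{\BFH})$ equals the $\HKH$-character $\chi^{\HKH}_{\BFH}$ in the sense of Definition~\ref{def:char}. This is the entry point: the whole statement becomes an assertion about the character of a particular $\HKH$-module. Next I would use the braided monoidal equivalence $H\Bimod\simeq(\overline H\Otik H)\Mod$ from \eqref{equi_bim_tens_H}, under which $\HKH=\L_{\HBImod}$ corresponds to $\L_{(\overline H\Otik H)\Mod}$ and $\BFH$ corresponds to the coregular module equipped with the action obtained by the explicit functor \eqref{HHbimiso}; one then further uses the $R$-matrix equivalence to pass to $(H\Oti H\op)\Mod$ as stated in the theorem, so that the ambient Hopf algebra is the tensor square whose module category is (equivalent to) $\HMod\boxtimes\HMod$.

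The key computational step is then a Verlinde-type factorization of the coend. In $\C\boxtimes\cD$ with $\C=\cD=\HMod$, the coend $\L_{\C\boxtimes\cD}$ factorizes as $\L_\C\boxtimes\L_\cD$ (this uses \eqref{oti_Del}, \eqref{c_Deli} and the universal property of the Deligne product applied to the defining functor $\Lfun$ of the coend), and correspondingly the $\L_{\C\boxtimes\cD}$-action \eqref{Lact} on a $\boxtimes$-factorizable module $M\boxtimes N$ is the external product of the $\L_\C$-action on $M$ and the $\L_\cD$-action on $N$. Hence for a $\boxtimes$-factorizable module the character multiplies: $\chi^{\L_{\C\boxtimes\cD}}_{M\boxtimes N}=\chi^{\L_\C}_M\otimes\chi^{\L_\cD}_N$. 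Then I would identify $\BFH$, viewed as an object of $\HMod\boxtimes\HMod$, with $H_{\text{reg}}$ — or rather, since we are over the coregular bimodule $\Hs$, with the appropriate regular/coregular module of $H\Oti H\op$ — and decompose \emph{this} module into $\boxtimes$-factorizable simple constituents. By general finite-category theory (Definition~\ref{def_finite} and the Cartan matrix \eqref{Cartan_def}) the regular bimodule $H$ decomposes, in the Grothendieck group, so that the multiplicity of $M_i^\vee\boxtimes M_j$ is exactly the Cartan number $c_{\overline i,j}=[P_{\overline i}:M_j]$; one must be careful to match the left/right (equivalently, the dual on the first factor) bookkeeping, which is where the $\overline i$ versus $i$ and the transpose of the Cartan matrix enter. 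Finally, since the character is additive on short exact sequences (it is a morphism in a fixed Hom-space, linear in the module), summing over the Jordan--Hölder constituents gives \eqref{HF_Hmod}.

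The main obstacle I expect is the bookkeeping of the three intertwined twists: the dualization on the first tensor factor built into the functor $\Fbx$ in \eqref{Fbx_act}, the appearance of $\overline H$ (with $R_{21}^{-1}$ and $v^{-1}$) versus $H$ in the equivalence \eqref{equi_bim_tens}, and the $R$-matrix equivalence $H\Mod\simeq H\coop\Mod$ used to finally land in $(H\Oti H\op)\Mod$. Getting these right is what determines that the coefficient is $c_{\overline i,j}$ (with a bar on the first index) rather than $c_{i,j}$ or $c_{i,\overline j}$, and that the $\LH$-action on the chiral factor is precisely the one in \eqref{Lact} with no residual twist. A secondary, more routine point is justifying the passage from $\boxtimes$-factorizable modules to arbitrary ones: one invokes that every object of $\HMod\boxtimes\HMod$ admits a resolution by $\boxtimes$-factorizable injectives (noted after the definition of the Deligne product), together with additivity of the character, to conclude the multiplicative/additive formula holds for $\BFH$ even though $\BFH$ need not itself be $\boxtimes$-factorizable when $H$ is non-semisimple. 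None of the remaining steps — writing out the $\L_\C$-action on a simple $M_i$ and calling its character $\chii^{\LH}_i$, and verifying $M_i^\vee$ has label $\overline i$ — require more than unwinding definitions.
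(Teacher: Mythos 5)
Your overall strategy is sound and, as far as the thesis itself goes, it coincides with the route actually taken: the text identifies $Z=\HCorr 10$ with the $\HKH$-character of $\BFH$ (exactly your entry point), translates $\BFH$ through the equivalences of section \ref{sec:equiv_cat} into a module over the coend in $(H\Oti H\op)$\Mod, and then defers the remaining computation to the cited reference \cite{fuSs4}. The ingredients you supply for that remaining computation --- factorization of the coend and of its action over the Deligne product, additivity of the $\LH$-character on Jordan--H\"older constituents, and the fact that the composition multiplicities of the (co)regular bimodule are the Cartan numbers --- are the right ones and are consistent with that argument; there is no missing idea. Two of your worries can moreover be discharged concretely rather than abstractly: the identification of $\HKH$ with an external product of two copies of the chiral coend does not need a resolution by $\boxti$-factorizable objects, since $\HKH$ is computed explicitly in the text as $\Hs\otik\Hs$ with the two coadjoint actions \eqref{def_lads} on the two factors; and additivity of $\chii^\LH$ on exact sequences is unproblematic because the categorical trace in $\HMod$ is an ordinary vector-space trace composed with the pivot, hence additive on filtered objects.

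The one place where your write-up, taken literally, does not reproduce \eqref{HF_Hmod} is the index bookkeeping that you yourself flag as the main obstacle. If the multiplicity of $M_i^\vee\boxti M_j$ in $\BFH$ were $c_{\overline i,j}=[P_{\overline i}:M_j]$, then, since the character of $M_i^\vee\boxti M_j$ is $\chii^\LH_{\overline i}\oti\chii^\LH_j$, summing over constituents would give $Z=\sum_{i,j}c_{i,j}\,\chii^\LH_i\oti\chii^\LH_j$, i.e.\ the bar would cancel out. To land on the stated formula the multiplicity of $M_i^\vee\boxti M_j$ must be $c_{i,j}=[P_i:M_j]$; the bar on the first index of the Cartan matrix in \eqref{HF_Hmod} is produced only afterwards, by rewriting the first factor $M_i^\vee\cong M_{\overline i}$ so that the sum runs over characters of the $M_i$ themselves. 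This is a one-conjugation slip inside a step you explicitly defer to careful bookkeeping, not a structural gap, but it is worth recording, since which conjugations survive is precisely the content of the Cardy--Cartan statement.
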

In particular (recall the definition \eqref{Cartan_def} of the Cartan matrix) the coefficients in \eqref{HF_Hmod} are integers
and in the semisimple case $c_{i,j}=\delta_{i,j}$.
Furthermore, as shown in \cite[Lemma 6]{fuSs4}, the character $\chi^\LH_i$ is, as a linear map, given by
\be
    \chi^\LH_i=\chi^H_{M_i}\circ m\circ(f_Q\oti t),
\ee
where $t$ is the special group-like element, defined in \eqref{t_def}, and $f_Q$ the Drinfeld map. Recall that $f_Q$ and $t$ are invertible. Thus $\chi^\LH_i$, and a consequently also the partition function $Z$, is non-zero.
\section{\BulkA s from ribbon automorphisms}\label{sec:twist_bulk}

In this section we generalize the construction described in the previous section by \emph{twisting} $\BF$ by an automorphism of $H$.
\begin{definition}
(i) A \emph{Hopf algebra automorphism} of a Hopf algebra $H$ is a linear map $\ra$ from $H$ to $H$ that is both an algebra and a coalgebra automorphism and that commutes with the antipode. \\
(ii) A \emph{ribbon\index{ribbon!automorphism}} automorphism of a Hopf algebra $H$ is a Hopf algebra automorphism that preserves the ribbon element and the $R$-matrix of $H$:
\be\label{ribbon_Hopf}
    \ra\circ v=v \quand (\ra\oti\ra)\circ R=R\,.
\ee
\end{definition}

For any two Hopf algebra automorphisms $\ra$ and $\ra'$, the $\ra$-{\ra'}-twisted bimodule $\twistmod X{\ra}{\ra'}$ is obtained from $X\equiv(X,\rho_X,\ohr_X)$ by twisting the actions according to
\be
    \twistmod X{\ra}{\ra'}=(X,\rho_X\circ(\ra\oti\id_X),\ohr_X\circ(\id_X\oti\ra'))\,.
\ee
The twisting is compatible with the monoidal structure of $H$\Bimod, and if $\ra$ and $\ra'$ are in addition ribbon automorphisms the twisting is also compatible with the ribbon structure.
It also follows immediately that for any $H$-bimodule morphism $f\mapdef X\rightarrow Y$, $f$  also intertwines the $\ra$-$\ra'$-twisted actions.

\subsection{Twisting the \bulkAC}
Consider now the twisted \bulkAC\
\be\label{BFtwist}
    \BFw:=\twistmod \BFH{\,\id_H\!}{\,\ra}\,.
\ee
The $\ra$-twisted actions of \BFw\ differs from the untwisted ones only by pre-compo\-sing the right action by $\ra$. Since any $H$-bimodule morphism also intertwines the $\ra$-twisted right action, \BFw\ is a symmetric commutative Frobenius algebra in $H$\Bimod\ with the structure morphisms given by the same linear maps \eqref{pic-Hb-Frobalgebra} as $\BFH$, see \cite[Proposition 6.1]{fuSs3}. In addition, just like in the untwisted case, \BFw\ is special iff $H$ is semisimple. It is also worth mentioning that for any two ribbon automorphisms $\ra$ and $\ra'$, there is an isomorphism
\be
    \twistmod \BFH{\,\ra'}{\,\ra}~\cong~\twistmod \BFH{\,\id_H\,}{\,\ra'^{-1}\cir\ra}~=~\BFww\,,
\ee
furnished by $\ra'^*$.
Thus \eqref{BFtwist} is the most general twisting of $\BFw$.

\ssubject{\BFw\ as a coend} Consider the functor
\be
\Fbxo \mapdef \HMod\op \Times \HMod \To \HBimod
\ee
acting on morphisms as $f\Times g\mapsto f^\vee\otik g$, and on objects as
\be
   \big( X^\vee\otik Y\,,\, [\rho_{X^\vee_{\phantom:}}^{}
   \cir (\omega^{-1}\oti\id_{X^*})] \oti\id_Y\,,\,
   \id_{X^*} \oti(\rho_Y\cir \tau_{Y,H}^{} \cir (\id_Y\oti\apo^{-1})) \big)\,.
\ee
That is, $\Fbxo$ acts on objects by composing the action of the functor $\Fbx$, defined in \eqref{Fbx_act}, with a twisting by $\ra^{-1}$ of the left action.
It is straightforward to check \cite[Proposition 6.2]{fuSs3} that:
\begin{prop}
The $H$-bimodule \BFw\ together with the dinatural family of morphisms
   \be
   \iHbo_X := (\omega^{-1})^* \circ \iHb_X \,,
   \ee
with $\iHb_X$ as defined in \erf{def_iHb}, is the coend of the functor $\Fbxo$.
\end{prop}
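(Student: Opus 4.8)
The plan is to exploit the fact that twisting only affects the \emph{left} action of $\BFH$ via $\ra^{-1}$, and that this twist is "absorbed" into the dinatural family by composing with $(\omega^{-1})^*$. Concretely, I would first observe that since $\ra$ is a Hopf (indeed ribbon) algebra automorphism, any $H$-bimodule morphism $f\mapdef X\To Y$ also intertwines the $\ra$-twisted actions; in particular $(\omega^{-1})^*$, being the dual of a bimodule isomorphism, is itself a bimodule isomorphism between the appropriate twisted bimodules, so that the claimed $\iHbo_X = (\omega^{-1})^* \circ \iHb_X$ indeed lands in $\Hom$ of the correct twisted bimodule $\BFw$. This reduces the problem to: (a) checking that $\iHbo$ is dinatural for $\Fbxo$, and (b) verifying the universal property.

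For dinaturality, I would start from the known dinaturality of $\iHb$ for $\Fbx$ (Proposition~\ref{prop-F=coend}, i.e.\ eq.~\eqref{def_dina} with $F=\Fbx$), which says $\iHb_X \circ \Fbx(\id_X,f) = \iHb_Y \circ \Fbx(f,\id_Y)$ for any $f\In\Hom_{\HMod}(X,Y)$. Since $\Fbxo$ differs from $\Fbx$ only by post-composing the left-action with the twist by $\ra^{-1}$, i.e.\ $\Fbxo(f,g)$ and $\Fbx(f,g)$ have the \emph{same} underlying linear map $f^\vee\otik g$, the dinaturality hexagon for $\Fbxo$ has the same underlying linear-map content as that for $\Fbx$; post-composing the whole diagram with the fixed isomorphism $(\omega^{-1})^*$ preserves commutativity. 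So $\iHbo$ is dinatural essentially for free once one is careful that $\Fbxo$ and $\Fbx$ agree on morphisms and differ only on objects by an invertible twist.

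For the universal property, I would mimic the proof of Proposition~\ref{prop-F=coend} verbatim. Given any dinatural transformation $j^Z$ from $\Fbxo$ to $Z\In\HBimod$, the composite $\tilde\jmath^Z_X := j^Z_X \circ ((\omega^{-1})^{-1})^* = j^Z_X \circ (\omega^*)^{-1}$ would be a dinatural transformation from $\Fbx$ to $Z$ (again because $(\omega^{-1})^*$ is an iso intertwining everything), so by Proposition~\ref{prop-F=coend} there is a unique bimodule morphism $\kappa^Z\mapdef \BFH\To Z$ with $\tilde\jmath^Z_X = \kappa^Z\circ\iHb_X$; then $\kappa^Z\circ(\omega^{-1})^{*\,-1}$, viewed as a morphism $\BFw\To Z$ (here one uses that $\omega^*$ is a bimodule isomorphism $\BFw\To\BFH$ in the twisted sense), is the required unique factorization through $\iHbo$. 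Uniqueness is inherited from the uniqueness in Proposition~\ref{prop-F=coend} since $(\omega^{-1})^*$ is invertible.

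The main obstacle is bookkeeping the twisted bimodule structures carefully: one must track on which factor and by which automorphism each action is twisted at each stage (the functor $\Fbxo$ twists the \emph{left} action by $\ra^{-1}$, while the target algebra $\BFw$ has its \emph{right} action twisted by $\ra$), and confirm that $\omega$ — the isomorphism $\omega^{-1}$ referred to here is the one intertwining the relevant (co)regular and twisted-coregular actions, cf.\ the Frobenius structure of $\BFH$ — really is a bimodule isomorphism with respect to these twisted structures, using \eqref{ribbon_Hopf} where needed. Once that compatibility is pinned down, everything else is a routine transport of Proposition~\ref{prop-F=coend} along an isomorphism, and no new graphical calculation beyond those already in \cite{fuSs3} is needed.
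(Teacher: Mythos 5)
Your overall strategy is the right one, and it is essentially what the thesis intends: the text gives no proof of this proposition (it only asserts that the claim is straightforward to check, citing \cite[Proposition 6.2]{fuSs3}), but it does record the key fact your argument rests on, namely the isomorphism $\twistmod \BFH{\,\ra'}{\,\ra}\cong\twistmod \BFH{\,\id_H\,}{\,\ra'^{-1}\cir\ra}$ furnished by $\ra'^*$; specializing to $\ra'=\ra^{-1}$ this is precisely your observation that $(\omega^{-1})^*$ is a bimodule isomorphism from $\twistmod \BFH{\,\ra^{-1}}{\,\id_H}$ onto $\BFw$. Since $\Fbxo$ agrees with $\Fbx$ on morphisms and on objects equals $\twistmod{(\Fbx(X,Y))}{\,\ra^{-1}}{\,\id_H}$, and since twisting by a fixed pair of automorphisms is an isomorphism of $\HBimod$ onto itself (hence preserves coends), the coend of $\Fbxo$ is $\twistmod \BFH{\,\ra^{-1}}{\,\id_H}$ with the \emph{same} linear maps $\iHb_X$ as dinatural family, and transporting along $(\omega^{-1})^*$ gives the claim. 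Your dinaturality argument --- post-compose the dinaturality square of $\iHb$ with the fixed map $(\omega^{-1})^*$ on $\Hs$ --- is correct as written.

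The one step that fails as written is in the universal property: the formula $\tilde\jmath^Z_X := j^Z_X \cir (\omega^*)^{-1}$ does not typecheck, because $(\omega^{-1})^*$ and its inverse are maps on $\Hs$, whereas the domain of $j^Z_X$ is the vector space $X^\vee\otik X$ underlying $\Fbxo(X,X)$; there is no map ``$(\omega^{-1})^*$'' from $\Fbx(X,X)$ to $\Fbxo(X,X)$ to precompose with (these two bimodules share the same underlying space, and the identity is not a morphism between them). The correct move is to \emph{reinterpret} rather than precompose: a linear map is a bimodule morphism $A\To B$ iff it is one $\twistmod A{\,\alpha}{\,\beta}\To\twistmod B{\,\alpha}{\,\beta}$, so the same linear maps $j^Z_X$ constitute a dinatural family from $\Fbx$ to $\twistmod Z{\,\ra}{\,\id_H}$ (not to $Z$). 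Proposition \ref{prop-F=coend} then yields a unique $\kappa^Z\In\Hom_{H|H}(\BFH,\twistmod Z{\,\ra}{\,\id_H})$ with $j^Z_X=\kappa^Z\cir\iHb_X$; reinterpreting once more, $\kappa^Z$ is a morphism $\twistmod \BFH{\,\ra^{-1}}{\,\id_H}\To Z$, and $\kappa^Z\cir\omega^*\In\Hom_{H|H}(\BFw,Z)$ is the required unique factorization --- which is in fact the final formula you wrote down. So the repair is routine, but the intermediate step as stated is not meaningful; relatedly, $\omega^*$ is a bimodule isomorphism $\BFw\To\twistmod \BFH{\,\ra^{-1}}{\,\id_H}$, not $\BFw\To\BFH$ as your parenthetical asserts.
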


In order to proceed we need to know how a ribbon automorphism acts on the cointegral. To this end, we first note:
\begin{lemma}
    For any factorizable ribbon Hopf algebra $H$, the following equality holds:
    \be\label{v-v-inv}
     f_{Q^{-1}} \big( \lambda \cir m \cir (v \oti \id_H) \big)
  = (\lambda \cir v)\, v^{-1}\,.
    \ee
\end{lemma}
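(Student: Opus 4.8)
The plan is to prove \eqref{v-v-inv} by a direct graphical computation, applying the Drinfeld map $f_{Q^{-1}}$ to the linear functional $\lambda \cir m \cir (v \oti \id_H)$ and manipulating the resulting diagram with the standard tools available in the excerpt. First I would write $f_{Q^{-1}}(\lambda \cir m \cir (v \oti \id_H))$ explicitly using the definition \eqref{Drin_def} of the Drinfeld map (with $Q$ replaced by $Q^{-1}$); this produces a picture in which the cointegral $\lambda$ is composed with a product that absorbs the central ribbon element $v$ on one of the two legs entering the monodromy matrix. Since $v$ is central, I would first pull it out of the product and, using the splitting $Q^{-1} = R^{-1}\cdot R_{21}^{-1}$ of the monodromy matrix together with the relations \eqref{Rmat2} and \eqref{R_YB}, rearrange the $R$-matrix pieces into a more tractable form.

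The key algebraic input should be the behaviour of the cointegral under multiplication, namely \eqref{OS2} (and its consequence \eqref{OS2_mod}), which lets one push $\apo^2$ through $\lambda \cir m$ at the cost of a flip; combined with the fact that $H$ is unimodular so that $\apo \cir \Lambda = \Lambda$ and the normalizations $\lambda \cir \Lambda = 1$, $f_Q(\lambda) = \Lambda$, $f_{Q^{-1}}(\lambda) = \Lambda$ are available. The strategy is: (a) use centrality of $v$ and $\Delta \cir v = (v\oti v)\cdot Q^{-1}$ from \eqref{prop_v} to trade the single insertion of $v$ on one leg for a coproduct of $v$ together with a compensating $Q^{-1}$ — this is precisely the identity that makes a $v$ sitting at a product vertex convert into the "ribbon twist" data recognized by $f_{Q^{-1}}$; (b) recognize that after this manoeuvre the diagram is $f_{Q^{-1}}(\lambda)$ pre- or post-composed with multiplication by $v$ and $v^{-1}$, using $\eps \cir v = 1$ and $\apo \cir v = v$; (c) read off $f_{Q^{-1}}(\lambda) = \Lambda$ and the scalar $\lambda \cir v$ and $v^{-1}$ factors claimed on the right-hand side. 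Throughout I would keep pictures in $\Vectk$, as is done elsewhere in this chapter.

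The main obstacle I anticipate is bookkeeping the interplay between the two $R$-matrix halves of $Q^{-1}$ and the placement of the antipodes coming from the Drinfeld map: one must be careful about which antipode ($\apo$ or $\apo^{-1}$) acts where, and the identity \eqref{OS2} relating $\lambda\cir m$ to $\lambda\cir\flip HH\cir(\id_H\oti\apo^2)$ will have to be used in exactly the right orientation to cancel the squared antipodes that \eqref{prop_v} and \eqref{Rmat2} generate. A secondary subtlety is that $v$ is group-like only up to the monodromy correction $\Delta\cir v = (v\otimes v)\cdot Q^{-1}$, so the "natural" move of sliding $v$ through a coproduct introduces exactly one extra $Q^{-1}$; tracking that this extra factor is the one that gets absorbed into the Drinfeld map (rather than a spurious leftover) is the crux of the argument. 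Once the antipodes and the extra $Q^{-1}$ are matched correctly, the remaining steps — extracting the scalars $\lambda\cir v$ and $v^{-1}$ and identifying $f_{Q^{-1}}(\lambda)$ with $\Lambda$ via \eqref{norm_int_2} and its $Q^{-1}$-analogue — are routine.
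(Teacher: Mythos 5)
Your step (a) is the right opening move, but the way you plan to close the argument contains a genuine gap. Once you use centrality of $v$ and the relation $\Delta\cir v=(v\oti v)\cdot Q^{-1}$ from \eqref{prop_v} to rewrite $(v\oti 1)\cdot Q^{-1}$ as $(1\oti v^{-1})\cdot(\Delta\cir v)$, the monodromy matrix has disappeared entirely: writing $Q^{-1}=\sum q^{(1)}\oti q^{(2)}$ and $\Delta\cir v=\sum v_{(1)}\oti v_{(2)}$, the left-hand side of \eqref{v-v-inv} is $\sum\lambda(v\,q^{(1)})\,q^{(2)}=v^{-1}\cdot\sum\lambda(v_{(1)})\,v_{(2)}$. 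There is therefore no residual copy of $f_{Q^{-1}}$ left in the diagram, so your step (b) --- recognizing the picture as ``$f_{Q^{-1}}(\lambda)$ pre- or post-composed with multiplication by $v$ and $v^{-1}$'' --- and the appeal to $f_{Q^{-1}}(\lambda)=\Lambda$ in step (c) cannot be carried out; they would in any case produce a multiple of the integral $\Lambda$, whereas the right-hand side of \eqref{v-v-inv} is a multiple of $v^{-1}$. The ingredient you are missing is the defining property \eqref{prop_lambda} of the right cointegral, $(\lambda\oti\id_H)\cir\Delta=\eta\cir\lambda$, which applied to $\Delta\cir v$ collapses $\sum\lambda(v_{(1)})\,v_{(2)}$ to $(\lambda\cir v)\,\eta$ and yields \eqref{v-v-inv} at once. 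This two-step computation --- \eqref{prop_v} followed by \eqref{prop_lambda} --- is exactly the paper's proof.

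Relatedly, almost all of the machinery you flag as the ``key algebraic input'' and the ``main obstacle'' is not needed: the splitting $Q^{-1}=R^{-1}\cdot R_{21}^{-1}$, the relations \eqref{Rmat2} and \eqref{R_YB}, the identity \eqref{OS2} for $\lambda\cir m$, unimodularity, and the normalizations \eqref{norm_int_1} and \eqref{norm_int_2} play no role here. In particular there is no antipode bookkeeping to do, since the Drinfeld map as defined in \eqref{Drin_def} involves no antipode, and the only coproduct that appears is handled in one stroke by \eqref{prop_v}.
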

\begin{proof}
    The equality is obtained by using the compatibility \eqref{prop_v} between the ribbon element and the monodromy matrix, and then the defining property \eqref{prop_lambda} of the cointegral.
\endofproof\end{proof}
As a consequence we have:
\begin{lemma}\label{lambda_pres}
    Any ribbon automorphism $\ra$ of a finite-dimensional factorizable ribbon Hopf algebra $H$ preserves the integral and cointegral:
    \be\label{ra_pres_lambda}
    \lambda \cir \ra = \lambda  \quad\qquand   \ra \cir \Lambda = \Lambda \,.
    \ee
\end{lemma}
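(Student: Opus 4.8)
The plan is to prove Lemma~\ref{lambda_pres} in two steps, first establishing $\lambda\cir\ra=\lambda$ and then deducing $\ra\cir\Lambda=\Lambda$ from it using the normalization conventions \eqref{norm_int_1} and \eqref{norm_int_2}.

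For the first equality, I would exploit uniqueness of the right cointegral up to a scalar. Since $\ra$ is a coalgebra automorphism, $\lambda\cir\ra$ is again a right cointegral: indeed, applying $\ra$ being a coalgebra morphism to the defining property \eqref{prop_lambda} of $\lambda$, one checks that $(\lambda\cir\ra\oti\id_H)\cir\Delta=\eta\cir(\lambda\cir\ra)$, so $\lambda\cir\ra=c\,\lambda$ for some scalar $c\In\k$. To pin down $c=1$, I would use the key identity \eqref{v-v-inv} together with the hypothesis that $\ra$ is a \emph{ribbon} automorphism, i.e. \eqref{ribbon_Hopf}. Apply $\ra$ to both sides of \eqref{v-v-inv}. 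On the left-hand side, $f_{Q^{-1}}$ is built from the monodromy matrix $Q$ (via the Drinfeld map, see \eqref{Drin_def}), which is preserved by $\ra$ since $R$ is, so $\ra\cir f_{Q^{-1}}=f_{Q^{-1}}\cir(\text{twisting by }\ra)$; moreover $\ra\cir m\cir(v\oti\id_H)=m\cir(v\oti\id_H)\cir(\id_H)$ modulo $\ra$-twisting since $\ra$ preserves $v$ and is an algebra morphism. Tracking the $\ra$'s through, the left-hand side becomes $f_{Q^{-1}}\big((\lambda\cir\ra)\cir m\cir(v\oti\id_H)\big)=c\,f_{Q^{-1}}\big(\lambda\cir m\cir(v\oti\id_H)\big)=c\,(\lambda\cir v)\,v^{-1}$. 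On the right-hand side, $\ra\big((\lambda\cir v)\,v^{-1}\big)=(\lambda\cir v)\,\ra(v^{-1})=(\lambda\cir v)\,v^{-1}$, again using $\ra\cir v=v$. Comparing, $c\,(\lambda\cir v)\,v^{-1}=(\lambda\cir v)\,v^{-1}$. Since $\lambda\cir v\neq 0$ — which holds because $\lambda\cir\Lambda$ is invertible (so $\lambda\neq0$) and $v$ is invertible, with a little care this forces $\lambda\cir v\In\k^\times$, or alternatively one argues directly that the displayed element is nonzero — we conclude $c=1$, hence $\lambda\cir\ra=\lambda$.

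For the second equality, recall from \eqref{norm_int_2} that $\Lambda=f_Q(\lambda)$, and from \eqref{drin_int} that $f_Q$ intertwines coadjoint and adjoint actions and, more to the point here, is built entirely from $Q$, which $\ra$ preserves. Thus $\ra\cir f_Q=f_Q\cir(\ra$-twist on the source$)$ as linear maps; applying both sides to $\lambda$ and using $\lambda\cir\ra=\lambda$ (equivalently, that $\lambda$ is $\ra$-invariant as an element of $H^*$), one gets $\ra(\Lambda)=\ra(f_Q(\lambda))=f_Q(\lambda\cir\ra)=f_Q(\lambda)=\Lambda$. Alternatively, since $H$ is unimodular and $\Lambda$ is the unique two-sided integral up to scalar, $\ra\cir\Lambda=c'\,\Lambda$ for some $c'$, and composing with $\lambda$ gives $c'=\lambda\cir(\ra\cir\Lambda)/(\lambda\cir\Lambda)=(\lambda\cir\Lambda)/(\lambda\cir\Lambda)=1$, using that $\ra$ preserves $\lambda$ and hence $\lambda\cir\ra\cir\Lambda=\lambda\cir\Lambda$.

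I expect the main obstacle to be the bookkeeping in the first step: carefully tracking how the automorphism $\ra$ commutes past the Drinfeld map $f_{Q^{-1}}$ and the multiplication-by-$v$ operation in \eqref{v-v-inv}, making sure that the $\ra$-twist introduced on intermediate tensor factors is harmless because $\ra$ fixes $Q$ and $v$. A secondary subtlety is justifying $\lambda\cir v\neq 0$ (needed to cancel the scalar); this can be circumvented by instead evaluating both sides of the $\ra$-twisted \eqref{v-v-inv} against a counit or by using the alternative uniqueness-plus-normalization argument sketched for the second equality, which only needs $\lambda\cir\Lambda\neq 0$, a fact guaranteed by \cite{laSw}. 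Once $\lambda\cir\ra=\lambda$ is in hand, the integral statement is essentially immediate from the normalization \eqref{norm_int_2} and invertibility of $f_Q$.
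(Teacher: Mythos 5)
Your route is essentially the paper's: both arguments pivot on the identity \eqref{v-v-inv}, on the fact that a ribbon automorphism preserves $Q^{\pm1}$ and $v$ so that $\ra$ can be pushed through the Drinfeld map and the multiplication by $v$, and (for the integral statement) on uniqueness/normalization via $f_Q$ or $\lambda\cir\Lambda\neq0$. The bookkeeping you worry about is exactly what the paper does, and your second step is fine.

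There is, however, one step whose justification does not hold up. Having written $\lambda\cir\ra=c\,\lambda$ you reduce to $c\,(\lambda\cir v)\,v^{-1}=(\lambda\cir v)\,v^{-1}$ and need $\lambda\cir v\neq0$; but your stated reason --- that $\lambda\neq0$ and $v$ is invertible --- is not a proof (a nonzero functional can vanish on an invertible element), and neither fallback closes the gap: evaluating against the counit just returns $c\,\lambda(v)=\lambda(v)$, and the ``uniqueness-plus-normalization'' argument needs $\lambda\cir\ra=\lambda$ as input, so it is circular for the first equality. The clean fix is the paper's: do not factor out the scalar at all. Twisting \eqref{v-v-inv} by $\ra^{\mp1}$ yields
\be
 f_{Q^{-1}} \big( (\lambda\cir\ra) \cir m \cir (v \oti \id_H) \big)
 = f_{Q^{-1}} \big( \lambda \cir m \cir (v \oti \id_H) \big)\,,
\ee
and since $H$ is factorizable, $f_{Q^{-1}}$ is injective, while precomposition with the bijection $m\cir(v\oti\id_H)$ is injective on $H^*$; hence $\lambda\cir\ra=\lambda$ directly, with no appeal to $\lambda(v)\neq0$. (If you prefer to keep your version, note that $\lambda\cir v\neq0$ actually follows from \eqref{v-v-inv} itself by the same injectivity argument, since the left-hand side is then a nonzero element of $H$; but as written your justification is a gap.)
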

\begin{proof}
    Compose the equality \eqref{v-v-inv} with $\ra^{-1}$. Using that $\ra$ preserves $v$, the right hand side remains unchanged. On the left hand side we can use that $\ra$ also preserves $Q^{-1}$ to obtain an expression that differs from the left hand side of \eqref{v-v-inv} only in that $\lambda$ is replaced by $\lambda\cir\ra$. Thus:
    \be
        f_{Q^{-1}} \big( (\lambda\cir\ra) \cir m \cir (v \oti \id_H) \big)=f_{Q^{-1}} \big( \lambda \cir m \cir (v \oti \id_H) \big)\,.
    \ee
    Since $f_{Q^{-1}}$ and $v$ are invertible, this implies the first equality in \eqref{ra_pres_lambda}.\\
    The second equality in \eqref{ra_pres_lambda} is obtained from the first by using that $\omega\cir\Lambda$ is a non-zero integral (and thus proportional to $\Lambda$) and that $\lambda\cir\Lambda\In\k$ is non-zero.
\endofproof\end{proof}
\subsection{Mapping class group invariants from \BFw}
By replacing \BFH\ by \BFw\ in the prescription \eqref{Sk_morph} we define a morphism
\be
    \Skw gpq\linebreak\In\Hom(\HKH^{\otii g}\oti\BFw^{\otii p},\BFw^{\otii q})\,,
\ee
for any ribbon automorphism $\ra$. By composing with the unit of \BFw\ we obtain a morphism
\be
    \Corrgnw:=\Skw g1n\circ(\id_{\HKH^{\otii g}}\oti\eta_{\BFw})\In\Hom(\HKH^{\otii g},\BFw^{\otii n})
\ee
 for any $g,n\geq0$.

In order to proceed we need to express $\Skw g11$ as a linear map. To this end, first note that, since $\ra$ commutes with the antipode and $(\ra\oti\ra)\cir Q=Q$, the automorphism $\ra$ can be pushed through the $Q$-matrix:
\be\label{w_com_Q}
    (\id_H\oti(\ra\circ\apoi))\circ Q=(\ra^{-1}\oti\apoi)\circ Q\,.
\ee
Inserting the $\ra$-twisted right action in the first expression in \eqref{PF_1} and using \eqref{w_com_Q}, it follows that $\Skw 101$ is related to the untwisted version by
\be
    \Skw 101=\SkH 101\circ(\id_{\Hs}\oti{(\ra^{-1})}^*)\,.
\ee
As a consequence we have the following twisted version of \eqref{CorrgnH}
\Eqpic{CorrgnHw} {270} {110} {\setulen90
    \put(0,120)  {$\Skw g11=~$}
      \put(60,0) {\INcludepichtft{131c_1}{342}
  \put(-5,-8.5) {\sse$ \Hs $}
  \put(8.2,-8.5){\sse$ \Hs $}
  \put(44.5,-8.5){\sse$ \Hs $}
  \put(56.5,-8.5){\sse$ \Hs $}
  \put(76.5,-8.5){\sse$ \dots\dots$}
  \put(111.5,-8.5){\sse$ \Hs $}
  \put(123.5,-8.5){\sse$ \Hs $}
  \put(154.5,-8.5){\sse$ \Hs $}
  \put(234,260){\sse$ \Hs $}
  \put(36,180.5)  {\sse$ \Lambda$}
  \put(70,130.5)  {\sse$ \Lambda$}
  \put(137,69.5)  {\sse$ \Lambda$}
  \put(19.5,208.5)  {\sse$ \ra^{\!\!-1}$}
  \put(68.7,158.5)  {\sse$ \ra^{\!\!-1}$}
  \put(135.3,97.5)  {\sse$ \ra^{\!\!-1}$}
  \put(43,220)     {\sse$ \rad $}
  \put(91.4,169.4) {\sse$ \rad $}
  \put(158.2,108.5){\sse$ \rad $}
  }
  }
In particular it follows that
\be\label{Corr_Corrw}
    \Corrgnw=\Corrgn\circ(\id_{\Hs}\oti{(\ra^{-1})}^*)^{\otii g}\,.
\ee

We have the following generalization of Theorem \ref{thm: MPG_Hopf}:
\begin{thm}
    For any ribbon automorphism $\ra$, the morphism \Corrgnw\ is invariant under the action, described in Proposition \ref{Lyubact_prop}, of the mapping class group \Mapgn\ on $\Hom(\HKH^{\otii g},\BFw^{\otii n})$.
\end{thm}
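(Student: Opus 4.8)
The plan is to reduce the statement to the already-proven Theorem~\ref{thm: MPG_Hopf} by exploiting the precise relationship between $\Corrgnw$ and $\Corrgn$ recorded in \eqref{Corr_Corrw}, namely that the twisted morphism is obtained from the untwisted one by precomposing each of the $g$ tensor factors of $\HKH$ with $(\id_{\Hs}\oti(\ra^{-1})^*)$. The crux is therefore to show that the linear map $\varphi_\ra:=\id_{\Hs}\oti(\ra^{-1})^*\mapdef\HKH\to\HKH$ is not merely a linear isomorphism but an isomorphism of Hopf algebras in $H$\Bimod\ that, in addition, is compatible with all the extra structure entering the mapping class group action of Proposition~\ref{Lyubact_prop}: the integral $\Lambda\ASK$, the partial monodromies $\QQ_{\HKH,\HKH}$ and $\Qq_{\HKH,Y}$, the twist $\TKH$, and the morphism $\SKH$. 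Once this is established, the $\Mapgn$-action on $\Hom(\HKH^{\otii g},\BFw^{\otii n})$ is intertwined by $f\mapsto f\circ\varphi_\ra^{\otii g}$ with the $\Mapgn$-action on a suitable morphism space, and invariance of $\Corrgnw$ follows from invariance of $\Corrgn$.

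The key steps, in order, are as follows. First I would verify that $\varphi_\ra$ is a morphism in $H$\Bimod\ from $\HKH$ to itself: this uses that $\ra$ is a Hopf algebra automorphism commuting with the antipode, together with the explicit coadjoint actions \eqref{def_lads} defining the bimodule structure on $\HKH$, and the identity $\ra\circ v=v$, $(\ra\oti\ra)\circ R=R$ to handle any braidings that occur. Second, I would check that $\varphi_\ra$ respects the Hopf algebra structure morphisms \eqref{eem_HKH}, \eqref{apo_Haa}, \eqref{mHaa1,2}: since all of these are built from $m$, $\Delta$, $\eps$, $\eta$, $\apo$ and the $R$-matrix of $H$, and $\ra$ commutes with all of these, this is a direct (if slightly tedious) graphical computation. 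Third, and most important for the present purpose, I would show that $\varphi_\ra$ commutes appropriately with the ingredients of the mapping class group action: that $\varphi_\ra^{\otii 2}\circ\QQ_{\HKH,\HKH}=\QQ_{\HKH,\HKH}\circ\varphi_\ra^{\otii 2}$ (from $(\ra\oti\ra)\circ Q=Q$, cf.\ \eqref{w_com_Q}), that $\varphi_\ra\circ\TKH=\TKH\circ\varphi_\ra$ (from $\ra\circ v=v$), that $\varphi_\ra(\Lambda\ASK)=\Lambda\ASK$ and $\varphi_\ra^{-1}(\lambda\ASK)=\lambda\ASK$ (this is exactly where Lemma~\ref{lambda_pres}, $\lambda\circ\ra=\lambda$ and $\ra\circ\Lambda=\Lambda$, is needed, since $\Lambda\ASK=\lambda^\vee\oti\lambda^\vee$ and $\lambda\ASK=\Lambda^\vee\oti\Lambda^\vee$), and consequently $\varphi_\ra\circ\SKH=\SKH\circ\varphi_\ra$ from the definition \eqref{def_SK}. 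It also intertwines the action \eqref{Lyubact_HKH} of $\HKH$ on bimodules with a correspondingly twisted action, but for the holes we only need the $\Lyubspace$-side maps $\LAwi,\LARi$, which involve only $\BFw$ and not $\HKH$, hence are unaffected. Finally I would assemble these facts: the maps $\LAg$ for the generators $\gamma$ of $\Mapgn$ all commute with $\varphi_\ra^{\otii g}$ (on the $\HKH^{\otii g}$ side) up to replacing $\BFH$ by $\BFw$, so $[\piK(\gamma)](\Corrgnw)=[\piK(\gamma)](\Corrgn)\circ\varphi_\ra^{\otii g}=\Corrgn\circ\varphi_\ra^{\otii g}=\Corrgnw$, where the middle equality is Theorem~\ref{thm: MPG_Hopf}.

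Alternatively, and perhaps more cleanly, I would argue directly: since $\BFw$ and $\BFH$ agree as objects, as Frobenius algebras (same structure morphisms \eqref{pic-Hb-Frobalgebra}) and share the symmetric, commutative, cocommutative, trivial-twist properties, and since the twisting only modifies the right $H$-action via $\ra$, every step in Section~\ref{sec:inv_proof}---Lemmas~\ref{Torus_inv} through~\ref{lemma:tjk_help}---goes through verbatim for $\Skw g11$ provided one additionally uses $\ra\circ\Lambda=\Lambda$ (Lemma~\ref{lambda_pres}) whenever an integral $\Lambda$ meets the $\ra$-twisted action, and $\ra\circ v=v$, $(\ra\oti\ra)\circ Q=Q$ whenever the ribbon element or monodromy matrix is pushed past a twisted action (as in \eqref{w_com_Q}). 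The expression \eqref{CorrgnHw} already exhibits $\Skw g11$ in the same normal form as \eqref{CorrgnH} with extra insertions of $\ra^{-1}$ on the left legs, which are exactly absorbed by these identities.

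The main obstacle I anticipate is bookkeeping rather than conceptual: one must track carefully on which $H$-lines the automorphism $\ra$ (or $\ra^{-1}$) sits after each graphical manipulation, and confirm that at every point where the proof of Theorem~\ref{thm: MPG_Hopf} invokes a property of $\Lambda$, $\lambda$, $v$ or $Q$ the relevant $\ra$-equivariance (Lemma~\ref{lambda_pres} and \eqref{ribbon_Hopf}) is available to restore the untwisted computation. In particular the steps using \eqref{Hopf_Frob_trick2} and the Drinfeld-map identities \eqref{fQS_Psi} need the cointegral to be $\ra$-invariant, which is precisely Lemma~\ref{lambda_pres}; this is the only genuinely new input beyond Theorem~\ref{thm: MPG_Hopf} and the definition of a ribbon automorphism. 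I expect no other difficulty.
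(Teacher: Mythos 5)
Your proposal is correct and follows essentially the same route as the paper: reduce to Theorem \ref{thm: MPG_Hopf} via the relation $\Corrgnw=\Corrgn\circ(\id_{\Hs}\oti{(\ra^{-1})}^*)^{\otii g}$ by showing the generator endomorphisms commute with the twist insertion (using $\lambda\cir\ra=\lambda$, $\ra\cir\Lambda=\Lambda$ from Lemma \ref{lambda_pres} and $(\ra\oti\ra)\cir Q=Q$, $\ra\cir v=v$), handle $\wi,\Ri$ via the symmetry and trivial twist of $\BFw$, and for $\tjk$ rerun the lemmas of section \ref{sec:inv_proof} with the $\ra$-twisted action. Note only that your first route's blanket claim that all $\LAg$ commute with $\varphi_\ra^{\otii g}$ "on the $\HKH^{\otii g}$ side" would not by itself cover $\tjk$, whose partial monodromy mixes the $\HKH$ legs with the twisted $^\vee\BFw$ legs; your second route (redoing Lemmas \ref{inv_Qq_S}--\ref{lemma:tjk_help} in twisted form, as the paper does via Lemma \ref{lemma:Hpastloopsw}) is what closes that case.
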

\begin{proof}
Using the expression \eqref{CorrgnHw} for $\Skw g11$ it is straightforward to establish the invariance under the action of \Mapgn\ by using the results from the untwisted case:\nxl1
(i) That \Corrgnw\ is invariant under the actions of \wi\ and \Ri\ follows, just like in the untwisted case, directly from the fact that \BFw\ is symmetric with trivial twist.\nxl1
(ii) Next consider any endomorphism $Z_\circ\In\End(\HKH^{\otii g})$ appearing in the morphisms $\LAak$, $\LAbk$, $\LAdk$, $\LAek$ and $\LASk$ defined in \eqref{LyubactC_pre}.
We can use that \ra\ is a ribbon automorphism that preserves the cointegral to show that $Z_\circ$ commutes with $(\id_\Hs\oti{(\ra^{-1})}^*)^{\otii g}$.
Consider e.g. the endomorphism $\SKH$. We have the following chain of equalities
\Eqpic{S_KH_w} {320}{42} {
   \put(0,-6) { \Includepichtft{121d}
   \put(-4.5,-8.5) {\sse$ \Hss $}
   \put(13,51.8)   {\sse$ Q^{-1} $}
   \put(26.5,-8.5) {\sse$ \Hss $}
   \put(32.5,87)   {\sse$ \lambda $}
   \put(44.6,106)  {\sse$ \Hss $}
   \put(47.4,12.1) {\sse$ Q $}
   \put(63.5,56)   {\sse$ \lambda $}
   \put(75.6,106)  {\sse$ \Hss $}
   \put(60,15.1) {\sse$ \ra^{\!\!-1}$}
   }
   \put(95,42)  {$=$}
   \put(120,-6) { \Includepichtft{121e}
   \put(-4.5,-8.5) {\sse$ \Hss $}
   \put(13,51.8)   {\sse$ Q^{-1} $}
   \put(26.5,-8.5) {\sse$ \Hss $}
   \put(32.5,87)   {\sse$ \lambda $}
   \put(44.6,106)  {\sse$ \Hss $}
   \put(47.4,2.1) {\sse$ Q $}
   \put(63.5,64)   {\sse$ \lambda $}
   \put(75.6,106)  {\sse$ \Hss $}
   \put(52.3,52.1) {\sse$ \ra^{\!\!-1}$}
   \put(52,18.1) {\sse$ \ra$}
   }
   \put(215,42)  {$=$}
   \put(240,-6) { \Includepichtft{121f}
   \put(-4.5,-8.5) {\sse$ \Hss $}
   \put(13,51.8)   {\sse$ Q^{-1} $}
   \put(26.5,-8.5) {\sse$ \Hss $}
   \put(32.5,87)   {\sse$ \lambda $}
   \put(40.6,106)  {\sse$ \Hss $}
   \put(43.4,2.1) {\sse$ Q $}
   \put(63.5,64)   {\sse$ \lambda $}
   \put(75.6,106)  {\sse$ \Hss $}
   \put(38,30.1) {\sse$ \ra^{\!\!-1}$}
   }
   }
Here the first equality follows by pushing $\ra$ through the product and using $\ra\cir\apoi=\apoi\cir\ra$. The second equality is a consequence of $\lambda\cir\ra=\lambda$  (see Lemma \ref{lambda_pres})  and $(\ra\oti\ra)\cir Q=Q$. From \eqref{S_KH_w} it follows that
\be
    (\id_\Hs\oti{(\ra^{-1})}^*)\cir\SKH=\SKH\cir{(\id_\Hs\oti(\ra^{-1})}^*)\,.
\ee
It is shown in a completely analogous manner that any of the endomorphisms of $\HKH^{\otii g}$, appearing in \eqref{LyubactC_pre}, commutes with $(\id_\Hs\oti{(\ra^{-1})}^*)^{\otii g}$. Invariance of $\Corrgnw$ under the action of the generators $\ak$, $\bk$, $\dk$, $\ek$ and $\sk$ then follows from \eqref{Corr_Corrw} and invariance in the untwisted case.\nxl1

(iii) Finally, invariance of \Corrgnw\ under the action of $\tjk$ is a bit more lengthy but works in manner similar to the other generators.
Due to \eqref{CorrgnHw}, there is a version of Lemma \ref{inv_STetci} with $\BFH$ replaced by $\BFw$. In addition Lemma \ref{lemma:Hpastloopsw} implies that there are $\ra$-twisted versions of Lemma \ref{inv_Qq_S} and Lemma \ref{lemma:twist_S_remove} in which $\BFH$ has been replaced by $\BFw$. The $\ra$-twisted versions of  Lemma \ref{inv_STetci}, \ref{inv_Qq_S} and \ref{lemma:twist_S_remove} imply that there is a version of Lemma \ref{lemma:tjk_help} in which $\BFH$ has been replaced by $\BFw$. Next we observe that in the case $\ra=\id_H$, invariance of $\HCorrgn$ under the action of $\tjk$ follows by using that $\BFH$ is Frobenius and Lemma \ref{lemma:tjk_help}. Thus, using the $\ra$-twisted version of  Lemma \ref{lemma:tjk_help}, invariance of \Corrgnw\ under the action of $\tjk$ follows in precisely the same manner as for $\ra=\id_H$.
\endofproof\end{proof}
\chapter*{Conclusion and outlook}
In this thesis we have discussed on one hand results for rational CFT, and on the other hand results that are motivated by the quest for a model independent description of logarithmic CFT. The extension of the proof that the correlators of the TFT-construction satisfy all consistency conditions of the theory and the derivation of the classifying algebra for defects give additional insight into the already quite well understood structure of rational conformal field theory. The classifying algebras show that boundary conditions and defect conditions can be largely understood without the need to resort to the full TFT-construction. Furthermore, as briefly described, the classifying algebra for boundary conditions can be obtained from a coend, without explicitly referring to semisimplicity. Thus, one might expect that the existence of classifying algebras is not a feature unique for rational CFT, but that similar structures exist also beyond the rational case.
\bigskip

The construction of the bulk Frobenius algebra $\BFH$ and the mapping class group invariant morphisms $\HCorrgn$ in $H$\Bimod\ aims at gaining insight into the so far much less developed class of logarithmic conformal field theories. A natural goal is to generalize the construction to an arbitrary factorizable finite ribbon category $\C$. For a general category $\C$ in this class, the coend $\BF$ is naturally equipped with an algebra structure. In the case $\C\eq H$\Mod, the coalgebra structure of the coend $\BF$, i.e.\ the bulk Frobenius algebra, uses the integral of $H$. What we are missing so far is the corresponding property of the category $H$\Bimod\ that is related to the integral and equips the coend with a coalgebra structure.
\bigskip

A different direction of generalization of the coend $\BF$ is to, inspired by the structure of the bulk state space of a rational conformal field theory, consider the object $C_l((A\btimes\one)\oti\BF)\In\Ce$, for $A$ a special symmetric Frobenius algebra in $\C$. Checking whether this object still gives rise to a symmetric commutative Frobenius algebra and mapping class group invariants would be interesting already in the particular case $\C\eq H$\Mod.

%%%%%%%%%%%%%%%%%%%%%%%%%%%%%%%%%%%%%%%%%%%%%%%%%%%%%%%%%%%%%%%%%%%%%%%%
%%%%%%%%%%%%%%   APPENDIX
%%%%%%%%%%%%%%%%%%%%%%%%%%%%%%%%%%%%%%%%%%%%%%%%%%%%%%%%%%%%%%%%%%%%%%%%
\titleformat{\chapter}%[display]
  {\normalfont\LARGE\bfseries}{ }{0pt}{\LARGE}
\titlespacing*{\chapter}
  {0pt}{30pt}{25pt}
\appendix
\chapter{Appendix}
In this appendix we collect some formulas, results and calculations, that are relevant for the main text.
\section{Ribbon categories}\label{app:ribbon}

\subsection{Compatibility conditions for ribbon categories}\label{app:comp_ribbon}
Here we give the full list of compatibility conditions for the structures of a (strict) ribbon category.
The axioms for right the duality and the compatibility between the right duality and the twist looks graphically as follows:
\eqpic{right_d_t} {320} {42} {\setulen80
\put(0,29)   {
    \put(0,0){\includepic{304}{ax_dual}}
 \put(-3.2,-7)  {\scriptsize$U\Vee$}
  \put(51,82.8)  {\scriptsize$U\Vee$}
  \put(69,39)    {$=$}
  \put(88,-7)    {\scriptsize$U\Vee$}
  \put(88,82.8)  {\scriptsize$U\Vee$}
  \put(129,82.8) {\scriptsize$U$}
  \put(182.4,-7) {\scriptsize$U$}
  \put(200,39)   {$=$}
  \put(218.5,-7) {\scriptsize$U$}
  \put(218.5,82.8){\scriptsize$U$}
  \put(58,-34)   {\footnotesize\Fbox{axioms for (right-) duality}}
}
\put(290,35) {
\put(0,10)  {\includepic{304}{twist-dual}}
\put(.9,62.8)  {\scriptsize$U$}
  \put(26,62.8)   {\scriptsize$U\Vee$}
  \put(41,36)     {$=$}
  \put(57.4,62.8) {\scriptsize$U$}
  \put(82.5,62.8) {\scriptsize$U\Vee$}
  \put(3,-34)    {\footnotesize\Fbox{\begin{tabular}c{$\!\!$duality and twist:$\!\!$}
                                \\$\theta_{U\Vee}=(\theta_U)\Vee$\end{tabular}}}
}
}
The braiding and twist is functorial:
\eqpic{func_twist_braid} {260} {45} {
\put(0,30){
    \put(0,0)   {\Includepic{braid_funct}}
     \put(2.9,-7)   {\scriptsize$U$}
  \put(3.9,17.8) {\scriptsize$f$}
  \put(3.3,73)   {\scriptsize$X$}
  \put(-3.1,44)  {\scriptsize$c_{Y,X}^{}$}
  \put(26.8,-7)  {\scriptsize$V$}
  \put(27.4,18.7){\scriptsize$g$}
  \put(27.3,73)  {\scriptsize$Y$}
  \put(49,31)    {$=$}
  \put(69.4,-7)  {\scriptsize$U$}
  \put(70.2,51.7){\scriptsize$g$}
  \put(69.4,74)  {\scriptsize$X$}
  \put(89,27.3)  {\scriptsize$c_{U,V}^{}$}
  \put(91.8,-7)  {\scriptsize$V$}
  \put(92.6,51.4){\scriptsize$f$}
  \put(93.3,74)  {\scriptsize$Y$}
  \put(-8,-34)   {\footnotesize\Fbox{functoriality of braiding}}
  \put(200,0){
  {\Includepic{twist_funct}}
  \put(4.4,-7)   {\scriptsize$U$}
  \put(4.9,14.2) {\scriptsize$f$}
  \put(3.8,63)   {\scriptsize$V$}
  \put(22,26)    {$=$}
  \put(42.3,-7)  {\scriptsize$U$}
  \put(43.0,42.7){\scriptsize$f$}
  \put(41.8,63)  {\scriptsize$V$}
  \put(-16,-34)  {\footnotesize\Fbox{functoriality of twist}}}
} }
The braid relations and compatibility between braiding and twist looks as follows (note that this is only one of the braid relations, c.f.\ \eqref{braid_rel}):
\eqpic{tens_braid}{320}{62}{
 \put(10,30)   {
    \put(0,5){\Includepic{braid-tens}}
  \put(-8.3,50.5) {\scriptsize$c_{U\otimes V,W}^{\phantom x}$}
  \put(-2.3,-2.5) {\scriptsize$U$}
  \put(1.2,100.4) {\scriptsize$W$}
  \put(18.1,-2.5) {\scriptsize$V$}
  \put(21,100.4)  {\scriptsize$U$}
  \put(41.1,-2.5) {\scriptsize$W$}
  \put(42,100.4)  {\scriptsize$V$}
  \put(60,48)     {$=$}
  \put(83.0,56)   {\scriptsize$c_{U,Y}^{\phantom x}$}
  \put(85.5,-2.5) {\scriptsize$U$}
  \put(86.5,100.4){\scriptsize$W$}
  \put(108.8,-2.5){\scriptsize$V$}
  \put(108.8,100.4){\scriptsize$U$}
  \put(132.2,-2.5){\scriptsize$W$}
  \put(132.2,100.4){\scriptsize$V$}
  \put(123.1,44)  {\scriptsize$c_{V,W}^{\phantom x}$}
  \put(11,-34)    {\footnotesize\Fbox{tensoriality of braiding}}
            }
              \put(220,30){
              \Includepic{twist-braid}
  \put(-13.8,48) {\scriptsize$\theta_{U\otimes V}$}
  \put(-2.5,-7)  {\scriptsize$U$}
  \put(-2.2,101.3){\scriptsize$U$}
  \put(17.4,-7)  {\scriptsize$V$}
  \put(18.0,101.3){\scriptsize$V$}
  \put(30,46)    {$=$}
  \put(46.8,45)  {\scriptsize$\theta_{\!V}^{\phantom x}$}
  \put(53.1,-7)  {\scriptsize$U$}
  \put(53.8,101.3){\scriptsize$U$}
  \put(72.2,-7)  {\scriptsize$V$}
  \put(72.8,101.3){\scriptsize$V$}
  \put(83.3,45)  {\scriptsize$\theta_{\!U}^{\phantom x}$}
  \put(-5.4,-34) {\footnotesize\Fbox{braiding and twist}}
}   }

\subsection{More on dualities}
The (co)-evaluation morphisms of the left duality in a strictly sovereign ribbon category can be obtained from the right duality, via the braiding and twist, as in the following picture. For completeness we also display the expression for the left duality applied to a morphism $f\In\Hom(U,V)$.
\eqpic{left_dual}{320}{30}{\setulen80
\put(0,0){
\includepic{304}{left-dual}
  \put(-3.5,72)   {\scriptsize$\Eev U$}
  \put(25.5,72)   {\scriptsize$U$}
  \put(41.3,31)   {$=$}
  \put(55.5,72)   {\scriptsize$\Eev U$}
  \put(84.9,72)   {\scriptsize$U$}
  \put(139.9,-1)  {\scriptsize$U$}
  \put(165.5,-1)  {\scriptsize$\Eev U$}
  \put(182.1,31)  {$=$}
  \put(199.8,-1)  {\scriptsize$U$}
  \put(225.5,-1)  {\scriptsize$\Eev U$}
  \put(291.2,-8)  {\scriptsize$\Eev V$}
  \put(291.2,73.2){\scriptsize$\Eev U$}
  \put(291.9,33.5){\scriptsize$\Eev f$}
  \put(316,33)    {$=$}
  \put(335.2,72.9){\scriptsize$\Eev U$}
  \put(362.9,34.2){\scriptsize$f$}
  \put(388.6,-8)  {\scriptsize$\Eev V$}
} }
\ssubject{Uniqueness of duality}\label{app:dual_un}
If a duality exists it is unique up to natural isomorphism. Assume we have two dualities, $?^{\vee}$ and $?^{\isod}$, with evaluation and coevaluation morphisms $d_U$, $b_U$ and $d_U'$, $b_U'$, respectively. Then the
two dualities are naturally isomorphic via the isomorphism
\be
    \Phisod^U:=(d_U\oti\id_{U^{\isod}})\circ(\id_{U^\vee}\oti b'_U)\,\in\Hom(U^\vee,U^{\isod})\,.
\ee
It is straightforward to check that for any $f\in\Hom(U,V)$ we have:
\be
    \Phisod^U \circ f^\vee=f^{\isod}\circ\Phisod^V\,.
\ee
\pagebreak

\subsection{The antipode of $\L$}
Here we provide the calculation, showing that the definition of $\apo_\L$ in \eqref{Lyub_Hopf} indeed defines an antipode.
 \Eqpic{coend_apopr} {320} {94} {
 \put(10,0){\setulen70
   \put(0,160){
  \put(0,0)    {\INcludepichtft{39a}{266}
  \put(1,-9)    {\sse$ U^{\!\vee} $}
  \put(18,-9)   {\sse$ U $}
  }
  \put(50,64)  {$ = $}
  \put(80,0)   {\INcludepichtft{39b}{266}}
  \put(140,64) {$ = $}
  \put(170,0)  {\INcludepichtft{39c}{266}
  \put(25,81)   {\sse$ U^{\!\vee\vee} $}
  \put(54,83)   {\sse$ U^{\!\vee} $}
  }
  \put(250,64) {$ = $}
  \put(280,0)  {\INcludepichtft{39d}{266}
  \put(-4,-9)   {\sse$ U^{\!\vee} $}
  \put(41,78)   {\sse$ U $}
  \put(50,63)   {\sse$ U^{\!\vee} $}
  }
  \put(350,64) {$ = $}
  \put(380,0)  {\INcludepichtft{39e}{266}
  \put(-10,112.5) {\sse$ {(U^{\!\vee}\Oti X^{})}^{\!\vee} $}
  \put(60,112.5)  {\sse$ U^{\!\vee}\Oti X^{} $}
  } }
   \put(56,0){
  \put(-28,64) {$ \stackrel*= $}
  \put(0,0)    {\INcludepichtft{39f}{266}
  \put(-4,-9)   {\sse$ U^{\!\vee} $}
  \put(-42,91)  {\sse$ {(U^{\!\vee}\Oti U^{})}^{\!\vee} $}
  \put(17,64)   {\sse$ U^{\!\vee\vee} $}
  \put(38,-9)   {\sse$ U $}
  \put(69,89)   {\sse$ U^{\!\vee}\Oti U^{} $}
  }
  \put(103,64) {$ = $}
  \put(130,0)  {\INcludepichtft{39g}{266}
  \put(-4,-9)   {\sse$ U^{\!\vee} $}
  \put(17,67)   {\sse$ U^{\!\vee\vee} $}
  \put(38,-9)   {\sse$ U $}
  }
  \put(227,64) {$ = $}
  \put(255,0)  {\INcludepichtft{39h}{266}}
  \put(319,64) {$ \equiv $}
  \put(349,0)  {\includepichtft{39i}}
  } } }
To enlighten notation we have omitted many labels in this equations. The equality marked by $*$ implements the dinaturalness of the dinatural family applied to the morphism $\id_{U^\vee\oti U}\cir c_{U,U^\vee}\cir b_U$. The rest of the equalities are either straightforward implementations of the definitions in \eqref{Lyub_Hopf} or just deformations. The equality $m_\L\cir(\apo_\L\oti\id_\L)\cir\Delta_\L=\eta_\L\cir\eps_\L$ follows in an analogous manner.
\section{Algebra structure on morphism spaces}\label{app:alg_Hom}
Given an algebra $A$, in a $\k$-linear monoidal category, the space $V_A:=\Hom(\one,A)$ can be equipped with the structure of an associative unital algebra over $\k$. We define the product of two elements $f,g\In\Hom(\one,A)$ by
\be
    f\cdot g=m\circ(f\oti g)\,.
\ee
It follows immediately from associativity and unitality of $A$ that this is an associative product on $V_A$, and that it is unital  with unit map
\be
    \eta_{_{V_{\!A}}}=\eta_{_A}\,.
\ee
\section{Some formulas used in the derivation of $\CD$}
Below we display, the three manifolds that are obtained when cutting the connecting manifold $\Mws$ in \eqref{cob_4p_BulkF:48} along the annuli over the two cutting circles. The two three-balls with corners looks as follows, \cite[eq. (5.2)]{fuSs2}:
 \Eqpic{ball_n:49,ball_s:50}{320}{177}{ \setulen 80
  \put(0,344)  {$ \M^A_{\ia_3\ja_3,\ia_4\ja_4} ~= $}
  \put(96,240){
  \put(0,0)   {\includepicclax3{04}{49}}
  \put(105,103)  {\pl{\phi_{\alphz}}}
  \put(131,101)  {\pl{\phi_{\alphv}}}
  \put(147,80)   {\pA A}
  \put(90,193)   {\pl {\ia_3}}
  \put(132,185)  {\pl {\ia_4}}
  \put(95,8)     {\pl {\ja_3}}
  \put(137,18)   {\pl {\ja_4}}
  \put(188,80)   {\includepicclax3{04}{lsqarrov}}
  \put(215,83)   {\fbY SA1}
  }
  \put(-160,0){
  \put(239,96)  {$ \M^B_{\ia_1\ja_1,\ia_2\ja_2} ~= $}
  \put(335,-8){
  \put(0,0)   {\includepicclax3{04}{50}}
  \put(101,102)  {\pl{\phi_{\alphe}}}
  \put(126,101)  {\pl{\phi_{\alphd}}}
  \put(142,81)   {\pB B}
  \put(87,193)   {\pl {\ia_1}}
  \put(129,185)  {\pl {\ia_2}}
  \put(92,8)     {\pl {\ja_1}}
  \put(134,18)   {\pl {\ja_2}}
  \put(188,80)   {\includepicclax3{04}{lsqarrov}}
  \put(215,83)   {\fbY SB1}
  } } }
 and the full torus with corners looks as follows \cite[eq. (5.3)]{fuSs2}:
    \eqpic{cyl:51}{225}{101}{ \setulen 70 \put(0,-2){
  \put(0,144) {$ \M^{AB}_X ~= $}
  \put(83,0){ {\includepicclax2{66}{51}}
  \put(151,137)  {\pA A}
  \put(38,94)    {\pB B}
  \put(148,96)   {\pX X}
  \put(188,80)   {\includepicclax2{66}{lsqarrov}}
  \put(216,82)   {\fbY SB2}
  \put(122,240)  {\includepicclax2{66}{lsqarrov}}
  \put(150,242)  {\fbY SA2}
  } } }

\pagebreak
\section{Identities used in the proof of Theorem \ref{thm: MPG_Hopf}}\label{app:calculations}
In this appendix we collect a number of identities that are used in the proofs in section \ref{sec:inv_proof}.
As a consequence of the invertibility of the Frobenius map we have
\begin{lemma}\label{lemma:Sinv-7}
We have the following equality
 \eqpic{Sinv-7} {290} {38} {\setulen80
   \put(0,0) {\includepichtft{124c}
   \put(18,-8)   {\sse$ \Hss $}
   \put(30,-8)   {\sse$ \Hss $}
   \put(11,12)   {\sse$ \Lambda $}
   \put(31.5,86) {\sse$ \lambda $}
   \put(-3,106)  {\sse$ \Hss $}
   }
   \put(53,38)   {$ = $}
   \put(85,0)  {\includepichtft{124d}
   \put(18,-8.5) {\sse$ \Hss $}
   \put(30,-8.5) {\sse$ \Hss $}
   \put(11,12)   {\sse$ \Lambda $}
   \put(31.5,86) {\sse$ \lambda $}
   \put(-3,106)  {\sse$ \Hss $}
   }
   \put(138,38)  {$=$}
   \put(170,0) {\includepichtft{124e}
   \put(18,-8.5) {\sse$ \Hss $}
   \put(30,-8.5) {\sse$ \Hss $}
   \put(11,12)   {\sse$ \Lambda $}
   \put(31.5,86) {\sse$ \lambda $}
   \put(-3,106)  {\sse$ \Hss $}
   }
   \put(223,38)  {$=$}
   \put(255,0) {\includepichtft{124f}
   \put(17.5,-8.5) {\sse$ \Hss $}
   \put(30,-8.5) {\sse$ \Hss $}
   \put(11.4,11) {\sse$ \Lambda $}
   \put(26.5,74) {\sse$ \lambda $}
   \put(-1.4,106){\sse$ \Hss $}
   }
   \put(307,38)  {$ = $}
   \put(338,16) {\includepichtft{124g}
   \put(-5,-8.5) {\sse$ \Hss $}
   \put(8,-8)    {\sse$ \Hss $}
   \put(3,75)    {\sse$ \Hss $}
   } }
\end{lemma}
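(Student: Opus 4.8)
The statement to prove is Lemma~\ref{lemma:Sinv-7}, a chain of four equalities relating certain string diagrams in $\Vectk$ built from the integral $\Lambda$, the cointegral $\lambda$, the antipode $\apo$, and the product and coproduct of $H$. The plan is to verify each of the four equalities in turn by manipulating the underlying linear maps, using only the structural properties of a finite-dimensional factorizable ribbon Hopf algebra that have been collected earlier in the excerpt: the defining property \eqref{int_prop} of the left integral, the defining property \eqref{prop_lambda} of the right cointegral, the anti-(co)algebra morphism property \eqref{anti_alg_apo} of $\apo$ and its inverse, unimodularity together with $\apo\circ\Lambda=\Lambda$, and the key cointegral identity \eqref{OS2} (equivalently \eqref{OS2_mod}). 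Since $H$ is factorizable the Frobenius map $\fmap$ of \eqref{def_fmap} is invertible, with inverse characterized by \eqref{fmap_inv}; the normalization \eqref{norm_int_1} gives $\lambda\circ\Lambda=1$, which is what makes $\fmap$ and $\fmap^{-1}$ literally inverse. This invertibility is the ``consequence'' referred to in the sentence preceding the lemma.

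First I would set up the claim precisely: reading the first diagram as a linear map $H^*\otimes H^*\to H^*$ (dualizing, a map $H\to H\otimes H$), one sees it is essentially the composite of a piece built from $\Lambda$ and the coproduct with a piece built from $\lambda$, $\apo$ and the product — i.e.\ it contains a subgraph of the shape appearing in \eqref{fmap_inv}. The strategy for the first equality is to recognize that the portion of the graph involving $\apo$, $\lambda$ and $\Lambda$ is exactly $\fmap^{-1}\circ\fmap$ (up to the scalar $\lambda\circ\Lambda=1$) sitting inside the larger diagram, so by \eqref{fmap_inv} this subgraph collapses to an identity strand, after which a deformation produces the second diagram. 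For the remaining three equalities I expect each to be a single local move: pushing $\apo$ (or $\apoi$) through a product or coproduct at the cost of a flip, using that $\apo$ is an anti-algebra and anti-coalgebra morphism \eqref{anti_alg_apo}, combined with $\apo\circ\Lambda=\Lambda$ from unimodularity and with the cointegral property \eqref{OS2}/\eqref{prop_lambda}; the third and fourth diagrams differ from the second only in where the antipode is placed and how the $\lambda$-cap attaches, and the final diagram is obtained by using \eqref{int_prop} to slide $\Lambda$ past the product so that the two open $H^*$-inputs become simple legs of $\fmap$ acting on $\Lambda$.

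The main obstacle is bookkeeping rather than conceptual: I must be careful to track the two-layered nature of these diagrams (two factors of $H^*$ on the bottom) and to match the precise convention for reading the pictures from bottom to top, so that each application of \eqref{anti_alg_apo} introduces the correct flip $\flip HH$ and each use of \eqref{OS2} is applied to the intended pair of strands. The subtle point is ensuring that after collapsing the $\fmap^{-1}\circ\fmap$ subgraph in the first step, the residual diagram really is a deformation of the second picture and not merely equal to it up to an antipode — this is where the identity \eqref{OS2_mod} (the ``modified'' version of \eqref{OS2}) will likely be needed, since it is precisely the identity that lets one move an antipode across a product when a second product is present. Once the first equality is in hand, the rest should follow by iterating the same small repertoire of moves, so no genuinely new input beyond what is already in Section~\ref{fact_ribbon_hopf} is required.
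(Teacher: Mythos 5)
Your toolkit is the right one --- \eqref{OS2}, unimodularity with $\apo\circ\Lambda=\Lambda$, associativity, and the Frobenius-map identity \eqref{fmap_inv} --- but you have inverted the order in which these are applied, and the inversion is not just cosmetic. You propose to begin by recognizing the subgraph $\fmap^{-1}\circ\fmap$ inside the \emph{first} diagram and collapsing it via \eqref{fmap_inv}, with the remaining equalities being small local moves. But the second, third and fourth diagrams in the chain all still contain both $\Lambda$ and $\lambda$; only the final diagram is free of them. So the Frobenius-map collapse cannot be the first step: in the paper it is the \emph{last} step, and the first diagram does not contain the wiring pattern of \eqref{fmap_inv} as a recognizable subgraph. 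The preceding equalities are precisely what create that pattern: the first equality is a trivial deformation, the second uses the cointegral identity \eqref{OS2} to move $\lambda$ past a product (at the cost of a flip and an $\apo^2$), and the third uses $\apo\circ\Lambda=\Lambda$ together with associativity to reassemble the products so that the configuration of $\apo$, $\lambda$, $\Delta$, $\Lambda$ and $m$ matches \eqref{fmap_inv} exactly. Only then does the whole $\lambda$--$\Lambda$ block cancel against the scalar $\lambda\circ\Lambda=1$.

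If you tried to execute your plan as written, the first step would stall: there is no $\fmap^{-1}\circ\fmap$ to collapse until the rewiring has been done. Two smaller points: the paper needs only \eqref{OS2}, not \eqref{OS2_mod}, for the second equality; and the ``two-layered'' bookkeeping you worry about is not an issue here, since these pictures are honest linear maps in $\Vectk$ rather than diagrams in $\eC$. Reorganize the argument so that \eqref{OS2} and $\apo\circ\Lambda=\Lambda$ come first and \eqref{fmap_inv} comes last, and the proof goes through as in the paper.
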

\begin{proof}
The second equality from the property \eqref{OS2} of the cointegral, the third from $\apo\circ\Lambda\eq\Lambda$ and associativity, and the last from the invertibility of the Frobenius map, see \eqref{fmap_inv}.
\endofproof\end{proof}

We continue with a number of results involving coproducts and monodromy matrices:
\begin{lemma}\label{removeQ_1}
Conjugating by the monodromy matrix preserves the coproduct:
\eqpic{removeQs} {215} {31} {
  \put(0,0) {\Includepichtft{133d}
  \put(23,-8.5) {\sse$ H $}
  \put(1,7.5) {\sse$ Q$}
  \put(47,7.5) {\sse$ Q^{-1}$}
  \put(16.5,74)    {\sse$ H $}
  \put(33,74)      {\sse$ H $}
  }
  \put(85,27)   {$=~\Delta_H~=$}
  \put(155,0) {\Includepichtft{133e}
  \put(23,-8.5) {\sse$ H $}
  \put(1,7.5) {\sse$ Q^{-1}$}
  \put(51,8.5) {\sse$ Q$}
  \put(16.5,74)    {\sse$ H $}
  \put(33,74)      {\sse$ H $}
  }
  }
\end{lemma}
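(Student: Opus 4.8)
The statement to prove is that conjugating the coproduct of $H$ by the monodromy matrix $Q$ leaves it unchanged, i.e. the equality displayed in \eqref{removeQs}, and the analogous statement with $Q$ replaced by $Q^{-1}$. The plan is to reduce this to the quasitriangularity axioms for $R$, specifically the relations \eqref{Rmat1} and \eqref{R_YB}. First I would recall that $Q = R_{21}\cdot R$, and that the defining property \eqref{Rmat1} of the $R$-matrix says precisely that $R\cdot\Delta\cdot R^{-1} = \Delta\op = \tau_{H,H}\circ\Delta$. Applying this identity twice — once for $R$ and once for $R_{21}$, noting that $R_{21}$ intertwines $\Delta\op$ with $\Delta$ (since $R_{21}$ is the $R$-matrix of the Hopf algebra $H$ with opposite braiding, or equivalently by applying the flip to \eqref{Rmat1}) — one gets $Q\cdot\Delta\cdot Q^{-1} = R_{21}\cdot(R\cdot\Delta\cdot R^{-1})\cdot R_{21}^{-1} = R_{21}\cdot\Delta\op\cdot R_{21}^{-1} = \Delta$. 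This is exactly the content of \eqref{removeQs}.

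The main work is to make this graphical rather than purely algebraic, since in the rest of the chapter the identities are used in string-diagram form in $\Vect$. So I would instead argue directly with the pictures: write $Q$ and $Q^{-1}$ out as products of $R$ and $R^{-1}$ using \eqref{Qmat} and its inverse, insert the result into the left-hand side of \eqref{removeQs}, and then use the relations \eqref{R_YB} (which express $(\Delta\oti\id)\circ R$ and $(\id\oti\Delta)\circ R$ in terms of $R_{12}, R_{13}, R_{23}$) together with the analogous relations for $R^{-1}$ to slide the coproduct past the $R$-factors. The braid-like moves needed are the Yang–Baxter-type rearrangements encoded in \eqref{R_YB}; after the dust settles the four half-$R$-matrices pair up with their inverses and cancel, leaving a bare $\Delta_H$. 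The symmetric statement with $Q\leftrightarrow Q^{-1}$ follows by the same computation run with $R\leftrightarrow R^{-1}$, or simply by conjugating the first identity by $Q^{-1}$ on both sides.

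I do not expect a genuine obstacle here: the statement is essentially a restatement of quasitriangularity, and the only mildly delicate point is bookkeeping — keeping track of which tensor leg each copy of $R$ acts on while translating between the algebraic identity $Q\Delta Q^{-1}=\Delta$ and the diagrammatic form used elsewhere in the text. If a fully pictorial derivation becomes cumbersome, the cleanest route is to prove the algebraic identity $\ad_Q(\Delta)=\Delta$ once (two applications of \eqref{Rmat1}) and then note that \eqref{removeQs} is its diagrammatic transcription, which is legitimate since all maps involved are morphisms in $\Vect$ and the graphical calculus there is just ordinary composition and tensoring of linear maps.
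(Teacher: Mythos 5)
Your proposal is correct and follows essentially the same route as the paper: the paper's proof is precisely "use twice the fact that the $R$-matrix intertwines the coproduct and the opposite coproduct" (i.e.\ $Q\Delta Q^{-1}=R_{21}(R\Delta R^{-1})R_{21}^{-1}=R_{21}\Delta\op R_{21}^{-1}=\Delta$), followed by multiplying with $Q^{-1}$ and $Q$ to get the second equality. The detour through \eqref{R_YB} that you sketch as a fallback is not needed; your primary algebraic argument is already the paper's argument.
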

\begin{proof}
    The first equality follows by using twice the fact that the $R$-matrix intertwines the coproduct and the opposite coproduct of $H$. The second follows from the first by multiplying with $Q^{-1}$ and $Q$.
\endofproof\end{proof}
\begin{remark}
The equality \eqref{removeQs} also follows directly by writing down the equality $m_{\BFH}=m_{\BFH}\cir c_{\BFH,\BFH}\cir c_{\BFH,\BFH}$ of morphisms in $\Hom_{H|H}(\BFH\oti\BFH,\BFH)$ (which holds since $\BFH$ is commutative), where $c_{F,F}$ is the braiding of $H$\Bimod.
\end{remark}
A consequence of Lemma \ref{removeQ_1} is the following results, which are used in the proof of mapping class group invariance.
\begin{lemma}\label{removeQ_2}
We have the following identities involving the coproduct and monodromy matrices:\\
  (i)
 \Eqpic{Pic_QQ3} {320} {47} {
 \put(0,10){\setulen80
  \put(0,0) { \includepichtft{134dA}
  \put(24,114.1)   {\sse$ H $}
  \put(37.1,114.1) {\sse$ H $}
  \put(45,-11.2)   {\sse$ H $}
  \put(54,114.1)   {\sse$ H $}
  \put(67.5,114.1) {\sse$ H $}
  \put(0,8)        {\sse$Q$}
  \put(78.6,8.5)   {\sse$Q^{-1}$}
  }
  \put(111,52)     {$=$}
  \put(133,0) { \includepichtft{134eA}
  \put(23.3,114.1) {\sse$ H $}
  \put(36.7,114.1) {\sse$ H $}
  \put(44.2,-11.2) {\sse$ H $}
  \put(54,114.1)   {\sse$ H $}
  \put(67.5,114.1) {\sse$ H $}
  \put(0,8)        {\sse$Q$}
  \put(78.6,6.9)   {\sse$Q^{-1}$}
  }
  \put(249,52)     {$=$}
  \put(280,0) { \includepichtft{134fA}
  \put(-2.5,114.1) {\sse$ H $}
  \put(11.5,114.1) {\sse$ H $}
  \put(19,-11.2)   {\sse$ H $}
  \put(28,114.1)   {\sse$ H $}
  \put(40.7,114.1) {\sse$ H $}
  }
  \put(353,52)     {$=$}
  \put(379,0) { \includepichtft{134gA}
  \put(-3.7,114.1) {\sse$ H $}
  \put(9.3,114.1)  {\sse$ H $}
  \put(17.5,-11.2) {\sse$ H $}
  \put(25.6,114.1) {\sse$ H $}
  \put(39.8,114.1) {\sse$ H $}
  } } }
  (ii)
   \Eqpic{inv_Qq3} {320} {52} { \put(0,2) { \setulen80
    \put(-2,0) { \includepichtft{135dA}
  \put(-6.2,-11.2) {\sse$ \Hss $}
  \put(21.2,-11.2) {\sse$ H $}
  \put(22.8,135.2) {\sse$ H $}
  \put(32.6,60.9)  {\sse$Q$}
  \put(37.9,135.2) {\sse$ H $}
  \put(51.2,13.2)  {\sse$Q^{-1}$}
  \put(54.2,135.2) {\sse$ H $}
  }
  \put(90,57)     {$=$}
    \put(120,0) { \includepichtft{135e}
  \put(-11.5,53)   {\sse$Q^{-1}$}
  \put(-6.5,-11.2) {\sse$ \Hss $}
  \put(21.2,-11.2) {\sse$ H $}
  \put(22.7,135.2) {\sse$ H $}
  \put(39.2,135.2) {\sse$ H $}
  \put(55.2,135.2) {\sse$ H $}
  \put(64.6,21.3)  {\sse$Q$}
  }
  \put(220,57)     {$=$}
    \put(250,0) { \includepichtft{135f}
  \put(-3,45.4)    {\sse$Q^{-1}$}
  \put(-6.2,-11.2) {\sse$ \Hss $}
  \put(15.2,-11.2) {\sse$ H $}
  \put(15.7,135.2) {\sse$ H $}
  \put(32.1,135.2) {\sse$ H $}
  \put(48.3,135.2) {\sse$ H $}
  \put(58.5,20.7)  {\sse$Q$}
  }
  \put(330,57)     {$=$}
    \put(360,0) { \includepichtft{135gA}
  \put(-6.2,-11.2) {\sse$ \Hss $}
  \put(21.4,-11.2) {\sse$ H $}
  \put(21.6,135.2) {\sse$ H $}
  \put(37.8,135.2) {\sse$ H $}
  \put(54.3,135.2) {\sse$ H $}
  } } }

\end{lemma}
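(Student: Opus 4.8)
The statement to prove is Lemma \ref{removeQ_2}, whose two parts concern identities among coproducts and monodromy matrices in a finite-dimensional factorizable ribbon Hopf algebra $H$. Both parts are "visibly" consequences of Lemma \ref{removeQ_1} (conjugation by the monodromy matrix preserves the coproduct) together with the Yang--Baxter-type relations \eqref{R_YB} for the $R$-matrix and the interplay \eqref{Rmat1}--\eqref{Rmat2} of $R$, $\apo$ and $\Delta$. My plan is to reduce everything to repeated application of \eqref{removeQs} and of the fact that $R$ intertwines $\Delta$ with $\Delta^{\mathrm{op}}$.

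For part (i): the leftmost morphism in \eqref{Pic_QQ3} is built from an iterated coproduct of $H$ with a conjugating pair $Q\otimes Q^{-1}$ straddling some of its output legs. The strategy is to push the $Q$ past the coproducts one branch at a time. Using coassociativity of $\Delta_H$ together with the second form of \eqref{R_YB}, namely $(\Delta\oti\id_H)\circ R=R_{13}\cdot R_{23}$ (and its inverse analogue), each application moves the monodromy contribution across one branching of the comultiplication tree, at the cost of splitting the $R$-matrix. After all legs are passed, the accumulated $R$-factors recombine: on each pair of adjacent output legs they produce a conjugation $Q(\cdots)Q^{-1}$ which, by Lemma \ref{removeQ_1}, acts as the identity on the coproduct. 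Iterating this from the outermost pair of legs inward (that is what the chain of three equalities in \eqref{Pic_QQ3} is doing, one equality per step) shows that the $Q$-matrices may be deleted entirely, giving the rightmost picture. The key point is that at each stage the local move is exactly \eqref{removeQs}, applied to the subtree of coproducts that lies "above" the current position of the monodromy.

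For part (ii): here one input leg carries the coregular bimodule $\Hs$ and the remaining data is a three-fold coproduct of $H$ with two interleaved monodromy insertions, one of which is a partial monodromy acting on $\Hs$. The approach is the same in spirit: first use the braid relations \eqref{R_YB} to slide the inner $Q$ (the one adjacent to the $\Hs$-leg) past the coproduct branchings, then recognize the resulting conjugating pairs on pairs of $H$-legs and eliminate them by Lemma \ref{removeQ_1}. The $\Hs$-leg is inert throughout, since the coregular action and the partial monodromy $\Qq$ interact with the $H$-legs only through products and coproducts of $H$, not through anything that obstructs the braid moves; in particular no use of the antipode beyond the anti-algebra property \eqref{anti_alg_apo} is needed. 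Again, the displayed chain of equalities in \eqref{inv_Qq3} is one move per equality.

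The main obstacle is purely bookkeeping: keeping track of which $R$-factors in the expansions of the $Q$'s land on which pair of output legs when the comultiplication tree is several levels deep, so that one can legitimately invoke Lemma \ref{removeQ_1} on exactly the right sub-coproduct. Choosing to push the monodromies in a fixed order (outermost legs first, working inward) keeps the recombination unambiguous and makes each step a verbatim instance of \eqref{removeQs}; this is the organizing principle of the proof. Everything else is a routine graphical calculation using coassociativity, \eqref{R_YB}, and \eqref{Rmat1}.
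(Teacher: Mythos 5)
There is a genuine gap: your proof treats the diagrams in \eqref{Pic_QQ3} and \eqref{inv_Qq3} as if they consisted only of iterated coproducts with conjugating pairs $Q$, $Q^{-1}$, so that each step could be "a verbatim instance of \eqref{removeQs}". In fact the morphisms in question contain (inverse) antipodes interposed between the monodromy matrices and the comultiplication tree, and these must be dealt with before Lemma \ref{removeQ_1} is applicable at all. The paper's argument for (i) is structured as: first push the inverse antipodes through the coproducts using the anti-coalgebra morphism property \eqref{anti_alg_apo} together with a deformation compatible with the braid relations; only then does coassociativity put the picture into the exact form of \eqref{removeQs}; and the final equality undoes the antipode rewriting. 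Your proposal skips the first and last of these steps entirely, and without them the "conjugating pairs" you want to recognize are simply not present in the diagram — $Q\cdot\Delta(\,\cdot\,)\cdot Q^{-1}$ with an $\apoi$ sitting in between is not an instance of Lemma \ref{removeQ_1}.

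The gap is sharpest in part (ii), where you explicitly assert that "no use of the antipode beyond the anti-algebra property \eqref{anti_alg_apo} is needed." The paper's second equality in \eqref{inv_Qq3} relies essentially on the identity $(\apo\oti\apo)\cir Q=\flip HH\cir Q$ (a consequence of \eqref{Rmat2} and the definition \eqref{Qmat} of $Q$, stated just after it), which is what allows the antipodes to be traded against a flip of the two $Q$-legs so that coassociativity and \eqref{removeQs} can take over. This is not derivable from the anti-algebra property and the braid relations \eqref{R_YB} alone in the manner you describe, so the reduction you propose does not close. Your organizing principle — iterate Lemma \ref{removeQ_1} outward-in using \eqref{R_YB} to split and recombine the $R$-factors — is a reasonable heuristic for the antipode-free core of the computation, but as written the proof is missing the antipode bookkeeping that constitutes most of the actual work.
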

\begin{proof}
(i) The first equality follows by pushing the inverse antipodes through the coproduct, followed by a deformation compatible with the braid relations. The second equality follows from coassociativity combined with Lemma \ref{removeQ_1}. The third equality follows from coassociativity and the anti-coalgebra morphism property of $\apoi$.

(ii) The first equality is obtained by pushing $\apoi$ through the upper coproduct and products. The second equality follows from coassociativity and $(\apo\oti\apo)\circ Q=\flip HH\cir Q$. The third equality is obtained by applying \eqref{removeQs}, followed by coassociativity and that  $\apoi$ is an anti-algebra morphism, i.e. by undoing the corresponding rewritings in the previous equalities.
\endofproof\end{proof}
Next we observe
\pagebreak
\begin{lemma}\label{lemma:comFA_mon}
For any commutative Frobenius algebra $C$ in a ribbon category \C\ we have
\eqpic{2prod_mon} {220} {32} {\setulen80
  \put(0,0) {
  \put(0,0)    {\includepichtft{139a}}
  \put(-3,-8.5){\sse$ C $}
  \put(10,-8.5){\sse$ C $}
  \put(22.2,-8.5){\sse$ C $}
  \put(45,101){\sse$ C $}
  }
  \put(80,35) {$=$}
  \put(120,0) {
  \put(0,0)    {\includepichtft{139c}}
  \put(-3,-8.5){\sse$ C $}
  \put(10,-8.5){\sse$ C $}
  \put(22.2,-8.5){\sse$ C $}
  \put(45,101){\sse$ C $}
  }
  \put(200,35) {$=$}
  \put(240,0) {
  \put(0,0)    {\includepichtft{139b}}
  \put(-3,-8.5){\sse$ C $}
  \put(10,-8.5){\sse$ C $}
  \put(22.2,-8.5){\sse$ C $}
  \put(14,101){\sse$ C $}
  }
  \put(280,100) {\catpic}
  }
\end{lemma}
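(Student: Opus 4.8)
Lemma \ref{lemma:comFA_mon} asserts that for a commutative Frobenius algebra $C$ in a ribbon category, the morphism obtained by applying the product twice (to three inputs, in the obvious left-nested fashion) equals two variants in which a monodromy is inserted before one of the products. The plan is to reduce everything to the two facts available: commutativity $m\circ c_{C,C}=m$ (Definition \ref{def_sym} applied via the braiding, as in the proof for $\BFH$), together with the compatibility of braiding and twist \eqref{tens_braid} and functoriality of the braiding \eqref{func_twist_braid}. A commutative Frobenius algebra in a sovereign braided category has trivial twist (Remark following the symmetry proof of $\BFH$, citing \cite[Proposition 2.25]{ffrs}), so $\theta_C=\id_C$; this is the extra ingredient that lets the monodromy be absorbed.

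First I would read off precisely which three pictures \eqref{2prod_mon} encodes. The middle and right-hand pictures differ from the left-hand one by a monodromy $c_{C,C}\circ c_{C,C}$ (a full double braiding) inserted on one pair of the $C$-strands just below one of the two product vertices. So the claim amounts to: inserting a full monodromy between two strands that are about to be multiplied together (directly or after a further multiplication) does not change the morphism. Concretely, $m\circ(c_{C,C}\circ c_{C,C})=m$: this is immediate since $m\circ c_{C,C}=m$ by commutativity, hence $m\circ c_{C,C}\circ c_{C,C}=m\circ c_{C,C}=m$. That already handles the variant where the monodromy sits right under the relevant product. For the other variant — where the monodromy sits under the lower product but the relevant strand is then fed into the upper product — I would use functoriality of the braiding to slide the monodromy up past the lower product vertex (a braiding of $C\otimes C$ with $C$ decomposes, via \eqref{braid_rel}, into two elementary braidings, each of which commutes past $m$ by naturality \eqref{func_twist_braid}), converting it into a monodromy sitting directly under the upper product, which is then killed by commutativity as before.

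The key steps, in order: (1) Expand the three diagrams in \eqref{2prod_mon} into compositions of $m$, $c_{C,C}$ and their inverses, identifying the difference as a single inserted monodromy. (2) Use $m\circ c_{C,C}=m$ to collapse the monodromy in the "adjacent" case. (3) Use naturality of the braiding together with the braid relations \eqref{braid_rel} to transport the monodromy through an intervening product vertex, reducing the "non-adjacent" case to the adjacent one. (4) Invoke triviality of the twist only if, in the chosen graphical presentation, the sliding in step (3) produces a twist on $C$ (which it does in certain standard presentations of the double braiding as "a strand looping around a vertex"); here $\theta_C=\id_C$ cleans it up. Since the statement is phrased for a general commutative Frobenius algebra in a ribbon category, I would not even need sovereignty for steps (1)--(3); the twist-triviality remark is only a safety net for step (4).

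I do not expect a genuine obstacle here — this is a routine graphical-calculus identity of the same flavour as the commutativity and symmetry checks for $\BFH$ carried out earlier in the chapter. The only mildly delicate point is bookkeeping: making sure the monodromy that appears in the middle picture of \eqref{2prod_mon} is genuinely a full double braiding (and not a single braiding, which would \emph{not} be absorbed), and that the direction of the over/under crossings matches what commutativity can cancel. So the "hard part", such as it is, is just drawing the three pictures carefully and checking the crossing orientations; once that is done, steps (2) and (3) are one-line applications of commutativity and naturality. I would present the proof as a short chain of equalities of morphisms in $\C$, annotating each step with "commutativity", "\eqref{braid_rel}", "functoriality of the braiding" or "triviality of the twist" as appropriate, exactly in the style of the other proofs in Section \ref{sec:MGC_inv}.
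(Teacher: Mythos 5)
Your reading of the diagrams in \eqref{2prod_mon} does not match what they actually contain, and this breaks the proof. The monodromy appearing in the left and middle pictures is not a double braiding between two of the $C$-inputs: it is a monodromy between a $C$-strand and a strand labelled by the \emph{left dual} $\rd C$, which arises because the output leg is routed around through a duality loop (this is why the output $C$ sits in a different position in the third picture). You can see this both from the paper's own proof, whose first step is to insert $\id_{\rd C}=\Phi_{\text l}\cir\Phi_{\text l}^{-1}$ with $\Phi_{\text l}\In\Hom(C,\rd C)$ the Frobenius isomorphism of \eqref{def_Phi} --- an insertion that would be pointless if no $\rd C$-strand were present --- and from the place where the lemma is used, Lemma \ref{lemma:cop_Delta}, where it is stated explicitly that the identity being invoked involves ``a monodromy between $\BFH$ and $\rd\BFH$''. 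Consequently your step (1), expanding everything into $m$ and $c_{C,C}$ and identifying the discrepancy as a monodromy that $m\cir c_{C,C}=m$ can absorb, is not available: commutativity says nothing directly about a braiding of $C$ with $\rd C$.

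The missing idea is precisely the one your plan omits: use that $\Phi_{\text l}$ is a module isomorphism built from $m$, $\eps$ and the left coevaluation, so that by naturality of the braiding the monodromy with $\rd C$ can be traded for a monodromy with $C$ at the cost of composing with $\Phi_{\text l}$ and $\Phi_{\text l}^{-1}$, which are then absorbed using the Frobenius property \eqref{frob}. This is where the Frobenius structure enters essentially, and it is why the hypothesis of the lemma is ``commutative Frobenius algebra'' rather than merely ``commutative algebra'' --- a distinction your proposal explicitly (and incorrectly) dismisses when you say sovereignty and the Frobenius data are not needed for the main steps. Your steps (2)--(3) do capture the ingredients of the paper's \emph{second} equality (associativity, commutativity applied twice, associativity, then \eqref{frob} again), so the tail of your argument is sound; but without the $\Phi_{\text l}$-step the chain never gets started, and the proposal as written does not prove the stated identity.
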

\begin{proof}
The first equality follows by using that $C$ is Frobenius and inserting $\Id_{\rd C}\eq\Phi_{\text l}\cir\Phi_{\text l}^{-1}$, with $\Phi_{\text l}$ the isomorphism defined in \eqref{def_Phi}. The second equality is obtained by first using associativity, commutativity twice and then again associativity followed by the Frobenius property \eqref{frob}.
\endofproof\end{proof}
A consequence of Lemma \ref{lemma:comFA_mon} is
\begin{lemma}\label{lemma:cop_Delta}
We have the following identities involving the coproduct and monodromy matrices:\\
  (i)
\eqpic{Q2Delta} {120} {35} {\setulen80
  \put(0,0) {
  \put(0,0)    {\includepichtft{138a}}
  \put(22.2,-8.5){\sse$ H $}
  \put(6.6,112)    {\sse$ H $}
  \put(19.7,112)   {\sse$ H $}
  \put(32.2,112)   {\sse$ H $}
  \put(0,4){\sse$Q^{-1}$}
  \put(42.2,3){\sse$Q$}
  }
  \put(75,35) {$=$}
  \put(120,0) {
  \put(0,0)    {\includepichtft{138b}}
  \put(-3.6,112)   {\sse$ H $}
  \put( 9.5,112)   {\sse$ H $}
  \put(22.2,112)   {\sse$ H $}
  \put(12.2,-8.5){\sse$ H $}
  }
  }
  (ii)
\eqpic{Q3Delta} {160} {55} {\setulen80
  \put(0,0) {
  \put(0,0)    {\includepichtft{141aA}}
    \put(9.5,165.7)  {\sse$ H $}
  \put(33.1,165,7) {\sse$ H $}
  \put(39.2,-9.8)  {\sse$ H $}
  \put(47.2,165.7) {\sse$ H $}
  \put(63.5,165.7) {\sse$ H $}
  \put(9.2,16.4)   {\sse$Q^{-1}$}
  \put(80.3,2.5)   {\sse$Q$}
  }
  \put(110,55) {$=$}
  \put(140,0) {
  \put(0,0)    {\includepichtft{141eA}}
  \put(-4.1,165.7) {\sse$ H $}
  \put(18.4,165.7) {\sse$ H $}
  \put(25.4,-9.8)  {\sse$ H $}
  \put(35.1,165.7) {\sse$ H $}
  \put(50.5,165.7) {\sse$ H $}
  }
  }
\end{lemma}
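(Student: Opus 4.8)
\textbf{Plan of proof for Lemma \ref{lemma:cop_Delta}.}
The statement asserts two identities in $\Vect$ involving the coproduct $\Delta_H$ of $H$ (equivalently, the product $m_{\BFH}=\Delta_H^\vee$ of the coregular bimodule) and conjugation by the monodromy matrix $Q$. Both are "transparent-object" statements: the combined $Q^{-1}$--$Q$ insertion around a coproduct followed by a further coproduct can be removed. The plan is to deduce (i) directly from Lemma \ref{lemma:comFA_mon} applied to $\C=H$\Bimod\ and $C=\BFH$, and then to obtain (ii) by iterating (i) together with coassociativity.

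For part (i), I would first recognise the left-hand side of \eqref{Q2Delta} as a morphism in $\Hom_{H|H}(\BFH,\BFH\oti\BFH\oti\BFH)$ built from two coproducts of $\BFH$ and the braiding of $H$\Bimod: indeed, by \eqref{bibraid} together with Remark after Lemma \ref{removeQ_1} (which reads off $m_{\BFH}=m_{\BFH}\cir c_{\BFH,\BFH}\cir c_{\BFH,\BFH}$ from commutativity), the two $Q$-factors are exactly a monodromy $c_{\BFH,\BFH}\cir c_{\BFH,\BFH}$ dualised. Dually, $\Delta_{\BFH}=m_H^\vee$, so the whole picture is the dual of the first morphism in \eqref{2prod_mon} with $C$ the regular bimodule $H$ of the commutative(!) Frobenius algebra structure — more precisely, one applies Lemma \ref{lemma:comFA_mon} in $(H\Bimod)$ to $C=\BFH$ and then passes to the underlying linear maps. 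The chain of equalities in \eqref{2prod_mon}, i.e.\ insert $\Id_{\rd\BFH}=\Phi_{\text l}\cir\Phi_{\text l}^{-1}$, use the Frobenius property \eqref{frob}, associativity, commutativity twice, associativity and \eqref{frob} again, then gives that the monodromy can be slid through the coproduct, which is precisely \eqref{Q2Delta} after taking duals and writing everything out in $\Vect$. Concretely, in $\Vect$-language the steps are: push the inverse antipodes hidden in $Q,Q^{-1}$ through the coproducts using \eqref{anti_alg_apo}; use coassociativity of $\Delta_H$; and invoke that $(\apo\oti\apo)\cir Q=\flip HH\cir Q$ together with Lemma \ref{removeQ_1}. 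This is the same mechanism as in Lemma \ref{removeQ_2}(ii), and I expect no genuinely new obstacle here — it is bookkeeping of braid moves.

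For part (ii), the left-hand side of \eqref{Q3Delta} has one further coproduct (four outgoing $H$-lines) with the $Q^{-1}$--$Q$ pair straddling a longer cascade of coproducts. The plan is: apply (i) once to the innermost coproduct-plus-monodromy block to eliminate that pair at the cost of nothing, then use coassociativity of $\Delta_H$ to regroup, and finally absorb the remaining structure exactly as in the proof of Lemma \ref{removeQ_2}(ii) — push $\apoi$ through the upper products and coproducts, apply \eqref{removeQs}, use coassociativity once more, and use that $\apoi$ is an anti-algebra morphism. Equivalently, (ii) is again the dual, in $\Vect$, of the $n=3$ analogue of \eqref{2prod_mon}, which follows from \eqref{2prod_mon} by one extra application of associativity/commutativity/Frobenius; so I would state it as an iterated consequence of Lemma \ref{lemma:comFA_mon} rather than redo the diagram chase from scratch.

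\textbf{Main obstacle.} The only delicate point is matching conventions: the braiding of $H$\Bimod\ in \eqref{bibraid} carries \emph{both} an $R$ and an $R^{-1}$ (acting on the two sides), so when translating "$c_{\BFH,\BFH}\cir c_{\BFH,\BFH}$" into linear maps one must check that the resulting double braiding on $\Hs\otik\Hs$ is conjugation by the full monodromy $Q=R_{21}R$ (via \eqref{rhoHb,ohrHb} and \eqref{Qmat}) and not by some other combination of $R$-factors, and that the orientations/duals in passing from $m_{\BFH}$ to $\Delta_H$ are tracked correctly. Once that identification is pinned down — which is essentially the content of the Remark after Lemma \ref{removeQ_1} — the rest is routine. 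I would therefore devote the bulk of the write-up to that identification and then simply cite Lemma \ref{lemma:comFA_mon} (and coassociativity) for the two displayed equalities.
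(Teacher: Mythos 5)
Your route is essentially the paper's: part (i) is obtained by writing out Lemma \ref{lemma:comFA_mon} for $C=\BFH$ in $H$\Bimod\ as a linear map, and part (ii) is reduced to (i) by pushing antipodes through, invoking the bialgebra connecting axiom and coassociativity. One correction to the machinery you cite for the "main obstacle": the monodromy appearing in \eqref{Q2Delta} is between $\BFH$ and its \emph{left dual} $\rd\BFH$ (it arises from the leftmost picture in \eqref{2prod_mon} after composing with dualities), so the identification with conjugation by $Q^{\pm1}$ is not supplied by the Remark after Lemma \ref{removeQ_1} — that remark only covers the self-monodromy $c_{\BFH,\BFH}\cir c_{\BFH,\BFH}$ — but by Lemma \ref{lemma_helpid}\,(iii), i.e.\ the identity \eqref{Fv_act} for the left-dual actions on $^\vee\!\BFH$. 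With that substitution your argument goes through as in the paper.
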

\begin{proof}
(i)
Writing out the right-most expression in \eqref{2prod_mon} for $C\eq\BFH$, which is a commutative Frobenius algebra in $H$\Bimod, as a linear map and composing with dualities we obtain the right hand side of \eqref{Q2Delta}.
The left hand side of \eqref{Q2Delta} is obtained from the left-most expression in \eqref{2prod_mon} by applying Lemma \ref{lemma_helpid} (iii) (note that \eqref{Q2Delta} involves a monodromy between $\BFH$ and $\rd\BFH$) and composing with dualities.

(ii) Starting from the left hand side of \eqref{Q3Delta}, pushing the upper inverse antipode to the top and then using the bialgebra axiom several times to push the two lower products, we obtain the left hand side of
\Eqpic{Q3Delta_proof} {320} {60} {\setulen80
  \put(0,0) {
  \put(0,0)    {\includepichtft{141b}}
  \put(40.2,-8.5){\sse$ H $}
  \put(2.2,1){\sse$Q^{-1}$}
  \put(96.2,1){\sse$Q$}
  \put(10.7,166.2) {\sse$ H $}
  \put(33.7,166.2) {\sse$ H $}
  \put(49.7,166.2) {\sse$ H $}
  \put(64.7,166.2) {\sse$ H $}
  }
  \put(127,60) {$=$}
  \put(154,0) {
  \put(0,0)    {\includepichtft{141c}}
  \put(40.2,-8.5){\sse$ H $}
  \put(12.2,8){\sse$Q^{-1}$}
  \put(96.2,10){\sse$Q$}
  \put(10.7,166.2) {\sse$ H $}
  \put(33.7,166.2) {\sse$ H $}
  \put(49.7,166.2) {\sse$ H $}
  \put(64.7,166.2) {\sse$ H $}
  }
  \put(281,60) {$=$}
  \put(308,0) {
  \put(0,0)    {\includepichtft{141d}}
  \put(32.2,-8.5){\sse$ H $}
  \put(12.2,8){\sse$Q^{-1}$}
  \put(96.2,10){\sse$Q$}
  \put(10.7,166.2) {\sse$ H $}
  \put(33.7,166.2) {\sse$ H $}
  \put(49.7,166.2) {\sse$ H $}
  \put(64.7,166.2) {\sse$ H $}
  }
  }
Here the first equality follows from associativity and coassociativity applied to the coproducts taken after the $Q$-matrices. The second equality follows from the defining property of the antipode and coassociativity. That the right hand side of \eqref{Q3Delta_proof} equals the right hand side of \eqref{Q3Delta} follows by applying first the connecting axiom to the coproducts taken after the monodromy matrices and then \eqref{Q2Delta}.
\endofproof
\end{proof}
In addition, we have the following identities for linear maps in $H$\Bimod:
\begin{lemma}\label{lemma_helpid}
We have the following identities:\\

(i)
\eqpic{reppastint} {120} {29} {
  \put(0,0) {\Includepichtft{133a}
  \put(-4,-8.5) {\sse$ H $}
  \put(20,14.5)  {\sse$ \Lambda $}
  \put(-1.8,49.3)  {\sse$ \rad$}
  \put(6,68.5) {\sse$ H $}
  \put(19,68.5) {\sse$ H $}
  }
  \put(50,29)   {$=$}
  \put(80,0) {\Includepichtft{133bA}
  \put(18.5,-8.5){\sse$ H $}
  \put(-4.5,12.8){\sse$ \Lambda $}
  \put(17.6,52.8){\sse$ \rad $}
  \put(-3,68.5)  {\sse$ H $}
  \put(10,68.5)  {\sse$ H $}
  }
  }

  (ii) Denoting the left and right $H$-actions on
$\HKH^{\otimes p}_{}$ by $\rho^{}_{\HKH^{\otimes p}_{}}$
and $\ohr^{}_{\HKH^{\otimes p}_{}}$, respectively, we have
\Eqpic{Hpastloops} {320} {105} {\setulen80
  \put(0,115) {$\twoact:=$}
  \put(30,0) {\includepichtft{132cA}
   \put(-5,-11.2)   {\sse$ H $}
  \put(17.2,-11.2) {\sse$ K $}
  \put(60.2,-11.2) {\sse$ K $}
  \put(66.8,128.2) {\sse$ \rho^{\HKH}_{\BFH} $}
  \put(66.8,218.9) {\sse$ \rho^{\HKH}_{\BFH} $}
  \put(76.2,-11.2) {\sse$ H $}
  \put(92.2,-11.2) {\sse$ F $}
  \put(92.9,273.3) {\sse$ F $}
  \put(36,31.5)  {\sse$ \ohr_{\HKH^{\!\otii p}}^H $}
  \put(36,61.5)  {\sse$ \rho_{\HKH^{\!\otii p}}^H $}
  }
  \put(175,105) {$=$}
  \put(220,0) {\includepichtft{131dA}
  \put(-5,-8.5) {\sse$ H $}
  \put(8.2,-8.5){\sse$ \Hs $}
  \put(20.2,-8.5){\sse$ \Hs $}
  \put(76.2,-8.5){\sse$ \Hs $}
  \put(88.2,-8.5){\sse$ \Hs $}
  \put(121.2,-8.5){\sse$H$}
  \put(140.2,-8.5){\sse$\Hs$}
  \put(191.2,271.5){\sse$\Hs$}
  \put(34,171.5)  {\sse$ \Lambda$}
  \put(100,103.5)  {\sse$ \Lambda$}
  \put(57,208)  {\sse$ \rad$}
  \put(124,142)  {\sse$ \rad$}
  }
  }
where $\rho_{\HKH^{\!\otii p}}^H$ ($\ohr_{\HKH^{\!\otii p}}^H$) is the left (right) action of $H$ on $\HKH^{\!\otii p}$.\\
(iii)
\eqpic{Fv_act} {180} {39} {
  \put(0,0) {\Includepichtft{137a}
  \put(17,-8.5) {\sse$ H $}
  \put(31,-8.5) {\sse$ ^\vee\!\BFH $}
  \put(48,-8.5) {\sse$ H$}
  \put(41,20.5)  {\sse$ \ohrV^H $}
  \put(39,35.5)  {\sse$ \rhoV^H $}
  \put(-3,82) {\sse$ H $}
  }
  \put(70,29)   {$=$}
  \put(100,0) {\Includepichtft{137b}
  \put(-4,-8.5) {\sse$ H $}
  \put(51,-8.5) {\sse$ ^\vee\!\BFH $}
  \put(73,-8.5) {\sse$ H$}
  \put(18,82) {\sse$ H $}
  }
  }
where here $\ohrV^H $ and $ \rhoV^H $ are the left dual actions of $H$ on $^\vee\!\BFH$.
\end{lemma}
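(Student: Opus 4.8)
\textbf{Proof proposal for Lemma \ref{lemma_helpid}.}

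The plan is to prove the three identities one by one, in each case unfolding the relevant bimodule structures on $\HKH$ and $\BFH$ into linear maps in $\Vectk$ and then manipulating using the Hopf algebra axioms of $H$ — principally the connecting axiom \eqref{con_axiom}, the defining property \eqref{apo_prop} of the antipode, the anti-(co)algebra morphism property \eqref{anti_alg_apo} of $\apo$ and $\apoi$, the integral property \eqref{int_prop}, and the explicit actions \eqref{rhoHb,ohrHb}, \eqref{def_lads}, \eqref{adj_act} together with $\apo\cir\Lambda=\Lambda$ (which holds since $H$ is unimodular, by the Convention of section \ref{fact_ribbon_hopf}).

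For (i): here $\rad$ is the right adjoint action \eqref{adj_act}, and the claim is that acting adjointly, from the right, with the coproduct of the integral $\Lambda$ is the same thing as coproducting $\Lambda$ after acting trivially. First I would write out $\rad$ via \eqref{adj_act}, obtaining a product, an antipode, and a coproduct feeding on $H$; then push the two $H$-lines coming from the coproduct of $\Lambda$ through using $\Delta\cir\eta$-type manipulations. The key input is the integral property \eqref{int_prop}: $m\cir(\id_H\oti\Lambda)=\Lambda\cir\eps$ (and, by unimodularity, the corresponding right-integral identity), which lets one collapse the loop created by the adjoint action. Combined with $\eps\cir\apo=\eps$ from \eqref{eta_inv_s} and coassociativity this should yield the right-hand side directly.

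For (ii): this is the most substantial of the three, since $\twoact$ is a composite of $p$-fold adjoint actions interleaved with $p$ copies of $\rho^{\HKH}_{\BFH}$, and one must track how an $H$-line threaded through all of this reorganizes. I would proceed by induction on $p$, or equivalently peel off one $\HKH$-factor at a time. The base case $p=1$ is essentially (i) together with the explicit form \eqref{Lyubact_HKH} of $\rhoHKH{\BFH}$ and the structure of $\HKH$ as $(\Hs\oti\Hs,\rho\coa\oti\id,\id\oti\ohr\coar)$; one uses that the left coadjoint action is intertwined by the Drinfeld map and that $f_{Q^{-1}}$ maps $\lambda$ to $\Lambda$ (equation after \eqref{norm_int_inv}) to bring everything into the "$\Lambda$ plus $\rad$" normal form on the right-hand side. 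For the inductive step one commutes the outermost $H$-action past one $\rho^{\HKH}_{\BFH}$ loop using the representation property \eqref{pic_csp_13}, then feeds the result into the induction hypothesis. The hard part will be bookkeeping: keeping the many $Q^{\pm1}$-matrices coming from the $\HKH$-actions and the $\Lambda$-loops in the correct order so that Lemma \ref{removeQ_2} and Lemma \ref{lemma:cop_Delta} can eventually be applied to cancel them; this is where I expect most of the effort to go, and where the graphical calculus of section \ref{sec:tftfun} is indispensable.

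For (iii): here $\rhoV^H$ and $\ohrV^H$ are the left-dual actions of $H$ on $^\vee\!\BFH$, read off from \eqref{Hbim_leftdualactions} with $X=\BFH$ and the coregular actions \eqref{rhoHb,ohrHb} dualized. The statement is that threading an $H$-line adjointly through the left and right actions on $^\vee\!\BFH$ — which amounts to conjugating by $\apo^{\pm1}$ combined with the left/right evaluation and coevaluation of $\Vectk$ — collapses, via the defining property \eqref{apo_prop} of the antipode and $\apo^2=\ad_t$, to the trivial passage. I would expand the dual actions explicitly, recognize the composite $\rhoV^H\cir(\id\oti\ohrV^H)$ as built from $m$, $\Delta$ and $\apo$ on the $H$-input, and then use \eqref{apo_prop} and coassociativity to contract. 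None of the three identities should present a genuine conceptual obstacle — they are all "straightforward but lengthy" graphical manipulations of Hopf algebra axioms — so the main obstacle throughout is purely the combinatorial control of the pictures in (ii).
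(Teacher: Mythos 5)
Your toolbox is broadly the right one, but the mechanism you describe for part (i) would not produce the identity. Both sides of \eqref{reppastint} contain a nontrivial right adjoint action $\rad$ of the input element; the identity \emph{relocates} that action from one leg of $\Delta(\Lambda)$ to the other, it does not remove it. The integral property \eqref{int_prop}, $m\cir(\id_H\oti\Lambda)=\Lambda\cir\eps$, only applies when an element multiplies $\Lambda$ itself; here the multiplications produced by $\rad$ land on a single leg of the coproduct of $\Lambda$, about which \eqref{int_prop} says nothing. If one could "collapse the loop" as you suggest, the result would carry a factor $\eps(h)$ and depend on $h$ only through that scalar, which is visibly not what the right-hand side is. The ingredient you are missing is \eqref{Hopf_Frob_trick2} — the Frobenius-type identities valid for unimodular $H$, which transport a multiplication across the coproduct of the integral at the cost of an antipode. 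The paper's proof of (i) is exactly: insert the definition \eqref{adj_act} of $\rad$, apply \eqref{Hopf_Frob_trick2} to move the multiplications onto the other leg, and reassemble them into $\rad$ there using that $\apoi$ is an algebra anti-automorphism.

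For (ii) your route (induction on $p$, re-eliminating the $Q$-matrices inside the proof) is structurally different from the paper's and heavier than necessary. The paper substitutes the already $Q$-free linear-map expression \eqref{CorrgnH} for the stack of $\HKH$-actions — the $Q$-elimination was done once and for all in \eqref{PF_1}--\eqref{PF_2} via the intertwining property \eqref{norm_int_inv} and \eqref{fQS_Psi}, not via Lemmas \ref{removeQ_2} or \ref{lemma:cop_Delta} as you anticipate — then applies part (i) to push the adjoint actions down the chain of $\Lambda$-loops, and closes with the observation that the iterated coproduct is a bimodule morphism from $H$ to $H^{\otii p}$, which identifies the result with the right-hand side with no induction at all. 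Your base case correctly names $f_{Q^{-1}}(\lambda)=\Lambda$ and the adjoint/coadjoint intertwining as the relevant inputs, and an inductive argument could likely be made to work, but it leans on part (i), so the gap there propagates. Part (iii) really is just unwinding the left dual actions \eqref{Hbim_leftdualactions} against the coregular actions \eqref{rhoHb,ohrHb} using duality in $\Vectk$; neither \eqref{apo_prop} nor $\apo^2=\ad_t$ enters.
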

\begin{proof}

(i) The equality follows by inserting the definition of the right adjoint action, applying \eqref{Hopf_Frob_trick2} and the algebra anti-automorphism property of $\apoi$.

(ii) Starting from the left hand side of \eqref{Hpastloops}, inserting the expression \eqref{Corr12H} for $\SkH 111$, and the definitions of  $\rho_{\HKH^{\!\otii p}}^H$ and $\ohr_{\HKH^{\!\otii p}}^H$, we obtain, after applying \eqref{reppastint}:
\Eqpic{Hpastloops_1} {320} {103} {
  \put(10,-10){\setulen80
  \put(0,275) {$\twoact=$}
  \put(0,0) {\includepichtft{131eA}
  \put(18,-8.5) {\sse$ H $}
  \put(44.2,-8.5){\sse$ \Hs $}
  \put(55.2,-8.5){\sse$ \Hs $}
  \put(145,-8.5){\sse$ \cdots\cdots $}
  \put(101.2,-8.5){\sse$ \Hs $}
  \put(112.2,-8.5){\sse$ \Hs $}
  \put(202.2,-8.5){\sse$ \Hs$}
  \put(214.2,-8.5){\sse$ \Hs$}
  \put(313.2,-8.5){\sse$ H$}
  \put(329.2,-8.5){\sse$ \Hs$}
  \put(398,289)  {\sse$ \Hs $}
  \put(84,171.5)  {\sse$ \Lambda$}
  \put(141,101.5)  {\sse$ \Lambda$}
  \put(241,31.5)  {\sse$ \Lambda$}
    \put(92.2,217)   {\sse$\rad$}
  \put(92.2,234)   {\sse$\rad$}
  \put(92.2,254)   {\sse$\rad$}
  \put(149.7,137.5){\sse$\rad$}
  \put(149.7,155.8){\sse$\rad$}
  \put(149.7,175.3){\sse$\rad$}
  \put(249.7,63)   {\sse$\rad$}
  \put(249.7,82)   {\sse$\rad$}
  \put(249.7,102.8){\sse$\rad$}
  } }
  }
Next, invoking the representation property of $\rad$ and the anti-(co)algebra morphism property of the antipode several times we obtain
\Eqpic{Hpastloops_2} {320} {100} {
  \put(10,-10){\setulen80
  \put(0,275) {$\twoact=$}
  \put(0,0) {\includepichtft{131fA}
  \put(3,-8.5) {\sse$ H $}
  \put(40.2,-8.5){\sse$ \Hs $}
  \put(52.2,-8.5){\sse$ \Hs $}
  \put(69,-8.5){\sse$\cdots $}
  \put(85.2,-8.5){\sse$ \Hs $}
  \put(96.2,-8.5){\sse$ \Hs $}
  \put(159.2,-8.5){\sse$ \Hs$}
  \put(171.2,-8.5){\sse$ \Hs$}
  \put(223.2,-8.5){\sse$ H$}
  \put(325.2,-8.5){\sse$ \Hs$}
  \put(400,295)  {\sse$ \Hs $}
  \put(80,213.5)  {\sse$ \Lambda$}
  \put(111,169.5)  {\sse$ \Lambda$}
  \put(185,107.5)  {\sse$ \Lambda$}
  } }
  }
That this expression is equal to the right hand side of \eqref{Hpastloops} is seen by using the fact that any concatenation of ${p-1}$ of $H$ is a bimodule morphism from $H$ to $H^{\otii p}$.

(iii) The equality \eqref{Fv_act} follows directly from duality, by inserting the left and right action  \eqref{rhoHb,ohrHb} of $H$ on $\BFH$, and the definition \eqref{Hbim_rightdualactions} of the left dual actions.
\endofproof\end{proof}
We also have the following generalization of Lemma \ref{lemma_helpid} (ii):
\begin{lemma}\label{lemma:Hpastloopsw}
Denoting, as in picture \eqref{Hpastloops}, by $\rho^H_{\HKH^{\otimes p}}$
and $\ohr^H_{\HKH^{\otimes p}}$ the left and right $H$-actions on
$\HKH^{\otimes p}_{}$, we have
  \Eqpic{Hpastloopsw} {300} {95} {
  \put(10,0){\setulen70
      \put(0,0) {\INcludepichtft{132cB}{266}
  %%\put(-3,237)    {\catpicH}
  \put(-5,-10.2)   {\sse$ H $}
  \put(17.2,-10.2) {\sse$ \HKH $}
  \put(60.2,-10.2) {\sse$ \HKH $}
  \put(76.2,-10.2) {\sse$ H $}
  \put(92.2,-10.2) {\sse$ \BFw $}
  \put(92.9,293.3) {\sse$ \BFw $}
  \put(28,71.9)    {\sse$ \rho_{\HKH^{\!\otimes p}}^H $}
  \put(39,41.9)    {\sse$ \ohr_{\HKH^{\!\otimes p}}^H $}
  }
  \put(150,115)    {$=$}
  \put(230,0) {
  \put(-37.5,0)    {\INcludepichtft{131g}{266}}
  \put(-42.5,-10.2)   {\sse$ H $}
  \put(-18,-10.2)  {\sse$ \Hss $}
  \put(11.2,-10.2) {\sse$ \Hss $}
  \put(59.2,-10.2) {\sse$ \Hss $}
  \put(88.2,-10.2) {\sse$ \Hss $}
  \put(121.2,-10.2){\sse$ H $}
  \put(139.2,-10.2){\sse$ \Hss $}
  \put(192.2,292.5){\sse$ \Hss $}
  \put(23.6,205.5) {\sse$ \Lambda $}
  \put(99.6,125.1) {\sse$ \Lambda $}
  \put(47.9,243)   {\sse$ \rad $}
  \put(124.3,165)  {\sse$ \rad $}
  \put(122.2,14.1) {$\sse\omega$}
  \put(71.5,87.5)    {$\sse{(\ra^{-1})}^*$}
  \put(-5,87.5)    {$\sse{(\ra^{-1})}^*$}
  } } }
\end{lemma}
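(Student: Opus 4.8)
The plan is to reduce Lemma \ref{lemma:Hpastloopsw} to the already-established Lemma \ref{lemma_helpid} (ii), i.e.\ to the identity \eqref{Hpastloops}, by tracking how the ribbon automorphism $\ra$ threads through the calculation. Recall from \eqref{BFtwist} that $\BFw = \twistmod{\BFH}{\id_H}{\,\ra}$ differs from $\BFH$ only by pre-composing the right $H$-action with $\ra$, and that from \eqref{CorrgnHw} the twisted morphism $\Skw g11$ is obtained from $\SkH g11$ by inserting a factor ${(\ra^{-1})}^*$ on each of the $g$ pairs of $\Hss$-legs. So first I would write out the left-hand side of \eqref{Hpastloopsw} as a linear map, exactly as in the proof of Lemma \ref{lemma_helpid} (ii): insert the expression \eqref{CorrgnHw} (equivalently \eqref{Corr12H} twisted) for the relevant copy of $\Skw{}{}{}$, together with the definitions of $\rho^H_{\HKH^{\otimes p}}$, $\ohr^H_{\HKH^{\otimes p}}$, and the twisted right action $\ohr_{\BFw} = \ohr_{\BFH}\cir(\id\oti\ra)$. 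The extra data compared with the untwisted case are the $g$ factors ${(\ra^{-1})}^*$ and the single factor $\omega$ appearing on the right-hand side of \eqref{Hpastloopsw}.

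Second, I would push the automorphism factors into place. The key technical input is Lemma \ref{lambda_pres}: since $\ra$ is a ribbon automorphism it preserves the integral and cointegral, $\ra\cir\Lambda=\Lambda$ and $\lambda\cir\ra=\lambda$, and it commutes with the antipode and with the monodromy matrix, $(\ra\oti\ra)\cir Q=Q$. These are precisely the properties used in the proof of the twisted invariance theorem (cf.\ the chain \eqref{S_KH_w}) to commute $\ra^{\pm1}$ past the various structure morphisms. Concretely I would: (a) use the representation property of the right adjoint action $\rad$ together with $\ra\cir\apoi=\apoi\cir\ra$ to slide the ${(\ra^{-1})}^*$ factors along the ``loops'' containing the integrals, exactly mirroring the manipulations in \eqref{reppastint} and in the steps \eqref{Hpastloops_1}--\eqref{Hpastloops_2}; (b) use $\ra\cir\Lambda=\Lambda$ to absorb an $\ra$ hitting an integral, and $\lambda\cir\ra=\lambda$ dually; (c) keep careful bookkeeping of where the leftover $\ra$'s accumulate — by design of \eqref{BFtwist} they should all collect on the single $H$-line feeding the twisted right action, i.e.\ on the leg that in \eqref{Hpastloopsw} carries the factor $\omega$ on the right-hand side.

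Third, once the $\ra$-factors have been moved, the remaining diagram is, up to those collected automorphisms, literally the untwisted identity \eqref{Hpastloops}, so Lemma \ref{lemma_helpid} (ii) applies verbatim; reinstating the collected $\ra^{\pm1}$ on the marked leg and using $\omega=\ra^{-1}$ on that leg (the twist of $\BFw$ relative to $\BFH$) gives the right-hand side of \eqref{Hpastloopsw}. I would also double-check the normalization: since $\ra$ preserves $\lambda\cir\Lambda\In\k$ (this is immediate from Lemma \ref{lambda_pres}), no stray scalar is introduced.

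I expect the main obstacle to be the bookkeeping in step two: making sure that \emph{every} occurrence of $\ra$ and $\ra^{-1}$ generated by unfolding \eqref{CorrgnHw} and the twisted right action can indeed be shepherded onto the single distinguished $H$-leg, rather than getting stranded on an integral, a coproduct, or a $Q$-matrix in a way that does not simplify. The anti-(co)algebra morphism property of $\apo$ and $\apoi$ interacts nontrivially with the placement of $\ra$, so the order of moves matters; the safest route is to follow the untwisted proof line by line and, at each rewriting, verify that the commuting relations for $\ra$ (algebra and coalgebra morphism, commutes with $\apo$, fixes $v$, $R$, $Q$, $\Lambda$, $\lambda$) are available to carry the automorphism along. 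Everything else — duality manipulations, the Frobenius and connecting axioms, associativity and coassociativity of $H$ — is identical to the $\ra=\id_H$ case treated in Lemma \ref{lemma_helpid} (ii).
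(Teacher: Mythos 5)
Your proposal is correct and follows essentially the same route as the paper: the paper's own proof is the one-line remark that the argument of Lemma \ref{lemma_helpid}\,(ii) goes through verbatim once one has the intertwining identity $\rad \cir (\ra^{-1}\oti\id_{H}) = \ra^{-1}\cir\rad \cir (\id_H\oti\omega)$, which is exactly the ``sliding the $(\ra^{-1})^*$ factors along the $\rad$-loops'' step you describe via the representation property and $\ra\cir\apoi=\apoi\cir\ra$. (Only a cosmetic caveat: since $\ra$ is typeset as $\omega$ in this paper, the label $\omega$ on the distinguished $H$-leg is the automorphism itself, not its inverse.)
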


\begin{proof}
This follows by the same line of arguments as in Lemma \ref{lemma_helpid} (ii),
combined with the identity $\rad \cir (\ra^{-1}\oti\id_{H})
\eq \ra^{-1}\cir\rad \cir (\id_H\oti\omega)$.
\endofproof\end{proof}
%%%%%%%%%%%%%%%%%%%%%%%%%%%%%%%%%%%%%%%%%%%%%%%%%%%%%%%%%%%%%%%%%%%%%%%%

%%%%%%%%%%%%%%%%%%%%%%%%%%%%%%%%%%%%%%%%%%%%%%%%%%%%%%%%%%%%%%%%%%%%%%%%
%\newcommand\BOOK[3]  {{\em #1\/} ({#2}, {#3})}
\newcommand\LN[4]           {{\em #4}, \text{Lecture notes }{#1} {(#2)}{\tt[#3]} }
\newcommand\LNo[3]           {{\em #3}, \text{Lecture notes }{#1} {(#2)} }
\newcommand\PR[5]     {{\em #5}, {#1} {(#2)} {#3} {\tt[#4]} }

%From JF
 \newcommand\wb{\,\linebreak[0]}
 \def\wB {$\,$\wb}
 \newcommand\Bi[1]    {\bibitem{#1}}
 \newcommand\Erra[3]  {\,[{\em ibid.}\ {#1} ({#2}) {#3}, {\em Erratum}]}
 \newcommand\J[5]     {{\em #5}, {#1} {#2} ({#3}) {#4} }
 \newcommand\K[6]   {{\em #6}, {#1} {#2} ({#3}) {#4} {\tt[#5]} }
 \newcommand\nqma[2] {\inBO{Non-perturbative QFT Methods and Their Applications}
              {Z.\ Horv\'ath and L.\ Palla, eds.} {World Scientific}{Singapore}{2001} {{#1}}{{#2}} }
 \newcommand\Prep[2]  {{\em #2}, preprint {\tt #1} }
 \newcommand\PhD[2]   {{\em #2}, Ph.D.\ thesis (#1)}
 \newcommand\BOOK[4]  {{\em #1\/} ({#2}, {#3} {#4})}
 \newcommand\inBO[7]  {{\em #7}, in:\ {\em #1}, {#2}\ ({#3}, {#4} {#5}), p.\ {#6}}
 \newcommand\inBOp[8] {{\em #8}, in:\ {\em #1}, {#2}\ ({#3}, {#4} {#5}), p.\ {#6} {[\tt #7]}}
 \def\adma  {Adv.\wb in Math.}
 \def\ajse  {Arabian Journal for Science and Engineering}
 \def\appb  {Acta\wB Phys.\wb Pol.\ B}
 \def\atmp  {Adv.\wb Theor.\wb Math.\wb Phys.}
 \def\bacp  {Ba\-nach\wB Cen\-ter\wB Publ.}
 \def\bams  {Bull.\wb Amer.\wb Math.\wb Soc.}
 \def\coma  {Con\-temp.\wb Math.}
 \def\cocm  {Com\-mun.\wb Con\-temp.\wb Math.}
 \def\comp  {Com\-mun.\wb Math.\wb Phys.}
 \def\cpma  {Com\-pos.\wb Math.}
 \newcommand\Epub[2]  {{\em #2}, published electronically at {#1}}
 \def\fiic  {Fields\wB Institute\wB Commun.}
 \newcommand\grfii[2] {\inBO{The Grothendieck Festschrift{\rm, vol.\ II}}
              {P.\ Cartier et al., eds.} {Birk\-h\"au\-ser}{Boston}{1990} {{#1}}{{#2}}}
 \def\foph  {Fortschr.\wb Phys.}
 \def\inma  {Invent.\wb math.}
 \def\isjm  {Israel\wB J.\wb Math.}
 \def\jams  {J.\wb Amer.\wb Math.\wb Soc.}
 \def\jhep  {J.\wb High\wB Energy\wB Phys.}
 \def\jktr  {J.\wB Knot\wB Theory\wB and\wB its\wB Ramif.}
 \def\joal  {J.\wB Al\-ge\-bra}
 \def\jopa  {J.\wb Phys.\ A}
 \def\josp  {J.\wb Stat.\wb Phys.}
 \def\jpaa  {J.\wB Pure\wB Appl.\wb Alg.}
 \def\jste  {J.\wb Stat.\wb Mech.}
 \def\momj  {Mos\-cow\wB Math.\wb J.}
 \def\nuci  {Nuovo\wB Cim.}
 \def\nupb  {Nucl.\wb Phys.\ B}
 \def\pcps  {Proc.\wB Cam\-bridge\wB Philos.\wb Soc.}
 \def\pnas  {Proc.\wb Natl.\wb Acad.\wb Sci.\wb USA}
 \def\phrl  {Phys.\wb Rev.\wb Lett.}
 \def\phlb  {Phys.\wb Lett.\ B}
 \def\plms  {Proc.\wB Lon\-don\wB Math.\wb Soc.}
 \def\remp  {Rev.\wb Mod.\wb Phys.}
\def\rpip  {Rep.\wb Prog.\wB in\wB Phys.}
\def\ruma  {Revista de la Uni\'on Matem\'atica Argentina}
 \def\taac  {Theo\-ry\wB and\wB Appl.\wb Cat.}
 \def\tams  {Trans.\wb Amer.\wb Math.\wb Soc.}
 \def\thmp  {Theor.\wb Math.\wb Phys.}

\small 
\printindex
\end{document}